%
\documentclass[envcountsect,envcountsame,runningheads]{llncs}

\usepackage{xspace}
\usepackage[all]{xy}
\usepackage{wrapfig}

\usepackage{amsmath}
\usepackage{amssymb}
\usepackage{mathrsfs}
\usepackage{xcolor}

\message{<Paul Taylor's Proof Trees, 2 August 1996>}

\newdimen\proofrulebreadth \proofrulebreadth=.05em
\newdimen\proofdotseparation \proofdotseparation=1.25ex
\newdimen\proofrulebaseline \proofrulebaseline=2ex
\newcount\proofdotnumber \proofdotnumber=3
\let\then\relax
\def\hfi{\hskip0pt plus.0001fil}
\mathchardef\squigto="3A3B
%
\newif\ifinsideprooftree\insideprooftreefalse
\newif\ifonleftofproofrule\onleftofproofrulefalse
\newif\ifproofdots\proofdotsfalse
\newif\ifdoubleproof\doubleprooffalse
\let\wereinproofbit\relax
%
\newdimen\shortenproofleft
\newdimen\shortenproofright
\newdimen\proofbelowshift
\newbox\proofabove
\newbox\proofbelow
\newbox\proofrulename
%
\def\shiftproofbelow{\let\next\relax\afterassignment\setshiftproofbelow\dimen0 }
\def\shiftproofbelowneg{\def\next{\multiply\dimen0 by-1 }%
\afterassignment\setshiftproofbelow\dimen0 }
\def\setshiftproofbelow{\next\proofbelowshift=\dimen0 }
\def\setproofrulebreadth{\proofrulebreadth}

\def\prooftree{
%
\ifnum  \lastpenalty=1
\then   \unpenalty
\else   \onleftofproofrulefalse
\fi
%
\ifonleftofproofrule
\else   \ifinsideprooftree
        \then   \hskip.5em plus1fil
        \fi
\fi
%
\bgroup
\setbox\proofbelow=\hbox{}\setbox\proofrulename=\hbox{}%
\let\justifies\proofover\let\leadsto\proofoverdots\let\Justifies\proofoverdbl
\let\using\proofusing\let\[\prooftree
\ifinsideprooftree\let\]\endprooftree\fi
\proofdotsfalse\doubleprooffalse
\let\thickness\setproofrulebreadth
\let\shiftright\shiftproofbelow \let\shift\shiftproofbelow
\let\shiftleft\shiftproofbelowneg
\let\ifwasinsideprooftree\ifinsideprooftree
\insideprooftreetrue
%
\setbox\proofabove=\hbox\bgroup$\displaystyle 
\let\wereinproofbit\prooftree
%
\shortenproofleft=0pt \shortenproofright=0pt \proofbelowshift=0pt
%
\onleftofproofruletrue\penalty1
}

\def\eproofbit{
%
\ifx    \wereinproofbit\prooftree
\then   \ifcase \lastpenalty
        \then   \shortenproofright=0pt  
        \or     \unpenalty\hfil         
        \or     \unpenalty\unskip       
        \else   \shortenproofright=0pt  
        \fi
\fi
%
\global\dimen0=\shortenproofleft
\global\dimen1=\shortenproofright
\global\dimen2=\proofrulebreadth
\global\dimen3=\proofbelowshift
\global\dimen4=\proofdotseparation
\global\count255=\proofdotnumber
%
$\egroup  
%
\shortenproofleft=\dimen0
\shortenproofright=\dimen1
\proofrulebreadth=\dimen2
\proofbelowshift=\dimen3
\proofdotseparation=\dimen4
\proofdotnumber=\count255
}

\def\proofover{
\eproofbit 
\setbox\proofbelow=\hbox\bgroup 
\let\wereinproofbit\proofover
$\displaystyle
}%
%
\def\proofoverdbl{
\eproofbit 
\doubleprooftrue
\setbox\proofbelow=\hbox\bgroup 
\let\wereinproofbit\proofoverdbl
$\displaystyle
}%
%
\def\proofoverdots{
\eproofbit 
\proofdotstrue
\setbox\proofbelow=\hbox\bgroup 
\let\wereinproofbit\proofoverdots
$\displaystyle
}%
%
\def\proofusing{
\eproofbit 
\setbox\proofrulename=\hbox\bgroup 
\let\wereinproofbit\proofusing
\kern0.3em$
}

\def\endprooftree{
\eproofbit 
  \dimen5 =0pt
%
\dimen0=\wd\proofabove \advance\dimen0-\shortenproofleft
\advance\dimen0-\shortenproofright
%
\dimen1=.5\dimen0 \advance\dimen1-.5\wd\proofbelow
\dimen4=\dimen1
\advance\dimen1\proofbelowshift \advance\dimen4-\proofbelowshift
%
\ifdim  \dimen1<0pt
\then   \advance\shortenproofleft\dimen1
        \advance\dimen0-\dimen1
        \dimen1=0pt
        \ifdim  \shortenproofleft<0pt
        \then   \setbox\proofabove=\hbox{%
                        \kern-\shortenproofleft\unhbox\proofabove}%
                \shortenproofleft=0pt
        \fi
\fi
%
\ifdim  \dimen4<0pt
\then   \advance\shortenproofright\dimen4
        \advance\dimen0-\dimen4
        \dimen4=0pt
\fi
%
\ifdim  \shortenproofright<\wd\proofrulename
\then   \shortenproofright=\wd\proofrulename
\fi
%
\dimen2=\shortenproofleft \advance\dimen2 by\dimen1
\dimen3=\shortenproofright\advance\dimen3 by\dimen4
%
\ifproofdots
\then
        \dimen6=\shortenproofleft \advance\dimen6 .5\dimen0
        \setbox1=\vbox to\proofdotseparation{\vss\hbox{$\cdot$}\vss}%
        \setbox0=\hbox{%
                \advance\dimen6-.5\wd1
                \kern\dimen6
                $\vcenter to\proofdotnumber\proofdotseparation
                        {\leaders\box1\vfill}$%
                \unhbox\proofrulename}%
\else   \dimen6=\fontdimen22\the\textfont2 
        \dimen7=\dimen6
        \advance\dimen6by.5\proofrulebreadth
        \advance\dimen7by-.5\proofrulebreadth
        \setbox0=\hbox{%
                \kern\shortenproofleft
                \ifdoubleproof
                \then   \hbox to\dimen0{%
                        $\mathsurround0pt\mathord=\mkern-6mu%
                        \cleaders\hbox{$\mkern-2mu=\mkern-2mu$}\hfill
                        \mkern-6mu\mathord=$}%
                \else   \vrule height\dimen6 depth-\dimen7 width\dimen0
                \fi
                \unhbox\proofrulename}%
        \ht0=\dimen6 \dp0=-\dimen7
\fi
%
\let\doll\relax
\ifwasinsideprooftree
\then   \let\VBOX\vbox
\else   \ifmmode\else$\let\doll=$\fi
        \let\VBOX\vcenter
\fi
\VBOX   {\baselineskip\proofrulebaseline \lineskip.2ex
        \expandafter\lineskiplimit\ifproofdots0ex\else-0.6ex\fi
        \hbox   spread\dimen5   {\hfi\unhbox\proofabove\hfi}%
        \hbox{\box0}%
        \hbox   {\kern\dimen2 \box\proofbelow}}\doll%
%
\global\dimen2=\dimen2
\global\dimen3=\dimen3
\egroup 
\ifonleftofproofrule
\then   \shortenproofleft=\dimen2
\fi
\shortenproofright=\dimen3
%
\onleftofproofrulefalse
\ifinsideprooftree
\then   \hskip.5em plus 1fil \penalty2
\fi
}


\newcommand{\indrulename}[1]{\texttt{\textup{#1}}}
\newcommand{\indrule}[3]{\ensuremath{\begin{array}{c}\prooftree#2\justifies#3\thickness=0.05em\using\indrulename{#1}\endprooftree\end{array}}}

\renewcommand{\theenumi}{\arabic{enumi}}
\renewcommand{\theenumii}{\arabic{enumii}}
\renewcommand{\theenumiii}{\arabic{enumiii}}

\makeatletter
\renewcommand\p@enumii{\theenumi.}
\renewcommand\p@enumiii{\theenumi.\theenumii.}
\renewcommand\p@enumiv{\theenumi.\theenumii.\theenumiii.}
\makeatother

\newcommand{\llem}[1]{\label{lemma:#1}}
\newcommand{\rlem}[1]{Lem.~\ref{lemma:#1}}
\newcommand{\ldef}[1]{\label{def:#1}}
\newcommand{\rdef}[1]{Def.~\ref{def:#1}}
\newcommand{\lprop}[1]{\label{prop:#1}}
\newcommand{\rprop}[1]{Prop.~\ref{prop:#1}}
\newcommand{\lthm}[1]{\label{thm:#1}}
\newcommand{\rthm}[1]{Thm.~\ref{thm:#1}}
\newcommand{\lremark}[1]{\label{remark:#1}}
\newcommand{\rremark}[1]{Rem.~\ref{remark:#1}}
\newcommand{\lcoro}[1]{\label{coro:#1}}
\newcommand{\rcoro}[1]{Coro.~\ref{coro:#1}}
\newcommand{\lsec}[1]{\label{section:#1}}
\newcommand{\rsec}[1]{Sec.~\ref{section:#1}}
\newcommand{\lexample}[1]{\label{example:#1}}
\newcommand{\rexample}[1]{Ex.~\ref{example:#1}}
\newcommand{\leqn}[1]{\label{eqn:#1}}
\newcommand{\reqn}[1]{(\ref{eqn:#1})}
\newcommand{\resultname}[1]{{\bfseries{#1}}}
\newcommand{\itemname}[1]{#1}
\newcommand{\refcase}[1]{\itemname{\ref{#1}}}
%

\newcommand{\defn}[1]{{\em #1}}
\newcommand{\eg}{{\em e.g.}\xspace}
\newcommand{\ie}{{\em i.e.}\xspace}
\newcommand{\cf}{{\em cf.}\xspace}
\newcommand{\ih}{{\em i.h.}\xspace}
\newcommand{\ST}{\ |\ }
\newcommand{\HS}{\hspace{.5cm}}

\renewcommand{\emptyset}{\varnothing}
\newcommand{\set}[1]{\{#1\}}

\newcommand{\eqdef}{\overset{\mathrm{def}}{=}}
\newcommand{\iffdef}{\overset{\mathrm{def}}{\iff}}
\newcommand{\eqih}{\overset{\mathrm{h.i.}}{=}}

\newcommand{\OR}{\ \mid\ }

\newcommand{\tm}{t}
\newcommand{\tmtwo}{s}
\newcommand{\tmthree}{u}
\newcommand{\tmthreevariant}{v}
\newcommand{\tmfour}{r}
\newcommand{\tmfive}{p}
\newcommand{\tmsix}{q}

\newcommand{\var}{x}
\newcommand{\vartwo}{y}
\newcommand{\varthree}{z}

\newcommand{\basetyp}{\alpha}

\newcommand{\labelset}{\mathscr{L}}
\newcommand{\lab}{\ell}
\newcommand{\labtwo}{\lab'}
\newcommand{\labthree}{\lab''}

\newcommand{\conbase}{\Box}
\newcommand{\of}[1]{\langle#1\rangle}
\newcommand{\con}{\mathtt{C}}
\newcommand{\contwo}{\con'}

\newcommand{\conof}[1]{\con\of{#1}}
\newcommand{\contwoof}[1]{\contwo\of{#1}}
\newcommand{\conhat}{\hat{\con}}

\newcommand{\typ}{\mathcal{A}}
\newcommand{\typtwo}{\mathcal{B}}
\newcommand{\typthree}{\mathcal{C}}
\newcommand{\typfour}{\mathcal{D}}

\newcommand{\lset}[1]{[#1]}

\newcommand{\mtyp}{\mathcal{M}}
\newcommand{\mtyptwo}{\mathcal{N}}
\newcommand{\mtypthree}{\mathcal{P}}

\newcommand{\tctx}{\Gamma}
\newcommand{\tctxtwo}{\Delta}
\newcommand{\tctxthree}{\Theta}

\DeclareMathOperator{\dom}{dom}

\newcommand{\lam}[2]{\lambda #1. #2}
\newcommand{\lamp}[3]{\lambda^{#1} #2. #3}
\newcommand{\subs}[3]{#1\{#2 := #3\}}
\newcommand{\sub}[2]{\{\!\!\{#1 := #2\}\!\!\}}

\newcommand{\tolab}[1]{\overset{#1}{\to}}

\newcommand{\tobeta}{\mathrel{\to_\beta}}
\newcommand{\rtobeta}{\mathrel{\twoheadrightarrow_\beta}}
\newcommand{\toabeta}[1]{\mathrel{\xrightarrow{#1}_\beta}}
\newcommand{\todist}{\mathrel{\to_\dist}}
\newcommand{\lefttodist}{\mathrel{\leftarrow_\dist}}
\newcommand{\rtodist}{\mathrel{\twoheadrightarrow_\dist}}
\newcommand{\todistl}[1]{\mathrel{\xrightarrow{#1}_\dist}}

\newcommand{\Obj}{\mathsf{Obj}}
\newcommand{\Stp}{\mathsf{Stp}}
\newcommand{\src}{\mathsf{src}}
\newcommand{\tgt}{\mathsf{tgt}}

\newcommand{\emptyDerivation}{\epsilon}
\newcommand{\redex}{R}
\newcommand{\redextwo}{S}
\newcommand{\redexthree}{T}
\newcommand{\name}{\mathsf{lab}}
\newcommand{\names}{\mathsf{labs}}
\newcommand{\redexset}{\mathcal{M}}

\newcommand{\redseq}{\rho}
\newcommand{\redseqtwo}{\sigma}
\newcommand{\redseqthree}{\tau}

\newcommand{\permeq}{\equiv}
\newcommand{\permle}{\sqsubseteq}

\newcommand{\dist}{\texttt{\textup{\#}}}
\newcommand{\lambdadist}{\lambda^\dist}
\newcommand{\terms}{\mathcal{T}}
\newcommand{\termsdist}{\mathcal{T}^{\dist}}

\newcommand{\lengthof}[1]{|#1|}
\newcommand{\fv}[1]{\mathsf{fv}({#1})}
\newcommand{\ls}[1]{\bar{#1}}

\newcommand{\refines}{\mathrel{\ltimes}}

\newcommand{\tmlabel}[1]{\mathtt{T}(#1)}
\newcommand{\varlabel}[2]{\mathtt{T}_{#1}(#2)}

\newcommand{\lst}{\ls{x}}
\newcommand{\lsttwo}{\ls{y}}

\newcommand{\anon}{\mathtt{X}}

\newcommand{\classof}[1]{{[#1]}}

\newcommand{\sieve}{\Downarrow}

\newcommand{\id}{\mathsf{id}}

\newcommand{\Poset}{\mathsf{Poset}}
\newcommand{\cls}[1]{{[#1]}}

\newcommand{\grothy}[2]{\int_{#1}{#2}}
\newcommand{\ulbDeriv}[1]{\mathbb{D}(#1)}
\newcommand{\ulbDerivLam}[1]{\mathbb{D}^{\lambda}(#1)}
\newcommand{\ulbDerivDist}[1]{\mathbb{D}^{\dist}(#1)}
\newcommand{\ulbFree}[2]{\mathbb{F}(#1, #2)}
\newcommand{\ulbGarbage}[2]{\mathbb{G}(#1, #2)}
\newcommand{\ulbF}{\mathcal{F}}
\newcommand{\ulbG}{\mathcal{G}}
\newcommand{\leqF}{\unlhd}
\newcommand{\lorF}{\triangledown}
\newcommand{\landF}{\!\vartriangle\!}

\newcommand{\identity}{\mathsf{id}}

\newcommand{\toI}{lam}
\newcommand{\toE}{app}

\newcommand{\condp}[1]{\textrm{\texttt{[c#1]}}}

\newcommand{\SeeAppendix}{\textcolor{red}{$\clubsuit$}\xspace}
\newcommand{\SeeAppendixRef}[1]{\hyperref[#1]{\SeeAppendix}}
\newcommand{\occursin}{\preceq}

\newcommand{\CreationCase}[1]{\textup{\texttt{[cr#1]}}}

\newcommand{\subterms}[1]{\mathsf{sub}(#1)}
\newcommand{\fsubterms}[1]{\mathsf{sub}^{\circ}(#1)}
\newcommand{\pt}[2]{#2 \hookrightarrow #1}
\newcommand{\ptF}[2]{#2 \hookrightarrow #1}
\newcommand{\totwocell}{\leq}

\usepackage{hyperref}
\usepackage{chngcntr}
\usepackage{apptools}
\AtAppendix{\counterwithin{theorem}{section}}


\begin{document}
\title{Factoring Derivation Spaces\\ via Intersection Types \\ (Extended Version)\thanks{Work partially supported by CONICET.}}
%
%
\author{
Pablo Barenbaum\inst{1}
  \and
Gonzalo Ciruelos\inst{2} 
}
\authorrunning{P.~Barenbaum and G.~Ciruelos}
%

\institute{
Departamento de Computaci\'on, FCEyN, UBA, Argentina \\
IRIF, Universit\'e Paris 7, France \\
\email{pbarenbaum@dc.uba.ar}
\and
Departamento de Computaci\'on, FCEyN, UBA, Argentina \\
\email{gonzalo.ciruelos@gmail.com}
}
\maketitle              
\begin{abstract}
In typical non-idempotent intersection type systems, proof normalization is not confluent.
In this paper we introduce a confluent non-idempotent intersection type system for the $\lambda$-calculus.
Typing derivations are presented using proof term syntax.
The system enjoys good properties: subject reduction, strong normalization,
and a very regular theory of residuals.
A correspondence with the $\lambda$-calculus is established by simulation theorems.
The machinery of non-idempotent intersection types allows us to track the usage of
resources required to obtain an answer.
In particular, it induces a notion of {\em garbage}: a computation is garbage
if it does not contribute to obtaining an answer.
Using these notions,
we show that the derivation space of a $\lambda$-term may be factorized using a
variant of the Grothendieck construction for semilattices.
This means, in particular, that any derivation in the $\lambda$-calculus can be
uniquely written as a garbage-free prefix followed by garbage.

\keywords{Lambda calculus \and Intersection types \and Derivation space.}
\end{abstract}
\section{Introduction}
  \lsec{introduction}
  
Our goal in this paper is attempting to
understand the spaces of computations of programs.
Consider a hypothetical functional programming language
with arithmetic expressions and tuples.
All the possible computations starting from the tuple
$(1 + 1,\ 2 * 3 + 1)$
can be arranged to form its ``space of computations'':
\[
  {\small
  \xymatrix@R=.4cm{
    (1 + 1,\ 2 * 3 + 1) \ar[d] \ar[r] & (1 + 1,\ 6 + 1) \ar[d] \ar[r] & (1 + 1,\ 7) \ar[d] \\
    (2,\ 2 * 3 + 1)     \ar[r]        & (2,\ 6 + 1)      \ar[r]       & (2,\ 7) \\
  }}
\]
In this case, the space of computations is quite easy to understand,
because the subexpressions $(1 + 1)$ and $(2 * 3 + 1)$ cannot interact with each other.
Indeed, the space of computations of a tuple $(A,B)$ can always be understood as
the {\em product} of the spaces of $A$ and $B$.
In the general case, however, the space of computations of a program may have a
much more complex structure.
For example, it is not easy to characterize the space of computations of a function application $f(A)$.
The difficulty is that $f$ may use the value of $A$ zero, one, or
possibly many times.

\medskip

The quintessential functional programming language
is the pure $\lambda$-calculus.
Computations in the $\lambda$-calculus have been thoroughly
studied since its conception in the 1930s.
The well-known theorem by Church and Rosser~\cite{church1936some}
states that $\beta$-reduction in the $\lambda$-calculus is \emph{confluent},
which means, in particular, that terminating programs have unique normal forms.
Another result by Curry and Feys~\cite{curry1958combinatory}
states that computations in the
$\lambda$-calculus may be \emph{standardized},
meaning that they may be converted into a computation in canonical form.
A refinement of this theorem by L\'evy~\cite{Tesis:Levy:1978}
asserts that the canonical computation thus obtained is equivalent to the
original one in a strong sense, namely that they are \emph{permutation equivalent}.
In a series of papers~\cite{DBLP:conf/ctcs/Mellies97,DBLP:conf/rta/Mellies02,mellies2002axiomatic,DBLP:conf/birthday/Mellies05},
Melli\`es generalized many of these results to the abstract setting of {\em axiomatic rewrite systems}.

Let us discuss ``spaces of computations'' more precisely.
The \emph{derivation space} of an object~$x$ in some rewriting system
is the set of all {\em derivations}, \ie sequences of rewrite steps, starting from $x$.
In this paper, we will be interested in the pure $\lambda$-calculus,
and we will study \emph{finite} derivations only.
In the $\lambda$-calculus,
a transitive relation between derivations may be defined, the {\em prefix order}.
A derivation $\redseq$ is a prefix of a derivation $\redseqtwo$,
written $\redseq \permle \redseqtwo$,
whenever $\redseq$ performs less computational work than $\redseqtwo$.
Formally, $\redseq \permle \redseqtwo$ is defined to hold whenever the {\em projection} $\redseq/\redseqtwo$ 
is empty\footnote{The notion of projection defined by means of residuals
is the standard one, see \eg \cite[Chapter~12]{Barendregt:1984}
or \cite[Section~8.7]{Terese}.}.
For example, if $K = \lam{\var}{\lam{\vartwo}{\var}}$,
the derivation space of the term $(\lam{\var}{\var\var})(K\varthree)$
can be depicted with the \emph{reduction graph} below.
Derivations are directed paths in the reduction graph,
and $\redseq$ is a prefix of $\redseqtwo$ if there is a directed path from the
target of $\redseq$ to the target of $\redseqtwo$.
For instance, $\redextwo\redex_2$ is a prefix of $\redex\redextwo'\redexthree'$:
\[
  \xymatrix@R=.1cm{
    (\lam{\var}{\var\var})(K\varthree)
      \ar^-{\redex}[rr]
      \ar@/_.2cm/^-{\redextwo}[rdd]
  &
  &
    (\lam{\var}{\var\var})(\lam{\vartwo}{\varthree})
    \ar@/^.3cm/[rd]^{\redextwo'}
  &
  \\
  &
  &
    (K\varthree)(\lam{\vartwo}{\varthree}) \ar[r]^-{\redex'_2}
  &
    (\lam{\vartwo}{\varthree})(\lam{\vartwo}{\varthree})
    \ar@/^.2cm/[rd]^-{\redexthree'}
  \\
  &
    (K\varthree)(K\varthree) \ar@/^.3cm/[ru]^{\redex_1} \ar[r]^-{\redex_2}
  &
    (\lam{\vartwo}{\varthree})(K\varthree)
      \ar@/_.2cm/[ru]^-{\redex'_1}
      \ar[rr]^-{\redexthree}
  &
  &
    \varthree
  }
\]
Remark that $\permle$ is reflexive and transitive but not antisymmetric,
\ie it is a quasi-order but not an order.
For example $\redex\redextwo' \permle \redextwo\redex_1\redex_2' \permle \redex\redextwo'$
but $\redex\redextwo' \neq  \redextwo\redex_1\redex_2'$.
Antisymmetry may be recovered as usual when in presence of a quasi-order,
by working modulo {\em permutation equivalence}:
two derivations $\redseq$ and $\redseqtwo$
are said to be permutation equivalent,
written $\redseq \permeq \redseqtwo$, if $\redseq \permle \redseqtwo$
and $\redseqtwo \permle \redseq$.
Working modulo permutation equivalence is reasonable because L\'evy's
formulation of the
standardization theorem ensures that permutation equivalence is decidable.

Derivation spaces are known to exhibit various
regularities~\cite{Tesis:Levy:1978,DBLP:journals/tcs/Zilli84,laneve1994distributive,thesismellies,levy_redex_stability,DBLP:conf/lics/AspertiL13}.
In his PhD thesis, L\'evy~\cite{Tesis:Levy:1978} showed that the derivation space of a term is
an upper semilattice: any two derivations $\redseq,\redseqtwo$ from a term $\tm$
have a {\em least upper bound} $\redseq \sqcup \redseqtwo$, defined as $\redseq(\redseqtwo/\redseq)$,
unique up to permutation equivalence.
On the other hand, the derivation space of a term $\tm$ is not an easy structure to understand in general\footnote{Problem~2 in
the RTA List of Open Problems~\cite{dershowitz1991open} poses
the open-ended question of investigating the properties of ``spectra'', \ie derivation spaces.}.
For example, relating the derivation space of an application $\tm\tmtwo$
with the derivation spaces of $\tm$ and $\tmtwo$ appears to be a hard problem.
L\'evy also noted that the {\em greatest lower bound} of two derivations does
not necessarily exist, meaning that the derivation space of a term does not form a lattice in general.
Even when it forms a lattice, it may not necessarily be a {\em distributive} lattice, as observed for example by
Laneve~\cite{laneve1994distributive}.
In \cite{DBLP:conf/ctcs/Mellies97}, Melli\`es showed that
derivation spaces in any rewriting system satisfying certain axioms
may be factorized using two spaces,
one of {\em external} and one of {\em internal} derivations.

The difficulty to understand derivation spaces is due to three pervasive phenomena
of \emph{interaction} between computations.
The first phenomenon is \emph{duplication}:
in the reduction graph of above, the step $\redextwo$ duplicates the step $\redex$,
resulting in two copies of $\redex$: the steps $\redex_1$ and $\redex_2$.
In such situation, one says that $\redex_1$ and $\redex_2$ are \emph{residuals} of $\redex$,
and, conversely, $\redex$ is an \emph{ancestor} of $\redex_1$ and $\redex_2$.
The second phenomenon is \emph{erasure}:
in the diagram above, the step $\redexthree$ erases the step $\redex'_1$,
resulting in no copies of $\redex'_1$.
The third phenomenon is \emph{creation}:
in the diagram above, the step $\redex_2$ creates the step $\redexthree$,
meaning that $\redexthree$ is not a residual of a step that existed prior
to executing $\redex_2$; that is, $\redexthree$ has no ancestor.

These three interaction phenomena, especially duplication and erasure,
are intimately related with the management of \emph{resources}.
In this work, we aim to explore the hypothesis that {\bf having an explicit
representation of resource management may provide insight on
the structure of derivation spaces}.

There are many existing $\lambda$-calculi that deal with resource management explicitly~\cite{boudol1993lambda,ehrhard2003differential,kesner2007resource,kesner2009prismoid},
most of which draw inspiration from Girard's Linear Logic~\cite{girard1987linear}.
Recently, calculi endowed with
{\em non-idempotent intersection type systems},
have received some attention~\cite{ehrhard2012collapsing,bernadet2013non,bucciarelli2014inhabitation,bucciarelli2017non,kesner2016reasoning,thesisvial,KRV18}.
These type systems are able to statically capture non-trivial
dynamic properties of terms, particularly {\em normalization},
while at the same time being amenable to elementary proof techniques by induction.
Intersection types were originally proposed by
Coppo and Dezani-Ciancaglini~\cite{DBLP:journals/aml/CoppoD78}
to study termination in the $\lambda$-calculus.
They are characterized by the presence of an {\em intersection} type constructor $\typ \cap \typtwo$.
{\em Non-idempotent} intersection type systems are distinguished from their usual idempotent
counterparts by the fact that intersection
is not declared to be idempotent, \ie $\typ$ and $\typ \cap \typ$ are not equivalent types.
Rather, intersection behaves like a multiplicative connective in linear logic.
Arguments to functions are typed many times, typically once
per each time that the argument will be used.
Non-idempotent intersection types were originally formulated by
Gardner~\cite{gardner1994discovering},
and later reintroduced by de Carvalho~\cite{carvalho2007semantiques}.

In this paper, we will use a non-idempotent intersection type system
based on system~$\mathcal{W}$~of~\cite{bucciarelli2017non}
(called system~$\mathcal{H}$ in \cite{bucciarelli2014inhabitation}).
Let us recall its definition.
Terms are as usual in the $\lambda$-calculus ($\tm  ::= \var \mid \lam{\var}{\tm} \mid \tm\,\tm$).
Types $\typ,\typtwo,\typthree,\hdots$ are defined by the grammar:
\[
  \begin{array}{rcl@{\hspace{2cm}}rcl}
  \typ  & ::= & \basetyp \OR \mtyp \to \typ
  &
  \mtyp & ::= & [\typ_i]_{i=1}^{n} \HS\text{with $n \geq 0$} \\
  \end{array}
\]
where $\basetyp$ ranges over one of denumerably many {\em base types},
and $\mtyp$ represents a {\em multiset of types}.
Here $[\typ_i]_{i=1}^{n}$ denotes the multiset $\typ_1,\hdots,\typ_n$
with their respective multiplicities.
A multiset $[\typ_i]_{i=1}^{n}$ intuitively stands for the (non-idempotent) intersection $\typ_1 \cap \hdots \cap \typ_n$.
The {\em sum of multisets} $\mtyp + \mtyptwo$ is defined as their union (adding multiplicities).
A {\em typing context} $\tctx$ is a partial function mapping variables to multisets of types.
The domain of $\tctx$ is the set of variables $\var$ such that $\tctx(\var)$ is defined.
We assume that typing contexts always have {\em finite domain} and hence they may be written as
$\var_1 : \mtyp_1, \hdots, \var_n : \mtyp_n$.
The {\em sum of contexts} $\tctx + \tctxtwo$ is their pointwise sum,
\ie $(\tctx + \tctxtwo)(\var) := \tctx(\var) + \tctxtwo(\var)$ if $\tctx(\var)$ and $\tctxtwo(\var)$ are both defined,
$(\tctx + \tctxtwo)(\var) := \tctx(\var)$ if $\tctxtwo(\var)$ is undefined,
and $(\tctx + \tctxtwo)(\var) := \tctxtwo(\var)$ if $\tctx(\var)$ is undefined.
We write $\tctx +_{i=1}^{n} \tctxtwo_i$ to abbreviate $\tctx + \tctxtwo_1 + \hdots + \tctxtwo_n$.
The {\em disjoint sum of contexts} $\tctx \oplus \tctxtwo$ stands for $\tctx + \tctxtwo$, provided that
the domains of $\tctx$ and $\tctxtwo$ are disjoint.
A {\em typing judgment} is a triple $\tctx \vdash \tm : \typ$,
representing the knowledge that the term $\tm$ has type $\typ$ in the context $\tctx$.
Type assignment rules for system $\mathcal{W}$ are as follows.

\begin{definition}[System $\mathcal{W}$]
{\small
\[
  \indrule{var}{
  }{
    \var : [\typ] \vdash \typ
  }
  \hspace{.15cm}
  \indrule{\toI}{
    \tctx \oplus (\var:\mtyp) \vdash \tm : \typ
  }{
    \tctx \vdash \lam{\var}{\tm} : \mtyp \to \typ
  }
  \hspace{.15cm}
  \indrule{\toE}{
    \tctx \vdash \tm : [\typtwo_i]_{i=1}^{n} \to \typ
    \HS
    (\tctxtwo_i \vdash \tmtwo : \typtwo_i)_{i=1}^{n}
  }{
    \tctx +_{i=1}^{n} \tctxtwo_i \vdash \tm\,\tmtwo : \typ
  }
\]
}
\end{definition}
Observe that the \indrulename{\toE} rule has $n + 1$ premises, where $n \geq 0$.
System~$\mathcal{W}$ enjoys various properties, nicely summarized in~\cite{bucciarelli2017non}.

There are two obstacles to adopting system~$\mathcal{W}$ for studying derivation spaces.
The first obstacle is mostly a matter of presentation---typing derivations use a tree-like notation,
which is cumbersome. One would like to have an alternative notation based on proof terms.
For example, one may define proof terms for the typing rules above using the
syntax $\pi ::= \var^\typ \mid \lam{\var}{\pi} \mid \pi[\pi,\hdots,\pi]$,
in such a way that $\var^\typ$ encodes an application of the \texttt{var} axiom,
$\lam{\var}{\pi}$ encodes an application of the \indrulename{\toI} rule to the typing derivation encoded by $\pi$,
and $\pi_1[\pi_2,\hdots,\pi_n]$ encodes an application of the
\indrulename{\toE} rule to the typing derivations encoded by $\pi_1,\pi_2,\hdots,\pi_n$.
For example, using this notation
$
  \lam{\var}{\var^{[\alpha,\alpha] \to \beta}[\var^\alpha,\var^\alpha]}
$
would represent the following typing derivation:
\[
  {\small
  \indrule{\toI}{
    \indrule{\toE}{
      \indrule{var}{}{\var : [\alpha,\alpha] \to \beta \vdash \var : [\alpha,\alpha] \to \beta}
      \hspace{-.3cm}
      \indrule{var}{}{\var : [\alpha] \vdash \var : \alpha}
      \hspace{-.3cm}
      \indrule{var}{}{\var : [\alpha] \vdash \var : \alpha}
      \vspace{.1cm}
    }{
      \var : [[\alpha,\alpha] \to \beta, \alpha,\alpha] \vdash \var\var : \beta
    }
  }{
      \vdash \lam{\var}{\var\var} : [[\alpha,\alpha] \to \beta, \alpha,\alpha] \to \beta
  }}
\]
The second obstacle is a major one for our purposes: {\em proof normalization} in this system is not confluent.
The reason is that applications take multiple arguments, and a $\beta$-reduction step must choose a way to
distribute these arguments among the occurrences of the formal parameters.
For instance, the following critical pair cannot be closed:
\[
  \xymatrix@R=.1cm@C=.5cm{
  &
  \hspace{-3cm}
  \raisebox{.2cm}{
    $
    (\lam{\var}{\vartwo^{[\alpha] \to [\alpha] \to \beta}[\var^\alpha][\var^\alpha]})
    [\varthree^{[\gamma] \to \alpha}[\varthree^\gamma],\varthree^{[] \to \alpha}[]]
    $
  }
  \hspace{-3cm}
  \ar[dl] \ar[dr]
  \\
    \vartwo^{[\alpha] \to [\alpha] \to \beta}[\varthree^{[\gamma] \to \alpha}[\varthree^\gamma]][\varthree^{[] \to \alpha}[]]
  &
  &
    \vartwo^{[\alpha] \to [\alpha] \to \beta}[\varthree^{[] \to \alpha}[]][\varthree^{[\gamma] \to \alpha}[\varthree^\gamma]]
  }
\]
The remainder of this paper is organized as follows:~\begin{itemize}
\item 
  In \rsec{preliminaries}, we review some standard notions of order and rewriting theory.
\item 
  In \rsec{lambdadist}, we introduce a confluent calculus $\lambdadist$ based on system $\mathcal{W}$.
  The desirable properties of system~$\mathcal{W}$ of~\cite{bucciarelli2017non} still hold in $\lambdadist$.
  Moreover, $\lambdadist$ is confluent.
  We impose confluence forcibly, by decorating subtrees with distinct labels, so that
  a $\beta$-reduction step may distribute the arguments in a unique way.
  Derivation spaces in $\lambdadist$ have very regular structure,
  namely they are distributive lattices.
\item
  In \rsec{simulation}, we establish a correspondence between derivation spaces in the
  $\lambda$-calculus and the $\lambdadist$-calculus via simulation theorems,
  which defines a morphism of upper semilattices.
\item
  In \rsec{factorization}, we introduce the notion of a garbage derivation.
  Roughly, a derivation in the $\lambda$-calculus is {\em garbage} if it maps to
  an empty derivation in the $\lambdadist$-calculus.
  This gives rise to an orthogonal notion of {\em garbage-free} derivation.
  The notion of garbage-free derivation is closely related
  with the notions of {\em needed step}~\cite[Section~8.6]{Terese},
  {\em typed occurrence of a redex}~\cite{bucciarelli2017non},
  and {\em external} derivation~\cite{DBLP:conf/ctcs/Mellies97}.
  Using this notion of garbage we prove a {\em factorization theorem}
  reminiscent of Melli\`es'~\cite{DBLP:conf/ctcs/Mellies97}.
  The upper semilattice of derivations of a term in the $\lambda$-calculus
  is factorized using a variant of the Grothendieck construction.
  Every derivation is uniquely decomposed as a garbage-free prefix followed by a garbage suffix.
\item
  In \rsec{conclusions}, we conclude.
\end{itemize}

{\bf Note.} Proofs including a \SeeAppendix symbol are spelled out in detail in the appendix.

\section{Preliminaries}
  \lsec{preliminaries}
  
We recall some standard definitions.
An \defn{upper semilattice} is a poset $(A,\leq)$
with a least element or \defn{bottom} $\bot \in A$,
and such that for every two elements $a, b \in A$
there is a least upper bound or \defn{join} $(a \lor b) \in A$.
A \defn{lattice} is an upper semilattice
with a greatest element or \defn{top} $\top \in A$,
and such that for every two elements $a, b \in A$
there is a greatest lower bound or \defn{meet} $(a \land b) \in A$.
A lattice is \defn{distributive} if $\land$ distributes over $\lor$ and vice versa.
A \defn{morphism} of upper semilattices
is given by a monotonic function $f : A \to B$,
\ie $a \leq b$ implies $f(a) \leq f(b)$,
preserving the bottom element, \ie $f(\bot) = \bot$,
and joins, \ie $f(a \lor b) = f(a) \lor f(b)$ for all $a, b \in A$.
Similarly for morphisms of lattices.
Any poset $(A,\leq)$ forms a category whose objects are the elements of $A$
and morphisms are of the form $\pt{b}{a}$ for all $a \leq b$. 
The category of posets with monotonic functions is denoted by $\Poset$.
In fact, we regard it as a 2-category: given morphisms $f,g : A \to B$ of posets,
we have that $f \totwocell g$ whenever $f(a) \leq g(a)$ for all $a \in A$.

An \defn{axiomatic rewrite system} (\cf~\cite[Def.~2.1]{thesismellies})
is given by a set of objects $\Obj$, a set of steps $\Stp$,
two functions $\src,\tgt : \Stp \to \Obj$ indicating the source and target of each step,
and a \defn{residual function} $(/)$ such that given any two steps $\redex,\redextwo \in \Stp$
with the same source, yields a set of steps $\redex/\redextwo$ such that $\src(\redex') = \tgt(\redextwo)$
for all $\redex' \in \redex/\redextwo$.
Steps are ranged over by $\redex,\redextwo,\redexthree,\hdots$.
A step $\redex' \in \redex/\redextwo$ is called a \defn{residual} of $\redex$ after $\redextwo$,
and $\redex$ is called an \defn{ancestor} of $\redex'$.
Steps are \defn{coinitial} (resp. \defn{cofinal}) if they have the same source (resp. target).
A \defn{derivation} is a possibly empty sequence of composable steps $\redex_1\hdots\redex_n$.
Derivations are ranged over by $\redseq,\redseqtwo,\redseqthree,\hdots$.
The functions $\src$ and $\tgt$ are extended to derivations.
Composition of derivations is defined when $\tgt(\redseq) = \src(\redseqtwo)$ and written $\redseq\,\redseqtwo$.
Residuals after a derivation can be defined by
$\redex_n \in \redex_0/\redextwo_1\hdots\redextwo_n$
if and only if there exist $\redex_1,\hdots,\redex_{n-1}$
such that $\redex_{i+1} \in \redex_i/\redextwo_{i+1}$ for all $0 \leq i \leq n - 1$.
Let $\redexset$ be a set of coinitial steps.
A \defn{development} of $\redexset$ is a (possibly infinite) derivation $\redex_1\hdots\redex_n\hdots$
such that for every index $i$ there exists a step $\redextwo \in \redexset$
such that $\redex_i \in \redextwo/\redex_1\hdots\redex_{i-1}$.
A development is \defn{complete} if it is maximal.

An \defn{orthogonal} axiomatic rewrite system (\cf~\cite[Sec.~2.3]{thesismellies})
has four additional axioms\footnote{In \cite{thesismellies},
Autoerasure is called Axiom~A, Finite Residuals is called Axiom~B, and Semantic Orthogonality is called PERM.
We follow the nomenclature of~\cite{DBLP:conf/popl/AccattoliBKL14}.}:
\begin{enumerate}
\item {\em Autoerasure}.
  $\redex/\redex = \emptyset$ for all $\redex \in \Stp$.
\item {\em Finite Residuals}.
  The set $\redex/\redextwo$ is finite for all coinitial $\redex,\redextwo \in \Stp$.
\item {\em Finite Developments}.
  If $\redexset$ is a set of coinitial steps,
  all developments of $\redexset$ are finite.
\item {\em Semantic Orthogonality}.
  Let $\redex,\redextwo \in \Stp$ be coinitial steps.
  Then there exist a complete development $\redseq$ of $\redex/\redextwo$
  and a complete development $\redseqtwo$ of $\redextwo/\redex$
  such that $\redseq$ and $\redseqtwo$ are cofinal.
  Moreover,
  for every step $\redexthree \in \Stp$ such that $\redexthree$ is coinitial to $\redex$,
  the following equality between sets holds:
  $\redexthree/(\redex\redseqtwo) = \redexthree/(\redextwo\redseq)$.
\end{enumerate}
In~\cite{thesismellies}, Melli\`es develops
the theory of orthogonal axiomatic rewrite systems.
A notion of \defn{projection}
$\redseq/\redseqtwo$ may be defined between coinitial derivations,
essentially by setting $\emptyDerivation/\redseqtwo \eqdef \emptyDerivation$
and $\redex\redseq'/\redseqtwo \eqdef (\redex/\redseqtwo)(\redseq'/(\redseqtwo/\redex))$
where, by abuse of notation, $\redex/\redseqtwo$ stands for a (canonical) complete development
of the set $\redex/\redseqtwo$.
Using this notion, one may define
a transitive relation of \defn{prefix} ($\redseq \permle \redseqtwo$),
a \defn{permutation equivalence} relation ($\redseq \permeq \redseqtwo$),
and the \defn{join} of derivations ($\redseq \sqcup \redseqtwo$).
Some of their properties are summed up in the figure
below:
\begin{center}
{\bf Summary of properties of orthogonal axiomatic rewrite systems}
\end{center}
\[
{\small
\begin{array}{|c|c|c|}
\hline
\begin{array}{r@{\hspace{0.1cm}}c@{\hspace{0.1cm}}l}
  \emptyDerivation\,\redseq & = & \redseq
\\
  \redseq\,\emptyDerivation & = & \redseq
\\
  \emptyDerivation/\redseq & = & \emptyDerivation
\\
  \redseq/\emptyDerivation & = & \redseq
\\
  \redseq/\redseqtwo\redseqthree & = & (\redseq/\redseqtwo)/\redseqthree
\\
  \redseq\redseqtwo/\redseqthree & = & (\redseq/\redseqthree)(\redseqtwo/(\redseqthree/\redseq))
\\
  \redseq/\redseq & = & \emptyDerivation
\end{array}
&
\begin{array}{r@{\hspace{0.1cm}}c@{\hspace{0.1cm}}l}
  \redseq \permle \redseqtwo & \iffdef & \redseq/\redseqtwo = \emptyDerivation
\\
  \redseq \permeq \redseqtwo & \iffdef & \redseq \permle \redseqtwo \land \redseqtwo \permle \redseq
\\
  \redseq \sqcup \redseqtwo  & \eqdef & \redseq(\redseqtwo/\redseq) 
\\
  \redseq \permeq \redseqtwo & \implies & \redseqthree/\redseq = \redseqthree/\redseqtwo
\\
  \redseq \permle \redseqtwo & \iff & \exists \redseqthree.\ \redseq\redseqthree \permeq \redseqtwo
\\
  \redseq \permle \redseqtwo & \iff & \redseq \sqcup \redseqtwo \permeq \redseqtwo
\end{array}
&
\begin{array}{r@{\hspace{0.1cm}}c@{\hspace{0.1cm}}l}
  \redseq \permle \redseqtwo & \implies & \redseq/\redseqthree \permle \redseqtwo/\redseqthree
\\
  \redseq \permle \redseqtwo & \iff & \redseqthree\redseq \permle \redseqthree\redseqtwo
\\
  \redseq \sqcup \redseqtwo  & \permeq & \redseqtwo \sqcup \redseq
\\
  (\redseq \sqcup \redseqtwo) \sqcup \redseqthree & = & \redseq \sqcup (\redseqtwo \sqcup \redseqthree)
\\
  \redseq & \permle & \redseq \sqcup \redseqtwo
\\
  (\redseq \sqcup \redseqtwo)/\redseqthree & = & (\redseq/\redseqthree) \sqcup (\redseqtwo/\redseqthree) \\
\end{array}
\\
\hline
\end{array}
}
\]
Let $\cls{\redseq} = \set{\redseqtwo \ST \redseq \permeq \redseqtwo}$
denote the permutation equivalence class of $\redseq$.
In an orthogonal axiomatic rewrite system,
the set $\ulbDeriv{x} = \set{\cls{\redseq} \ST \src(\redseq) = x}$
forms an upper semilattice~\cite[Thm.~2.2,Thm.~2.3]{thesismellies}.
The order $\cls{\redseq} \permle \cls{\redseqtwo}$ is declared to hold if
$\redseq \permle \redseqtwo$,
the join is $\cls{\redseq} \sqcup \cls{\redseqtwo} = \cls{\redseq \sqcup \redseqtwo}$,
and the bottom is $\bot = \cls{\emptyDerivation}$.
The $\lambda$-calculus is an example of an orthogonal axiomatic rewrite system.
Our structures of interest are the semilattices of derivations of the $\lambda$-calculus,
written $\ulbDerivLam{\tm}$ for any given $\lambda$-term $\tm$.
As usual, $\beta$-reduction in the $\lambda$-calculus
is written $\tm \tobeta \tmtwo$
and defined by the contextual closure of the axiom $(\lam{\var}{\tm})\tmtwo \tobeta \subs{\tm}{\var}{\tmtwo}$.

\section{The Distributive $\lambda$-Calculus}
  \lsec{lambdadist}
  
In this section we introduce the
{\em distributive $\lambda$-calculus} ($\lambdadist$),
and we prove some basic results.
Terms of the $\lambdadist$-calculus are typing derivations of a non-idempotent intersection type
system, written using proof term syntax.
The underlying type system is a variant of system~$\mathcal{W}$ of \cite{bucciarelli2014inhabitation,bucciarelli2017non},
the main difference being that $\lambdadist$
uses {\em labels} and a suitable invariant on terms,
to ensure that the formal parameters of all functions
are in 1--1 correspondence with the actual arguments that they receive.

\begin{definition}[Syntax of the $\lambdadist$-calculus]
Let $\labelset = \set{\lab,\lab',\lab'',\hdots}$ be a denumerable set of labels.
The set of \defn{types} is ranged over by $\typ,\typtwo,\typthree,\hdots$,
and defined inductively as follows:
\[
  \begin{array}{rcl@{\hspace{2cm}}rcl}
  \typ  & ::= & \basetyp^{\lab} \mid \mtyp \tolab{\lab} \typ
  &
  \mtyp & ::= & [\typ_i]_{i=1}^{n} \HS\text{with $n \geq 0$} \\
  \end{array}
\]
where $\basetyp$ ranges over one of denumerably many {\em base types},
and $\mtyp$ represents a {\em multiset of types}.
In a type like $\basetyp^{\lab}$ and $\mtyp \tolab{\lab} \typ$,
the label $\lab$ is called the \defn{external label}.
The \defn{typing contexts} are defined as in \rsec{introduction} for system~$\mathcal{W}$.
We write $\dom\tctx$ for the domain of $\tctx$. 
A type $\typ$ is said to \defn{occur} inside another type $\typtwo$,
written $\typ \occursin \typtwo$, if $\typ$ is a subformula of $\typtwo$.
This is extended to say that a type $\typ$ occurs in a multiset $[\typtwo_1,\hdots,\typtwo_n]$,
declaring that $\typ \occursin [\typtwo_1,\hdots,\typtwo_n]$ if $\typ \occursin \typtwo_i$ for some $i=1..n$,
and that a type $\typ$ occurs in a typing context $\tctx$,
declaring that $\typ \occursin \tctx$ if $\typ \occursin \tctx(\var)$ for some $\var \in \dom\tctx$.
\medskip

The set of
\defn{terms},
ranged over by $\tm,\tmtwo,\tmthree,\hdots$, is given by the grammar
$\tm ::= \var^{\typ} \mid \lamp{\lab}{\var}{\tm} \mid \tm\,\ls{\tm}$,
where $\ls{\tm}$ represents a (possibly empty) {\em finite list} of terms.
The notations $[x_i]_{i=1}^{n}$, $[x_1,\hdots,x_n]$, and $\ls{x}$
all stand simultaneously for multisets and for lists of elements.
Note that there is no confusion since we only work with {\em multisets of types},
and with {\em lists of terms}.
The concatenation of the lists $\lst,\lsttwo$ is denoted by $\lst + \lsttwo$.
A sequence of $n$ lists $(\lst_1,\hdots,\lst_n)$ is a \defn{partition} of $\lst$
if $\lst_1 + \hdots + \lst_n$ is a permutation of $\lst$.
The set of \defn{free variables} of a term $\tm$
is written $\fv{\tm}$ and defined as expected.
We also write $\fv{[\tm_i]_{i=1}^n}$ for $\cup_{i=1}^n \fv{\tm_i}$.
A \defn{context} is a term $\con$ with an occurrence of a distinguished {\em hole} $\conbase$.
We write $\conof{\tm}$ for the capturing substitution of $\conbase$ by $\tm$.
\defn{Typing judgments} are triples $\tctx \vdash \tm : \typ$ representing the knowledge
that the term $\tm$ has type $\typ$ in the context $\tctx$.
Type assignment rules are:
\[
  \indrule{var}{
  }{
    \var : \lset{\typ} \vdash \var^\typ : \typ
  }
  \HS
  \indrule{\toI}{
    \tctx \oplus (\var : \mtyp) \vdash \tm : \typtwo
  }{
    \tctx \vdash \lamp{\lab}{\var}{\tm} : \mtyp \tolab{\lab} \typtwo
  }
\]
\[
  \indrule{\toE}{
    \tctx \vdash \tm : \lset{\typtwo_1,\hdots,\typtwo_n} \tolab{\lab} \typ
    \HS
    \left( \tctxtwo_i \vdash \tmtwo_i : \typtwo_i \right)_{i=1}^{n}
  }{
    \tctx +_{i=1}^{n} \tctxtwo_i \vdash \tm[\tmtwo_1,\hdots,\tmtwo_n] : \typ
  }
\]
\end{definition}
For example
$\vdash \lamp{1}{\var}{\var^{[\alpha^2,\alpha^3] \tolab{4} \beta^5}[\var^{\alpha^3},\var^{\alpha^2}]} : [[\alpha^2,\alpha^3] \tolab{4} \beta^5, \alpha^2, \alpha^3] \tolab{1} \beta^5$
is a derivable judgment (using integer labels).

\begin{remark}[Unique typing]
Let $\tctx \vdash \tm : \typ$ and $\tctxtwo \vdash \tm : \typtwo$ be derivable judgments.
Then $\tctx = \tctxtwo$ and $\typ = \typtwo$.
Moreover, the derivation trees coincide.
\end{remark}
This can be checked by induction on $\tm$.
It means that $\lambdadist$ is an {\em \`a la Church} type system,
that is, types are an {\em intrinsic} property of the syntax of terms,
as opposed to an {\em \`a la Curry} type system like~${\mathcal W}$, in which
types are {\em extrinsic} properties that a given term might or might not have.

To define a confluent rewriting rule,
we impose a further constraint on the syntax of terms, called \defn{correctness}.
The $\lambdadist$-calculus will be defined over the set of correct terms.

\begin{definition}[Correct terms]
\ldef{sequentiality_and_correctness}
A multiset of types $[\typ_1,\hdots,\typ_n]$ is \defn{sequential}
if the external labels of $\typ_i$ and $\typ_j$ are different for all $i \neq j$.
A typing context $\tctx$ is sequential if $\tctx(\var)$ is sequential for every $\var \in \dom\tctx$.
A term $\tm$ is correct if it is typable and it verifies the following three conditions:
\begin{enumerate}
\item {\em Uniquely labeled lambdas.}
  If $\lamp{\lab}{\var}{\tmtwo}$ and $\lamp{\labtwo}{\vartwo}{\tmthree}$ 
  are subterms of $\tm$ at different positions, then $\lab$ and $\labtwo$
  must be different labels.
\item {\em Sequential contexts.}
  If $\tmtwo$ is a subterm of $\tm$ and $\tctx \vdash \tmtwo : \typ$
  is derivable, then $\tctx$ must be sequential.
\item {\em Sequential types.}
  If $\tmtwo$ is a subterm of $\tm$,
  the judgment $\tctx \vdash \tmtwo : \typ$ is derivable,
  and there exists a type such that
  $(\mtyp \tolab{\lab} \typtwo \occursin \tctx) \lor (\mtyp \tolab{\lab} \typtwo \occursin \typ)$,
  then $\mtyp$ must be sequential.
\end{enumerate}
The set of \defn{correct} terms is denoted by $\termsdist$.
\end{definition}
For example,
$\var^{[\alpha^1] \tolab{2} \beta^3}[\var^{\alpha^1}]$ is a correct term,
$\lamp{1}{\var}{\lamp{1}{\vartwo}{\vartwo^{\alpha^2}}}$ 
is not a correct term since labels for lambdas are not unique,
and
$\lamp{1}{\var}{\var^{\alpha^2 \tolab{3} [\beta^4,\beta^4] \tolab{5} \gamma^6}}$ is not a correct term
since $[\beta^4,\beta^4]$ is not sequential.

Substitution is defined explicitly below.
If $\tm$ is typable,
$\varlabel{\var}{\tm}$
stands for the multiset
of types of the free occurrences of $\var$ in $\tm$.
If $\tm_1,\hdots,\tm_n$ are typable,
$\tmlabel{[\tm_1,\hdots,\tm_n]}$ stands for the multiset
of types of $\tm_1,\hdots,\tm_n$.
For example,
$\varlabel{\var}{\var^{[\alpha^1] \tolab{2} \beta^3}[\var^{\alpha^1}]} =
\tmlabel{[\vartwo^{\alpha^1},\varthree^{[\alpha^1] \tolab{2} \beta^3}]} = [[\alpha^1] \tolab{2} \beta^3,\alpha^1]$.
To perform a substitution $\subs{\tm}{\var}{[\tmtwo_1,\hdots,\tmtwo_n]}$
we will require that $\varlabel{\var}{\tm} = \tmlabel{[\tmtwo_1,\hdots,\tmtwo_n]}$.

\begin{definition}[Substitution]
\ldef{substitution}
Let $\tm$ and $\tmtwo_1,\hdots,\tmtwo_n$ be correct terms such that $\varlabel{\var}{\tm} = \tmlabel{[\tmtwo_1,\hdots,\tmtwo_n]}$.
The capture-avoiding substitution of $\var$ in $\tm$ by $\ls{\tmtwo} = [\tmtwo_1,\hdots,\tmtwo_n]$
is denoted by $\subs{\tm}{\var}{\ls{\tmtwo}}$ and defined as follows:
\[
  \begin{array}{rcll}
    \subs{\var^\typ}{\var}{[\tmtwo]} & \eqdef & \tmtwo
  \\
    \subs{\vartwo^\typ}{\var}{[]} & \eqdef & \vartwo^\typ
    & \text{ if $\var \neq \vartwo$}
  \\
    \subs{(\lamp{\lab}{\vartwo}{\tmthree})}{\var}{\ls{\tmtwo}} & \eqdef &  \lamp{\lab}{\vartwo}{ \subs{\tmthree}{\var}{\ls{\tmtwo}} }
    & \text{ if $\var \neq \vartwo$ and $\vartwo \not\in \fv{\ls{\tmtwo}}$}
  \\
    \subs{\tmthree_0[\tmthree_j]_{j=1}^{m}}{\var}{\ls{\tmtwo}} & \eqdef &
    \subs{\tmthree_0}{\var}{\ls{\tmtwo}_0}[\subs{\tmthree_j}{\var}{\ls{\tmtwo}_j}]_{j=1}^{m}
  \end{array}
\]
In the last case, $(\ls{\tmtwo}_0, \hdots, \ls{\tmtwo}_m)$
is a partition of $\ls{\tmtwo}$
such that $\varlabel{\var}{\tmthree_j} = \tmlabel{\ls{\tmtwo}_j}$ for all $j=0..m$.
\end{definition}

\begin{remark}
Substitution is {\em type-directed}: the arguments $[\tmtwo_1,\hdots,\tmtwo_n]$
are propagated throughout the term
so that $\tmtwo_i$ reaches the free occurrence of $\var$
that has the same type as $\tmtwo_i$.
Note that the definition of substitution requires that $\varlabel{\var}{\tm} = \tmlabel{[\tmtwo_1,\hdots,\tmtwo_n]}$,
which means that the types of the terms $\tmtwo_1,\hdots,\tmtwo_n$
are in 1--1 correspondence with the types of the free occurrences of $\var$.
Moreover, since $\tm$ is a correct term,
the multiset $\varlabel{\var}{\tm}$ is sequential,
which implies in particular that each free occurrence of $\var$ has a different type.
Hence there is a {\em unique} correspondence matching the free occurrences of $\var$
with the arguments $\tmtwo_1,\hdots,\tmtwo_n$ that respects their types.
As a consequence, in the definition of substitution for an application
$\subs{\tmthree_0[\tmthree_j]_{j=1}^{m}}{\var}{\ls{\tmtwo}}$
there is essentially a unique way to split $\ls{\tmtwo}$
into $n+1$ lists $(\ls{\tmtwo}_0, \ls{\tmtwo}_1, \hdots, \ls{\tmtwo}_n)$
in such a way that $\varlabel{\var}{\tmthree_i} = \tmlabel{\ls{\tmtwo}_i}$.
More precisely, if $(\ls{\tmtwo}_0, \ls{\tmtwo}_1, \hdots, \ls{\tmtwo}_n)$
and $(\ls{\tmthree}_0, \ls{\tmthree}_1, \hdots, \ls{\tmthree}_n)$
are two partitions of $\ls{\tmtwo}$ with the stated property,
then $\ls{\tmtwo}_i$ is a permutation of $\ls{\tmthree}_i$ for all $i=0..n$.
Using this argument, it is easy to check by induction on $\tm$
that the value of $\subs{\tm}{\var}{\ls{\tmtwo}}$ is uniquely determined and
does not depend on this choice.
\end{remark}

For example,
$\subs{(\var^{[\alpha^1] \tolab{2} \beta^3}[\var^{\alpha^1}])}{\var}{[\vartwo^{[\alpha^1] \tolab{2} \beta^3},\varthree^{\alpha^1}]}
  = \vartwo^{[\alpha^1] \tolab{2} \beta^3}\varthree^{\alpha^1}$
while, on the other hand,
$\subs{(\var^{[\alpha^1] \tolab{2} \beta^3}[\var^{\alpha^1}])}{\var}{[\vartwo^{\alpha^1},\varthree^{[\alpha^1] \tolab{2} \beta^3}]}
  = \varthree^{[\alpha^1] \tolab{2} \beta^3}\vartwo^{\alpha^1}$.

The operation of substitution preserves term correctness and typability:
\begin{lemma}[Subject Reduction]
\llem{subject_reduction}
If $\conof{(\lamp{\lab}{\var}{\tm})\ls{\tmtwo}}$ is a correct term
such that
the judgment $\tctx \vdash \conof{(\lamp{\lab}{\var}{\tm})\ls{\tmtwo}} : \typ$
is derivable, then $\conof{\subs{\tm}{\var}{\ls{\tmtwo}}}$ is correct
and $\tctx \vdash \conof{\subs{\tm}{\var}{\ls{\tmtwo}}} : \typ$
is derivable.
\end{lemma}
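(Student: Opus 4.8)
The plan is to reduce the claim to the situation in which the $\beta$-redex occupies the root of the term, and to handle that situation by a \emph{substitution lemma} proved by induction on $\tm$. I would first record two preliminary observations. \emph{Relevance:} an easy induction on typing derivations shows that if $\tctx \vdash \tmtwo : \typ$ is derivable then $\dom\tctx = \fv{\tmtwo}$ and $\tctx(\vartwo) = \varlabel{\vartwo}{\tmtwo}$ for every variable $\vartwo$; the \indrulename{var} rule gives the base case, and \indrulename{\toI}, \indrulename{\toE} propagate the invariant across removal of the bound variable and the sum of contexts. \emph{Heredity of correctness:} since conditions 1--3 of \rdef{sequentiality_and_correctness} are stated entirely in terms of subterms, every typable subterm of a correct term is itself correct. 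Together these also show that the substitution in the conclusion is well defined: inverting the derivation of $\tctx \vdash (\lamp{\lab}{\var}{\tm})\ls{\tmtwo} : \typ$ through \indrulename{\toE} and then \indrulename{\toI} yields $\tctx' \oplus (\var : [\typtwo_i]_{i=1}^{n}) \vdash \tm : \typ$, judgments $(\tctxtwo_i \vdash \tmtwo_i : \typtwo_i)_{i=1}^{n}$, and the equalities $\tctx = \tctx' +_{i=1}^{n}\tctxtwo_i$, $\ls{\tmtwo} = [\tmtwo_i]_{i=1}^{n}$; relevance then gives $\varlabel{\var}{\tm} = [\typtwo_i]_{i=1}^{n} = \tmlabel{\ls{\tmtwo}}$, which is exactly the precondition of \rdef{substitution}.

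The heart of the argument is a \textbf{Substitution Lemma}: under the invariant that $\tm$ and each $\tmtwo_i$ are correct, that $\tctx' \oplus (\var : \varlabel{\var}{\tm}) \vdash \tm : \typ$ and $(\tctxtwo_i \vdash \tmtwo_i : \typtwo_i)_{i=1}^{n}$ with $\varlabel{\var}{\tm} = \tmlabel{\ls{\tmtwo}}$, that the lambda labels of $\tm$ together with those of all the $\tmtwo_i$ form a repetition-free pool, and that $\tctx' +_{i=1}^{n}\tctxtwo_i$ is sequential, one concludes that $\subs{\tm}{\var}{\ls{\tmtwo}}$ is correct and $\tctx' +_{i=1}^{n}\tctxtwo_i \vdash \subs{\tm}{\var}{\ls{\tmtwo}} : \typ$. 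I would prove this by induction on $\tm$, following the four clauses of \rdef{substitution}. The cases $\tm = \var^{\typ}$ (which forces $n = 1$) and $\tm = \vartwo^{\typ}$ with $\vartwo \neq \var$ (which forces $n = 0$) are immediate. For $\tm = \lamp{\labtwo}{\vartwo}{\tmthree}$, one renames $\vartwo$ fresh for $\ls{\tmtwo}$ by $\alpha$-conversion (which touches only variables, not labels), applies the induction hypothesis to $\tmthree$, and rebuilds the typing with \indrulename{\toI}; the only point worth checking is that $\subs{\tmthree}{\var}{\ls{\tmtwo}}$ has uniquely labeled lambdas, which holds because its lambdas already occur in $\tmthree$ or in some $\tmtwo_i$. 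For $\tm = \tmthree_0[\tmthree_j]_{j=1}^{m}$, one uses that --- $\varlabel{\var}{\tm}$ being sequential --- the partition $(\ls{\tmtwo}_0, \hdots, \ls{\tmtwo}_m)$ of \rdef{substitution} is the unique one matching each free occurrence of $\var$ with the argument of equal type, applies the induction hypothesis to each $\tmthree_j$ against $\ls{\tmtwo}_j$, and recombines with \indrulename{\toE}; the contexts add up correctly, and every subsum of the sequential context $\tctx' +_{i}\tctxtwo_i$ is again sequential, so conditions 2--3 are maintained.

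For the main statement I would then do induction on the context $\con$. If $\con = \conbase$, the inversion above combined with heredity of correctness produces exactly the invariant of the Substitution Lemma, which delivers both the typing and the correctness of $\conof{\subs{\tm}{\var}{\ls{\tmtwo}}}$. If $\con = \lamp{\labtwo}{\vartwo}{\con'}$, $\con = \con'\,[\tmthree_j]_{j}$, or $\con = \tmthree_0\,[\hdots, \con', \hdots]$, I would invert the corresponding rule, apply the induction hypothesis to the strictly smaller context $\con'$, and reassemble the derivation. Correctness of the whole term is then almost automatic: the typing component just established shows that the replaced subterm keeps both its type and its typing context, so the types and contexts recorded at every position of $\con$ above the hole are literally unchanged (so conditions 2--3 survive there), while inside the hole they are handled by the induction hypothesis; and since a $\beta$-step introduces no new lambda label, the global multiset of lambda labels only shrinks and hence stays duplicate-free (condition 1).

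I expect the real difficulty to lie not in the typing manipulations --- these are routine non-idempotent bookkeeping with sums of contexts --- but in pinning down the induction invariant for the Substitution Lemma and checking that the three sequentiality conditions of \rdef{sequentiality_and_correctness} survive substitution. The observation that makes this tractable is that substitution in $\lambdadist$ is \emph{linear}: since $\varlabel{\var}{\tm}$ is sequential, the free occurrences of $\var$ in $\tm$ carry pairwise distinct types and are therefore in 1--1 correspondence with the arguments $\tmtwo_1, \hdots, \tmtwo_n$, so each $\tmtwo_i$ is grafted exactly once and no subterm --- in particular no lambda --- is ever duplicated. Together with the facts that every typing context occurring after the step is a subsum of the original (sequential) context $\tctx$ and that every surviving occurrence keeps the type it carried, this keeps all three conditions under control.
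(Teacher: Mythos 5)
Your proposal is correct and follows essentially the same route as the paper's proof: an outer induction on the context $\con$ with the root case handled by an inner induction on $\tm$, supported by the same auxiliary facts (relevance, correctness of subterms, linearity of substitution so that lambda labels are neither duplicated nor freshly introduced, and uniqueness of the type-directed partition of $\ls{\tmtwo}$ in the application case). The only organizational difference is that you carry typing and the three correctness conditions through a single combined induction invariant, whereas the paper runs two separate passes (first preservation of typing, then preservation of correctness); both work, and your invariant matches the paper's conditions \condp{1}--\condp{5}.
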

\begin{proof}
\SeeAppendixRef{appendix_subject_reduction}
By induction on $\con$.
\end{proof}

\begin{definition}[The $\lambdadist$-calculus]
The \defn{$\lambdadist$-calculus} is the rewriting system whose objects
are the set of correct terms $\termsdist$.
The rewrite relation $\todist$ is the closure under arbitrary contexts
of the rule $(\lamp{\lab}{\var}{\tm})\ls{\tmtwo} \todist \subs{\tm}{\var}{\ls{\tmtwo}}$.
\rlem{subject_reduction} justifies that $\todist$ is well-defined, \ie that
the right-hand side is a correct term.
The label of a step is the label $\lab$ decorating the contracted lambda.
We write $\tm \todistl{\lab} \tmtwo$ whenever $\tm \todist \tmtwo$ and the label of
the step is $\lab$.
\end{definition}

\begin{example}
\lexample{lambdadist_reduction_graph}
Let $I^3 \eqdef \lamp{3}{\var}{\var^{\alpha^2}}$
and $I^4 \eqdef \lamp{4}{\var}{\var^{\alpha^2}}$.
The reduction graph of
the term
$(\lamp{1}{\var}{\var^{[\alpha^2] \tolab{3} \alpha^2}[\var^{\alpha^2}]})[
  I^3,
  I^4[\varthree^{\alpha^2}]
]$ is:
\[
  {\small
    \xymatrix@R=.5cm{
      (\lamp{1}{\var}{\var^{[\alpha^2] \tolab{3} \alpha^2}[\var^{\alpha^2}]})[
        I^3,
        I^4[\varthree^{\alpha^2}]
      ]
      \ar[r]^-{1}_-{\redextwo}
      \ar[d]_-{4}^-{\redex}
    &
      I^3[
        I^4[\varthree^{\alpha^2}]
      ]
      \ar[r]^-{3}_-{\redexthree}
      \ar[d]_-{4}^-{\redex'}
    &
      I^4[\varthree^{\alpha^2}]
      \ar[d]_-{4}^-{\redex''}
    \\
      (\lamp{1}{\var}{\var^{[\alpha^2] \tolab{3} \alpha^2}[\var^{\alpha^2}]})[
        I^3,
        \varthree^{\alpha^2}
      ]
      \ar[r]^-{1}_-{\redextwo'}
    &
      I^3[ \varthree^{\alpha^2} ]
      \ar[r]^-{3}_-{\redexthree'}
    &
      \varthree^{\alpha^2}
    }
  }
\]
\end{example}
Note that numbers over arrows are the labels of the steps,
while $\redex,\redex',\redextwo,...$ are metalanguage names to refer to the steps.
Next, we state and prove some basic properties of $\lambdadist$.

\begin{proposition}[Strong Normalization]
\lprop{strong_normalization}
There is no infinite reduction $\tm_0 \todist \tm_1 \todist \hdots$. 
\end{proposition}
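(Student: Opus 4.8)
The plan is to exhibit a strictly decreasing, well-founded measure on correct terms and check that every $\todist$-step lowers it. Let $|\tm| \in \Nat$ denote the \emph{size} of $\tm$, i.e.\ the number of constructors (variable occurrences, abstractions, applications) appearing in $\tm$. Since $\Nat$ is well-founded, it suffices to prove that $\tm \todist \tmtwo$ implies $|\tm| > |\tmtwo|$ (the contractum is a correct term by \rlem{subject_reduction}, so $\todist$ indeed stays within $\termsdist$, and $|{\cdot}|$ is defined on it).

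The one genuine ingredient is an arithmetic lemma about substitution: if $\tm$ and $\tmtwo_1,\dots,\tmtwo_n$ are correct with $\varlabel{\var}{\tm} = \tmlabel{[\tmtwo_1,\dots,\tmtwo_n]}$, then
\[
  |\subs{\tm}{\var}{[\tmtwo_1,\dots,\tmtwo_n]}| \;=\; |\tm| - n + \textstyle\sum_{i=1}^n |\tmtwo_i| .
\]
The reason is that the number of free occurrences of $\var$ in $\tm$ equals the cardinality of the multiset $\varlabel{\var}{\tm}$, which by hypothesis equals $|\tmlabel{[\tmtwo_1,\dots,\tmtwo_n]}| = n$; and since substitution is type-directed and, by correctness, $\varlabel{\var}{\tm}$ is sequential, each occurrence $\var^\typ$ is replaced by exactly one $\tmtwo_i$. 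So substitution deletes $n$ variable nodes of size $1$ and inserts the $\tmtwo_i$'s. I would prove the displayed equality by induction on $\tm$ following the clauses of \rdef{substitution}, using the remark after that definition to know the value (hence the size) of the substitution does not depend on the chosen partition. The abstraction case uses $\var \neq \vartwo$ and $\vartwo \notin \fv{\ls{\tmtwo}}$; the application case $\tmthree_0[\tmthree_j]_{j=1}^m$ partitions $[\tmtwo_1,\dots,\tmtwo_n]$ as $(\ls{\tmtwo}_0,\dots,\ls{\tmtwo}_m)$, applies the induction hypothesis to each $\tmthree_j$, and sums, noting that the free occurrences of $\var$ in $\tmthree_0[\tmthree_j]_j$ are partitioned among the $\tmthree_j$ and that $|\ls{\tmtwo}_j|$ is exactly the number of free occurrences of $\var$ in $\tmthree_j$.

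With the lemma, a step contracts a redex $(\lamp{\lab}{\var}{\tm})[\tmtwo_1,\dots,\tmtwo_n]$ inside a context $\con$. The redex has size $2 + |\tm| + \sum_{i=1}^n |\tmtwo_i|$ (one application node, one abstraction node, the body, the arguments), whereas the contractum $\subs{\tm}{\var}{[\tmtwo_1,\dots,\tmtwo_n]}$ has size $|\tm| - n + \sum_{i=1}^n |\tmtwo_i|$; hence the redex exceeds the contractum by $n + 2 \geq 2 > 0$. A trivial induction on $\con$ shows that $|\tm_1| > |\tm_2|$ implies $|\conof{\tm_1}| > |\conof{\tm_2}|$, so $\tm \todist \tmtwo$ implies $|\tm| > |\tmtwo|$, and no infinite reduction $\tm_0 \todist \tm_1 \todist \hdots$ can exist.

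I expect the only real work to be the substitution size lemma — concretely, making precise that each free occurrence of $\var$ in $\tm$ is substituted exactly once (this is where sequentiality/correctness, and the hypothesis $\varlabel{\var}{\tm} = \tmlabel{[\tmtwo_1,\dots,\tmtwo_n]}$, are used). Everything else is bookkeeping.
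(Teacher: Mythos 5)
Your proof is correct and takes essentially the same approach as the paper: both exhibit a decreasing natural-number measure whose decrease is justified by the linearity of substitution (each argument $\tmtwo_i$ is consumed exactly once by exactly one free occurrence of $\var$). The paper simply uses a lighter measure --- the number of lambdas, which drops by exactly $1$ per step since the contracted lambda disappears and no lambdas are duplicated or erased --- whereas your total-size measure requires the (correct) substitution-size lemma but yields the same conclusion.
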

\begin{proof}
Observe that a reduction step
$\conof{(\lamp{\lab}{\var}{\tm})\ls{\tmtwo}} \todist \conof{\subs{\tm}{\var}{\ls{\tmtwo}}}$
decreases the number of lambdas in a term by exactly 1,
because substitution is {\em linear}, \ie
the term $\subs{\tm}{\var}{[\tmtwo_1,\hdots,\tmtwo_n]}$
uses $\tmtwo_i$ exactly once for all $i=1..n$.
{\em Note:}
this is an adaptation of~\cite[Theorem~4.1]{bucciarelli2017non}.
\end{proof}

The substitution operator may be extended to work on lists,
by defining
$\subs{[\tm_i]_{i=1}^{n}}{\var}{\ls{\tmtwo}} \eqdef [\subs{\tm_i}{\var}{\ls{\tmtwo}_i}]_{i=1}^{n}$
where $(\ls{\tmtwo}_1,\hdots,\ls{\tmtwo}_n)$ is a partition of $\ls{\tmtwo}$
such that $\varlabel{\var}{\tm_i} = \tmlabel{\ls{\tmtwo}_i}$ for all $i=1..n$.

\begin{lemma}[Substitution Lemma]
\llem{substitution_lemma}
Let $\var \neq \vartwo$ and $\var \not\in \fv{\ls{\tmthree}}$.
If $(\ls{\tmthree}_1,\ls{\tmthree}_2)$ is a partition of $\ls{\tmthree}$
then
$
  \subs{\subs{\tm}{\var}{\ls{\tmtwo}}}{\vartwo}{\ls{\tmthree}}
  =
  \subs{\subs{\tm}{\vartwo}{\ls{\tmthree}_1}}{\var}{\subs{\ls{\tmtwo}}{\vartwo}{\ls{\tmthree}_2}}
$,
provided that both sides of the equation are defined.
{\em Note:} there exists a list $\ls{\tmthree}$ that makes the left-hand side defined
if and only if there exist lists $\ls{\tmthree}_1,\ls{\tmthree}_2$ that make the
right-hand side defined.
\end{lemma}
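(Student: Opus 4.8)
The plan is to prove the equation by induction on the structure of $\tm$, working up to $\alpha$-equivalence so that bound variables can be renamed when convenient. The one genuinely delicate point is that the substitution of \rdef{substitution} is type-directed and, at an application node, distributes its argument list according to a partition; the guiding principle, already recorded in the Remark following \rdef{substitution}, is that such a partition is unique up to permutation of its components and that the result of a substitution depends only on the multiset of its argument list. I will use both facts freely to reconcile the several list-splittings that arise on the two sides of the claimed identity.

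First I would dispatch the variable cases. If $\tm = \var^{\typ}$ then definedness forces $\ls{\tmtwo} = [\tmtwo]$ for a single $\tmtwo$ of type $\typ$, and since $\var \neq \vartwo$ it forces $\ls{\tmthree}_1 = []$ with $\ls{\tmthree}_2$ a permutation of $\ls{\tmthree}$; both sides then reduce to $\subs{\tmtwo}{\vartwo}{\ls{\tmthree}}$. If $\tm = \vartwo^{\typ}$ then $\ls{\tmtwo} = []$ and $\ls{\tmthree} = [\tmthree]$ for a single $\tmthree$, and definedness of the right-hand side forces $\ls{\tmthree}_1 = [\tmthree]$, $\ls{\tmthree}_2 = []$; both sides equal $\tmthree$, where on the right one uses $\var \notin \fv{\ls{\tmthree}}$, hence $\var \notin \fv{\tmthree}$, so that $\subs{\tmthree}{\var}{[]} = \tmthree$. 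If $\tm = \varthree^{\typ}$ with $\varthree \notin \set{\var,\vartwo}$, then $\ls{\tmtwo} = \ls{\tmthree} = []$ and both sides equal $\varthree^{\typ}$. The abstraction case $\tm = \lamp{\lab}{\varthree}{\tm'}$ is routine: after renaming $\varthree$ away from $\var$, $\vartwo$, $\fv{\ls{\tmtwo}}$ and $\fv{\ls{\tmthree}}$, each of the four substitutions commutes with the binder, and the claim follows from the induction hypothesis applied to $\tm'$ with the same lists.

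The application case $\tm = \tm_0[\tm_j]_{j=1}^{m}$ is the crux. On the left, $\subs{\tm}{\var}{\ls{\tmtwo}}$ distributes $\ls{\tmtwo}$ as $(\ls{\tmtwo}_0,\hdots,\ls{\tmtwo}_m)$ with $\varlabel{\var}{\tm_j} = \tmlabel{\ls{\tmtwo}_j}$, and the outer substitution of $\vartwo$ then distributes $\ls{\tmthree}$ as $(\ls{\tmthree}^{0},\hdots,\ls{\tmthree}^{m})$ with $\varlabel{\vartwo}{\subs{\tm_j}{\var}{\ls{\tmtwo}_j}} = \tmlabel{\ls{\tmthree}^{j}}$. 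On the right, $\subs{\tm}{\vartwo}{\ls{\tmthree}_1}$ distributes $\ls{\tmthree}_1$ as $(\ls{\tmthree}_1^{0},\hdots,\ls{\tmthree}_1^{m})$, the list $\subs{\ls{\tmtwo}}{\vartwo}{\ls{\tmthree}_2}$ is formed by splitting $\ls{\tmtwo}$ and $\ls{\tmthree}_2$ compatibly, and substituting $\var$ into $\subs{\tm_0}{\vartwo}{\ls{\tmthree}_1^{0}}[\subs{\tm_j}{\vartwo}{\ls{\tmthree}_1^{j}}]_{j=1}^{m}$ re-distributes $\subs{\ls{\tmtwo}}{\vartwo}{\ls{\tmthree}_2}$ according to $\varlabel{\var}{\subs{\tm_j}{\vartwo}{\ls{\tmthree}_1^{j}}}$. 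The decisive observation is that, since $\var \neq \vartwo$, substituting $\vartwo$ into $\tm_j$ neither creates nor destroys free occurrences of $\var$, so $\varlabel{\var}{\subs{\tm_j}{\vartwo}{\ls{\tmthree}_1^{j}}} = \varlabel{\var}{\tm_j} = \tmlabel{\ls{\tmtwo}_j}$; combined with uniqueness of partitions up to permutation, this identifies, component by component in $j$, the piece $\ls{\tmtwo}_j$, the sublist $\ls{\tmthree}_1^{j}$ of $\ls{\tmthree}_1$, and the sublist $\ls{\tmthree}_2^{j}$ of $\ls{\tmthree}_2$ travelling with $\ls{\tmtwo}_j$, and shows that $(\ls{\tmthree}_1^{j},\ls{\tmthree}_2^{j})$ is a partition of $\ls{\tmthree}^{j}$. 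Applying the induction hypothesis to each $\tm_j$ ($j = 0,\hdots,m$) with these sublists—legitimate since $\ls{\tmthree}^{j}$ is a sublist of $\ls{\tmthree}$, whence $\var \notin \fv{\ls{\tmthree}^{j}}$—produces the two sides componentwise, and since the head and argument structure of the application is untouched, the equality follows. Running the same matching in reverse establishes the Note: a list $\ls{\tmthree}$ witnessing definedness of the left exists exactly when a partition $(\ls{\tmthree}_1,\ls{\tmthree}_2)$ witnessing definedness of the right exists.

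I expect the application case to be the main obstacle—not because it is conceptually hard, but because three independent list-splittings (of $\ls{\tmtwo}$, of $\ls{\tmthree}_1$, and of $\ls{\tmthree}_2$, the last one happening inside $\subs{\ls{\tmtwo}}{\vartwo}{\ls{\tmthree}_2}$) have to be chosen coherently, and the definedness side-conditions must be threaded through the very same argument. Leaning on uniqueness of partitions up to permutation, which itself rests on sequentiality of the relevant multisets together with unique typing, should turn this into a finite matching computation rather than a real difficulty.
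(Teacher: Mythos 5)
Your proof is correct and follows essentially the same route as the paper's: structural induction on $\tm$, with the application case as the only nontrivial one, reconciling the several list-splittings via the uniqueness (up to permutation) of type-directed partitions and the invariance of substitution under permutation of the argument list. The paper's version simply runs the chain of equalities from the left-hand side and defines $\ls{\tmthreevariant}_1,\ls{\tmthreevariant}_2$ as the concatenations of the componentwise pieces, which is the same matching you describe.
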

\begin{proof}
\SeeAppendixRef{appendix_substitution_lemma}
By induction on $\tm$.
\end{proof}

\begin{proposition}[Permutation]
\lprop{strong_permutation}
If $\tm_0 \todistl{\lab_1} \tm_1$
and $\tm_0 \todistl{\lab_2} \tm_2$
are different steps, then there exists a term $\tm_3 \in \termsdist$ such that
$\tm_1 \todistl{\lab_2} \tm_3$ and $\tm_2 \todistl{\lab_1} \tm_3$.
\end{proposition}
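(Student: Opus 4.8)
The plan is to prove the diamond property directly, by a case analysis on the relative positions of the two contracted redexes, isolating the genuinely nested case and reducing it to a commutation property between $\dist$-reduction and substitution. Throughout, observe that since $\tm_0$ is correct its lambdas are uniquely labeled, so two distinct coinitial steps necessarily carry distinct labels; write the contracted redexes as $\redex_1$ (with label $\lab_1$) and $\redex_2$ (with label $\lab_2$), where $\lab_1 \neq \lab_2$. I would argue by induction on $\tm_0$. If $\tm_0 = \var^\typ$ there is nothing to prove. If $\tm_0 = \lamp{\lab}{\var}{\tm}$, both redexes lie in $\tm$, and the claim follows from the induction hypothesis by closing the two resulting steps under the context $\lamp{\lab}{\var}{\conbase}$. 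If $\tm_0 = \tmthree_0[\tmthree_1,\hdots,\tmthree_m]$, then each of $\redex_1,\redex_2$ is either the head redex (which requires $\tmthree_0 = \lamp{\lab}{\var}{\tmfour}$ with $\lab \in \set{\lab_1,\lab_2}$) or occurs inside one of the immediate subterms $\tmthree_0,\hdots,\tmthree_m$. When neither step is the head redex: if both occur inside the same immediate subterm, apply the induction hypothesis and close under the evident context; if they occur inside two distinct immediate subterms, they act at disjoint positions, so contracting one leaves the other untouched and, by Subject Reduction (\rlem{subject_reduction}), the intermediate term is again correct with the same types, so the remaining step still applies, yielding $\tm_3$ directly.

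The main case is when, say, $\redex_1$ is the head redex $(\lamp{\lab_1}{\var}{\tmfour})[\tmthree_j]_{j=1}^{m}$, so that $\tm_1 = \subs{\tmfour}{\var}{[\tmthree_j]_{j=1}^{m}}$, and $\redex_2$ occurs strictly inside it; the case with the roles of $\redex_1$ and $\redex_2$ swapped is symmetric. Here I would use a \emph{substitutivity lemma}, proved separately by induction on the term being substituted into, with the Substitution Lemma (\rlem{substitution_lemma}) as the key ingredient and using that $\dist$-reduction preserves typing judgments---hence preserves $\varlabel{\var}{\tm}$---by \rlem{subject_reduction} and unique typing:
\begin{enumerate}
\item if $\tm \todistl{\lab} \tm'$ then $\subs{\tm}{\var}{\ls{\tmtwo}} \todistl{\lab} \subs{\tm'}{\var}{\ls{\tmtwo}}$ whenever the left-hand side is defined;
\item if $\tmtwo_k \todistl{\lab} \tmtwo_k'$ then $\subs{\tm}{\var}{[\tmtwo_1,\hdots,\tmtwo_k,\hdots,\tmtwo_n]} \todistl{\lab} \subs{\tm}{\var}{[\tmtwo_1,\hdots,\tmtwo_k',\hdots,\tmtwo_n]}$ whenever the left-hand side is defined.
\end{enumerate}
Item (2) uses that, by sequentiality of $\varlabel{\var}{\tm}$, each argument is routed by the substitution to exactly one free occurrence of $\var$, hence occurs exactly once in the result, so its reduction has a unique residual. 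Now, if $\redex_2$ sits inside the body $\tmfour$, contracting it gives $\tmfour \todistl{\lab_2} \tmfour'$, whence $\tm_1 = \subs{\tmfour}{\var}{[\tmthree_j]_{j=1}^{m}} \todistl{\lab_2} \subs{\tmfour'}{\var}{[\tmthree_j]_{j=1}^{m}} =: \tm_3$ by (1), while $\tm_2 = (\lamp{\lab_1}{\var}{\tmfour'})[\tmthree_j]_{j=1}^{m} \todistl{\lab_1} \subs{\tmfour'}{\var}{[\tmthree_j]_{j=1}^{m}} = \tm_3$, the substitution being defined since $\varlabel{\var}{\tmfour'} = \varlabel{\var}{\tmfour} = \tmlabel{[\tmthree_j]_{j=1}^{m}}$. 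If instead $\redex_2$ sits inside an argument $\tmthree_k$, contracting it gives $\tmthree_k \todistl{\lab_2} \tmthree_k'$, whence $\tm_1 \todistl{\lab_2} \subs{\tmfour}{\var}{[\tmthree_1,\hdots,\tmthree_k',\hdots,\tmthree_m]} =: \tm_3$ by (2), and $\tm_2 = (\lamp{\lab_1}{\var}{\tmfour})[\tmthree_1,\hdots,\tmthree_k',\hdots,\tmthree_m] \todistl{\lab_1} \tm_3$. In all cases $\tm_3$ is correct by \rlem{subject_reduction}.

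The main obstacle is item (1) of the substitutivity lemma in the subcase where the reduced inner redex $\redex_2 = (\lamp{\lab_2}{\vartwo}{\tmfive})\ls{\tmsix}$ has its own body $\tmfive$ and arguments $\ls{\tmsix}$ that are themselves reached by the outer substitution: unfolding the definitions, contracting the residual of $\redex_2$ inside $\subs{\tmfour}{\var}{[\tmthree_j]_{j=1}^{m}}$ produces a term of the shape $\subs{(\subs{\tmfive}{\var}{\cdot})}{\vartwo}{\subs{\ls{\tmsix}}{\var}{\cdot}}$, and one must recognize this, via the Substitution Lemma, as $\subs{(\subs{\tmfive}{\vartwo}{\ls{\tmsix}})}{\var}{\cdot}$, i.e. as the corresponding piece of $\subs{\tmfour'}{\var}{[\tmthree_j]_{j=1}^{m}}$. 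Checking that the partitions of $[\tmthree_j]_{j=1}^{m}$ line up is the one piece of delicate bookkeeping, and it is exactly where correctness is used: sequentiality of the relevant multisets forces the partition in the definition of substitution to be essentially unique (cf. the Remark following \rdef{substitution}), so the two ways of organizing the substitution agree. Everything else is routine contextual closure together with Subject Reduction.
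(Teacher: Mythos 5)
Your proposal is correct and takes essentially the same route as the paper: a case analysis on the relative positions of the two redexes, where the genuinely nested case is resolved by the Substitution Lemma (\rlem{substitution_lemma}) together with the uniqueness of partitions forced by sequentiality, and the case of a redex inside an argument is handled by a reduction-under-substitution lemma (your item (2) is exactly the paper's auxiliary \rlem{reduce_in_substitution}). The only difference is organizational --- you induct on the term and package the body case as a standalone substitutivity lemma, while the paper inducts on the context enclosing the first redex and inlines that case via a substitution lemma for contexts (\rlem{substitution_lemma_for_contexts}).
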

\begin{proof}
\SeeAppendixRef{appendix_strong_permutation}
By exhaustive case analysis of permutation diagrams.
Two representative cases are depicted below.
The proof uses the \resultname{Substitution Lemma}~(\rlem{substitution_lemma}).
\end{proof}
\[
    \xymatrix@R=.5cm@C=.5cm{
     (\lamp{\lab}{\var}{(\lamp{\lab'}{\vartwo}{\tmthree}) \ls{\tmfour}}) \ls{\tmtwo}
                        \ar[d]_-{\lab'}
                        \ar[r]^-{\lab} &
     \subs{((\lamp{\lab'}{\vartwo}{\tmthree}) \ls{\tmfour})}{\var}{\ls{\tmtwo}}
                        \ar[d]_-{\lab'} \\
                        \ar[r]^-{\lab}
     (\lamp{\lab}{\var}{\subs{\tmthree}{\vartwo}{\ls{\tmfour}}}) \ls{\tmtwo}
                        &
     \subs{\subs{\tmthree}{\vartwo}{\ls{\tmfour}}}{\var}{\ls{\tmtwo}}
                        \\
    }
\]
\[
    \xymatrix@R=.5cm@C=.5cm{
     (\lamp{\lab}{\var}{\tm})[\ls{\tmtwo}_1,(\lamp{\lab'}{\vartwo}{\tmthree})\ls{\tmfour},\ls{\tmtwo}_2]
                        \ar[d]_-{\lab'}
                        \ar[r]^-{\lab} &
     \subs{\tm}{\var}{[\ls{\tmtwo}_1,(\lamp{\lab'}{\vartwo}{\tmthree})\ls{\tmfour},\ls{\tmtwo}_2]}
                        \ar[d]_-{\lab'} \\
     (\lamp{\lab}{\var}{\tm})[\ls{\tmtwo}_1,\subs{\tmthree}{\vartwo}{\ls{\tmfour}},\ls{\tmtwo}_2]
                        \ar[r]^-{\lab}
                        &
     \subs{\tm}{\var}{[\ls{\tmtwo}_1,\subs{\tmthree}{\vartwo}{\ls{\tmfour}},\ls{\tmtwo}_2]}
                        \\
    }
\]
As a consequence of
\rprop{strong_permutation},
reduction is subcommutative,
\ie $(\lefttodist \circ \todist)\ \subseteq (\todist^= \circ \lefttodist^=)$
where
$\lefttodist$ denotes $(\todist)^{-1}$
and $R^=$ denotes the reflexive closure of $R$.
Moreover, it is well-known that subcommutativity implies {\bf confluence},
\ie $(\lefttodist^* \circ \todist^*)\ \subseteq (\todist^* \circ \lefttodist^*)$;
see \cite[Prop.~1.1.10]{Terese} for a proof of this fact.

\begin{proposition}[Orthogonality]
\lprop{orthogonality}
$\lambdadist$ is an orthogonal axiomatic rewrite system.
\end{proposition}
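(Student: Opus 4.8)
The strategy is to equip $\lambdadist$ with the evident residual structure---made available by the rigidity of labels---and then check the four axioms, letting \rprop{strong_permutation} and \rprop{strong_normalization} do essentially all of the work. First I would fix the axiomatic rewrite system: $\Obj \eqdef \termsdist$, $\Stp \eqdef$ the set of $\todist$-steps, with $\src$ and $\tgt$ the source and target terms. The key preliminary is a rigidity lemma: since correct terms have uniquely labeled lambdas (\rdef{sequentiality_and_correctness}(1)) and correctness is preserved by reduction (\rlem{subject_reduction}), a correct term has at most one redex carrying a given label; hence every step $\redex$ is uniquely determined by $\src(\redex)$ together with its label $\name(\redex)$, and two coinitial steps are equal iff they have the same label. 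Moreover, because substitution in $\lambdadist$ is linear (the fact already used for \rprop{strong_normalization}), firing a step with label $\lab_2$ neither duplicates nor erases the redex with label $\lab_1 \ne \lab_2$: that redex survives, with the same label. I would then define the residual function by $\redex/\redextwo \eqdef \emptyset$ when $\redex = \redextwo$, and $\redex/\redextwo \eqdef \{\redex'\}$ otherwise, where $\redex'$ is the unique step out of $\tgt(\redextwo)$ with $\name(\redex') = \name(\redex)$: its existence is precisely the conclusion $\tgt(\redextwo) \todistl{\name(\redex)} \tm_3$ of the diagram of \rprop{strong_permutation}, its uniqueness is rigidity, and $\src(\redex') = \tgt(\redextwo)$ by construction. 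In particular every residual set is a singleton or empty.

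Granting this, the first three axioms are short. \emph{Autoerasure}: $\redex/\redex = \emptyset$ holds by definition. \emph{Finite Residuals}: $\card{\redex/\redextwo} \le 1$. \emph{Finite Developments}: a development of a coinitial set is in particular a $\todist$-reduction, and these are all finite by \rprop{strong_normalization}. For \emph{Semantic Orthogonality}, take coinitial $\redex,\redextwo$. If $\redex=\redextwo$ everything is trivial: $\redex/\redextwo = \redextwo/\redex = \emptyset$, their complete developments are the empty derivation $\emptyDerivation$, these are cofinal, and for coinitial $\redexthree$ we get $\redexthree/(\redex\,\emptyDerivation) = \redexthree/\redex = \redexthree/\redextwo = \redexthree/(\redextwo\,\emptyDerivation)$. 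If $\redex \ne \redextwo$, write $\lab_1 = \name(\redex)$, $\lab_2 = \name(\redextwo)$; \rprop{strong_permutation} supplies $\tm_3$ and steps $\redextwo' : \tgt(\redex) \todistl{\lab_2} \tm_3$ and $\redex' : \tgt(\redextwo) \todistl{\lab_1} \tm_3$, so $\redex/\redextwo = \{\redex'\}$, $\redextwo/\redex = \{\redextwo'\}$, and (as $\redex'/\redex' = \redextwo'/\redextwo' = \emptyset$) the complete developments are the single steps $\redseq \eqdef \redex'$ and $\redseqtwo \eqdef \redextwo'$, cofinal at $\tm_3$. For the projection identity, let $\redexthree$ be coinitial to $\redex$ with label $\lab_3$: if $\lab_3 \in \{\lab_1,\lab_2\}$ then $\redexthree \in \{\redex,\redextwo\}$ and both sides collapse to $\emptyset$ by Autoerasure; otherwise, iterating rigidity, $\redexthree/(\redex\redextwo') = (\redexthree/\redex)/\redextwo'$ is the singleton consisting of the unique step out of $\tm_3$ with label $\lab_3$, and symmetrically $\redexthree/(\redextwo\redex')$ is that same singleton. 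Hence $\redexthree/(\redex\redseqtwo) = \redexthree/(\redextwo\redseq)$, and all four axioms hold.

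The argument has no deep obstacle once the two structural facts are isolated: (i) linearity of substitution, so residuals never multiply or vanish; (ii) the correctness invariant, so a step is a function of its label and source. With those in place \rprop{strong_permutation} closes every local diamond. The only point that demands a little care is the projection identity inside Semantic Orthogonality, where one must check that chasing a third step $\redexthree$ around the two sides of the square reaches the \emph{same} step of the common reduct $\tm_3$---which is exactly where uniqueness of labels on $\tm_3$ is invoked.
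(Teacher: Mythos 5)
Your proposal is correct and follows essentially the same route as the paper: residuals are defined via labels (empty for the fired step, a singleton for any other coinitial step, by uniqueness of lambda labels in correct terms), Finite Developments is reduced to Strong Normalization (\rprop{strong_normalization}), and Semantic Orthogonality to Permutation (\rprop{strong_permutation}). You merely spell out in more detail the projection identity that the paper leaves implicit in its appeal to \rprop{strong_permutation}.
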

\begin{proof}
Let $\redex : \tm \todist \tmtwo$ and $\redextwo : \tm \todist \tmthree$.
Define the set of residuals $\redex/\redextwo$ as the set of steps starting
on $\tmthree$ that have the same label as $\redex$.
Note that $\redex/\redextwo$ is empty if $\redex = \redextwo$,
and it is a singleton if $\redex \neq \redextwo$, since terms are correct so their lambdas
are uniquely labeled.
Then it is immediate to observe that axioms {\em Autoerasure} and {\em Finite Residuals} hold.
The {\em Finite Developments} axiom
is a consequence of \resultname{Strong Normalization}~(\rprop{strong_normalization}).
The {\em Semantic Orthogonality} axiom is a consequence of \resultname{Permutation}~(\rprop{strong_permutation}).
\end{proof}

For instance, in the reduction graph of~\rexample{lambdadist_reduction_graph},
$\redextwo\redexthree/\redex\redextwo' = \redexthree'$,
$\redextwo \sqcup \redex = \redextwo\redex'$,
and $\redextwo\redex'\redexthree' \permeq \redex\redextwo'\redexthree'$.
Observe that in~\rexample{lambdadist_reduction_graph} there is no
duplication or erasure of steps. This is a general phenomenon.
Indeed, \resultname{Permutation}~(\rprop{strong_permutation}) ensures that all
non-trivial permutation diagrams are closed with exactly one step on each side.

Let us write $\ulbDerivDist{\tm}$ for the set of derivations of $\tm$ in the $\lambdadist$-calculus,
modulo permutation equivalence.
As a consequence of
\resultname{Orthogonality}~(\rprop{orthogonality})
and axiomatic results~\cite{thesismellies},
the set $\ulbDerivDist{\tm}$ is an upper semilattice.
Actually, we show that moreover the space $\ulbDerivDist{\tm}$
is a {\em distributive lattice}.
To prove this, let us start by mentioning
the property that we call \resultname{Full Stability}.
This is a strong version of stability in the sense of L\'evy~\cite{levy_redex_stability}.
It means that steps are created in an essentially unique way.
In what follows, we write $\name(\redex)$ for the label of a step, 
and $\names(\redex_1\hdots\redex_n) = \set{\name(\redex_i) \ST 1 \leq i \leq n}$ for the set of
labels of a derivation.

\begin{lemma}[Full Stability]
\llem{full_stability}
Let $\redseq,\redseqtwo$ be coinitial derivations
with disjoint labels, \ie $\names(\redseq) \cap \names(\redseqtwo) = \emptyset$.
Let $\redexthree_1,\redexthree_2,\redexthree_3$ be steps such that
$\redexthree_3 = \redexthree_1/(\redseqtwo/\redseq) = \redexthree_2/(\redseq/\redseqtwo)$.
Then there is a step $\redexthree_0$ such that $\redexthree_1 = \redexthree_0/\redseq$
and $\redexthree_2 = \redexthree_0/\redseqtwo$.
\end{lemma}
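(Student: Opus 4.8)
The plan is to exploit how rigid the residual structure of $\lambdadist$ is. By \rprop{orthogonality} the residual of a step after another is simply ``the step with the same label'', and since correct terms have uniquely labelled abstractions, a step is completely determined by its source term together with its label; moreover a term $\tm$ admits a step with label $\lab$ exactly when the abstraction $\lambda^{\lab}$ occurs in function position (``is applied'') in $\tm$. I will also use two bookkeeping facts, both immediate consequences of the linearity of substitution in $\lambdadist$: (i) the set of labels of abstractions occurring in $\tgt(\redseq)$ equals that of $\src(\redseq)$ minus $\names(\redseq)$, because each step removes exactly the abstraction it contracts and duplicates nothing; and (ii) once $\lambda^{\lab}$ is applied in a term, it stays applied along any reduction until it is the abstraction being contracted.

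First I would reduce the statement to a claim about $\tm_0 := \src(\redseq) = \src(\redseqtwo)$. Since residuals preserve labels, the hypothesis forces $\name(\redexthree_1) = \name(\redexthree_2) = \name(\redexthree_3) =: \lab$. As $\redexthree_1$ is a step on $\tgt(\redseq)$, $\lambda^{\lab}$ is applied there, whence $\lab \notin \names(\redseq)$ by (i); symmetrically $\lab \notin \names(\redseqtwo)$. Now suppose we knew that $\lambda^{\lab}$ is \emph{already} applied in $\tm_0$, and let $\redexthree_0$ be the (unique) step of $\tm_0$ with label $\lab$. Because $\lab \notin \names(\redseq)$, no step of $\redseq$ contracts $\lambda^{\lab}$, so $\redexthree_0/\redseq$ is a single step; it carries label $\lab$ and therefore, by uniqueness of labels in $\tgt(\redseq)$, equals $\redexthree_1$. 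Likewise $\redexthree_0/\redseqtwo = \redexthree_2$. So the whole lemma reduces to showing that $\lambda^{\lab}$ is applied in $\tm_0$.

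To prove that, I would argue by contradiction: assume $\lambda^{\lab}$ is not applied in $\tm_0$. A syntactic case analysis on the occurrence of $\lambda^{\lab}$ shows it is either (B) the body of a unique abstraction, or (C) an element of the argument list of a unique application subterm. Descending the left spine of the construct that immediately surrounds $\lambda^{\lab}$, one reaches in case (B) that abstraction itself, and in case (C) either the head abstraction of the application or, if the spine bottoms out at a variable, the abstraction binding it; call this abstraction $\lambda^{\mu}$ (if the spine bottoms out at a \emph{free} variable, the blocking argument list can never be consumed and $\lambda^{\lab}$ can never become applied, contradicting that $\redexthree_1$ exists, so that sub-case does not arise). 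The key auxiliary statement, which I would prove by induction on $|\redseqthree|$, is: for every reduction $\redseqthree$ issuing from $\tm_0$, if $\lambda^{\lab}$ is applied in $\tgt(\redseqthree)$ then $\mu \in \names(\redseqthree)$. For the induction, if the first step of $\redseqthree$ contracts $\lambda^{\mu}$ we are done; otherwise that step leaves $\lambda^{\mu}$ in place, and---using again the linearity of $\lambdadist$-substitution, so that no step can interpose an application between $\lambda^{\mu}$ and $\lambda^{\lab}$, nor consume the argument list that traps $\lambda^{\lab}$ without first contracting $\lambda^{\mu}$---it leaves $\lambda^{\lab}$ un-applied, and we conclude by the induction hypothesis. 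Applying this auxiliary statement to $\redseq$ and to $\redseqtwo$ yields $\mu \in \names(\redseq) \cap \names(\redseqtwo)$, which is empty by hypothesis: contradiction. Hence $\lambda^{\lab}$ is applied in $\tm_0$, and by the previous paragraph the lemma follows.

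The main obstacle is the inductive ``blocking'' statement of the last paragraph: making precise the invariant that $\lambda^{\lab}$ is trapped behind $\lambda^{\mu}$, and checking that it is preserved by every $\beta$-step that does not contract $\lambda^{\mu}$. The underlying classification of redex creation is essentially L\'evy's---a redex is created either because an abstraction's body becomes applied, or because an abstraction is substituted into a variable occurring in function position---but it is the linearity of reduction in $\lambdadist$, a consequence of the resource discipline of the type system, that makes the blocking abstraction \emph{unique}, so that its label is an unambiguous obstruction that cannot lie in both $\names(\redseq)$ and $\names(\redseqtwo)$.
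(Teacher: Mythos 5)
Your proof is correct in outline but takes a genuinely different route from the paper's. The paper first proves a \emph{Basic Stability} lemma --- the single-step instance, restated contrapositively as ``if $\redex$ creates $\redexthree$ then $\redex/\redextwo$ creates $\redexthree/(\redextwo/\redex)$'' --- by an exhaustive case analysis over the three creation cases of a Creation lemma and the relative positions of the contexts involved; it then tiles the $n \times m$ permutation diagram of $\redseq$ and $\redseqtwo$ (each elementary square closes with exactly one step per side, since $\lambdadist$ neither duplicates nor erases, and disjointness of labels guarantees the two steps of each tile are distinct) and applies Basic Stability once per tile. You instead argue globally: identifying a step with its label $\lab$, you reduce the lemma to showing that the $\lab$-redex already exists in the common source, and derive that from a ``unique blocking label'' $\mu$ that any derivation creating the $\lab$-redex must contract, so that $\mu \in \names(\redseq) \cap \names(\redseqtwo) = \emptyset$ gives a contradiction. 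Your reduction step is sound given the paper's definition of residuals (the step with the same label) together with the facts that each $\todist$-step deletes exactly the label it contracts and that an applied abstraction stays applied until contracted. What each approach buys: yours avoids the quadratic tiling and yields a pleasantly direct contradiction, but the entire difficulty is concentrated in the blocking invariant --- its well-definedness (your case split omits the root occurrence, which is harmless, but the ``spine'' analysis in the argument-list case must be carried out carefully) and its preservation by every step not labelled $\mu$, which is essentially the contrapositive of the paper's Creation lemma globalized along a derivation, and so requires a case analysis of comparable weight. The paper's Basic Stability is a reusable local fact, whereas your invariant is bespoke to this lemma; both ultimately rest on the same two pillars, namely L\'evy-style creation analysis and the linearity of $\lambdadist$.
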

\begin{proof}
\SeeAppendixRef{appendix_full_stability}
The proof is easily reduced to a \resultname{Basic Stability} result:
a particular case of \resultname{Full Stability} when $\redseq$ and $\redseqtwo$
consist of single steps.
\resultname{Basic Stability} is proved by exhaustive case analysis.
\end{proof}

\begin{proposition}
\lprop{lambdadist_lattice}
$\ulbDerivDist{\tm}$ is a lattice.
\end{proposition}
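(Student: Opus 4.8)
The plan is to combine the fact that $\ulbDerivDist{\tm}$ is a \emph{finite} set with the elementary observation that any finite upper semilattice is automatically a lattice.

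First I would establish finiteness. The relation $\todist$ is finitely branching: a correct term has only finitely many subterms of the form $(\lamp{\lab}{\var}{\cdot})\,\ls{\cdot}$, and each such redex determines exactly one step (substitution is well-defined). Consider the tree whose nodes are the derivations issuing from $\tm$, where the children of $\redseq$ are the one-step extensions $\redseq\redex$ (with $\redex$ ranging over the steps from $\tgt(\redseq)$). This tree is finitely branching, and by \resultname{Strong Normalization} (\rprop{strong_normalization}) it has no infinite branch, so by König's lemma it is finite. Hence there are only finitely many derivations from $\tm$, and a fortiori finitely many permutation-equivalence classes; thus $\ulbDerivDist{\tm}$ is a finite set. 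It is moreover an upper semilattice, with bottom $\cls{\emptyDerivation}$ and binary joins $\cls{\redseq}\sqcup\cls{\redseqtwo} = \cls{\redseq(\redseqtwo/\redseq)}$, by \resultname{Orthogonality} (\rprop{orthogonality}) together with the axiomatic results of~\cite{thesismellies} already recalled in the excerpt.

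It then remains to exhibit binary meets. Being finite and closed under binary joins, $\ulbDerivDist{\tm}$ is closed under joins of arbitrary non-empty finite subsets, so it has a greatest element $\top$ (concretely, by \resultname{Strong Normalization} and confluence, the class of any derivation reaching $\mathsf{nf}(\tm)$, although this is not needed). Given classes $a,b \in \ulbDerivDist{\tm}$, let $L$ be the set of their common lower bounds; $L$ is non-empty since $\cls{\emptyDerivation} \in L$, and finite, so $m := \bigsqcup L$ is defined. For every $c \in L$ we have $c \permle a$, hence $a$ is an upper bound of $L$ and so $m \permle a$; symmetrically $m \permle b$. Therefore $m \in L$ and $m$ is the greatest element of $L$, i.e.\ $m = a \land b$. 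All binary meets exist, so $\ulbDerivDist{\tm}$ is a lattice.

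There is no serious obstacle here; the only real content is the finiteness of $\ulbDerivDist{\tm}$, which rests on \resultname{Strong Normalization} plus finite branching via König's lemma. Once finiteness is in hand, the remaining steps are just the standard order-theoretic fact that a finite join-semilattice with a least element is a lattice — the meet of two elements being the join of all their common lower bounds. (Alternatively one could construct meets directly using \resultname{Full Stability} (\rlem{full_stability}), but the finiteness argument is shorter and suffices for the statement as given; \resultname{Full Stability} will instead be the tool needed for the sharper claim that $\ulbDerivDist{\tm}$ is \emph{distributive}.)
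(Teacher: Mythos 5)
Your proof is correct, but it takes a genuinely different route from the paper's. The paper does not invoke finiteness at all: it constructs the meet \emph{explicitly}, by recursion on common steps, setting $\redseq \sqcap \redseqtwo := \emptyDerivation$ when $\redseq$ and $\redseqtwo$ share no step and $\redseq \sqcap \redseqtwo := \redex((\redseq/\redex) \sqcap (\redseqtwo/\redex))$ when $\redex$ is a common step, with \resultname{Full Stability}~(\rlem{full_stability}) supplying the existence of such a common ancestor step and the strict decrease of $\names(\redseq/\redex) \subset \names(\redseq)$ guaranteeing termination. Your argument --- finiteness of $\ulbDerivDist{\tm}$ via finite branching, \resultname{Strong Normalization}~(\rprop{strong_normalization}) and K\"onig's lemma, followed by the standard fact that a finite join-semilattice with bottom has all meets (the meet being the join of the common lower bounds) --- is shorter and entirely elementary, and it does establish the proposition as stated. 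What the paper's construction buys in exchange is a concrete, step-level description of $\redseq \sqcap \redseqtwo$ as the derivation built from steps common to $\redseq$ and $\redseqtwo$; this description is what makes the subsequent proof that $\names$ preserves meets (\rprop{labels_morphism}, hence distributivity) go through directly. With your abstract meet one would have to recover that characterization anyway, so \resultname{Full Stability} is not really avoided, only deferred --- as you yourself anticipate in your closing remark.
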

\begin{proof}
\SeeAppendixRef{appendix_lambdadist_lattice}
The missing components are the {\em top} and the {\em meet}.
The top element is given by $\top := \cls{\redseq}$ where $\redseq : \tm \todist^* \tmtwo$
is a derivation to normal form, which exists by {\bf Strong Normalization}~(\rprop{strong_normalization}).
The meet of $\set{\cls{\redseq},\cls{\redseqtwo}}$ is constructed using
{\bf Full Stability}~(\rlem{full_stability}).
If $\names(\redseq) \cap \names(\redseqtwo) = \emptyset$,
define $(\redseq \sqcap \redseqtwo) := \emptyDerivation$.
Otherwise, the stability result ensures that there is a step
$\redex$ coinitial to $\redseq$ and $\redseqtwo$
such that $\name(\redex) \in \names(\redseq) \cap \names(\redseqtwo)$.
Let $\redex$ be one such step,
and, recursively, define
$(\redseq \sqcap \redseqtwo) :=
\redex((\redseq/\redex) \sqcap (\redseqtwo/\redex))$.
It can be checked that recursion terminates,
because $\names(\redseq/\redex) \subset \names(\redseq)$ is a strict
inclusion.
Moreover, $\redseq \sqcap \redseqtwo$
is the greatest lower bound of $\set{\redseq,\redseqtwo}$,
up to permutation equivalence.
\end{proof}
For instance, in~\rexample{lambdadist_reduction_graph}
we have that
$\redextwo\redexthree \sqcap \redex = \emptyDerivation$,
$\redextwo\redexthree \sqcap \redex\redextwo' = \redextwo$,
and $\redextwo\redexthree \sqcap \redex\redextwo'\redexthree' = \redextwo\redexthree$.

\begin{proposition}
\lprop{labels_morphism}
There is a monomorphism of lattices $\ulbDerivDist{\tm} \to \mathcal{P}(X)$
for some set~$X$.
The lattice $(\mathcal{P}(X),\subseteq,\emptyset,\cup,X,\cap)$ consists of the subsets of $X$, ordered by inclusion.
\end{proposition}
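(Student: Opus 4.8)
The plan is to exhibit the monomorphism as the map $f(\cls{\redseq}) := \names(\redseq)$ sending a (class of a) derivation from $\tm$ to its set of labels, with $X := \names(\nu)$ for any maximal derivation $\nu : \tm \todist^* \tmtwo$ (one exists by \resultname{Strong Normalization}~(\rprop{strong_normalization}), and its label set will turn out not to depend on $\nu$, since $\cls{\nu}$ is the top of $\ulbDerivDist{\tm}$). The first thing I would establish is that labels behave like \emph{consumable resources}: substitution is linear, so a reduction step never creates a $\lambda$; combined with correctness --- uniquely labeled lambdas, preserved by reduction by \rlem{subject_reduction} --- this gives that contracting the step labeled $\lab$ removes the unique $\lambda$ labeled $\lab$ once and for all, so no term reachable afterwards contains it. Two consequences: (i) the labels of the steps of any derivation $\redseq = \redex_1 \hdots \redex_n$ from $\tm$ are pairwise distinct, hence $\card{\names(\redseq)} = \lengthof{\redseq}$; and (ii) if $\lab \in \names(\redseqtwo)$ then no step occurring in any derivation issued from $\tgt(\redseqtwo)$ carries the label $\lab$.

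The technical core, which I expect to be the main obstacle, is the identity $\names(\redseq/\redseqtwo) = \names(\redseq) \setminus \names(\redseqtwo)$ for coinitial derivations. The inclusion $\subseteq$ is easy: residuals preserve labels and, by the analysis in the proof of \resultname{Orthogonality}~(\rprop{orthogonality}), $\redex/\redextwo$ is a singleton when $\redex \neq \redextwo$ and empty otherwise, so $\names(\redseq/\redseqtwo) \subseteq \names(\redseq)$; and since $\redseq/\redseqtwo$ is issued from $\tgt(\redseqtwo)$, consequence (ii) gives $\names(\redseq/\redseqtwo) \cap \names(\redseqtwo) = \emptyset$. For $\supseteq$, given $\lab = \name(\redex_k) \in \names(\redseq) \setminus \names(\redseqtwo)$, I would track the single step $\redex_k$ through the residual relation defining $\redseq/\redseqtwo$, whose contribution of $\redex_k$ is $\redex_k/(\redseqtwo/\redex_1 \hdots \redex_{k-1})$: at each stage it keeps a unique residual labeled $\lab$ unless it is erased, and (no duplication) erasure can only occur against a coinitial step also labeled $\lab$, which would then belong to $\redseqtwo/(\redex_1 \hdots \redex_{k-1})$ and force $\lab \in \names(\redseqtwo)$ because residuals preserve labels --- contradiction. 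Hence the residual survives and $\lab \in \names(\redseq/\redseqtwo)$.

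Granting this identity, the remaining steps are bookkeeping. Well-definedness and injectivity: $\redseq \permeq \redseqtwo$ iff $\redseq/\redseqtwo = \emptyDerivation = \redseqtwo/\redseq$ iff $\names(\redseq) \setminus \names(\redseqtwo) = \emptyset = \names(\redseqtwo) \setminus \names(\redseq)$ iff $\names(\redseq) = \names(\redseqtwo)$, where for the direction from empty projections one also uses (i), since a labelless derivation has length $0$ and so equals $\emptyDerivation$. Monotonicity: $\redseq \permle \redseqtwo$ iff $\redseq/\redseqtwo = \emptyDerivation$ iff $\names(\redseq) \subseteq \names(\redseqtwo)$; in particular $\names(\redseq) \subseteq \names(\nu) = X$ since $\cls{\redseq} \permle \top$, so $f$ does land in $\mathcal{P}(X)$, with $f(\bot) = \names(\emptyDerivation) = \emptyset$ and $f(\top) = X$. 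Joins: $\names(\redseq \sqcup \redseqtwo) = \names(\redseq\,(\redseqtwo/\redseq)) = \names(\redseq) \cup (\names(\redseqtwo) \setminus \names(\redseq)) = \names(\redseq) \cup \names(\redseqtwo)$. Meets: by induction on $\card{\names(\redseq) \cap \names(\redseqtwo)}$, mirroring the recursive definition of $\redseq \sqcap \redseqtwo$ in \rprop{lambdadist_lattice} --- if the intersection is empty then $\redseq \sqcap \redseqtwo = \emptyDerivation$, and otherwise, for the chosen step $\redex$ with label $\lab$, the induction hypothesis and the identity give $\names(\redseq \sqcap \redseqtwo) = \set{\lab} \cup ((\names(\redseq) \setminus \set{\lab}) \cap (\names(\redseqtwo) \setminus \set{\lab})) = \names(\redseq) \cap \names(\redseqtwo)$. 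Hence $f$ is an injective morphism of lattices, i.e.\ a monomorphism.
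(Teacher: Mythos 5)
Your proposal is correct and follows essentially the same route as the paper: the map is $\names$, the set $X$ is the label set of a derivation to normal form, and the load-bearing facts are the disjointness/freshness of labels along a derivation, the identity $\names(\redseq/\redseqtwo) = \names(\redseq) \setminus \names(\redseqtwo)$, and the characterization of prefix and permutation equivalence in terms of label sets. The only point worth flagging is that your meet-preservation argument silently inherits \resultname{Full Stability} through the recursive definition of $\sqcap$ in \rprop{lambdadist_lattice} --- the nonemptiness of $\names(\redseq) \cap \names(\redseqtwo)$ guarantees a \emph{common coinitial step} only because of stability, which is exactly where the paper's proof also invests its real effort (via the disjoint-derivations lemma).
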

\begin{proof}
\SeeAppendixRef{appendix_labels_morphism}
The morphism is the function $\names$, mapping each derivation to its set of labels.
\end{proof}
This means that a derivation in $\lambdadist$ is characterized, up to permutation equivalence,
by the set of labels of its steps.
Since $\mathcal{P}(X)$ is a distributive lattice, in particular we have:
\begin{corollary}
\lcoro{ulbderivdist_distributive_lattice}
$\ulbDerivDist{\tm}$ is a distributive lattice.
\end{corollary}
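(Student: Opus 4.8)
The plan is to transfer distributivity back along the lattice monomorphism $\names : \ulbDerivDist{\tm} \to \mathcal{P}(X)$ provided by \rprop{labels_morphism}, exploiting that $\mathcal{P}(X)$ is already known to be a distributive lattice. First I would recall that, being a morphism of lattices, $\names$ preserves \emph{both} lattice operations, i.e. $\names(\redseq \sqcup \redseqtwo) = \names(\redseq) \cup \names(\redseqtwo)$ and $\names(\redseq \sqcap \redseqtwo) = \names(\redseq) \cap \names(\redseqtwo)$ for all $\cls{\redseq},\cls{\redseqtwo} \in \ulbDerivDist{\tm}$, and that it is injective.

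Given three classes $\cls{\redseq},\cls{\redseqtwo},\cls{\redseqthree} \in \ulbDerivDist{\tm}$, I would then compute
\begin{align*}
  \names(\cls{\redseq} \sqcap (\cls{\redseqtwo} \sqcup \cls{\redseqthree}))
  &= \names(\redseq) \cap (\names(\redseqtwo) \cup \names(\redseqthree)) \\
  &= (\names(\redseq) \cap \names(\redseqtwo)) \cup (\names(\redseq) \cap \names(\redseqthree)) \\
  &= \names((\cls{\redseq} \sqcap \cls{\redseqtwo}) \sqcup (\cls{\redseq} \sqcap \cls{\redseqthree})),
\end{align*}
where the middle equality is the distributive law in $\mathcal{P}(X)$ and the outer two use that $\names$ preserves $\sqcup$ and $\sqcap$. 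Since $\names$ is injective, the two arguments must coincide, so $\sqcap$ distributes over $\sqcup$ in $\ulbDerivDist{\tm}$; the dual identity follows by the symmetric computation (or simply because, in any lattice, each distributive law implies the other). Hence $\ulbDerivDist{\tm}$ is a distributive lattice. Equivalently, one can just invoke the general algebraic fact that any sublattice of a distributive lattice is distributive — the distributive identities are equational and therefore inherited by subalgebras — together with the observation that $\names$ exhibits $\ulbDerivDist{\tm}$ as isomorphic to its image, which is a sublattice of $\mathcal{P}(X)$.

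I expect no genuine obstacle: all the substantive work has already been carried out, namely the construction of the meet via \resultname{Full Stability} in \rprop{lambdadist_lattice} and the verification that $\names$ is an injective lattice homomorphism in \rprop{labels_morphism}. The only point that needs a moment's attention is that "monomorphism of lattices" must be read as an injection preserving \emph{both} lattice operations, and not merely as a monotone injection; this is precisely what \rprop{labels_morphism} supplies, and with that in hand the corollary is immediate.
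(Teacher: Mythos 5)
Your proposal is correct and follows exactly the paper's own route: the corollary is obtained by transferring distributivity back along the injective lattice homomorphism $\names : \ulbDerivDist{\tm} \to \mathcal{P}(X)$ from \rprop{labels_morphism}, i.e.\ by observing that a lattice embedding into a distributive lattice forces distributivity. Your explicit equational computation and the remark about sublattices of distributive lattices merely spell out what the paper leaves implicit.
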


\section{Simulation of the $\lambda$-Calculus in the $\lambdadist$-Calculus}
  \lsec{simulation}
  
In this section we establish a precise relationship between derivations in
the $\lambda$-calculus and derivations in $\lambdadist$.
To begin, we need a way to relate $\lambda$-terms and correct terms ($\termsdist$):

\begin{definition}[Refinement]
A correct term $\tm' \in \termsdist$ \defn{refines} a $\lambda$-term $\tm$,
written $\tm' \refines \tm$, according to the following inductive definition:
\[
  \indrule{r-var}{
  }{
    \var^\typ \refines \var
  }
  \HS
  \indrule{r-lam}{
    \tm' \refines \tm
  }{
    \lamp{\lab}{\var}{\tm'} \refines \lam{\var}{\tm}
  }
  \HS
  \indrule{r-app}{
    \tm' \refines \tm
    \HS
    \tmtwo'_i \refines \tmtwo \text{ for all $i=1..n$}
  }{
    \tm'[\tmtwo'_i]_{i=1}^{n} \refines \tm\tmtwo
  }
\]
\end{definition}
A $\lambda$-term may have many refinements.
For example, the following terms refine $(\lam{\var}{\var\var})\vartwo$:
\[
  (\lamp{1}{\var}{\var^{[\,] \tolab{2} \alpha^3}[\,]})[\vartwo^{[\,] \tolab{2} \alpha^3}]
  \HS
  (\lamp{1}{\var}{\var^{[\alpha^2] \tolab{3} \beta^4}[\var^{\alpha^2}]})[\vartwo^{[\alpha^2] \tolab{3} \beta^4},\vartwo^{\alpha^2}]
\]
\[
  (\lamp{1}{\var}{\var^{[\alpha^2,\beta^3] \tolab{4} \gamma^5}[\var^{\alpha^2},\var^{\beta^3}]})[\vartwo^{[\alpha^2,\beta^3] \tolab{4} \gamma^5},\vartwo^{\alpha^2},\vartwo^{\beta^3}]
\]
The refinement relation establishes a relation of {\em simulation} between the $\lambda$-calculus and $\lambdadist$.
\begin{proposition}[Simulation]
\lprop{simulation}
\lprop{reverse_simulation}
Let $\tm' \refines \tm$. Then:
\begin{enumerate}
\item
  \label{reverse_simulation__item_fwd}
      If $\tm \tobeta \tmtwo$, there exists $\tmtwo'$ 
      such that $\tm' \todist^* \tmtwo'$
      and $\tmtwo' \refines \tmtwo$.
\item
      If $\tm' \todist \tmtwo'$, there exist $\tmtwo$ and $\tmtwo''$
      such that
      $\tm \tobeta \tmtwo$,
      $\tmtwo' \todist^* \tmtwo''$,
      and $\tmtwo'' \refines \tmtwo$.
\end{enumerate}
\end{proposition}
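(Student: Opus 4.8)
The plan is to reduce everything to a single auxiliary lemma stating that refinement is compatible with substitution: \emph{if $\tm' \refines \tm$, $\tmtwo'_i \refines \tmtwo$ for $i = 1..n$, and the substitution $\subs{\tm'}{\var}{[\tmtwo'_i]_{i=1}^n}$ is defined, then $\subs{\tm'}{\var}{[\tmtwo'_i]_{i=1}^n} \refines \subs{\tm}{\var}{\tmtwo}$.} This is proved by induction on $\tm$, inverting the refinement judgment in each case. The only slightly delicate case is $\tm = \tmthree\tmfour$, where $\tm' = \tmthree'[\tmfour'_j]_{j=1}^m$ and \rdef{substitution} partitions $[\tmtwo'_i]_{i=1}^n$ into sublists before recursing; but every element of every sublist still refines $\tmtwo$, so the inductive hypotheses apply componentwise. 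I would also record two trivial facts used throughout: $\tm' \refines \tm$ implies $\fv{\tm'} \subseteq \fv{\tm}$ (so the side conditions of the $\lambda$-case and of \rdef{substitution} are met, up to $\alpha$-renaming), and every subterm of a correct term is correct (so the two items below may be applied to subterms).

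For the first item, I would induct on the derivation of $\tm \tobeta \tmtwo$. If the contracted $\beta$-redex is at the root, $\tm = (\lam{\var}{\tmthree})\tmfour$ and $\tmtwo = \subs{\tmthree}{\var}{\tmfour}$; inverting $\tm' \refines \tm$ twice forces $\tm' = (\lamp{\lab}{\var}{\tmthree'})[\tmfour'_i]_{i=1}^n$ with $\tmthree' \refines \tmthree$ and $\tmfour'_i \refines \tmfour$, and inverting the \indrulename{\toE} and \indrulename{\toI} rules on the typing of $\tm'$ (using that a typing context records exactly the multiset of types of the free occurrences of each variable) gives $\varlabel{\var}{\tmthree'} = \tmlabel{[\tmfour'_i]_{i=1}^n}$; hence the single step $\tm' \todist \subs{\tmthree'}{\var}{[\tmfour'_i]_{i=1}^n}$ is available and its target refines $\tmtwo$ by the auxiliary lemma. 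The three contextual-closure cases are routine: for a step under a $\lambda$ or in the function part of an application, apply the inductive hypothesis to the refinement of the reducing subterm and reassemble; for a step in the argument $\tmfour$ of an application $\tmthree\tmfour$, where $\tm' = \tmthree'[\tmfour'_i]_{i=1}^n$, apply the inductive hypothesis to each $\tmfour'_i \refines \tmfour$ separately, concatenate the resulting (possibly empty) $\lambdadist$-reductions, and put the pieces back. This last case is exactly where one $\beta$-step may give rise to several $\dist$-steps, or to none when $n = 0$.

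For the second item, I would induct on $\tm'$, according to the position of the contracted $\dist$-redex. At the root, $\tm' = (\lamp{\lab}{\var}{\tmthree'})[\tmfour'_i]_{i=1}^n \todist \subs{\tmthree'}{\var}{[\tmfour'_i]_{i=1}^n}$; inverting $\tm' \refines \tm$ forces $\tm = (\lam{\var}{\tmthree})\tmfour$ with $\tmthree' \refines \tmthree$ and $\tmfour'_i \refines \tmfour$, so $\tm \tobeta \subs{\tmthree}{\var}{\tmfour} =: \tmtwo$ and $\tmtwo'' := \tmtwo'$ works by the auxiliary lemma, with no extra $\dist$-steps. A redex under a $\lambda$ or in the function part of an application is handled by the inductive hypothesis as before. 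The essential case is a redex inside one argument $\tmfour'_k$ of an application $\tm' = \tmthree'[\tmfour'_1,\hdots,\tmfour'_n] \refines \tmthree\tmfour$ with $\tmfour'_k \todist \tmtwo'_k$: the inductive hypothesis applied to $\tmfour'_k \refines \tmfour$ yields $\tmfour \tobeta \tmsix$, $\tmtwo'_k \todist^* \tmtwo''_k$, and $\tmtwo''_k \refines \tmsix$; then, crucially invoking the first item, each remaining copy satisfies $\tmfour'_j \todist^* \tmtwo''_j$ with $\tmtwo''_j \refines \tmsix$ for some $\tmtwo''_j$. Taking $\tmtwo := \tmthree\tmsix$ and $\tmtwo'' := \tmthree'[\tmtwo''_1,\hdots,\tmtwo''_n]$, and observing that all intermediate terms stay correct since $\lambdadist$ preserves correctness (\rlem{subject_reduction}), completes the case.

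I expect the only real obstacle to be getting the auxiliary substitution lemma exactly right (in particular the partition in its application case) and spotting the observation that drives the last case of the second item: a $\dist$-step performed inside one of several refined copies of a $\lambda$-subterm must be replayed in all the other copies, which is possible precisely because the first item has already been established. Everything else is bookkeeping with inversion and contextual closure.
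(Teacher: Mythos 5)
Your proposal is correct and follows essentially the same route as the paper: the same auxiliary lemma that refinement commutes with substitution (proved by induction on the term, handling the partition in the application case), item~1 by induction on the position of the contracted redex with the argument-of-application case splitting into one reduction per refined copy, and item~2 by induction on $\tm'$ with the key observation that a $\dist$-step inside one copy of a refined argument is replayed in the remaining copies via item~1. No gaps.
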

\begin{proof}
\SeeAppendixRef{appendix_simulation}
By case analysis. The proof is constructive.
Moreover, in item \refcase{reverse_simulation__item_fwd},
the derivation $\tm' \todist^* \tmtwo'$ is shown to be a {\em multistep}, \ie the complete
development of a set $\set{\redex_1,\hdots,\redex_n}$.
\end{proof}
The following example illustrates that a $\beta$-step in the $\lambda$-calculus may be simulated by zero,
one, or possibly many steps in $\lambdadist$, depending on the refinement chosen.

\begin{example}
The following are simulations of the step $\var\,((\lam{\var}{\var})\vartwo) \tobeta \var\,\vartwo$
using $\todist$-steps:
\[
  {\small
    \xymatrix@R=.25cm@C=.5cm{
      \var\,((\lam{\var}{\var})\vartwo)
      \ar[r]^-{\beta}
      \ar@{}[d]|*=0[@]{\rtimes}
      &
      \var\,\vartwo
      \ar@{}[d]|*=0[@]{\rtimes}
    \\
      \var^{[] \tolab{1} \alpha^2}\,[]
      \ar@{=}[r]
      &
      \var^{[] \tolab{1} \alpha^2}\,[]
    }
    \HS
    \xymatrix@R=.25cm@C=.5cm{
      \var\,((\lam{\var}{\var})\vartwo)
      \ar[r]^-{\beta}
      \ar@{}[d]|*=0[@]{\rtimes}
      &
      \var\,\vartwo
      \ar@{}[d]|*=0[@]{\rtimes}
    \\
      \var^{[\alpha^1] \tolab{2} \beta^3}\,[(\lamp{4}{\var}{\var^{\alpha^1}})[\vartwo^{\alpha^1}]]
      \ar[r]^-{\dist}
      &
      \var^{[\alpha^1] \tolab{2} \beta^3}\,[\vartwo^{\alpha^1}]
    }
  }
\]
\[
  {\small
    \xymatrix@R=.25cm{
      \var\,((\lam{\var}{\var})\vartwo)
      \ar[rr]^-{\beta}
      \ar@{}[d]|*=0[@]{\rtimes}
      &
      &
      \var\,\vartwo
      \ar@{}[d]|*=0[@]{\rtimes}
    \\
      \var^{[\alpha^1,\beta^2] \tolab{3} \gamma^4}\,[
        (\lamp{5}{\var}{\var^{\alpha^1}})[\vartwo^{\alpha^1}],
        (\lamp{6}{\var}{\var^{\beta^2}})[\vartwo^{\beta^2}]
      ]
      \ar@{->>}[rr]^-{\dist}
      &
      &
      \var^{[\alpha^1,\beta^2] \tolab{3} \gamma^4}\,[
        \vartwo^{\alpha^1},
        \vartwo^{\beta^2}
      ]
    }
  }
\]
\end{example}
The next result relates typability and normalization.
This is an adaptation of existing results from
non-idempotent intersection types, \eg~\cite[Lemma~5.1]{bucciarelli2017non}.
Recall that a
\defn{head normal form} is a term of the form
$\lam{\var_1}{\hdots\lam{\var_n}{\vartwo\,\tm_1\,\hdots\,\tm_m}}$.

\begin{proposition}[Typability characterizes head normalization]
\lprop{refinements_and_head_normal_forms}
The following are equivalent:
\begin{enumerate}
\item There exists $\tm' \in \termsdist$ such that $\tm' \refines \tm$.
\item There exists a head normal form $\tmtwo$ such that $\tm \tobeta^* \tmtwo$.
\end{enumerate}
\end{proposition}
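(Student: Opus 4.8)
The plan is to prove the two implications separately. The direction $(2) \Rightarrow (1)$ is the easier one. Suppose $\tm \tobeta^* \tmtwo$ where $\tmtwo = \lam{\var_1}{\hdots\lam{\var_n}{\vartwo\,\tm_1\,\hdots\,\tm_m}}$ is a head normal form. First I would exhibit a refinement $\tmtwo' \refines \tmtwo$ directly: by induction on the structure of the head normal form, one builds $\tmtwo'$ by applying \indrulename{r-lam} $n$ times on top of a term of the form $\vartwo^{\typ}[\,]\hdots[\,]$ --- that is, take the head variable $\vartwo$ with any base type $\basetyp^\lab$ as its $n$-fold ``argument'', and supply the \emph{empty list} of arguments to each of the $m$ applications. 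Concretely $\vartwo^{\typ}$ typed so that $\typ = [\,] \tolab{\lab_1} \hdots [\,] \tolab{\lab_m} \basetyp^\lab$, and then $\vartwo^\typ[\,][\,]\hdots[\,]$ refines $\vartwo\,\tm_1\,\hdots\,\tm_m$ via \indrulename{r-app} with $n=0$ in each argument position. Choosing all labels fresh and distinct makes this term correct (\rdef{sequentiality_and_correctness}: uniquely labeled lambdas is vacuous, and all multisets in sight are empty hence sequential). So there is a correct $\tmtwo' \refines \tmtwo$. Now I would use the reverse direction of simulation (\rprop{reverse_simulation}, item~1) backwards along the derivation $\tm \tobeta^* \tmtwo$: actually it is cleaner to run \rprop{simulation}(1) forward --- but that goes the wrong way. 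Instead I need that refinability is preserved by \emph{anti}-reduction. I would prove the auxiliary lemma: if $\tm \tobeta \tmtwo$ and $\tmtwo' \refines \tmtwo$ for some correct $\tmtwo'$, then there is a correct $\tm' \refines \tm$. This is the crux of the $(2)\Rightarrow(1)$ direction and is essentially a subject-expansion argument for refinements: analyze the redex $(\lam{\var}{\tmfour})\tmfive$ contracted in $\tm$, and reconstruct a refinement of the $\beta$-redex from a refinement of its contractum by inserting an appropriate $\lambdadist$-redex $(\lamp{\lab}{\var}{\cdot})[\cdot,\hdots,\cdot]$; the number of arguments is dictated by how many typed occurrences of $\var$ appear in the refinement of $\subs{\tmfour}{\var}{\tmfive}$. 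Iterating this lemma along $\tm \tobeta^* \tmtwo$ yields a refinement of $\tm$.

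For the direction $(1) \Rightarrow (2)$, suppose $\tm' \refines \tm$ with $\tm'$ correct. By \resultname{Strong Normalization} (\rprop{strong_normalization}), reduce $\tm'$ to its $\lambdadist$-normal form $\tm'_0$, say $\tm' \todist^* \tm'_0$. Using \rprop{simulation}(2) repeatedly (each $\lambdadist$-step is matched by a $\lambda$-step, possibly followed by more $\lambdadist$-reduction, and refinement is maintained), I obtain $\tm \tobeta^* \tmtwo$ with $\tm'_0 \refines \tmtwo$ --- here one has to be a little careful that the induction is well-founded, but the standard move is to note that the total number of $\lambdadist$-steps strictly decreases, so project the mixed $\lambda$/$\lambdadist$ reduction sequence onto a $\lambda$-reduction of finite length. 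It then remains to show: \textbf{if $\tm'_0$ is a $\lambdadist$-normal form and $\tm'_0 \refines \tmtwo$, then $\tmtwo$ is a head normal form.} I would prove this by induction on $\tmtwo$ (equivalently on the refinement derivation). If $\tmtwo$ is a variable, done. If $\tmtwo = \lam{\var}{\tmthree}$, then $\tm'_0 = \lamp{\lab}{\var}{\tm''_0}$ with $\tm''_0 \refines \tmthree$ and $\tm''_0$ normal, so by IH $\tmthree$ is a head normal form, hence so is $\tmtwo$. If $\tmtwo = \tmthree\,\tmthreevariant$, then $\tm'_0 = \tm''_0[\tmtwo''_i]_{i=1}^n$ with $\tm''_0 \refines \tmthree$ and $\tm''_0$ normal; by IH $\tmthree$ is a head normal form, and it cannot be a $\lambda$-abstraction (otherwise $\tm''_0$ would be a $\lamp{\lab}{\var}{\cdot}$ applied to a list, forming a $\lambdadist$-redex, contradicting normality of $\tm'_0$), so $\tmthree$ is of the form $\lam{\var_1}{\hdots\lam{\var_p}{\var_0\,\hdots}}$ with $p = 0$, i.e. $\tmthree = \var_0\,\tm_1\,\hdots\,\tm_k$, whence $\tmtwo = \var_0\,\tm_1\,\hdots\,\tm_k\,\tmthreevariant$ is a head normal form.

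I expect the main obstacle to be the subject-expansion lemma for refinements used in $(2)\Rightarrow(1)$: one must show that given $\tmtwo' \refines \subs{\tmfour}{\var}{\tmfive}$ correct, there is a correct refinement of $(\lam{\var}{\tmfour})\tmfive$ (in an arbitrary context), and this requires ``un-substituting'' --- splitting the copies of (refinements of) $\tmfive$ scattered throughout $\tmtwo'$, choosing fresh labels for the reconstructed lambda and for the argument types so that correctness (\rdef{sequentiality_and_correctness}) is preserved, and invoking something like an inverse of \rlem{subject_reduction}. It is plausible the paper already has this packaged as part of the constructive proof of \rprop{simulation}, or as a separate subject-expansion lemma in the appendix; if so I would simply cite it. The other mildly delicate point is the termination/well-foundedness bookkeeping in $(1)\Rightarrow(2)$ when iterating \rprop{simulation}(2), which is handled by the strictly decreasing count of $\lambdadist$-steps.
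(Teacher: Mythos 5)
Your overall architecture coincides with the paper's. For $(1\Rightarrow 2)$ you normalize $\tm'$ in $\lambdadist$ via Strong Normalization, transport the reduction to the $\lambda$-calculus by iterating item~2 of Simulation (the paper makes this well-founded with a K\"onig's-lemma bound on the length of $\todist$-derivations, essentially your decreasing step count), and finish with the observation that a $\todist$-normal form only refines head normal forms (\rlem{normal_forms_refine_hnfs}); for $(2\Rightarrow 1)$ you refine the head normal form with empty argument lists (\rlem{hnf_has_refinement}) and pull the refinement back along the $\beta$-reduction by subject expansion. This is exactly the paper's plan, including its Backwards Simulation lemma (\rlem{backwards_simulation}).

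There is, however, a genuine gap in the crux you yourself single out. Your auxiliary lemma for $(2\Rightarrow 1)$ takes \emph{correctness} of $\tmtwo'\refines\tmtwo$ as its only hypothesis, and the reconstruction does not go through at that level of generality. Concretely, take $\tm=(\lam{\var}{\vartwo\,\var\,\var})(\varthree\varthree)\tobeta \vartwo\,(\varthree\varthree)\,(\varthree\varthree)=\tmtwo$ and the refinement $\tmtwo'=\vartwo^{[\alpha^1]\tolab{2}[\alpha^1]\tolab{3}\beta^4}[\varthree^{[\,]\tolab{5}\alpha^1}[\,]][\varthree^{[\,]\tolab{6}\alpha^1}[\,]]$. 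This $\tmtwo'$ is correct (every multiset and context occurring in it is sequential), yet the two copies of the refined argument both have type $\alpha^1$ with the same external label, so the reconstructed redex $(\lamp{\lab}{\var}{\vartwo^{\cdots}[\var^{\alpha^1}][\var^{\alpha^1}]})[\ldots]$ has argument multiset $[\alpha^1,\alpha^1]$ and body context $\var:[\alpha^1,\alpha^1]$, violating the sequentiality conditions of \rdef{sequentiality_and_correctness}. You cannot repair this by ``choosing fresh labels for the argument types'': those types are dictated by the given $\tmtwo'$, and relabelling them breaks the type-directed correspondence with the occurrences of $\var$. The paper's fix is to strengthen the induction invariant from correctness to \emph{strong sequentiality} (any two free subterms at disjoint positions of a common subterm have types with distinct external labels); this holds of the base-case refinement of the head normal form, is preserved by the backwards step (\rlem{refinement_substitution_inverse}), and precisely excludes the situation above. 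Without some such strengthening, the induction along $\tm\tobeta^*\tmtwo$ does not close, because the refinement handed to you at each stage need not be one from which un-substitution is possible.
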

\begin{proof}
\SeeAppendixRef{appendix_refinements_and_head_normal_forms}
The implication $(1 \implies 2)$
relies on \resultname{Simulation}~(\rprop{simulation}).
The implication $(2 \implies 1)$
relies on the fact that head normal forms are typable,
plus an auxiliary result of {\em Subject Expansion}.
\end{proof}

The first item of \resultname{Simulation}~(\rprop{simulation})
ensures that every step $\tm \tobeta \tmtwo$ can be simulated
in $\lambdadist$ starting from a term $\tm' \refines \tm$.
Actually, a finer relationship can be established between the
derivation spaces $\ulbDerivLam{\tm}$ and $\ulbDerivDist{\tm'}$.
For this, we introduce the notion of \defn{simulation residual}.

\begin{definition}[Simulation residuals]
Let $\tm' \refines \tm$ and let $\redex : \tm \tobeta \tmtwo$ be a step.
The constructive proof of \resultname{Simulation}~(\rprop{simulation})
associates the $\tobeta$-step $\redex$ to a possibly empty set of $\todist$-steps
$\{\redex_1,\hdots,\redex_n\}$ all of which start from $\tm'$.
We write $\redex/\tm' \eqdef \set{\redex_1,\hdots,\redex_n}$,
and we call $\redex_1,\hdots,\redex_n$ the \defn{simulation residuals of $\redex$ after $\tm'$}.
All the complete developments of $\redex/\tm'$ have a common target,
which we denote by $\tm'/\redex$,
called the {\em simulation residual of $\tm'$ after $\redex$}.
\end{definition}

Recall that, by abuse of notation, $\redex/\tm'$ stands
for some complete development of the set $\redex/\tm'$.
By \resultname{Simulation}~(\rprop{simulation}),
the following diagram always holds given $\tm' \refines \tm \tobeta \tmtwo$:
\[
  \xymatrix@R=.5cm@C=2cm{
    \tm
      \ar[r]^{\beta}_{\redex}
      \ar@{}[d]|*=0[@]{\rtimes}
  &
    \tmtwo
      \ar@{}[d]|*=0[@]{\rtimes}
  \\
    \tm'
      \ar@{->>}[r]^{\dist}_{\redex/\tm'}
  &
    \tm'/\redex
  }
\]

\begin{example}[Simulation residuals]
Let $\redex : \var\,((\lam{\var}{\var})\vartwo) \tobeta \var\,\vartwo$
and consider the terms:
\begin{itemize}
\item[]
$
\tm'_0 = (\var^{[\alpha^1,\beta^2] \tolab{3} \gamma^4}\,[
                (\lamp{5}{\var}{\var^{\alpha^1}})[\vartwo^{\alpha^1}],
                (\lamp{6}{\var}{\var^{\beta^2}})[\vartwo^{\beta^2}]
          ])
$
\item[]
$
\tm'_1 = \var^{[\alpha^1,\beta^2] \tolab{3} \gamma^4}\,[
           \vartwo^{\alpha^1},
           (\lamp{6}{\var}{\var^{\beta^2}})[\vartwo^{\beta^2}]
$
\item[]
$
\tm'_2 = \var^{[\alpha^1,\beta^2] \tolab{3} \gamma^4}\,[
           (\lamp{5}{\var}{\var^{\alpha^1}})[\vartwo^{\alpha^1}],
           \vartwo^{\beta^2}
         ]
$
\item[]
$
\tm'_3 = \var^{[\alpha^1,\beta^2] \tolab{3} \gamma^4}\,[
             \vartwo^{\alpha^1},
             \vartwo^{\beta^2}
           ]
$
\end{itemize}
Then $\tm'_0/\redex = \tm'_3$
and $\redex/\tm'_0 = \set{\redex_1,\redex_2}$, where
$\redex_1 : \tm'_0 \todist \tm'_1$
and
$\redex_2 : \tm'_0 \todist \tm'_2$.
\end{example}

The notion of simulation residual can be extended for many-step derivations.
\begin{definition}[Simulation residuals of/after derivations]
If $\tm' \refines \tm$ and $\redseq : \tm \tobeta^* \tmtwo$ is a derivation,
then $\redseq/\tm'$ and $\tm'/\redseq$ are defined as follows by induction on $\redseq$:
\[
  \begin{array}{c@{\hspace{.5cm}}c@{\hspace{.5cm}}c@{\hspace{.5cm}}c}
    \emptyDerivation/\tm' \eqdef \emptyDerivation
  &
    \redex\redseqtwo/\tm' \eqdef (\redex/\tm')(\redseqtwo/(\tm'/\redex))
  &
    \tm'/\emptyDerivation \eqdef \tm'
  &
    \tm'/\redex\redseqtwo \eqdef (\tm'/\redex)/\redseqtwo
  \\
  \end{array}
\]
\end{definition}
It is then easy to check that $\redseq/\tm' : \tm' \todist^* \tm'/\redseq$ and $\tm'/\redseq \refines \tmtwo$, by induction on $\redseq$.
Moreover, simulation residuals are well-defined modulo permutation equivalence:

\begin{proposition}[Compatibility]
\lprop{compatibility_of_simulation_residuals_and_permutation_equivalence}
If $\redseq \permeq \redseqtwo$
and $\tm \refines \src(\redseq)$ 
then
$\redseq/\tm \permeq \redseqtwo/\tm$
and $\tm/\redseq = \tm/\redseqtwo$.
\end{proposition}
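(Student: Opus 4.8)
The plan is to reduce the statement to the single-step case and then argue by a standard tiling/induction argument on the structure of the permutation equivalence proof. Recall that $\redseq \permeq \redseqtwo$ means $\redseq \permle \redseqtwo$ and $\redseqtwo \permle \redseq$; by the properties of orthogonal axiomatic rewrite systems recalled in \rsec{preliminaries}, permutation equivalence is generated by the elementary permutations $\redex\redextwo' \permeq \redextwo\redex'$ arising from local confluence diagrams (where $\redextwo' $ is a complete development of $\redextwo/\redex$ and $\redex'$ of $\redex/\redextwo$), together with closure under composition on the left and right. So it suffices to establish: (i) if $\redseq_1 \permeq \redseqtwo_1$ via a \emph{single} elementary permutation then $\redseq_1/\tm \permeq \redseqtwo_1/\tm$ and $\tm/\redseq_1 = \tm/\redseqtwo_1$; and (ii) both $\redseq/\tm$ and $\tm/\redseq$ are compatible with left- and right-concatenation, so that the elementary case propagates to arbitrary $\permeq$-proofs.

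For step (ii), the concatenation compatibility follows directly from the inductive definition of $\redseq/\tm$ and $\tm/\redseq$: we have $\redex\redseqtwo/\tm = (\redex/\tm)(\redseqtwo/(\tm/\redex))$ and $\tm/\redex\redseqtwo = (\tm/\redex)/\redseqtwo$, and one checks that if $\redseqtwo \permeq \redseqthree$ and the relevant simulation residuals agree on the prefix, then the suffixes can be handled by the induction hypothesis; the algebraic identities for $/$ and $\sqcup$ in the summary table of \rsec{preliminaries} (applied in $\ulbDerivDist{\tm'}$, which is an orthogonal axiomatic rewrite system by \rprop{orthogonality}) do the bookkeeping. The first equality $\tm/\redseq = \tm/\redseqtwo$ is genuinely an equality of correct terms, not merely a permutation-equivalence, because $\tm'/\redseq$ is defined as the \emph{common target} of all complete developments of $\redseq/\tm'$, and distinct complete developments of the same set of coinitial steps are cofinal by \resultname{Orthogonality}.

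For step (i), the key is the diagram supplied by \resultname{Simulation}~(\rprop{simulation}): for $\tm \refines \src(\redex)$ and $\redex : \src(\redex) \tobeta \tmtwo$ we have the square $\tm \rtodist \tm/\redex$ over $\redex$, with $\redex/\tm$ a complete development of a set of coinitial $\todist$-steps. Given an elementary permutation $\redex\redextwo' \permeq \redextwo\redex'$ in the $\lambda$-calculus, stack the two simulation squares and use the fact that in $\lambdadist$ projection of (developments of) coinitial step-sets is well-behaved: $(\redex/\tm)(\redextwo'/(\tm/\redex))$ and $(\redextwo/\tm)(\redex'/(\tm/\redextwo))$ are coinitial $\todist$-derivations with the same target (both targets refine the common target of $\redex\redextwo'$ and $\redextwo\redex'$, and unique typing / cofinality of developments pins them down), and they are permutation-equivalent in $\ulbDerivDist{\tm}$ because projection in an orthogonal system satisfies $\redseqa\redseqb/\redseqc = (\redseqa/\redseqc)(\redseqb/(\redseqc/\redseqa))$ and the cube identities. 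This is where one must be slightly careful: the map $\redex \mapsto \redex/\tm$ is only a relation up to permutation equivalence, so the argument must be phrased in $\ulbDerivDist{\tm}$ from the start rather than at the level of concrete derivations.

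The main obstacle I expect is exactly this coherence at the level of elementary permutations: showing that the two ways of simulating an elementary $\lambda$-permutation produce $\permeq$-equal $\lambdadist$-derivations. This requires either an explicit inspection of the (finitely many) shapes of $\beta$-redex overlaps — disjoint, nested, one inside the argument of the other — feeding each through the constructive content of \rprop{simulation} and checking the resulting $\lambdadist$-derivations permute, or, more cleanly, an abstract lemma to the effect that $\tm' \refines (-)$ together with the simulation squares assembles into a morphism of the semilattice $\ulbDerivLam{\src(\redseq)}$ into $\ulbDerivDist{\tm'}$ (which is the thrust of the next part of the paper anyway); granting such a morphism, compatibility with $\permeq$ is immediate since a morphism is by definition well-defined on $\permeq$-classes. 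I would aim to factor the routine overlap analysis through the constructive \rprop{simulation} and isolate the one genuinely new ingredient — cofinality of the stacked squares — as the crux.
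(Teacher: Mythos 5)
Your proposal is correct and follows essentially the same route as the paper: induction over the generating permutation axiom $\redseqthree_1\redex\redseqtwo\redseqthree_2 \permeq \redseqthree_1\redextwo\redseq\redseqthree_2$, reduction of prefixes/suffixes via the composition identities for $\redseq/\tm'$ and $\tm'/\redseq$, and a mixed cube identity $(\redex/\tm')/(\redextwo/\tm') = (\redex/\redextwo)/(\tm'/\redextwo)$ established by exhaustive case analysis of the relative positions of the two redexes. The ``coherence at elementary permutations'' you single out as the crux is exactly the paper's Basic Cube Lemma (together with the lemma that the simulation residual of a complete development of $\redexset$ is a complete development of $\redexset/\tm'$), and your remark that the morphism-of-semilattices shortcut would be circular is well taken, since \rcoro{algebraic_simulation} is derived from this proposition rather than the other way around.
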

\begin{proof}
\SeeAppendixRef{appendix_compatibility_of_simulation_residuals_and_permutation_equivalence}
By case analysis,
studying how permutation diagrams in the $\lambda$-calculus
are transported to permutation diagrams in $\lambdadist$
via simulation.
\end{proof}

The following result resembles the usual Cube Lemma~\cite[Lemma~12.2.6]{Barendregt:1984}:

\begin{lemma}[Cube]
\llem{generalized_cube_lemma}
If $\tm \refines \src(\redseq) = \src(\redseqtwo)$, then $(\redseq/\redseqtwo)/(\tm/\redseqtwo) \permeq (\redseq/\tm)/(\redseqtwo/\tm)$.
\end{lemma}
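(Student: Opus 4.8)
The plan is to prove the identity by induction on the length of the derivation $\redseqtwo$, exploiting the fact (stated just before the lemma) that simulation residuals after $\tm$ are well-defined modulo permutation equivalence, together with the compatibility result (\rprop{compatibility_of_simulation_residuals_and_permutation_equivalence}) and the distributive-lattice structure of $\ulbDerivDist{\cdot}$. The base case $\redseqtwo = \emptyDerivation$ is immediate: both sides reduce to $(\redseq/\emptyDerivation)/(\tm/\emptyDerivation) = \redseq/\tm$. For the inductive step, write $\redseqtwo = \redextwo\,\redseqtwo'$ with $\redextwo$ a single $\beta$-step; I would unfold the recursive definitions of projection in the $\lambda$-calculus (the identities in the summary box, e.g.\ $\redseq/\redextwo\redseqtwo' = (\redseq/\redextwo)/\redseqtwo'$ and $\redseq\redseqtwo/\redseqthree = (\redseq/\redseqthree)(\redseqtwo/(\redseqthree/\redseq))$) and of simulation residuals after a derivation ($\tm/\redextwo\redseqtwo' = (\tm/\redextwo)/\redseqtwo'$). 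This turns the desired equation into an instance of the same statement for the shorter derivation $\redseqtwo'$, applied with the refined term $\tm/\redextwo \refines \src(\redseqtwo') = \tgt(\redextwo)$, modulo a single-step version of the lemma.

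Thus the crux is the \textbf{single-step case}: for a $\beta$-step $\redextwo : \src(\redseq) \tobeta \tmtwo$ and $\tm \refines \src(\redseq)$, show $(\redseq/\redextwo)/(\tm/\redextwo) \permeq (\redseq/\tm)/(\redextwo/\tm)$. Here I would reduce further by induction on $\redseq$. The only genuinely new content is the case where $\redseq$ is itself a single step $\redex$, i.e.\ we have two coinitial $\beta$-steps $\redex,\redextwo$ and a refinement $\tm$, and we must show $(\redex/\redextwo)/(\tm/\redextwo) \permeq (\redex/\tm)/(\redextwo/\tm)$ as derivations in $\lambdadist$. Both sides are built out of simulation residuals, which by \resultname{Simulation}~(\rprop{simulation}) are complete developments of finite sets of $\todist$-steps; since $\ulbDerivDist{\cdot}$ is a lattice (indeed, by \rprop{labels_morphism}, derivations are determined up to $\permeq$ by their set of labels), it suffices to check that the two sides carry the same label set. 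I expect this to follow from \resultname{Full Stability}~(\rlem{full_stability}) together with a label-bookkeeping argument: the labels appearing in $(\redex/\redextwo)/(\tm/\redextwo)$ are exactly those labels of $\tm$ that descend from $\redex$ but not from $\redextwo$ nor from the "extra" structure collapsed by $\tm/\redextwo$, and symmetrically for the other side, so the two sets coincide.

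The main obstacle will be precisely this label-counting argument in the single-step/single-step case, because it requires understanding how the simulation of a single $\beta$-step interacts with residuals of a \emph{second}, possibly overlapping, $\beta$-redex — i.e.\ the four standard redex-position configurations (disjoint, nested inside the argument, nested inside the function body, and the outer/inner $\beta$-redex case) must each be checked, and in each one needs to verify that passing through $\lambdadist$ does not disturb the $\lambda$-calculus Cube Lemma. I would lean on \rprop{compatibility_of_simulation_residuals_and_permutation_equivalence} to know that the choice of complete developments does not matter, and on the fact that $\names$ is a lattice monomorphism to reduce every equality-up-to-$\permeq$ to a set equality, which makes the case analysis finite and mechanical rather than requiring explicit term manipulation. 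The remaining inductive cases of $\redseq$ (abstraction, application, where $\redseq$ acts inside a subterm) propagate routinely using \rprop{compatibility_of_simulation_residuals_and_permutation_equivalence} and the algebraic identities for projection.
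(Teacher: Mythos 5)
Your proposal follows essentially the same route as the paper's proof: a nested induction reducing everything to a basic cube lemma for two coinitial single steps, which is then established by exhaustive case analysis on the relative positions of the two redexes, with permutation equivalence decided via label sets and \resultname{Compatibility}~(\rprop{compatibility_of_simulation_residuals_and_permutation_equivalence}) handling the choice of complete developments. The only differences are cosmetic (the paper's outer induction is on $\redseq$ rather than $\redseqtwo$, and its single-step case analysis does not need \resultname{Full Stability}, just direct label bookkeeping), so the plan is sound.
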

\begin{proof}
\SeeAppendixRef{appendix_generalized_cube_lemma}
By induction on $\redseq$ and $\redseqtwo$,
relying on an auxiliary result, the {\em Basic Cube Lemma},
when $\redseq$ and $\redseqtwo$ are single steps,
proved by exhaustive case analysis.
\end{proof}
As a result,
$
  (\redseq \sqcup \redseqtwo)/\tm =
  \redseq(\redseqtwo/\redseq)/\tm =
  (\redseq/\tm)((\redseqtwo/\redseq)/(\tm/\redseq)) \permeq
  (\redseq/\tm)((\redseqtwo/\tm)/(\redseqtwo/\redseq)) =
  (\redseq/\tm) \sqcup (\redseqtwo/\tm)
$.
Moreover, if $\redseq \permle \redseqtwo$ 
then $\redseq\redseqthree \permeq \redseqtwo$ for some $\redseqthree$.
So we have that $\redseq/\tm \permle (\redseq/\tm)(\redseqthree/(\tm/\redseq)) = \redseq\redseqthree/\tm \permeq \redseqtwo/\tm$
by \resultname{Compatibility}~(\rprop{compatibility_of_simulation_residuals_and_permutation_equivalence}).
Hence we may formulate a stronger simulation result:

\begin{corollary}[Algebraic Simulation]
\lcoro{algebraic_simulation}
\lcoro{simulation_residuals_and_prefixes}
Let $\tm' \refines \tm$.
Then the mapping $\ulbDerivLam{\tm} \to \ulbDerivDist{\tm'}$
given by $\cls{\redseq} \mapsto \cls{\redseq/\tm'}$ is a morphism of upper
semilattices.
\end{corollary}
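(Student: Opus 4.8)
The plan is to verify that the assignment $f : \cls{\redseq} \mapsto \cls{\redseq/\tm'}$ satisfies the three defining conditions of a morphism of upper semilattices recalled in \rsec{preliminaries}: it must be well-defined and monotone, it must preserve the bottom element, and it must preserve binary joins. Essentially all of the technical content is already available: the proof is an assembly of \resultname{Compatibility}~(\rprop{compatibility_of_simulation_residuals_and_permutation_equivalence}) and the \resultname{Cube}~(\rlem{generalized_cube_lemma}), together with the two displayed consequences of the Cube Lemma stated just before the corollary.

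First I would observe that $f$ is well-typed: for every $\redseq : \tm \tobeta^* \tmtwo$ with $\tm' \refines \tm$ we have (by the remark following the definition of simulation residuals of/after derivations) that $\redseq/\tm' : \tm' \todist^* \tm'/\redseq$, so $\cls{\redseq/\tm'}$ is genuinely an element of $\ulbDerivDist{\tm'}$. Well-definedness, \ie independence of the chosen representative of $\cls{\redseq}$, is exactly the first half of \resultname{Compatibility}: $\redseq \permeq \redseqtwo$ implies $\redseq/\tm' \permeq \redseqtwo/\tm'$. For the bottom element, recall $\bot = \cls{\emptyDerivation}$ and $\emptyDerivation/\tm' = \emptyDerivation$ by the base case of the definition of simulation residuals, so $f(\bot) = \bot$.

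Next, monotonicity. If $\cls{\redseq} \permle \cls{\redseqtwo}$ then $\redseq \permle \redseqtwo$, hence $\redseq\redseqthree \permeq \redseqtwo$ for some $\redseqthree$. Unfolding the recursion for composition, $\redseq\redseqthree/\tm' = (\redseq/\tm')(\redseqthree/(\tm'/\redseq))$, so $\redseq/\tm' \permle \redseq\redseqthree/\tm'$, and by \resultname{Compatibility} $\redseq\redseqthree/\tm' \permeq \redseqtwo/\tm'$; therefore $\redseq/\tm' \permle \redseqtwo/\tm'$, \ie $f(\cls{\redseq}) \permle f(\cls{\redseqtwo})$. This is precisely the computation already displayed in the paragraph preceding the statement. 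Finally, for joins, using $\redseq \sqcup \redseqtwo = \redseq(\redseqtwo/\redseq)$, the recursion for composition, and the \resultname{Cube} Lemma to rewrite $(\redseqtwo/\redseq)/(\tm'/\redseq)$ as $(\redseqtwo/\tm')/(\redseq/\tm')$, one obtains
\[
  (\redseq \sqcup \redseqtwo)/\tm' \permeq (\redseq/\tm') \sqcup (\redseqtwo/\tm'),
\]
which is the displayed consequence of the Cube Lemma; passing to $\permeq$-classes yields $f(\cls{\redseq} \sqcup \cls{\redseqtwo}) = f(\cls{\redseq}) \sqcup f(\cls{\redseqtwo})$. There is no real obstacle here; the only point requiring care is bookkeeping — unfolding the simulation-residual recursion in the correct order and keeping every identity stated modulo $\permeq$ — which is exactly what \resultname{Compatibility}~(\rprop{compatibility_of_simulation_residuals_and_permutation_equivalence}) and the \resultname{Cube} Lemma~(\rlem{generalized_cube_lemma}) were set up to handle.
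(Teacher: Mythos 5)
Your proposal is correct and follows essentially the same route as the paper: the paper establishes the corollary in the paragraph immediately preceding its statement, deriving join-preservation from the \resultname{Cube} Lemma~(\rlem{generalized_cube_lemma}) via $(\redseq \sqcup \redseqtwo)/\tm' \permeq (\redseq/\tm') \sqcup (\redseqtwo/\tm')$ and monotonicity from the decomposition $\redseq\redseqthree \permeq \redseqtwo$ together with \resultname{Compatibility}~(\rprop{compatibility_of_simulation_residuals_and_permutation_equivalence}), exactly as you do. Your additional remarks on well-definedness over $\permeq$-classes and preservation of the bottom element are the right (if routine) completions of that argument.
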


\begin{example}
\lexample{algebraic_simulation_example}
Let $I = \lam{\var}{\var}$
and $\Delta = (\lamp{5}{\var}{\var^{\alpha^2}})[\varthree^{\alpha^2}]$
and let $\hat{\vartwo} = \vartwo^{[\alpha^2] \tolab{3} [\,] \tolab{4} \beta^5}$.
The refinement $\tm' := (\lamp{1}{\var}{\hat{\vartwo}[\var^{\alpha^2}][\,]})[\Delta] \refines (\lam{\var}{\vartwo\var\var})(I\varthree)$
induces a morphism between the upper semilattices represented by the following reduction graphs:
\[
{\small
  \xymatrix@R=.3cm@C=.3cm{
  &
    (\lam{\var}{\vartwo\var\var})(I\varthree)
    \ar@/_.25cm/[dl]_-{\redex_1}
    \ar@/^.25cm/[dr]^-{\redextwo}
  &
  &
  \\
    \vartwo(I\varthree)(I\varthree)
    \ar[d]_-{\redextwo_{11}}
    \ar[r]^-{\redextwo_{21}}
  &
    \vartwo(I\varthree)\varthree
    \ar[d]^{\redextwo_{12}}
  &
    (\lam{\var}{\vartwo\var\var})\varthree
    \ar@/^.25cm/[dl]^-{\redex_2}
  \\
    \vartwo \varthree (I\varthree)
    \ar[r]_-{\redextwo_{22}}
  &
    \vartwo \varthree \varthree
  }
  \hspace{-.3cm}
  \xymatrix@R=.3cm@C=.3cm{
  &
    (\lamp{1}{\var}{\hat{\vartwo}[\var^{\alpha^2}][\,]})[\Delta]
    \ar@/_.25cm/[dl]_-{\redex'_1}
    \ar@/^.25cm/[dr]^-{\redextwo'}
  \\
    \hat{\vartwo}[\Delta][\,]
    \ar@/_.25cm/[dr]_-{\redextwo'_1}
  &
  &
    \hspace{-1cm}(\lamp{1}{\var}{\hat{\vartwo}[\var^{\alpha^2}][\,]})[\varthree^{\alpha^2}]
    \ar@/^.25cm/[dl]^-{\redex'_2}
  \\
  &
    \hat{\vartwo}[\varthree^{\alpha^2}][\,]
  }
}
\]
For example
$(\redex_1 \sqcup \redextwo)/\tm' = (\redex_1\redextwo_{11}\redextwo_{22})/\tm' = \redex'_1\redextwo'_1 = \redex'_1 \sqcup \redextwo' = \redex_1/\tm' \sqcup \redextwo/\tm'$.
Note that the step $\redextwo_{22}$ is erased by the simulation: $\redextwo_{22}/(\hat{\vartwo}[\varthree^{\alpha^2}][\,]) = \emptyset$.
Intuitively, $\redextwo_{22}$ is ``garbage'' with respect to the refinement $\hat{\vartwo}[\varthree^{\alpha^2}][\,]$,
because it lies inside an {\em untyped} argument.
\end{example}

\section{Factoring Derivation Spaces}
  \lsec{factorization}
  
In this section we prove that the upper semilattice $\ulbDerivLam{\tm}$
may be factorized using a variant of the Grothendieck construction.
We start by formally defining the notion of {\em garbage}.

\begin{definition}[Garbage]
Let $\tm' \refines \tm$.
A derivation $\redseq : \tm \tobeta^* \tmtwo$ is \defn{$\tm'$-garbage}
if $\redseq/\tm' = \emptyDerivation$.
\end{definition}
The informal idea is that each refinement $\tm' \refines \tm$
specifies that some subterms of $\tm$ are ``useless''.
A subterm $\tmthree$ is useless if it lies inside the argument
of an application $\tmtwo(...\tmthree...)$
in such a way that the argument is not typed, \ie the refinement is
of the form $\tmtwo'[\,] \refines \tmtwo(...\tmthree...)$.
A single step $\redex$ is $\tm'$-garbage if the pattern of the contracted
redex lies inside a useless subterm.
A sequence of steps $\redex_1\redex_2\hdots\redex_n$
is $\tm'$-garbage if $\redex_1$ is $\tm'$-garbage,
$\redex_2$ is $(\tm'/\redex_1)$-garbage,
$\hdots$,
$\redex_i$ is $(\tm'/\redex_1\hdots\redex_{i-1})$-garbage,
$\hdots$,
and so on.

Usually we say that $\redseq$ is just {\em garbage}, when $\tm'$ is clear from the context.
For instance, in \rexample{algebraic_simulation_example},
$\redextwo_{21}$ is garbage, since $\redextwo_{21}/(\hat{\vartwo}[\Delta][\,]) = \emptyDerivation$.
Similarly, $\redextwo_{22}$ is garbage, since $\redextwo_{22}/(\hat{\vartwo}[\varthree^{\alpha^2}][\,]) = \emptyDerivation$.
On the other hand, $\redex_1\redextwo_{21}$ is not garbage, since
$\redex_1\redextwo_{21}/((\lamp{1}{\var}{\hat{\vartwo}[\var^{\alpha^2}][\,]})[\Delta]) = \redex'_1 \neq \emptyDerivation$.
For each $\tm' \refines \tm$, the set of $\tm'$-garbage derivations forms an {\em ideal} of the upper semilattice $\ulbDerivLam{\tm}$.
More precisely:
\begin{proposition}[Properties of garbage]
\lprop{properties_of_garbage}
Let $\tm' \refines \tm$. Then:
\begin{enumerate}
\item If $\redseq$ is $\tm'$-garbage and $\redseqtwo \permle \redseq$, then $\redseqtwo$ is $\tm'$-garbage.
  \label{properties_of_garbage__downwards_closed}
\item The composition $\redseq\redseqtwo$ is $\tm'$-garbage if and only if $\redseq$ is $\tm'$-garbage and $\redseqtwo$ is $(\tm'/\redseq)$-garbage.
\item If $\redseq$ is $\tm'$-garbage then $\redseq/\redseqtwo$ is $(\tm'/\redseqtwo)$-garbage.
\item The join $\redseq \sqcup \redseqtwo$ is $\tm'$-garbage if and only if $\redseq$ and $\redseqtwo$ are $\tm'$-garbage.
\end{enumerate}
\end{proposition}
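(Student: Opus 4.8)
The plan is to derive all four items from the results of \rsec{simulation}, chiefly \resultname{Algebraic Simulation}~(\rcoro{algebraic_simulation}), the \resultname{Cube}~lemma~(\rlem{generalized_cube_lemma}), and \resultname{Compatibility}~(\rprop{compatibility_of_simulation_residuals_and_permutation_equivalence}); essentially nothing new needs to be proved. Before turning to the four items I would record two elementary observations. First, being $\tm'$-garbage is invariant under permutation equivalence: if $\redseq \permeq \redseqtwo$ and $\tm' \refines \src(\redseq)$, then \resultname{Compatibility} gives $\redseq/\tm' \permeq \redseqtwo/\tm'$, and the only derivation that is a prefix of $\emptyDerivation$ (hence the only one permutation equivalent to it) is $\emptyDerivation$ itself, since $\redseqthree \permle \emptyDerivation$ means $\emptyDerivation = \redseqthree/\emptyDerivation = \redseqthree$. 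Thus $\redseq/\tm' = \emptyDerivation$ iff $\redseqtwo/\tm' = \emptyDerivation$. Second, the defining clause for simulation residuals extends from a single leading step to an arbitrary leading derivation, i.e.\ $\redseq\redseqtwo/\tm' = (\redseq/\tm')(\redseqtwo/(\tm'/\redseq))$; this is a one-line induction on $\redseq$ using $\tm'/(\redseq\redseqtwo) = (\tm'/\redseq)/\redseqtwo$.

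Granting these, the four items are short. For~(\refcase{properties_of_garbage__downwards_closed}): by monotonicity of the upper-semilattice morphism $\cls{\redseq} \mapsto \cls{\redseq/\tm'}$ from \resultname{Algebraic Simulation}, $\redseqtwo \permle \redseq$ gives $\redseqtwo/\tm' \permle \redseq/\tm' = \emptyDerivation$, hence $\redseqtwo/\tm' = \emptyDerivation$. For item~(2): by the extended clause, $\redseq\redseqtwo/\tm' = (\redseq/\tm')(\redseqtwo/(\tm'/\redseq))$, and a composite derivation equals $\emptyDerivation$ precisely when both factors do; the first factor being $\emptyDerivation$ is exactly the statement that $\redseq$ is $\tm'$-garbage, and the second that $\redseqtwo$ is $(\tm'/\redseq)$-garbage. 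For item~(3): taking $\redseq,\redseqtwo$ coinitial with source $\tm$, so $\tm' \refines \src(\redseq) = \src(\redseqtwo)$, the \resultname{Cube}~lemma yields $(\redseq/\redseqtwo)/(\tm'/\redseqtwo) \permeq (\redseq/\tm')/(\redseqtwo/\tm')$, and if $\redseq$ is $\tm'$-garbage the right-hand side is $\emptyDerivation/(\redseqtwo/\tm') = \emptyDerivation$; hence $(\redseq/\redseqtwo)/(\tm'/\redseqtwo) = \emptyDerivation$, which says $\redseq/\redseqtwo$ is $(\tm'/\redseqtwo)$-garbage. For item~(4): the forward implication is~(\refcase{properties_of_garbage__downwards_closed}) applied to $\redseq \permle \redseq \sqcup \redseqtwo$ and to $\redseqtwo \permle \redseq \sqcup \redseqtwo$; for the converse, use $\redseq \sqcup \redseqtwo = \redseq(\redseqtwo/\redseq)$, apply item~(2), and note that $\redseqtwo/\redseq$ is $(\tm'/\redseq)$-garbage by item~(3) applied to $\redseqtwo,\redseq$ (alternatively, invoke join-preservation in \resultname{Algebraic Simulation} directly: $(\redseq\sqcup\redseqtwo)/\tm' \permeq (\redseq/\tm')\sqcup(\redseqtwo/\tm') = \emptyDerivation$).

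I do not anticipate any real obstacle: all the substance is already in \resultname{Algebraic Simulation}, the \resultname{Cube}~lemma, and \resultname{Compatibility}, and what remains is bookkeeping. The only points that take a moment of care are the two elementary facts collected in the first paragraph---that a derivation which is a prefix of (or permutation equivalent to) $\emptyDerivation$ must itself be $\emptyDerivation$, and the functoriality of $(\,\cdot\,)/\tm'$ along composition of derivations---both of which follow directly from the definitions and the identities in the summary table of \rsec{preliminaries}.
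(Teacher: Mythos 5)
Your proposal is correct and follows essentially the same route as the paper: item~1 via monotonicity of $\cls{\redseq}\mapsto\cls{\redseq/\tm'}$ (which the paper re-derives inline rather than citing \resultname{Algebraic Simulation}), item~2 via the composition identity $\redseq\redseqtwo/\tm' = (\redseq/\tm')(\redseqtwo/(\tm'/\redseq))$, and items~3 and~4 via the \resultname{Cube} lemma. Your explicit remarks that the only prefix of (or derivation permutation-equivalent to) $\emptyDerivation$ is $\emptyDerivation$ itself, and that garbage is invariant under $\permeq$, are small details the paper leaves implicit, so nothing is missing.
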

\begin{proof}
\SeeAppendixRef{appendix_properties_of_garbage}
The proof is easy using \rprop{compatibility_of_simulation_residuals_and_permutation_equivalence} and \rlem{generalized_cube_lemma}.
\end{proof}
Our aim is to show that given $\redseq : \tm \tobeta^* \tmtwo$ and $\tm' \refines \tm$,
there is a unique way of decomposing $\redseq$ as $\redseqtwo\redseqthree$,
where $\redseqthree$ is $\tm'$-garbage and $\redseqtwo$ ``has no $\tm'$-garbage''.
Garbage is well-defined modulo permutation equivalence,
\ie
given $\redseq \permeq \redseqtwo$, we have that $\redseq$ is garbage if and only if $\redseqtwo$ is garbage.
In contrast, it is not immediate to give a well-defined notion of ``having no garbage''.
For example, in \rexample{algebraic_simulation_example},
$\redextwo\redex_2$ has no garbage steps, so it appears to have no garbage;
however, it is permutation equivalent to $\redex_1\redextwo_{11}\redextwo_{22}$, which
does contain a garbage step ($\redextwo_{22}$).
The following definition seems to capture the right notion of having no garbage:

\begin{definition}[Garbage-free derivation]
\ldef{garbage_free_derivation}
Let $\tm' \refines \tm$. A derivation $\redseq : \tm \tobeta^* \tmtwo$
is $\tm'$-garbage-free if
for any derivation $\redseqtwo$
such that $\redseqtwo \permle \redseq$
and $\redseq/\redseqtwo$ is $(\tm'/\redseqtwo)$-garbage,
then $\redseq/\redseqtwo = \emptyDerivation$.
\end{definition}
Again, we omit the $\tm'$ if clear from the context.
Going back to \rexample{algebraic_simulation_example},
the derivation $\redextwo\redex_2$ is not garbage-free,
because $\redex_1\redextwo_{11} \permle \redextwo\redex_2$
and $\redextwo\redex_2/\redex_1\redextwo_{11} = \redextwo_{22}$ is garbage but non-empty.
Note that \rdef{garbage_free_derivation}
is defined in terms of the prefix order ($\permle$), so:

\begin{remark}
If $\redseq \permeq \redseqtwo$,
then $\redseq$ is $\tm'$-garbage-free if and only if $\redseqtwo$ is $\tm'$-garbage-free.
\end{remark}

Next, we define an effective procedure ({\em sieving})
to erase all the garbage from a derivation.
The idea is that if $\redseq : \tm \tobeta^\star \tmtwo$ is a derivation
in the $\lambda$-calculus and $\tm' \refines \tm$ is any refinement,
we may constructively build a $\tm'$-garbage-free derivation $(\redseq \sieve \tm') : \tm \tobeta^\star \tmthree$
by erasing all the $\tm'$-garbage from $\redseq$.
Our goal will then be to show that $\redseq \permeq (\redseq \sieve \tm')\redseqtwo$
where $\redseqtwo$ is garbage.

\begin{definition}[Sieving]
Let $\tm' \refines \tm$ and $\redseq : \tm \tobeta^\star \tmtwo$.
A step $\redex$ is \defn{coarse for $(\redseq,\tm')$} if
$\redex \permle \redseq$ and $\redex/\tm' \neq \emptyset$.
The \defn{sieve of $\redseq$ with respect to $\tm'$},
written $\redseq \sieve \tm'$,
is defined as follows.
\begin{itemize}
\item If there are no coarse steps for $(\redseq,\tm')$,
      then $(\redseq \sieve \tm') \eqdef \emptyDerivation$.
\item If there is a coarse step for $(\redseq,\tm')$,
      then $(\redseq \sieve \tm') \eqdef \redex_0 ((\redseq/\redex_0) \sieve (\tm'/\redex_0))$
      where $\redex_0$ is the leftmost such step.
\end{itemize}
\end{definition}

\begin{lemma}
\llem{sieving_well_defined}
The sieving operation $\redseq \sieve \tm'$ is well-defined.
\end{lemma}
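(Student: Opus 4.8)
The claim amounts to two things: that in the recursive case the ``leftmost coarse step'' $\redex_0$ is a meaningful choice, and that the recursion terminates. The first is easy: the coarse steps for $(\redseq, \tm')$ are one-step $\beta$-reductions of the term $\src(\redseq)$, of which there are only finitely many, each contracting a redex at a definite position in the term, so there is a unique leftmost one. The real task is termination, and the plan is to attach to each pair $(\redseq,\tm')$ a natural number that strictly decreases along every recursive call.

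The measure I would use is $\mu(\redseq,\tm') \eqdef |\names(\redseq/\tm')|$, the number of distinct step-labels occurring along the finite $\lambdadist$-derivation $\redseq/\tm' : \tm' \todist^* \tm'/\redseq$; by \rprop{labels_morphism} this quantity depends only on the permutation class of $\redseq/\tm'$. Two structural facts about $\lambdadist$ will do the work. First, along any $\lambdadist$-derivation a label occurs at most once: contracting the $\lab$-labelled lambda erases it, and since substitution in $\lambdadist$ is linear and correct terms have uniquely labelled lambdas, no $\lab$-labelled lambda remains afterwards (this is the absence of duplication/erasure noted after \rprop{strong_permutation}). Hence $\names$ turns a composable product into a \emph{disjoint} union, $\names(\redseqtwo\redseqthree) = \names(\redseqtwo) \uplus \names(\redseqthree)$. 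Second, by \resultname{Algebraic Simulation}~(\rcoro{algebraic_simulation}) the map $\cls{\redseq} \mapsto \cls{\redseq/\tm'}$ is a morphism of upper semilattices, hence monotone, so a coarse step $\redex_0 \permle \redseq$ of the $\lambda$-calculus maps to $\redex_0/\tm' \permle \redseq/\tm'$ in $\lambdadist$.

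The decreasing step then goes as follows. Let $\redex_0$ be the leftmost coarse step for $(\redseq,\tm')$ and put $d \eqdef \redseq/\tm'$, $e \eqdef \redex_0/\tm'$; the recursive call is on $(\redseq/\redex_0, \tm'/\redex_0)$, which is again a valid pair since $\tm'/\redex_0 \refines \src(\redseq/\redex_0)$. By the \resultname{Cube Lemma}~(\rlem{generalized_cube_lemma}), $(\redseq/\redex_0)/(\tm'/\redex_0) \permeq (\redseq/\tm')/(\redex_0/\tm') = d/e$, so $\mu(\redseq/\redex_0,\tm'/\redex_0) = |\names(d/e)|$. From monotonicity $e \permle d$, hence $e(d/e) = e \sqcup d \permeq d$, and combining the $\permeq$-invariance of $\names$ with the disjointness fact gives $\names(d) = \names(e) \uplus \names(d/e)$. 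Coarseness of $\redex_0$ means $e = \redex_0/\tm' \neq \emptyset$, so $\names(e)$ is non-empty and therefore $|\names(d/e)| < |\names(d)|$, that is $\mu(\redseq/\redex_0,\tm'/\redex_0) < \mu(\redseq,\tm')$. Well-founded induction on $\mu$ then shows the recursion terminates and $\redseq\sieve\tm'$ is a well-defined finite derivation.

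I expect the only genuine obstacle to be the strict decrease of $\mu$: it requires transporting the $\lambda$-calculus projection $\redseq/\redex_0$ to the $\lambdadist$-projection $d/e$ via the Cube Lemma, and then exploiting the label-linearity of $\lambdadist$ so that passing to $d/e$ discards exactly the non-empty label set $\names(e)$. The remaining ingredients --- finiteness of the set of coarse steps, the base case where there are none, and the bookkeeping for refinements along $\redex_0$ --- are routine.
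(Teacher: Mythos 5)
Your proof is correct and follows essentially the same route as the paper's: the measure $\#\names(\redseq/\tm')$, monotonicity of $\cls{\redseq}\mapsto\cls{\redseq/\tm'}$ to get $\redex_0/\tm'\permle\redseq/\tm'$, the Cube Lemma to identify $(\redseq/\redex_0)/(\tm'/\redex_0)$ with $(\redseq/\tm')/(\redex_0/\tm')$, and the label-accounting of $\lambdadist$ to get the strict decrease. The only cosmetic difference is that you phrase the last step as a disjoint union $\names(d)=\names(e)\uplus\names(d/e)$ whereas the paper uses the equivalent set-difference identity $\names(d/e)=\names(d)\setminus\names(e)$.
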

\begin{proof}
\SeeAppendixRef{appendix_sieving_well_defined}
To see that recursion terminates, consider the measure $M$ given by $M(\redseq,\tm') := \#\names(\redseq/\tm')$, and note that $M(\redseq,\tm') > M(\redseq/\redex_0,\tm'/\redex_0)$.
\end{proof}
For example, in \rexample{algebraic_simulation_example},
we have that
$\redextwo \sieve \tm' = \redextwo$
and
$\redextwo\redex_2 \sieve \tm' = \redex_1\redextwo_{11}$.

\begin{proposition}[Properties of sieving]
\lprop{properties_of_sieving}
Let $\tm' \refines \tm$ and $\redseq : \tm \tobeta^* \tmtwo$. Then:
\begin{enumerate}
\item $\redseq \sieve \tm'$ is $\tm'$-garbage-free and $\redseq \sieve \tm' \permle \redseq$.
\item $\redseq/(\redseq \sieve \tm')$ is $(\tm'/(\redseq \sieve \tm'))$-garbage.
\item $\redseq$ is $\tm'$-garbage if and only if $\redseq \sieve \tm' = \emptyDerivation$.
\item $\redseq$ is $\tm'$-garbage-free if and only if $\redseq \sieve \tm' \permeq \redseq$.
\end{enumerate}
\end{proposition}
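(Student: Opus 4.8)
The plan is to reduce all four items to a single \emph{invariant}: writing $\eta := \redseq \sieve \tm'$, the claim is that $\eta/\tm' \permeq \redseq/\tm'$, equivalently $\names(\eta/\tm') = \names(\redseq/\tm')$ by \rprop{labels_morphism}. First I would prove $\eta \permle \redseq$, by an easy induction following the recursive definition of sieving: the base case is trivial, and in the inductive case $\eta = \redex_0\,((\redseq/\redex_0) \sieve (\tm'/\redex_0))$ with $\redex_0$ coarse, so $\redex_0 \permle \redseq$, hence $\redex_0(\redseq/\redex_0) \permeq \redseq$, and the claim follows from the induction hypothesis and the identity $\redseq \permle \redseqtwo \iff \redseqthree\redseq \permle \redseqthree\redseqtwo$ of \rsec{preliminaries}. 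This already gives the second half of item~1 and, by \resultname{Algebraic Simulation}~(\rcoro{algebraic_simulation}), the inclusion $\names(\eta/\tm') \subseteq \names(\redseq/\tm')$. For the converse inclusion I would induct on the sieving recursion again; in the inductive case, using $\redseq \permeq \redex_0(\redseq/\redex_0)$ one obtains $\names(\redseq/\tm') = \names(\redex_0/\tm') \cup \names((\redseq/\redex_0)/(\tm'/\redex_0))$, which matches the analogous decomposition of $\names(\eta/\tm')$ by the induction hypothesis. The base case --- $(\redseq,\tm')$ has no coarse step --- requires the sublemma that $\redseq$ is then $\tm'$-garbage; I would prove this by induction on the length of $\redseq$: writing $\redseq = \redex\redseqtwo$, if $\redex/\tm' \neq \emptyset$ then $\redex$ is coarse, a contradiction, so $\redex$ is $\tm'$-garbage with $\tm'/\redex = \tm'$; one then checks that $(\redseqtwo, \tm')$ has no coarse step either (a coarse step of $\redseqtwo$ with an ancestor along $\redex$ would yield a coarse step of $\redseq$, while one created by $\redex$ is impossible because a garbage step creates only garbage steps), and the composition property of garbage (\rprop{properties_of_garbage}) together with the induction hypothesis gives that $\redseq = \redex\redseqtwo$ is $\tm'$-garbage.

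Granting the invariant, items~2--4 are short. For item~2, the \resultname{Cube}~(\rlem{generalized_cube_lemma}) gives $(\redseq/\eta)/(\tm'/\eta) \permeq (\redseq/\tm')/(\eta/\tm')$, whose right-hand side is $\emptyDerivation$ since $\eta/\tm' \permeq \redseq/\tm'$; hence $\redseq/\eta$ is $(\tm'/\eta)$-garbage. For item~3: if $\eta = \emptyDerivation$ then $\redseq/\tm' \permeq \eta/\tm' = \emptyDerivation$, so $\redseq$ is $\tm'$-garbage; conversely, if $\redseq$ is $\tm'$-garbage then every step $\permle \redseq$ is $\tm'$-garbage by the downward-closure property (\rprop{properties_of_garbage}), so there is no coarse step and $\eta = \emptyDerivation$. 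For item~4: the ``if'' direction is immediate from item~1 and the preceding Remark (garbage-freeness is $\permeq$-invariant); for ``only if'', instantiate the definition of garbage-free with $\redseqtwo := \eta$, using $\eta \permle \redseq$ (item~1) and that $\redseq/\eta$ is $(\tm'/\eta)$-garbage (item~2), to obtain $\redseq/\eta = \emptyDerivation$, that is $\redseq \permeq \eta$.

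It remains to prove that $\eta$ is $\tm'$-garbage-free --- the substantive half of item~1. I would first record the reformulation obtained from the \resultname{Cube}~(\rlem{generalized_cube_lemma}) exactly as in \rdef{garbage_free_derivation}: a derivation $\redseq$ is $\tm'$-garbage-free iff every $\redseqtwo \permle \redseq$ with $\redseqtwo/\tm' \permeq \redseq/\tm'$ satisfies $\redseqtwo \permeq \redseq$. Combined with the invariant, it then suffices to show: whenever $\redseqtwo \permle \eta$ and $\redseqtwo/\tm' \permeq \redseq/\tm'$, then $\eta \permle \redseqtwo$. I would prove this by induction on the sieving recursion. In the inductive case $\eta = \redex_0\eta'$ the set $\names(\redex_0/\tm')$ is nonempty (as $\redex_0$ is coarse) and contained in $\names(\redseq/\tm') = \names(\redseqtwo/\tm')$, so $\redex_0/\tm' \permle \redseqtwo/\tm'$; the crux is to lift this to $\redex_0 \permle \redseqtwo$. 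Once this holds, $\redseqtwo \permeq \redex_0(\redseqtwo/\redex_0)$, cancelling $\redex_0/\tm'$ gives $(\redseqtwo/\redex_0)/(\tm'/\redex_0) \permeq (\redseq/\redex_0)/(\tm'/\redex_0)$, and the induction hypothesis applied to $\eta' = (\redseq/\redex_0) \sieve (\tm'/\redex_0)$ with $\redseqtwo/\redex_0 \permle \eta'$ yields $\eta' \permle \redseqtwo/\redex_0$, hence $\eta = \redex_0\eta' \permle \redex_0(\redseqtwo/\redex_0) \permeq \redseqtwo$.

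The main obstacle is this lifting, which I would derive from what appear to be the two genuinely combinatorial facts underlying the whole development: (a) $\tm'$-garbage is localized --- a $\tm'$-garbage step creates only garbage steps and duplicates only garbage steps (this is what the base case of the invariant used); and (b) the residuals of a $\tm'$-non-garbage step after an arbitrary derivation are either empty or again non-garbage. Indeed, from $\redex_0/\tm' \permle \redseqtwo/\tm'$ and the \resultname{Cube} one gets that $\redex_0/\redseqtwo$ is $(\tm'/\redseqtwo)$-garbage, and since $\redex_0$ is non-garbage, (b) forces $\redex_0/\redseqtwo = \emptyset$, i.e. $\redex_0 \permle \redseqtwo$. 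Both (a) and (b) reduce, after an induction on the length of the derivation involved, to single-step statements about the relative positions of the two redexes --- and, in the creation case, the shape of the created redex --- together with the refinement structure, and are settled by the usual exhaustive case analysis. This local analysis is where the real work lies; everything above it is formal manipulation with the identities of \rsec{preliminaries}, \resultname{Algebraic Simulation}~(\rcoro{algebraic_simulation}), and the \resultname{Cube}~(\rlem{generalized_cube_lemma}).
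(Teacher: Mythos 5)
Your overall architecture is sound and, for the most part, tracks the paper's: the invariant $(\redseq \sieve \tm')/\tm' \permeq \redseq/\tm'$ packages the paper's two lemmas ``the sieve is a prefix'' (\rlem{sieve_is_prefix}) and ``the projection after a sieve is garbage'' (\rlem{projection_after_sieving_is_garbage}); your base-case sublemma is the paper's characterization of garbage (\rprop{characterization_of_garbage}); and your fact (a) is exactly the paper's ``garbage only creates/duplicates garbage'' lemma. Items 2--4 then go through as you describe. The gap is in the one place you yourself flag as the crux: your fact (b) is false. Take $\tm = (\lam{\var}{\vartwo\var\var})(I\varthree)$ and $\tm'$ as in \rexample{algebraic_simulation_example}. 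The step $\redextwo$ is not $\tm'$-garbage, yet $\redextwo/(\redex_1\redextwo_{11}) = \set{\redextwo_{22}}$ is nonempty and $\redextwo_{22}$ is $(\tm'/\redex_1\redextwo_{11})$-garbage: the step $\redex_1$ duplicates $\redextwo$ into a non-garbage copy $\redextwo_{11}$ and a garbage copy $\redextwo_{21}$, and once the non-garbage copy has been contracted only the garbage copy survives. The same example defeats the lifting itself: with $\redex_0 = \redextwo$ and $\redseqtwo = \redex_1\redextwo_{11}$ one has $\redex_0/\tm' = \redextwo' \permle \redex'_1\redextwo'_1 = \redseqtwo/\tm'$, but $\redex_0/\redseqtwo \neq \emptyset$. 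So the ``usual exhaustive case analysis'' you defer to cannot succeed: the single-step case is fine (there the label sets of distinct coinitial steps are disjoint), but the induction breaks precisely when $\redseqtwo$ contracts a residual of $\redex_0$ after a duplication.

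What rescues the argument --- and what the paper actually proves --- is that in your application $\redex_0$ is not an arbitrary non-garbage step but the \emph{leftmost} coarse step for $(\redseq,\tm')$, and for the leftmost coarse step one can show $\#(\redex_0/\redseqtwo) \leq 1$ for every $\redseqtwo \permle \redseq$ (\rlem{leftmost_coarse_step_has_at_most_one_residual}: a coarse step cannot lie to the left of $\redex_0$ and so cannot duplicate it, and a garbage step cannot duplicate it either because its argument is untyped while $\redex_0$ is not garbage). Only under this singleton-residual hypothesis does ``non-garbage steps have non-garbage residuals'' hold (\rlem{stable_non_garbage}), which then yields $\redex_0/\redseqtwo = \emptyset$ as you want. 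Your proof is therefore repairable, but you must replace the general claim (b) by this leftmost-specific statement and prove the residual bound separately; as written, the reduction to (b) is a genuine gap, since (b) is refuted by the paper's own running example.
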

\begin{proof}
\SeeAppendixRef{appendix_properties_of_sieving}
By induction on the length of $\redseq \sieve \tm'$, using various technical lemmas.
\end{proof}

As a consequence of the definition of the sieving construction and its properties,
given any derivation $\redseq : \tm \tobeta^* \tmtwo$
and any refinement $\tm' \refines \tm$,
we can always write $\redseq$, modulo permutation equivalence, as of the form
$\redseq \permeq \redseqtwo\redseqthree$
in such a way that $\redseqtwo$ is garbage-free and $\redseqthree$ is garbage.
To prove this take $\redseqtwo := \redseq \sieve \tm'$
and $\redseqthree := \redseq/(\redseq \sieve \tm')$,
and note that $\redseqtwo$ is garbage-free by item~1. of \rprop{properties_of_sieving},
$\redseqthree$ is garbage by item~2. of \rprop{properties_of_sieving},
and $\redseq \permeq \redseqtwo(\redseq/\redseqtwo) = \redseqtwo\redseqthree$
because $\redseqtwo \permle \redseq$ by item~1. of \rprop{properties_of_sieving}.

In the following we give a stronger version of this result.
The \resultname{Factorization} theorem below (\rthm{factorization_ulb_derivations})
states that this decomposition is actually an isomorphism of upper semilattices.
This means, on one hand, that given any derivation $\redseq : \tm \tobeta^* \tmtwo$
and any refinement $\tm' \refines \tm$
there is a {\em unique} way to factor $\redseq$ as of the form
$\redseq \equiv \redseqtwo\redseqthree$ where $\redseqtwo$ is garbage-free
and $\redseqthree$ is garbage.
On the other hand, it means that the decomposition
$\redseq \mapsto (\redseq\sieve\tm', \redseq/(\redseq\sieve\tm'))$
mapping each derivation to a of a garbage-free plus a garbage derivation
is {\em functorial}.
This means, essentially, that the set of pairs $(\redseqtwo,\redseqthree)$ such that $\redseqtwo$ is garbage-free
and $\redseqthree$ is garbage can be given the structure of an upper semilattice
in such a way that:
\begin{itemize}
\item If $\redseq \mapsto (\redseqtwo,\redseqthree)$
      and $\redseq' \mapsto (\redseqtwo',\redseqthree')$
      then $\redseq \permle \redseq' \iff (\redseqtwo,\redseqthree) \leq (\redseqtwo',\redseqthree')$.
\item If $\redseq \mapsto (\redseqtwo,\redseqthree)$
      and $\redseq' \mapsto (\redseqtwo',\redseqthree')$
      then $(\redseq \sqcup \redseq') \mapsto (\redseqtwo,\redseqthree) \lor (\redseqtwo',\redseqthree')$.
\end{itemize}
The upper semilattice structure of the set of pairs $(\redseqtwo,\redseqthree)$
is given using a variant of the Grothendieck construction:

\begin{definition}[Grothendieck construction for partially ordered sets]
Let $A$ be a poset, and let $B : A \to \Poset$ be a mapping associating
each object $a \in A$ to a poset $B(a)$.
Suppose moreover that $B$ is a {\em lax 2-functor}.
More precisely, for each $a \leq b$ in $A$,
the function $B(\pt{b}{a}) : B(a) \to B(b)$ 
is monotonic and such that:
\begin{enumerate}
\item $B(\pt{a}{a}) = \id_{B(a)}$ for all $a \in A$,
\item $B((\pt{c}{b}) \circ (\pt{b}{a})) \totwocell B(\pt{c}{b}) \circ B(\pt{b}{a})$ for all $a \leq b \leq c$ in $A$.
\end{enumerate}
The {\em Grothendieck construction} $\grothy{A}{B}$
is defined as the poset
given by the set of objects
$
  \set{(a,b) \ST a \in A,\ b \in B(a)}
$
and such that $(a, b) \leq (a', b')$
is declared to hold if and only if $A \leq a' \text{ and } B(\pt{a'}{a})(b) \leq b'$.
\end{definition}

The following proposition states that garbage-free derivations form a finite lattice,
while garbage derivations form an upper semilattice.

\begin{proposition}[Garbage-free and garbage semilattices]
\lprop{semilattices_of_garbage_free_and_garbage_derivations}
Let $\tm' \refines \tm$.
\begin{enumerate}
\item The set
      $F = \set{\cls{\redseq} \ST \src(\redseq) = \tm \text{ and } \redseq \text{ is $\tm'$-garbage-free}}$
      of $\tm'$-garbage-free derivations forms a finite lattice
      $\ulbFree{\tm'}{\tm} = (F,\leqF,\bot,\lorF,\top,\landF)$, with:
  \begin{itemize}
  \item {\em Partial order:} $\cls{\redseq} \leqF \cls{\redseqtwo} \iffdef \redseq/\redseqtwo \text{ is $(\tm'/\redseqtwo)$-garbage}$.
  \item {\em Bottom:} $\bot := \cls{\emptyDerivation}$.
  \item {\em Join:} $\cls{\redseq} \lorF \cls{\redseqtwo} \eqdef \cls{(\redseq \sqcup \redseqtwo) \sieve \tm'}$.
  \item {\em Top:} $\top$, defined as the join of all the $\cls{\redseqthree}$ such that $\redseqthree$ is $\tm'$-garbage-free.
  \item {\em Meet:} $\cls{\redseq} \landF \cls{\redseqtwo}$,
        defined as the join of all the $\cls{\redseqthree}$ such that $\cls{\redseqthree} \leqF \cls{\redseq}$ and $\cls{\redseqthree} \leqF \cls{\redseqtwo}$.
  \end{itemize}
\item The set $G = \set{\cls{\redseq} \ST \src(\redseq) = \tm \text{ and } \redseq \text{ is $\tm'$-garbage}}$
      of $\tm'$-garbage derivations forms an upper semilattice $\ulbGarbage{\tm'}{\tm} = (G,\permle,\bot,\sqcup)$,
      with the structure inherited from $\ulbDerivLam{\tm}$.
\end{enumerate}
\end{proposition}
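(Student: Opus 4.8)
For the second item, the plan is to observe that $G$ is an \emph{ideal} of the upper semilattice $\ulbDerivLam{\tm}$: it contains $\bot = \cls{\emptyDerivation}$ (the empty derivation is trivially $\tm'$-garbage), it is downward closed by \rprop{properties_of_garbage}(1), and it is closed under binary joins by \rprop{properties_of_garbage}(4). An ideal of an upper semilattice, equipped with the restricted order, the same bottom and the same join, is again an upper semilattice, and this is $\ulbGarbage{\tm'}{\tm}$. Nothing delicate happens here.

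For the first item there are several things to check, most of them routine. First, $\leqF$ is a well-defined partial order on $F$ (well-definedness on $\permeq$-classes is immediate since garbage and projection are $\permeq$-invariant). Reflexivity holds because $\redseq/\redseq = \emptyDerivation$ is garbage. Transitivity is most conveniently seen through the reformulation ``$\cls{\redseq} \leqF \cls{\redseqtwo}$ iff $\redseq \permle \redseqtwo\redseqthree$ for some $(\tm'/\redseqtwo)$-garbage $\redseqthree$'' (take $\redseqthree := \redseq/\redseqtwo$), composing two such inclusions with the help of the projection identities and \rprop{properties_of_garbage}(2,3) and \rprop{compatibility_of_simulation_residuals_and_permutation_equivalence}. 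Second, $\bot = \cls{\emptyDerivation}$ lies in $F$ (vacuously garbage-free) and is least, since $\emptyDerivation/\redseqtwo = \emptyDerivation$ is garbage for every $\redseqtwo$. Third, $\lorF$ as defined is the join: $\cls{(\redseq\sqcup\redseqtwo)\sieve\tm'}$ belongs to $F$ by \rprop{properties_of_sieving}(1); it is an upper bound of $\cls{\redseq}$ and $\cls{\redseqtwo}$ because $\redseq \permle \redseq\sqcup\redseqtwo \permeq ((\redseq\sqcup\redseqtwo)\sieve\tm')\,\redseqthree$ with $\redseqthree$ garbage by \rprop{properties_of_sieving}(1,2), so $\redseq/((\redseq\sqcup\redseqtwo)\sieve\tm') \permle \redseqthree$ is garbage by downward closure; and it is the \emph{least} upper bound because if $\cls{\redseq},\cls{\redseqtwo} \leqF \cls{\redseqthree}$ with $\redseqthree \in F$, then $(\redseq\sqcup\redseqtwo)/\redseqthree = (\redseq/\redseqthree)\sqcup(\redseqtwo/\redseqthree)$ is garbage by \rprop{properties_of_garbage}(4), hence so is the smaller $((\redseq\sqcup\redseqtwo)\sieve\tm')/\redseqthree$. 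Fourth, $F$ is finite: the simulation morphism $\cls{\redseq} \mapsto \cls{\redseq/\tm'}$ of \rcoro{algebraic_simulation} maps $F$ injectively into $\ulbDerivDist{\tm'}$ (see below), and $\ulbDerivDist{\tm'}$ is finite because by \rprop{labels_morphism} it embeds into $\powerset(X)$ for $X$ the finite set of labels occurring in $\tm'$. Granting all of this, $F$ is a finite join-semilattice with a bottom, hence automatically a lattice: the top is the join of its finitely many elements (matching the stated description), and $\cls{\redseq}\landF\cls{\redseqtwo}$, defined as the join of all common lower bounds, is the greatest lower bound.

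The only real content is the antisymmetry of $\leqF$ and, equivalently, the injectivity of the simulation map on $F$; both reduce to a single statement: \emph{the decomposition of a derivation into a garbage-free prefix followed by a garbage suffix is unique up to $\permeq$}, i.e.\ if $\redseq_1,\redseq_2 \permle \redseqthree$ are garbage-free with $\redseqthree/\redseq_i$ garbage, then $\redseq_1 \permeq \redseq_2$. For antisymmetry one takes $\redseqthree := \redseq \sqcup \redseqtwo$ (then $\redseqthree/\redseq \permeq \redseqtwo/\redseq$ is garbage since it is $\permle \redseqtwo/\redseq$ which is garbage, and symmetrically). For injectivity, if $\redseq_1/\tm' \permeq \redseq_2/\tm'$ one uses the \textbf{Cube} lemma (\rlem{generalized_cube_lemma}) together with $(\redseq\sqcup\redseqtwo)/\tm' \permeq (\redseq/\tm')\sqcup(\redseqtwo/\tm')$ to verify that $(\redseq_1\sqcup\redseq_2)/\redseq_i$ is $(\tm'/\redseq_i)$-garbage, again reducing to the same statement with $\redseqthree := \redseq_1\sqcup\redseq_2$. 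To prove uniqueness the plan is to show $\redseqthree\sieve\tm' \permeq \redseq_i$, which amounts to showing that the leftmost non-garbage (``coarse'') prefix-step chosen at each stage of the sieve is always a prefix of $\redseq_i$. This is where I expect the difficulty to sit. The key fact is that a non-garbage $\beta$-step $\redex \permle \redseqthree$ cannot be ``absorbed into the garbage complement'': under simulation a non-garbage step maps to a non-empty $\lambdadist$-multistep whose labels are exactly the labels of the copies of the contracted lambda in the refinement, and since distinct redexes of a correct term carry distinct lambda labels, these labels are disjoint from those produced by simulating $\redseq_i$, so the residuals of $\redex$ after $\redseq_i$ cannot all be erased; hence $\redex \permle \redseq_i$. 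Threading this observation through the recursion of the sieve yields $\redseqthree\sieve\tm' \permeq \redseq_1$ and symmetrically $\permeq \redseq_2$, so $\redseq_1 \permeq \redseq_2$. Everything else is bookkeeping with the already-established properties of garbage (\rprop{properties_of_garbage}), sieving (\rprop{properties_of_sieving}), compatibility (\rprop{compatibility_of_simulation_residuals_and_permutation_equivalence}) and the Cube lemma.
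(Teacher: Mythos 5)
Your overall architecture is sound and largely parallels the paper's: item 2 as an ideal of $\ulbDerivLam{\tm}$, the routine order/bottom/join verifications, and the reduction of both antisymmetry and the injectivity needed for finiteness to the uniqueness (up to $\permeq$) of the garbage-free/garbage decomposition. Your route to finiteness through the injectivity of $\cls{\redseq}\mapsto\cls{\redseq/\tm'}$ and the finiteness of $\ulbDerivDist{\tm'}$ is a legitimate variant of the paper's length-bounding argument, though it leans on the same uniqueness statement, so nothing is saved there.

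The gap is in the one place you yourself flag as difficult. To show that the leftmost coarse step $\redex$ for $(\redseqthree,\tm')$ satisfies $\redex\permle\redseq_i$, you assert that $\names(\redex/\tm')$ is disjoint from $\names(\redseq_i/\tm')$ ``since distinct redexes of a correct term carry distinct lambda labels.'' That observation only covers a single step of $\redseq_i$: if some step of $\redseq_i$ \emph{duplicates} $\redex$ and a later step contracts one of the copies while another survives, then $\names(\redseq_i/\tm')$ absorbs part of $\names(\redex/\tm')$ and the disjointness fails --- and this is exactly the scenario your argument must exclude. The paper excludes it with a dedicated lemma (\rlem{leftmost_coarse_step_has_at_most_one_residual}): the leftmost coarse step has at most one residual after any prefix of $\redseqthree$, proved by a dichotomy on each step $\redextwo\permle\redseqthree$ --- if $\redextwo$ is coarse then $\redex$, being leftmost, lies outside its argument and is not duplicated; if $\redextwo$ is garbage then its argument is untyped in the refinement and cannot contain the pattern of the non-garbage $\redex$. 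Only with this no-duplication property does the label-disjointness propagate along $\redseq_i$ (\rlem{stable_non_garbage}), yielding via the Cube lemma that $\redex/\redseq_i$ is non-garbage and hence, being a prefix of the garbage derivation $\redseqthree/\redseq_i$, necessarily empty. (Your closing inference ``the residuals of $\redex$ after $\redseq_i$ cannot all be erased; hence $\redex\permle\redseq_i$'' also reads backwards, since $\redex\permle\redseq_i$ says precisely that $\redex/\redseq_i$ is empty; the intended contradiction is recoverable once the disjointness is actually available.) Without the no-duplication lemma the crux of antisymmetry, of injectivity, and hence of finiteness remains unproved.
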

\begin{proof}
\SeeAppendixRef{appendix_semilattices_of_garbage_free_and_garbage_derivations}
The proof relies on the properties of garbage and sieving~(\rprop{properties_of_garbage}, \rprop{properties_of_sieving}).
\end{proof}

Suppose that $\tm' \refines \tm$,
and let $\ulbF \eqdef \ulbFree{\tm'}{\tm}$ denote the lattice
of $\tm'$-garbage-free derivations.
Let $\ulbG : \ulbF \to \Poset$ be the lax 2-functor
$
  \ulbG(\cls{\redseq}) \eqdef \ulbGarbage{\tm'/\redseq}{\tgt(\redseq)}
$
with the following action on morphisms:
\[
  \begin{array}{rrlll}
  \ulbG(\ptF{\cls{\redseqtwo}}{\cls{\redseq}}) & : & \ulbG(\cls{\redseq}) & \to & \ulbG(\cls{\redseqtwo}) \\
  && \cls{\alpha} & \mapsto & \cls{\redseq\alpha/\redseqtwo}
  \end{array}
\]
Using the previous proposition (\rprop{semilattices_of_garbage_free_and_garbage_derivations})
it can be checked that $\ulbG$ is indeed a lax 2-functor, and that
the Grothendieck construction $\grothy{\ulbF}{\ulbG}$ forms an upper semilattice. The join is given by
$
  (a,b) \lor (a',b') = (a \lorF a', \ulbG(\ptF{a \lorF a'}{a})(b) \sqcup \ulbG(\ptF{a \lorF a'}{a'})(b'))
$.
Finally we can state the main theorem:

\begin{theorem}[Factorization]
\lthm{factorization_ulb_derivations}
The following maps form an isomorphism of upper semilattices:
\[
  \begin{array}{rcl}
    \ulbDerivLam{\tm}       & \to    & \grothy{\ulbF}{\ulbG} \\
    \cls{\redseq}           & \mapsto   & (\cls{\redseq \sieve \tm'}, \cls{\redseq / (\redseq \sieve \tm')} \\
  \end{array}
  \HS
  \begin{array}{rcl}
    \grothy{\ulbF}{\ulbG}             & \to    & \ulbDerivLam{\tm} \\
    (\cls{\redseq}, \cls{\redseqtwo}) & \mapsto & \cls{\redseq\redseqtwo} \\
  \end{array}
\]
\end{theorem}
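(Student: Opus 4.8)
The plan is to show that the two displayed maps---write them $\Phi\colon\ulbDerivLam{\tm}\to\grothy{\ulbF}{\ulbG}$ and $\Psi\colon\grothy{\ulbF}{\ulbG}\to\ulbDerivLam{\tm}$---are mutually inverse and monotonic in both directions. This suffices: a monotonic bijection whose inverse is monotonic is an order isomorphism, and an order isomorphism between upper semilattices automatically preserves the least element (the empty join) and binary joins, hence is an isomorphism of upper semilattices in the sense of \rsec{preliminaries}; recall that $\grothy{\ulbF}{\ulbG}$ is indeed an upper semilattice by the discussion preceding the statement, which relies on \rprop{semilattices_of_garbage_free_and_garbage_derivations}. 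Thus I split the work into three parts: well-definedness of both maps, the two composite identities, and monotonicity of each.

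Well-definedness and one composite are quick. By \rprop{properties_of_sieving}, $\redseq\sieve\tm'$ is $\tm'$-garbage-free with $\redseq\sieve\tm'\permle\redseq$, and $\redseq/(\redseq\sieve\tm')$ is $(\tm'/(\redseq\sieve\tm'))$-garbage, so $\Phi(\cls{\redseq})$ is genuinely an object of $\grothy{\ulbF}{\ulbG}$, and it depends only on $\cls{\redseq}$ because sieving and projection are stable under $\permeq$. For $\Psi$, if $(\cls{\redseq},\cls{\redseqtwo})$ lies in $\grothy{\ulbF}{\ulbG}$ then $\redseqtwo$ starts at $\tgt(\redseq)$, so $\redseq\redseqtwo$ is defined and $\cls{\redseq\redseqtwo}$ depends only on the two classes. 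The identity $\Psi\circ\Phi=\id$ is immediate from $\redseq\permeq(\redseq\sieve\tm')(\redseq/(\redseq\sieve\tm'))$.

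The heart of the argument is $\Phi\circ\Psi=\id$, \ie the uniqueness of the garbage-free/garbage factorisation: for $\redseq$ that is $\tm'$-garbage-free and $\redseqtwo$ that is $(\tm'/\redseq)$-garbage one must show $(\redseq\redseqtwo)\sieve\tm'\permeq\redseq$, the garbage component then following since $(\redseq\redseqtwo)/((\redseq\redseqtwo)\sieve\tm')\permeq(\redseq\redseqtwo)/\redseq=\redseqtwo$ by \rprop{compatibility_of_simulation_residuals_and_permutation_equivalence} and the residual identities of \rsec{preliminaries}. My route is to prove that \emph{appending garbage does not change the sieve}, namely $(\redseq\redseqtwo)\sieve\tm'\permeq\redseq\sieve\tm'$ whenever $\redseqtwo$ is $(\tm'/\redseq)$-garbage, by induction on the measure $\#\names(\redseq/\tm')$ of \rlem{sieving_well_defined} and following the recursive shape of the sieve, and then to conclude with \rprop{properties_of_sieving}, which gives $\redseq\sieve\tm'\permeq\redseq$ because $\redseq$ is garbage-free. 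In the base case $\redseq$ is itself garbage and $\redseq\redseqtwo$ is garbage by \rprop{properties_of_garbage}, so both sieves are empty; in the inductive step the point is that the leftmost coarse step $\redex$ for $(\redseq\redseqtwo,\tm')$ is also the leftmost coarse step for $(\redseq,\tm')$ and is a prefix of $\redseq$, after which one peels off $\redex$ on both sides and applies the induction hypothesis (the \resultname{Cube} lemma~(\rlem{generalized_cube_lemma}) gives that the measure strictly decreases, and \rcoro{algebraic_simulation} is used to see that $\redseqtwo$ is still garbage with respect to the pushed-forward refinement). Making this ``leftmost coarse step is stable under appending garbage'' claim precise---intuitively it contracts the leftmost--outermost \emph{typed} redex, and a garbage tail can only create activity inside untyped subterms, which lies strictly to its right---is the step I expect to be the main obstacle; a useful auxiliary observation here, itself a consequence of the \resultname{Cube} lemma and \rcoro{algebraic_simulation}, is that for $\redseqthree\permle\redseqtwo$ the quotient $\redseqtwo/\redseqthree$ is $(\tm'/\redseqthree)$-garbage exactly when $\redseqthree$ and $\redseqtwo$ have the same image in $\ulbDerivDist{\tm'}$, so that garbage-free means $\permle$-minimal in one's own fibre.

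It remains to check monotonicity. For $\Psi$: unfolding the Grothendieck order, $(\cls{\redseq_1},\cls{\redseqtwo_1})\leq(\cls{\redseq_2},\cls{\redseqtwo_2})$ entails $\redseq_1\redseqtwo_1/\redseq_2\permle\redseqtwo_2$, so $\redseq_1\redseqtwo_1/(\redseq_2\redseqtwo_2)=(\redseq_1\redseqtwo_1/\redseq_2)/\redseqtwo_2=\emptyDerivation$, \ie $\redseq_1\redseqtwo_1\permle\redseq_2\redseqtwo_2$. For $\Phi$: assume $\redseq\permle\redseqtwo$. The first-component condition $\cls{\redseq\sieve\tm'}\leqF\cls{\redseqtwo\sieve\tm'}$ asks that $(\redseq\sieve\tm')/(\redseqtwo\sieve\tm')$ be $(\tm'/(\redseqtwo\sieve\tm'))$-garbage; since $\redseq\sieve\tm'\permle\redseq\permle\redseqtwo$ we get $(\redseq\sieve\tm')/(\redseqtwo\sieve\tm')\permle\redseqtwo/(\redseqtwo\sieve\tm')$, which is $(\tm'/(\redseqtwo\sieve\tm'))$-garbage by \rprop{properties_of_sieving}, so downward closure of garbage~(\rprop{properties_of_garbage}) settles it; the second-component condition, after the cancellation $(\redseq\sieve\tm')(\redseq/(\redseq\sieve\tm'))\permeq\redseq$, reduces to $\redseq/(\redseqtwo\sieve\tm')\permle\redseqtwo/(\redseqtwo\sieve\tm')$, which holds because $\redseq\permle\redseqtwo$. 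With well-definedness, both composites, and monotonicity in place, $\Phi$ is an order isomorphism, hence an isomorphism of upper semilattices.
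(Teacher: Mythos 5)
Your proposal is correct and its core coincides with the paper's: both establish the isomorphism by showing the two maps are mutual inverses, with $\Psi\circ\Phi=\id$ falling out of $\redseq\permeq(\redseq\sieve\tm')(\redseq/(\redseq\sieve\tm'))$ and $\Phi\circ\Psi=\id$ resting on exactly the two facts you isolate, namely that appending $(\tm'/\redseq)$-garbage does not change the sieve (the paper's \rlem{sieving_trailing_garbage}) and that a garbage-free derivation equals its own sieve up to $\permeq$ (\rprop{properties_of_sieving}). Your sketch of the trailing-garbage lemma also matches the paper's proof in structure: induction following the recursive shape of the sieve, with the crux being that the leftmost coarse step for $(\redseq\redseqtwo,\tm')$ coincides with the leftmost coarse step for $(\redseq,\tm')$; the paper discharges this by a case split on whether the new leftmost coarse step is already coarse for $(\redseq,\tm')$, using \rlem{leftmost_coarse_step_has_at_most_one_residual} and a label-counting argument to rule out the other case — essentially the argument you anticipate, so the ``main obstacle'' you flag is real but closes the way you expect. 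Your monotonicity computations for $\Phi$ and $\Psi$ are the same as the paper's.

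The genuine difference is how you get from order isomorphism to semilattice isomorphism. The paper verifies directly that both $\varphi$ and $\psi$ preserve bottom and binary joins; for $\varphi$ this requires computing $(\redseq\sqcup\redseqtwo)\sieve\tm'$ and is where the auxiliary \rlem{lorf_ignores_garbage} (that $\cls{(\redseq\sqcup\redseqtwo)\sieve\tm'}=\cls{\redseq\sieve\tm'}\lorF\cls{\redseqtwo\sieve\tm'}$) enters, together with a further computation for the second component. You instead invoke the general fact that a monotonic bijection with monotonic inverse between upper semilattices automatically preserves the empty and binary joins, since these are determined by the order. This is sound and buys a real economy: it makes \rlem{lorf_ignores_garbage} and the join-preservation computations unnecessary for the theorem itself (though the explicit join formula on $\grothy{\ulbF}{\ulbG}$, which the paper also wants for its own sake, then has to be justified separately in the discussion preceding the statement, as you note). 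The trade-off is that the paper's direct route yields the concrete description of how joins decompose into garbage-free and garbage parts as a by-product, whereas yours establishes only that they are preserved.
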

\begin{proof}
\SeeAppendixRef{appendix_factorization_ulb_derivations}
The proof consists in checking that both maps are morphisms of upper semilattices and
that they are mutual inverses, resorting to \rprop{properties_of_garbage} and \rprop{properties_of_sieving}.
\end{proof}

\begin{example}
Let $\tm = (\lam{\var}{\vartwo\var\var})(I\varthree)$
and $\tm'$ be as in \rexample{algebraic_simulation_example}.
The upper semilattice $\ulbDerivLam{\tm}$ can be factorized as $\grothy{\ulbF}{\ulbG}$
as follows. Here posets are represented by their Hasse diagrams:
\[
\begin{array}{ccc}
  \xymatrix@R=.3cm@C=.3cm{
  &
    \cls{\emptyDerivation}
    \ar@/_.25cm/[dl]
    \ar@/^.25cm/[dr]
  &
  &
  \\
    \cls{\redex_1}
    \ar[d]
    \ar[r]
  &
    \cls{\redex_1\redextwo_{21}}
    \ar[d]
  &
    \cls{\redextwo}
    \ar@/^.25cm/[dl]
  \\
    \cls{\redex_1\redextwo_{11}}
    \ar[r]
  &
    \cls{\redex_1 \sqcup \redextwo}
  }
&
  \raisebox{-.85cm}{$\simeq$}
&
  \xymatrix@R=.3cm@C=.3cm{
  &
    (\cls{\emptyDerivation},\cls{\emptyDerivation})
    \ar@/_.25cm/[dl]
    \ar@/^.25cm/[dr]
  &
  &
  \\
    (\cls{\redex_1},\cls{\emptyDerivation})
    \ar[d]
    \ar[r]
  &
    (\cls{\redex_1},\cls{\redextwo_{21}})
    \ar[d]
  &
    (\cls{\redextwo},\cls{\emptyDerivation})
    \ar@/^.25cm/[dl]
  \\
    (\cls{\redex_1\redextwo_{11}},\cls{\emptyDerivation})
    \ar[r]
  &
    (\cls{\redex_1\redextwo_{11}},\cls{\redextwo_{22}})
  }
\end{array}
\]
For example $(\cls{\redextwo},\cls{\emptyDerivation}) \leq (\cls{\redex_1\redextwo_{11}},\cls{\redextwo_{22}})$
because $\cls{\redextwo} \leqF \cls{\redex_1\redextwo_{11}}$, that is,
$\redextwo/\redex_1\redextwo_{11} = \redextwo_{22}$ is garbage,
and
$\ulbG(\ptF{\cls{\redex_1\redextwo_{11}}}{\cls{\redextwo}})(\cls{\emptyDerivation})
= \cls{\redextwo/\redex_1\redextwo_{11}} = \cls{\redextwo_{22}} \permle \cls{\redextwo_{22}}$.
\end{example}

\section{Conclusions}
  \lsec{conclusions}
  
We have defined a calculus ($\lambdadist$) based on
non-idempotent intersection types.
Its syntax and semantics are complex due to the presence of an admittedly {\em ad hoc} correctness invariant for terms,
enforced so that reduction is confluent.
In contrast, derivation spaces in this calculus turn out to be very simple structures:
they are representable as {\em rings of sets}~(\rprop{labels_morphism})
and as a consequence they are distributive lattices~(\rcoro{ulbderivdist_distributive_lattice}).
Derivation spaces in the $\lambda$-calculus can be mapped to these
much simpler spaces using a strong notion of simulation~(\rcoro{algebraic_simulation})
inspired by residual theory. Building on this, we showed how the derivation space of any typable $\lambda$-term
may be factorized as a ``twisted product'' of garbage-free and garbage derivations~(\rthm{factorization_ulb_derivations}).

We believe that this validates the (soft) hypothesis
that explicitly representing resource management can provide insight on the structure
of derivation spaces.

\medskip

{\bf Related work.}
The \resultname{Factorization} theorem~(\rthm{factorization_ulb_derivations}) is reminiscent
of Melli\`es' abstract factorization result~\cite{DBLP:conf/ctcs/Mellies97}.
Given an axiomatic rewriting system fulfilling a number of axioms,
Melli\`es proves that every derivation can be uniquely
factorized as an {\em external} prefix followed by an {\em internal} suffix.
We conjecture that each refinement $\tm' \refines \tm$ should provide
an instantiation of Melli\`es' axioms, in such a way that our
$\tm'$-garbage-free/$\tm'$-garbage factorization coincides with his external/internal
factorization.
Melli\`es notes that any evaluation strategy that always selects external steps
is hypernormalizing. A similar result should hold for evaluation strategies that
always select {\em non-garbage} steps.

The notion of {\em garbage-free} derivation is closely related with the
notion of $X$-{\em neededness}~\cite{barendregt1987needed}.
A step $\redex$ is $X$-needed
if every reduction to a term $\tm \in X$ contracts a residual of $\redex$.
Recently, Kesner et al.~\cite{KRV18} have related
typability in a non-idempotent intersection type system $\mathcal{V}$
and weak-head neededness.
Using similar techniques, it should be possible to prove that
$\tm'$-garbage-free steps are $X$-needed,
where $X = \set{\tmtwo \ST \tmtwo' \refines \tmtwo}$
and $\tmtwo'$ is the $\todist$-normal form of $\tm'$.

There are several resource calculi in the literature
which perhaps could play a similar role as $\lambdadist$ to recover factorization results akin to \rthm{factorization_ulb_derivations}.
Kfoury~\cite{kfoury1996linearization} embeds the $\lambda$-calculus
in a {\em linear} $\lambda$-calculus that has no duplication nor erasure.
Ehrard and Regnier prove that the Taylor expansion of $\lambda$-terms~\cite{ehrhard2008uniformity}
commutes with normalization, similarly as in \resultname{Algebraic Simulation}~(\rcoro{algebraic_simulation}).
Mazza et al.~\cite{MazzaPellissierVial} study a general framework for {\em polyadic approximations},
corresponding roughly to the notion of {\em refinement} in this paper.

\medskip
\noindent{\bf Acknowledgements.}
To Eduardo Bonelli and Delia Kesner for introducing the first author to these topics.
To Luis Scoccola and the anonymous reviewers for helpful suggestions.

%
%
%
\bibliographystyle{splncs04}
\bibliography{biblio}

\newpage
\appendix
  \section{Appendix}
  
This technical appendix includes the detailed proofs of the results stated in the main body
of the paper that have been marked with the \SeeAppendix symbol.

\subsection{Proof of \rlem{subject_reduction} -- Subject Reduction}
\label{appendix_subject_reduction}

We are to prove that
if $\conof{(\lamp{\lab}{\var}{\tm})\ls{\tmtwo}}$ is correct
then $\conof{\subs{\tm}{\var}{\ls{\tmtwo}}}$ is correct,
and, moreover, that their unique typings are under the same typing context
and have the same type.
We need a few auxiliary results:

\begin{lemma}
\llem{substitution_permutation}
If $\ls{\tmtwo}$ is a permutation of $\ls{\tmthree}$, then $\subs{\tm}{\var}{\ls{\tmtwo}} = \subs{\tm}{\var}{\ls{\tmthree}}$.
\end{lemma}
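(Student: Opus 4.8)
The plan is to argue by structural induction on the term $\tm$, establishing along the way that $\subs{\tm}{\var}{\ls{\tmtwo}}$ is defined precisely when $\subs{\tm}{\var}{\ls{\tmthree}}$ is (this is immediate and symmetric, since being a permutation is a symmetric relation). The starting observation is that if $\ls{\tmtwo}$ is a permutation of $\ls{\tmthree}$ then the two lists carry the same multiset of types, $\tmlabel{\ls{\tmtwo}} = \tmlabel{\ls{\tmthree}}$, and the same set of free variables, $\fv{\ls{\tmtwo}} = \fv{\ls{\tmthree}}$. Hence the typing side condition $\varlabel{\var}{\tm} = \tmlabel{\,\cdot\,}$ and the capture-avoidance side condition $\vartwo \not\in \fv{\,\cdot\,}$ occurring in \rdef{substitution} hold for $\ls{\tmtwo}$ if and only if they hold for $\ls{\tmthree}$.

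For the base cases: if $\tm = \var^\typ$, definedness forces both $\ls{\tmtwo}$ and $\ls{\tmthree}$ to be one-element lists, and being permutations of one another they are literally the same list, so both substitutions return its single element; if $\tm = \vartwo^\typ$ with $\var \neq \vartwo$, both lists must be empty and both substitutions return $\vartwo^\typ$. The abstraction case $\tm = \lamp{\lab}{\vartwo}{\tmthree}$ follows directly from the induction hypothesis applied to $\tmthree$, since the entire list (the same on both sides) is pushed under the binder and the condition $\vartwo \not\in \fv{\ls{\tmtwo}} = \fv{\ls{\tmthree}}$ transfers as noted.

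The only case that needs care is the application $\tm = \tmthree_0[\tmthree_j]_{j=1}^{m}$. By \rdef{substitution}, $\subs{\tm}{\var}{\ls{\tmtwo}}$ is computed from some partition $(\ls{\tmtwo}_0,\hdots,\ls{\tmtwo}_m)$ of $\ls{\tmtwo}$ with $\varlabel{\var}{\tmthree_j} = \tmlabel{\ls{\tmtwo}_j}$ for all $j$. Since the permutation relation on lists is transitive, this same tuple is also a partition of $\ls{\tmthree}$ enjoying the same typing property, hence an admissible choice for computing $\subs{\tm}{\var}{\ls{\tmthree}}$; with this choice the recursive calls $\subs{\tmthree_j}{\var}{\ls{\tmtwo}_j}$ are identical on both sides, so the two resulting terms coincide. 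To conclude that the two substitutions agree regardless of which partitions are actually selected in their respective definitions, I would invoke the well-definedness of substitution recorded in the Remark following \rdef{substitution}, namely that the value of $\subs{\tm}{\var}{-}$ does not depend on the chosen partition.

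The single subtle point in the whole argument is precisely this last one: substitution is specified only up to a choice of partition, so in the application case one must either fix a canonical choice or appeal to the already-established independence of that choice. Everything else is a routine induction driven by the fact that a permutation preserves the relevant data: the multiset of types, the free-variable set, and the partition relation.
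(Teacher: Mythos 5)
Your proof is correct and follows the same route as the paper, which simply states ``By induction on $\tm$''; your write-up fills in exactly the details one would expect, with the application case handled via the observation that a partition of $\ls{\tmtwo}$ is also a partition of $\ls{\tmthree}$ together with the partition-independence recorded in the Remark after \rdef{substitution}. The subtle point you flag is real but is resolved exactly as you suggest (or by folding both claims into one simultaneous induction), so there is no gap.
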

\begin{proof}
By induction on $\tm$.
\end{proof}

\begin{lemma}
\llem{correct_subterms}
If $\tm$ is correct, then any subterm of $\tm$ is correct.
\end{lemma}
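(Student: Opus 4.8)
The plan is to show that each of the three conditions in Definition~\ref{def:sequentiality_and_correctness}, together with typability, is inherited by subterms. The key elementary observation is that the subterm relation is transitive: every subterm of a subterm $\tmtwo$ of $\tm$ is itself a subterm of $\tm$, and moreover two subterms occurring at distinct positions inside $\tmtwo$ occur at distinct positions inside $\tm$. Since conditions~1, 2 and~3 are all phrased as universal statements ranging over (pairs of) subterms of the term in question, this transitivity already does most of the work.

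First I would establish the auxiliary fact that any subterm of a typable term is typable. This is immediate by induction on the distance from $\tmtwo$ to the root of $\tm$: for an immediate subterm one inspects the typing rules, noting that in \indrulename{\toI} the premise is a typing judgment for the body, and in \indrulename{\toE} the premises are typing judgments for the function part and for each argument of the list (the case $n = 0$ being unproblematic, since then the only proper subterm is the function part). With typability of subterms in hand, condition~1 for $\tmtwo$ follows directly from condition~1 for $\tm$ via the transitivity observation above: any two lambda-subterms of $\tmtwo$ occurring at different positions are lambda-subterms of $\tm$ occurring at different positions, hence carry different labels.

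For conditions~2 and~3 I would additionally invoke the Unique Typing remark: whenever $\tmthree$ is a subterm of $\tmtwo$, there is exactly one derivable judgment $\tctx \vdash \tmthree : \typ$, and by that uniqueness it is the same judgment whether $\tmthree$ is viewed as a subterm of $\tmtwo$ or as a subterm of $\tm$. Consequently the sequentiality of $\tctx$ demanded of $\tmthree$ by condition~2, and the sequentiality of each multiset $\mtyp$ with $\mtyp \tolab{\lab} \typtwo \occursin \tctx$ or $\mtyp \tolab{\lab} \typtwo \occursin \typ$ demanded by condition~3, are exactly the requirements already guaranteed by correctness of $\tm$ applied to the subterm $\tmthree$. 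This settles both conditions, completing the proof.

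I do not expect a real obstacle here; the only point that requires a moment's care is that the derivable typing of a subterm is intrinsic and does not depend on the ambient term, which is precisely the content of the Unique Typing remark, and the only (entirely routine) induction is the auxiliary typability-of-subterms step. Since later proofs (\eg of Subject Reduction, \rlem{subject_reduction}) freely destructure correct terms into their components, it is worth stating this lemma in the generality above rather than only for immediate subterms.
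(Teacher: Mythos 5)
Your proof is correct and follows essentially the same route as the paper's: the paper also observes that the correctness conditions are statements quantified over subterms and hence pass to (immediate, and by induction all) subterms, packaging your transitivity observation as an induction on $\tm$. Your explicit appeal to Unique Typing to justify that a subterm's typing judgment is intrinsic is a reasonable extra precaution that the paper leaves implicit.
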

\begin{proof}
Note, by definition of correctness~(\rdef{sequentiality_and_correctness}),
that if $\tm$ is a correct abstraction $\tm = \lamp{\lab}{\var}{\tmtwo}$ then $\tmtwo$ is correct,
and if $\tm$ is a correct application $\tm = \tmtwo[\tmthree_1,\hdots,\tmthree_n]$
then $\tmtwo$ and $\tmthree_1,\hdots,\tmthree_n$ are correct.
This allows us to conclude by induction on $\tm$.
\end{proof}

\begin{lemma}[Relevance]
\llem{relevance}
If $\tctx \vdash \tm : \typ$
and $\var \in \dom\tctx$ then $\var \in \fv{\tm}$.
\end{lemma}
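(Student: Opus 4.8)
The plan is to proceed by induction on the derivation of $\tctx \vdash \tm : \typ$ (equivalently, by induction on $\tm$, invoking the \textbf{Unique typing} remark). There are three cases, one per typing rule, and each is immediate once one observes how the domain of the context in the conclusion is assembled from the domains of the contexts in the premises, together with the fact that the context sum in \indrulename{\toI} is \emph{disjoint}.

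In the \indrulename{var} case we have $\tm = \var^\typ$ and $\tctx = \var : [\typ]$, so $\dom\tctx = \set{\var}$ and the claim reduces to $\var \in \fv{\var^\typ}$, which holds by definition. In the \indrulename{\toI} case we have $\tm = \lamp{\lab}{\vartwo}{\tmtwo}$ with the conclusion $\tctx \vdash \lamp{\lab}{\vartwo}{\tmtwo} : \mtyp \tolab{\lab} \typtwo$ derived from the premise $\tctx \oplus (\vartwo : \mtyp) \vdash \tmtwo : \typtwo$. Take $\var \in \dom\tctx$. Since the sum $\tctx \oplus (\vartwo : \mtyp)$ is disjoint, $\vartwo \notin \dom\tctx$, hence $\var \neq \vartwo$; moreover $\var \in \dom(\tctx \oplus (\vartwo : \mtyp))$. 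By the induction hypothesis applied to the premise, $\var \in \fv{\tmtwo}$, and since $\var \neq \vartwo$ we conclude $\var \in \fv{\tmtwo} \setminus \set{\vartwo} = \fv{\lamp{\lab}{\vartwo}{\tmtwo}}$.

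In the \indrulename{\toE} case we have $\tm = \tmtwo[\tmthree_i]_{i=1}^{n}$ with premises $\tctx_0 \vdash \tmtwo : [\typtwo_i]_{i=1}^{n} \tolab{\lab} \typ$ and $\tctx_i \vdash \tmthree_i : \typtwo_i$ for $i = 1..n$, and conclusion context $\tctx_0 +_{i=1}^{n} \tctx_i$. Since the domain of a sum of contexts is the union of the domains, any $\var \in \dom(\tctx_0 +_{i=1}^{n} \tctx_i)$ lies in $\dom\tctx_j$ for some $j \in \set{0,1,\hdots,n}$; applying the induction hypothesis to the $j$-th premise gives $\var \in \fv{\tmtwo}$ (if $j = 0$) or $\var \in \fv{\tmthree_j}$ (if $j \geq 1$), and in either case $\var \in \fv{\tmtwo} \cup \bigcup_{i=1}^{n} \fv{\tmthree_i} = \fv{\tm}$.

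I do not expect a genuine obstacle here: the whole argument is bookkeeping about domains of context sums. The only point requiring a moment's care is the abstraction case, where the disjointness of $\oplus$ is used to rule out $\var = \vartwo$; this is precisely where relevance depends on the type system lacking weakening, so that the variable $\var \in \dom\tctx$ necessarily already occurs in the premise context.
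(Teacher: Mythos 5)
Your proof is correct and follows exactly the route the paper takes (the paper's proof is simply ``By induction on $\tm$'', and your three cases are the evident unfolding of that induction, including the observation that disjointness of $\oplus$ rules out $\var = \vartwo$ in the abstraction case). Nothing to add.
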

\begin{proof}
By induction on $\tm$.
\end{proof}

\begin{definition}
$\Lambda(\tm)$ stands for the multiset of labels decorating the lambdas of $\tm$:
\[
  \begin{array}{rcl}
    \Lambda(\var^\typ) & \eqdef & [\,] \\
    \Lambda(\lamp{\lab}{\var}{\tm}) & \eqdef & [\lab] + \Lambda(\tm)  \\
    \Lambda(\tm[\tmtwo_i]_{i=1}^{n}) & \eqdef & \Lambda(\tm) +_{i=1}^{n} \Lambda(\tmtwo_i)  \\
  \end{array}
\]
\end{definition}
For example, $\Lambda((\lamp{1}{\var}{\var^{[\alpha^2] \tolab{3} \alpha^2}})[\lamp{3}{\var}{\var^{\alpha^2}}]) = [1,3]$.

\begin{lemma}
\llem{labels_over_lambdas_substitution}
Let $\tm,\tmtwo_1,\hdots,\tmtwo_n$ be correct terms.
Then
$\Lambda(\subs{\tm}{\var}{[\tmtwo_i]_{i=1}^{n}}) = \Lambda(\tm) +_{i=1}^n \Lambda(\tmtwo_i)$.
\end{lemma}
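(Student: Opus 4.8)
The plan is to prove the equality by structural induction on $\tm$, following the four clauses of the definition of substitution (\rdef{substitution}); throughout, $\subs{\tm}{\var}{[\tmtwo_i]_{i=1}^{n}}$ is assumed to be defined, so in particular $\varlabel{\var}{\tm} = \tmlabel{[\tmtwo_i]_{i=1}^{n}}$. The base cases are immediate: if $\tm = \var^{\typ}$ then $\varlabel{\var}{\tm} = [\typ]$ is a singleton, so definedness forces $n = 1$ and $\subs{\var^{\typ}}{\var}{[\tmtwo_1]} = \tmtwo_1$; since $\Lambda(\var^{\typ}) = [\,]$, both sides equal $\Lambda(\tmtwo_1)$. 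If $\tm = \vartwo^{\typ}$ with $\var \neq \vartwo$ then $\varlabel{\var}{\tm} = [\,]$, so $n = 0$, $\subs{\vartwo^{\typ}}{\var}{[\,]} = \vartwo^{\typ}$, and both sides equal $[\,]$.

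For the abstraction case $\tm = \lamp{\lab}{\vartwo}{\tmthree}$, renaming the bound variable if necessary I may assume $\var \neq \vartwo$ and $\vartwo \notin \fv{[\tmtwo_i]_{i=1}^{n}}$, so that $\subs{\tm}{\var}{[\tmtwo_i]_{i=1}^{n}} = \lamp{\lab}{\vartwo}{\subs{\tmthree}{\var}{[\tmtwo_i]_{i=1}^{n}}}$; unfolding $\Lambda$ and using the induction hypothesis on $\tmthree$ then gives $\Lambda(\subs{\tm}{\var}{[\tmtwo_i]_{i=1}^{n}}) = [\lab] + \Lambda(\tmthree) +_{i=1}^{n}\Lambda(\tmtwo_i) = \Lambda(\tm) +_{i=1}^{n}\Lambda(\tmtwo_i)$. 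For the application case $\tm = \tmthree_0[\tmthree_j]_{j=1}^{m}$, the definition gives $\subs{\tm}{\var}{\ls{\tmtwo}} = \subs{\tmthree_0}{\var}{\ls{\tmtwo}_0}[\subs{\tmthree_j}{\var}{\ls{\tmtwo}_j}]_{j=1}^{m}$ for some partition $(\ls{\tmtwo}_0, \hdots, \ls{\tmtwo}_m)$ of $\ls{\tmtwo} = [\tmtwo_i]_{i=1}^{n}$ with $\varlabel{\var}{\tmthree_j} = \tmlabel{\ls{\tmtwo}_j}$ for each $j = 0..m$. I would unfold $\Lambda$ on the application, apply the induction hypothesis to each $\tmthree_j$ (legitimate, since $\ls{\tmtwo}_j$ meets the required constraint), and regroup the resulting iterated multiset sums: this produces $\Lambda(\tm)$ added to the multiset sum, over all $j = 0..m$, of the $\Lambda$-labels of the terms occurring in $\ls{\tmtwo}_j$. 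Since $\ls{\tmtwo}_0 + \hdots + \ls{\tmtwo}_m$ is a permutation of $\ls{\tmtwo}$, that latter multiset is exactly $+_{i=1}^{n}\Lambda(\tmtwo_i)$, which closes the case. (Independence from the chosen partition is the content of the remark following \rdef{substitution}, and also follows from \rlem{substitution_permutation}; alternatively one just fixes one valid partition.)

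The only slightly delicate point is the multiset bookkeeping in the application case: verifying that the partition-indexed double sum of labels collapses to the single sum of $\Lambda$-labels over $\ls{\tmtwo}$. Everything else is a direct unfolding of the definitions of $\Lambda$ and of substitution, so no real obstacle is expected.
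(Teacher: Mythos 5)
Your proof is correct and takes exactly the approach the paper does: the paper's entire proof of this lemma is ``by induction on $\tm$,'' and your case analysis (the two variable cases forcing $n=1$ and $n=0$, the abstraction case peeling off $[\lab]$, and the application case regrouping the partition-indexed sums using commutativity of multiset sum) is precisely the intended unfolding of that induction. No gaps.
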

\begin{proof}
By induction on $\tm$.
\end{proof}

To prove \rlem{subject_reduction}, we first show that substitution preserves typing,
and then that it preserves correctness.

\subsubsection{Substitution preserves typing}
\lsec{substitution_preserves_typing}

  More precisely, let us show that
  if $\tctx \vdash \conof{(\lamp{\lab}{\var}{\tm})\ls{\tmtwo}} : \typ$ is derivable,
  then $\tctx \vdash \conof{\subs{\tm}{\var}{\ls{\tmtwo}}} : \typ$ is derivable.
  By induction on the context $\con$.
  \begin{enumerate}
  \item {\bf Empty, $\con = \conbase$.}
    \label{subject_reduction__case_base_empty_context}
    By induction on $\tm$.
    \begin{enumerate}
    \item {\bf Variable (same), $\tm = \var^\typ$, $\ls{\tmtwo} = [\tmtwo]$.}
      We have that $\var : [\typ] \vdash \var^\typ : \typ$
      and $\tctxtwo \vdash \tmtwo : \typtwo$ are derivable,
      so we are done.
    \item {\bf Variable (different), $\tm = \vartwo^\typ$, $\vartwo \neq \var$, $\ls{\tmtwo} = []$.}
      We have that $\vartwo : [\typ] \vdash \vartwo^\typ : \typ$ is derivable, so we are done.
    \item {\bf Abstraction, $\tm = \lamp{\lab}{\vartwo}{\tmthree}$.}
      Let $\tctx \oplus \var : [\typtwo_i]_{i=1}^n \vdash \lamp{\lab}{\var}{\tmthree} : \mtyp \tolab{\lab} \typ$
      be derivable
      and $\tctxtwo_i \vdash \tmtwo_i : \typtwo_i$ be derivable for all $i=1..n$.
      By inversion of the \indrulename{\toI} rule,
      we have that $\tctx \oplus \vartwo : \mtyp \oplus \var : [\typtwo_i]_{i=1}^n \vdash \tmthree : \typ$
      is derivable, so by \ih
      $(\tctx \oplus \vartwo : \mtyp) +_{i=1}^{n} \tctxtwo_i \vdash \subs{\tmthree}{\var}{[\tmtwo_i]_{i=1}^n} : \typ$ is derivable.
      Observe that $\vartwo \not\in \fv{\tmtwo_i}$ so $\vartwo \not\in \dom\tctxtwo_i$ by \rlem{relevance}.
      Hence the previous judgment can be written as
      $(\tctx +_{i=1}^{n} \tctxtwo_i) \oplus \vartwo : \mtyp \vdash \subs{\tmthree}{\var}{[\tmtwo_i]_{i=1}^n} : \typ$.
      Applying the \indrulename{\toI} rule we obtain
      $\tctx +_{i=1}^{n} \tctxtwo_i \vdash \subs{\lamp{\lab}{\vartwo}{\tmthree}}{\var}{[\tmtwo_i]_{i=1}^n} : \mtyp \tolab{\lab} \typ$
      as required.
    \item {\bf Application, $\tm = \tmthree[\tmfour_j]_{j=1}^m$.}
      Let $\tctx \oplus \var : [\typtwo_i]_{i=1}^n \vdash \tmthree[\tmfour_j]_{j=1}^m : \typ$
      be derivable
      and $\tctxtwo_i \vdash \tmtwo_i : \typtwo_i$ be derivable for all $i=1..n$.
      By inversion of the \indrulename{\toE} rule,
      the multiset of types $[\typtwo_i]_{i=1}^n$
      may be partitioned as $[\typtwo_i]_{i=1}^n = \sum_{j=0}^{m} \mtyp_j$,
      and
      the typing context $\tctx$
      may be partitioned as $\tctx = \sum_{j=0}^{m} \tctx_j$
      in such a way that
      $\tctx_0 \oplus \var : \mtyp_0 \vdash \tmthree : [\typthree_j]_{j=1}^m \tolab{\lab} \typ$
      is derivable
      and
      $\tctx_j \oplus \var : \mtyp_j \vdash \tmfour_j : \typthree_j$ is derivable for all $j=1..m$.
      Consider a partition $(\ls{\tmtwo}_0,\hdots,\ls{\tmtwo}_j)$
      of the list $\ls{\tmtwo}$ 
      such that for every $j=0..m$ we have $\tmlabel{\ls{\tmtwo}_j} = \mtyp_j$.
      Observe that this partition exists since $\tmlabel{\ls{\tmtwo}_0 + \hdots + \ls{\tmtwo}_j} = \tmlabel{\ls{\tmtwo}} = \sum_{j=0}^m \mtyp_j = [\typtwo_i]_{i=1}^n = \varlabel{\var}{\tm}$.

      Moreover, let $\tctxthree_j = \sum_{i : \tmtwo_i \in \ls{\tmtwo}_j} \tctxtwo_i$ for all $j=0..m$.
      By \ih we have that
      $\tctx_0 + \tctxthree_0 \vdash \subs{\tmthree}{\var}{\ls{\tmtwo}_0} : [\typthree_j]_{j=1}^m \tolab{\lab} \typ$
      is derivable
      and
      $\tctx_j + \tctxthree_j \vdash \subs{\tmfour_j}{\var}{\ls{\tmtwo}_j} : \typthree_j$
      is derivable for all $j=1..m$.
      Applying the \indrulename{\toE} rule we obtain
      that
      $\sum_{j=0}^m \tctx_j + \sum_{j=0}^m \tctxthree_j \vdash \subs{\tmthree[\tmfour_j]}{\var}{\sum_{j=0}^m \ls{\tmtwo}_j} : \typ$
      is derivable.
      By definition of
      $\tctx_0,\hdots,\tctx_m$ and
      $\tctxthree_0,\hdots,\tctxthree_m$
      this judgment equals
      $\tctx +_{i=1}^n \tctxtwo_i \vdash \subs{\tmthree[\tmfour_j]}{\var}{\sum_{j=0}^m \ls{\tmtwo}_j} : \typ$.
      By definition of $\ls{\tmtwo}_0,\hdots,\ls{\tmtwo}_m$
      and \rlem{substitution_permutation}
      this in turn equals
      $\tctx +_{i=1}^n \tctxtwo_i \vdash \subs{\tmthree[\tmfour_j]}{\var}{\ls{\tmtwo}} : \typ$,
      as required.
    \end{enumerate}
  \item {\bf Under an abstraction, $\con = \lamp{\labtwo}{\vartwo}{\con'}$.}
    Straightforward by \ih.
  \item {\bf Left of an application, $\con = \con'\ls{\tmthree}$.}
    Straightforward by \ih.
  \item {\bf Right of an application, $\con = \tmthree[\ls{\tmfour}_1,\con',\ls{\tmfour}_2]$.}
    Straightforward by \ih.
  \end{enumerate}

\subsubsection{Substitution preserves correctness}

  More precisely, let us show that
  if $\conof{(\lamp{\lab}{\var}{\tm})\ls{\tmtwo}}$ is correct
  then $\conof{\subs{\tm}{\var}{\ls{\tmtwo}}}$ is correct.
  By induction on $\con$:
  \begin{enumerate}
  \item {\bf Empty, $\con = \conbase$.}
    \label{subject_reduction__case_base_empty_context}
    Let $\ls{\tmtwo} = [\tmtwo_1,\hdots,\tmtwo_n]$.
    Observe that if $(\lamp{\lab}{\var}{\tm})[\tmtwo_1,\hdots,\tmtwo_n]$ is correct then:
    \begin{itemize}
    \item \condp{1} $\tctx \oplus \var : [\typtwo_i]_{i=1}^{n} \vdash \tm : \typ$ and $\tctxtwo_i \vdash \tmtwo_i : \typtwo_i$ are derivable for all $i=1..n$,
    \item \condp{2} $\tm,\tmtwo_1,\hdots,\tmtwo_n$ are correct,
    \item \condp{3} there are no free occurrences of $\var$ among $\tmtwo_1,\hdots,\tmtwo_n$,
    \item \condp{4} all the lambdas occurring in $\tm,\tmtwo_1,\hdots,\tmtwo_n$ have pairwise distinct labels,
    \item \condp{5} $\tctx +_{i=1}^n \tctxtwo_i$ is a sequential context.
    \end{itemize}
    Condition \condp{1} holds by inversion of the typing rules,
    condition \condp{2} holds by~\rlem{correct_subterms},
    condition \condp{3} holds by Barendregt's convention,
    and conditions \condp{4} and \condp{5} hold because the source term is supposed to be correct.

    By induction on $\tm$, we check that
    if $(\lamp{\lab}{\var}{\tm})[\tmtwo_1,\hdots,\tmtwo_n]$
    is correct, then
    $\subs{\tm}{\var}{\ls{\tmtwo}}$ is correct.
    \begin{enumerate}
    \item {\bf Variable (same), $\tm = \var^{\typ}$, $\ls{\tmtwo} = [\tmtwo]$.}
      Note that $\subs{\tm}{\var}{\ls{\tmtwo}} = \tmtwo$. Conclude by \condp{2}.
    \item {\bf Variable (different), $\tm = \vartwo^{\typ}$, $\ls{\tmtwo} = []$ with $\var \neq \vartwo$.}
      Note that $\subs{\tm}{\var}{\ls{\tmtwo}} = \vartwo^\typ$.
      Conclude by \condp{2}.
    \item {\bf Abstraction, $\tm = \lamp{\labtwo}{\vartwo}{\tmthree}$.}
      \label{subject_reduction__case_abstraction}
      Then $\typ = \mtyp \tolab{\labtwo} \typthree$ and
      by inversion $\tctx \oplus \vartwo : \mtyp \vdash \tmthree : \typthree$ is derivable.
      Note that $(\lamp{\lab}{\var}{\tmthree})\ls{\tmtwo}$ is correct,
      so by \ih $\subs{\tmthree}{\var}{\ls{\tmtwo}}$ is correct.
      The variable $\vartwo$ does not occur free in $\ls{\tmtwo}$,
      so $(\tctx \oplus \vartwo : \mtyp) +_{i=1}^{n} \tctxtwo_i = (\tctx +_{i=1}^{n} \tctxtwo_i) \oplus (\vartwo : \mtyp)$.
      Let us check that $\lamp{\labtwo}{\vartwo}{\subs{\tmthree}{\var}{\ls{\tmtwo}}}$ is correct:
      \begin{enumerate}
      \item {\em Uniquely labeled lambdas.}
        Let $\lab_1$ and $\lab_2$ be two labels decorating different lambdas of 
        $\lamp{\labtwo}{\vartwo}{\subs{\tmthree}{\var}{\ls{\tmtwo}}}$,
        and let us show that $\lab_1 \neq \lab_2$.
        There are two subcases, depending on whether one of the labels
        decorates the outermost lambda:
        \begin{enumerate}
        \item
          {\bf If $\lab_1$ or $\lab_2$ decorates the outermost lambda.}
          Suppose without loss of generality that $\lab_1 = \labtwo$ is the label decorating the outermost lambda.
          Then by \rlem{labels_over_lambdas_substitution},
          there are two cases: either $\lab_2$ decorates a lambda of $\tmthree$,
          or $\lab_2$ decorates a lambda of some term in the list $\ls{\tmtwo}$.
          If $\lab_2$ decorates a lambda of $\tmthree$, then $\lab_1 \neq \lab_2$
          since we knew that $\lamp{\labtwo}{\vartwo}{\tmthree}$ was a correct term by \condp{2}.
          If $\lab_2$ decorates a lambda of some term in the list $\ls{\tmtwo}$, then $\lab_1 \neq \lab_2$
          by condition \condp{4}.
        \item
          {\bf If $\lab_1$ and $\lab_2$ do not decorate the outermost lambda.}
          Then $\lab_1$ and $\lab_2$ decorate different lambdas of the term $\subs{\tmthree}{\var}{\ls{\tmtwo}}$,
          and we conclude by \ih.
        \end{enumerate}
      \item {\em Sequential contexts.}
        \label{subject_reduction__case_abstraction_sequential_contexts}
        Let $\tm'$ be a subterm of $\lamp{\labtwo}{\var}{\subs{\tmthree}{\var}{\ls{\tmtwo}}}$.
        If $\tm'$ is a subterm of $\subs{\tmthree}{\var}{\ls{\tmtwo}}$, we conclude by \ih.
        Otherwise $\tm'$ is the whole term and the context is $\tctx +_{i=1}^{n} \tctxtwo_i$,
        which is sequential by hypothesis \condp{5}.
      \item {\em Sequential types.}
        \label{subject_reduction__case_abstraction_sequential_types}
        Let $\tm'$ be a subterm of $\lamp{\labtwo}{\var}{\subs{\tmthree}{\var}{\ls{\tmtwo}}}$.
        If $\tm'$ is a subterm of $\subs{\tmthree}{\var}{\ls{\tmtwo}}$, we conclude by \ih.
        Otherwise $\tm'$ is the whole term. Then $\tctx +_{i=1}^{n} \tctxtwo_i \vdash \tm' : \typ$ is derivable,
        since we have already shown that substitution preserves typing.
        Let $\mtyptwo \tolab{\labthree} \typfour$ be
        a type such that $\mtyptwo \tolab{\labthree} \typfour \occursin \tctx +_{i=1}^{n} \tctxtwo_i$ or $\mtyptwo \tolab{\labthree} \typfour \occursin \typ$,
        and let us show that $\mtyptwo$ is sequential.
        In the first case, \ie if $\mtyptwo \tolab{\labthree} \typfour \occursin \tctx +_{i=1}^{n} \tctxtwo_i$ holds,
        then either $\mtyptwo \tolab{\labthree} \typfour \occursin \tctx$ or $\mtyptwo \tolab{\labthree} \typfour \occursin \tctxtwo_i$ for some $i=1..n$,
        and we have that $\mtyptwo$ is sequential because all the terms $\tm,\tmtwo_1,\hdots,\tmtwo_n$ are correct by \condp{2}.
        In the second case, \ie if $\mtyptwo \tolab{\labthree} \typfour \occursin \typ$ holds,
        then we have that $\mtyptwo$ is sequential because $\tm$ is correct by \condp{2}.
      \end{enumerate}
    \item {\bf Application, $\tm = \tmthree\,\ls{\tmfour}$.}
      \label{subject_reduction__case_application} 
      Let $\ls{\tmfour} = [\tmfour_1,\hdots,\tmfour_m]$.
      Note that $(\lamp{\lab}{\var}{\tmthree})\ls{\tmtwo}_0$ is correct
      and $(\lamp{\lab}{\var}{\tmfour_j})\ls{\tmtwo}_j$ is correct for all $j=1..m$,
      which means that we may apply the \ih in all these cases.
      Let us show that $\subs{\tmthree[\tmfour_j]_{j=1}^m}{\var}{\ls{\tmtwo}}$ is correct:
      \begin{enumerate}
      \item {\em Uniquely labeled lambdas.}
        Let $\lab_1$ and $\lab_2$ be two labels decorating different lambdas of
        $\subs{\tmthree[\tmfour_j]_{j=1}^m}{\var}{\ls{\tmtwo}}$, and let us show that $\lab_1 \neq \lab_2$.
        Observe that the term
        $\subs{\tmthree[\tmfour_j]_{j=1}^m}{\var}{\ls{\tmtwo}} =
         \subs{\tmthree}{\var}{\ls{\tmtwo}_0}[\subs{\tmfour_j}{\var}{\ls{\tmtwo}_j}]_{j=1}^m$
        has $m + 1$ immediate subterms, namely
        $\subs{\tmthree}{\var}{\ls{\tmtwo}_0}$
        and $\subs{\tmfour_j}{\var}{\ls{\tmtwo}_j}$ for each $j=1..m$.
        We consider two subcases, depending on whether $\lab_1$ and $\lab_2$
        decorate two lambdas in the same immediate subterm or in different immediate subterms.

        \begin{enumerate}
        \item {\bf The labels $\lab_1$ and $\lab_2$ decorate the same immediate subterm.}
          That is, $\lab_1$ and $\lab_2$ both decorate lambdas in $\subs{\tmthree}{\var}{\ls{\tmtwo}_0}$
          or both decorate lambdas in some $\subs{\tmfour_j}{\var}{\ls{\tmtwo}_j}$ for some $j=1..m$.
          Then we conclude, since both $\subs{\tmthree}{\var}{\ls{\tmtwo}_0}$ and
          the $\subs{\tmfour_j}{\var}{\ls{\tmtwo}_j}$ are correct by \ih.
        \item {\bf The labels $\lab_1$ and $\lab_2$ decorate different subterms.}
          Let $\tmfour_0 := \tmthree$.
          Then we have that
          $\lab_1$ decorates a lambda in $\subs{\tmfour_j}{\var}{\ls{\tmtwo}_j}$ for some $j=0..m$
          and
          $\lab_2$ decorates a lambda in $\subs{\tmfour_k}{\var}{\ls{\tmtwo}_k}$ for some $k=0..m$, $j \neq k$.
          By \rlem{labels_over_lambdas_substitution},
          $\lab_1$ decorates a lambda in $\tmfour_j$ or a lambda in a term of the list $\ls{\tmtwo}_j$,
          and similarly
          $\lab_2$ decorates a lambda in $\tmfour_k$ or a lambda in a term of the list $\ls{\tmtwo}_k$.
          This leaves four possibilities, which are all consequence of \condp{4}.
        \end{enumerate}
      \item {\em Sequential contexts.}
        Similar to item~\refcase{subject_reduction__case_abstraction_sequential_contexts}.
      \item {\em Sequential types.}
        Similar to item~\refcase{subject_reduction__case_abstraction_sequential_types}.
      \end{enumerate}
    \end{enumerate}
  \item {\bf Under an abstraction, $\con = \lamp{\labtwo}{\vartwo}{\con'}$.}
    Note that $\tctx \vdash \lamp{\labtwo}{\vartwo}{\con'\of{\subs{\tm}{\var}{\ls{\tmtwo}}}} : \mtyp \tolab{\labtwo} \typ$ is derivable.
    Let us check the three conditions to see that $\lamp{\labtwo}{\vartwo}{\con'\of{\subs{\tm}{\var}{\ls{\tmtwo}}}}$ is correct:
    \begin{enumerate}
    \item {\em Uniquely labeled lambdas.}
      Any two lambdas in $\con'\of{\subs{\tm}{\var}{\ls{\tmtwo}}}$ have different labels by \ih.
      We are left to check that $\labtwo$ does not decorate any lambda in $\con'\of{\subs{\tm}{\var}{\ls{\tmtwo}}}$.
      Let $\lab_1$ be a label that decorates a lambda in $\con'\of{\subs{\tm}{\var}{\ls{\tmtwo}}}$.
      Then we have that $\lab_1$ decorates a lambda in $\con'$,
      or it decorates a lambda in $\subs{\tm}{\var}{\ls{\tmtwo}}$.
      By what we proved in item~\refcase{subject_reduction__case_base_empty_context}
      this in turn means that it decorates a lambda in $\tm$ or a lambda in some of the terms of the list $\ls{\tmtwo}$.
      In any of these cases we have that $\lab_1 \neq \labtwo$
      since $\lamp{\labtwo}{\vartwo}{\con'\of{(\lamp{\lab}{\var}{\tm})\ls{\tmtwo}}}$ is correct.
    \item {\em Sequential contexts.}
      Let $\tm'$ be a subterm of $\lamp{\labtwo}{\vartwo}{\con'\of{\subs{\tm}{\var}{\ls{\tmtwo}}}}$
      and let us check that its typing context is sequential.
      If $\tm'$ is a subterm of $\con'\of{\subs{\tm}{\var}{\ls{\tmtwo}}}$ we conclude by \ih.
      We are left to check the property for $\tm'$ being the whole term, \ie that $\tctx$ is sequential.
      By \ih, $\tctx \oplus \vartwo : \mtyp$ is sequential, which implies that $\tctx$ is sequential.
    \item {\em Sequential types.}
      Let $\tm'$ be a subterm of $\lamp{\labtwo}{\vartwo}{\con'\of{\subs{\tm}{\var}{\ls{\tmtwo}}}}$
      and let us check that, if $\mtyptwo \tolab{\labthree} \typthree$ is any type occurring in the typing
      context or in the type of $\tm'$, then $\mtyptwo$ is sequential.
      If $\tm'$ is a subterm of $\con'\of{\subs{\tm}{\var}{\ls{\tmtwo}}}$ we conclude by \ih.
      We are left to check the property for $\tm'$ being the whole term.

      If $\mtyptwo \tolab{\labthree} \typthree \occursin \tctx$,
      then $\mtyptwo \tolab{\labthree} \typthree \occursin \tctx \oplus \vartwo:\mtyptwo$
      which is the type of $\con'\of{\subs{\tm}{\var}{\ls{\tmtwo}}}$,
      so by \ih $\mtyptwo$ is sequential.

      If $\mtyptwo \tolab{\labthree} \typthree \occursin \mtyp \tolab{\lab''} \typ$,
      there are three subcases:
      \begin{enumerate}
      \item If $\mtyptwo = \mtyp$, then note that $\mtyp$ is sequential
            because $\tctx \oplus \vartwo : \mtyp$ is the typing context of $\con'\of{\subs{\tm}{\var}{\ls{\tmtwo}}}$,
           which is sequential by \ih.
      \item If $\mtyptwo \tolab{\labthree} \typthree \occursin \typtwo$ where $\typtwo$ is one of the types of $\mtyp$,
            then $\mtyptwo \tolab{\labthree} \typthree \occursin \mtyp \tolab{\lab''} \typ$ which is the typing context
            of $\con'\of{\subs{\tm}{\var}{\ls{\tmtwo}}}$,
            and we conclude for this term has sequential types by \ih.
      \item If $\mtyptwo \tolab{\labthree} \typthree \occursin \typ$,
            note that $\typ$ is the type of $\con'\of{\subs{\tm}{\var}{\ls{\tmtwo}}}$,
            and we conclude for this term has sequential types by \ih.
      \end{enumerate}
    \end{enumerate}
  \item {\bf Left of an application, $\con = \con'\ls{\tmthree}$.}
    \label{subject_reduction__case_left_application}
    Note that $\tctx \vdash \con'\of{\subs{\tm}{\var}{\ls{\tmtwo}}} : [\typtwo_j]_{j=1}^m \tolab{\labtwo} \typ$
    is derivable.
    Moreover the list of arguments is of the form
    $\ls{\tmthree} = [\tmthree_1,\hdots,\tmthree_m]$ where all the $\tmthree_j$ are correct
    and $\tctxtwo_j \vdash \tmthree_j : \typtwo_j$ is derivable for all $j=1..m$.
    Then $\tctx +_{j=1}^m \tctxtwo_j \vdash \con'\of{\subs{\tm}{\var}{\ls{\tmtwo}}}[\tmthree_j]_{j=1}^{m} : \typ$
    is derivable.
    Let us check the three conditions to see that $\con'\of{\subs{\tm}{\var}{\ls{\tmtwo}}}[\tmthree_j]_{j=1}^{m}$ is correct:
    \begin{enumerate}
    \item {\em Uniquely labeled lambdas.}
      Let $\lab_1$ and $\lab_2$ be two labels decorating different lambdas in $\con'\of{\subs{\tm}{\var}{\ls{\tmtwo}}}[\tmthree_j]_{j=1}^{m}$.
      There are three subcases.
      \begin{enumerate}
      \item
        If $\lab_1$ and $\lab_2$ both decorate lambdas in the subterm $\con'\of{\subs{\tm}{\var}{\ls{\tmtwo}}}$
        then $\lab_1 \neq \lab_2$ since $\con'\of{\subs{\tm}{\var}{\ls{\tmtwo}}}$ is correct by \ih.
      \item
        If $\lab_1$ and $\lab_2$ both decorate lambdas somewhere in $[\tmthree_1,\hdots,\tmthree_m]$
        then $\lab_1 \neq \lab_2$ since $\con'\of{\tm}[\tmthree_1,\hdots,\tmthree_m]$ is correct by hypothesis.
      \item
        If $\lab_1$ decorates a lambda in $\con'\of{\subs{\tm}{\var}{\ls{\tmtwo}}}$
        and $\lab_2$ decorates a lambda in one of the terms $\tmthree_j$ for some $j=1..m$,
        then note that $\lab_1$ must either decorate a lambda in $\con'$ or a lambda in $\subs{\tm}{\var}{\ls{\tmtwo}}$.
        By what we proved in item~\refcase{subject_reduction__case_base_empty_context}
        this in turn means that it decorates a lambda in $\tm$ or a lambda in some of the terms of the list $\ls{\tmtwo}$.
        In any of these cases we have that $\lab_1 \neq \lab_2$
        since $\con'\of{(\lamp{\lab}{\var}{\tm})\ls{\tmtwo}}[\tmthree_j]_{j=1}^m$ is correct by hypothesis.
      \end{enumerate}
    \item {\em Sequential contexts.}
      Let $\tm'$ be a subterm of $\con'\of{\subs{\tm}{\var}{\ls{\tmtwo}}}[\tmthree_j]_{j=1}^m$
      and let us show that its typing context is sequential.
      If $\tm'$ is a subterm of $\con'\of{\subs{\tm}{\var}{\ls{\tmtwo}}}$ we conclude by \ih.
      If $\tm'$ is a subterm of one of the $\tmthree_j$ for some $j=1..m$, we conclude using that $\tmthree_j$ is correct by hypothesis.
      It remains to check that the whole term is correct, \ie that $\tctx +_{j=1}^m \tctxtwo_j$ is sequential.
      Observe that $\tctx +_{j=1}^m \tctxtwo_j$ is also the typing context of
      $\con'\of{(\lamp{\lab}{\var}{\tm})\ls{\tmtwo}}[\tmthree_j]_{j=1}^m$, which is correct by hypothesis.
    \item {\em Sequential types.}
      Let $\tm'$ be a subterm of $\con'\of{\subs{\tm}{\var}{\ls{\tmtwo}}}[\tmthree_j]_{j=1}^m$
      and let us show if $\mtyptwo \tolab{\labtwo} \typthree$ is a type
      that occurs in the typing context or in the type of $\tm'$, then $\mtyptwo$ is sequential.
      If $\tm'$ is a subterm of $\con'\of{\subs{\tm}{\var}{\ls{\tmtwo}}}$ we conclude by \ih.
      If $\tm'$ is a subterm of one of the $\tmthree_j$ for some $j=1..m$, we conclude using that $\tmthree_j$ is correct by hypothesis.
      We are left to check the property for $\tm'$ being the whole term.

      If $\mtyptwo \tolab{\labtwo} \typthree \occursin \tctx +_{j=1}^m \tctxtwo_j$,
      we conclude by observing
      that $\tctx +_{j=1}^m \tctxtwo_j$ is also the typing context of $\con'\of{(\lamp{\lab}{\var}{\tm})\ls{\tmtwo}}[\tmthree_j]_{j=1}^m$,
      which is correct by hypothesis, so it has sequential types.
      
      Similarly, if $\mtyptwo \tolab{\labtwo} \typthree \occursin \typ$,
      we conclude by observing
      that $\typ$ is also the type of $\con'\of{(\lamp{\lab}{\var}{\tm})\ls{\tmtwo}}[\tmthree_j]_{j=1}^m$.
    \end{enumerate}
  \item {\bf Right of an application, $\con = \tmthree[\ls{\tmfour}_1,\con',\ls{\tmfour}_2]$.}
    Similar to item~\refcase{subject_reduction__case_left_application}.
  \end{enumerate}

\subsection{Proof of \rlem{substitution_lemma} -- Substitution Lemma}
\label{appendix_substitution_lemma}

Let us prove that
$ \subs{\subs{\tm}{\var}{\ls{\tmtwo}}}{\vartwo}{\ls{\tmthree}}
  =
  \subs{\subs{\tm}{\vartwo}{\ls{\tmthree}_1}}{\var}{\subs{\ls{\tmtwo}}{\vartwo}{\ls{\tmthree}_2}}$
by induction on $\tm$.
The interesting case is the application.
Let $\tm = \tmfour [\tmfour_1,\hdots,\tmfour_n]$. Then:
\[
\begin{array}{rcl}
  && \subs{\subs{(\tmfour [\tmfour_1,\hdots,\tmfour_n])}{\var}{\ls{\tmtwo}}}{\vartwo}{\ls{\tmthree}}
  \\
  & = & \subs{(\subs{\tmfour}{\var}{\ls{\tmtwo}_0} [\subs{\tmfour_i}{\var}{\ls{\tmtwo}_i}]_{i=0}^n)}
             {\vartwo}{\ls{\tmthree}} \hspace{1cm}\\
  & = & \subs{\subs{\tmfour}{\var}{\ls{\tmtwo}_0}}{\vartwo}{\ls{\tmthree}_0}
             [\subs{\subs{\tmfour_i}{\var}{\ls{\tmtwo}_i}}{\vartwo}{\ls{\tmthree}_i}]_{i=0}^n \\
  & \eqih & \subs{\subs{\tmfour}{\vartwo}{\ls{\tmthree}_{0,1}}}{\var}{\subs{\ls{\tmtwo}_0}
                  {\vartwo}{\ls{\tmthree}_{0,2}}}
            [\subs{\subs{\tmfour_i}{\vartwo}{\ls{\tmthree}_{i,1}}}{\var}{\subs{\ls{\tmtwo}_0}
                  {\vartwo}{\ls{\tmthree}_{i,2}}}]_{i=0}^n \\
  & = & \subs{(\subs{\tmfour}{\vartwo}{\ls{\tmthree}_{0,1}}
                     [\subs{\tmfour_i}{\vartwo}{\ls{\tmthree}_{i,1}}]_{i=0}^n)}
              {\var}{\sum_{i=1}^n \subs{\ls{\tmtwo}_i}{\vartwo}{\ls{\tmthree}_{i,2}}} \\
\end{array}
\]
Given that $\sum_{i=0}^n \ls{\tmtwo}_i$ is a permutation of $\ls{\tmtwo}$,
by defining $\ls{\tmthreevariant}_2 := \sum_{i=0}^n \ls{\tmthree}_{i,2}$,
the last term equals
$
  \subs{(\subs{\tmfour}{\vartwo}{\ls{\tmthree}_{0,1}}
           [\subs{\tmfour_i}{\vartwo}{\ls{\tmthree}_{i,1}}]_{i=0}^n)}
       {\var}{\subs{\ls{\tmtwo}}{\vartwo}{\ls{\tmthreevariant}_2}}
$.
Finally, by defining $\ls{\tmthreevariant}_1 := \sum_{i=0}^n \ls{\tmthree}_{i,1}$,
we obtain
$
  \subs{\subs{(\tmfour [\tmfour_1, \hdots, \tmfour_n])}
             {\vartwo}{\ls{\tmthreevariant}_1}}
       {\var}{\subs{\ls{\tmtwo}}{\vartwo}{\ls{\tmthreevariant}_2}}
$.
To conclude observe that $\ls{\tmthreevariant}_1 + \ls{\tmthreevariant}_2$ is indeed a permutation of $\ls{\tmthree}$:
\[
  \begin{array}{rcll}
  \ls{\tmthreevariant}_1 + \ls{\tmthreevariant}_2
  & = & (\sum_{i=0}^n \ls{\tmthree}_{i,1}) + (\sum_{i=0}^n \ls{\tmthree}_{i,2}) \\
  & \approx & \sum_{i=0}^n \ls{\tmthree}_{i,1} + \ls{\tmthree}_{i,2} \\
  & \approx & \sum_{i=0}^n \ls{\tmthree}_i & \text{by \ih on each index $i=0..n$} \\
  & = & \ls{\tmthree}
  \end{array}
\]
where $\lst \approx \lsttwo$ is the equivalence relation on lists that holds whenever $\lst$ is a permutation of $\lsttwo$.

\subsection{Proof of \rprop{strong_permutation} -- Permutation}
\label{appendix_strong_permutation}

Let us extend the operation of substitution to operate on contexts
by declaring that $\varlabel{\var}{\conbase} = [\,]$
and $\subs{\conbase}{\var}{[\,]} = \conbase$.
We need two auxiliary lemmas:

\begin{lemma}[Substitution lemma for contexts]
\llem{substitution_lemma_for_contexts}
\llem{substitution_lemma_alt}
If both sides of the equation are defined
and $(\ls{\tmtwo}_1,\ls{\tmtwo}_2)$ is a partition of $\ls{\tmtwo}$
then
$\subs{\conof{\tm}}{\var}{\ls{\tmtwo}} = \subs{\con}{\var}{\ls{\tmtwo_1}}\of{\subs{\tm}{\var}{\ls{\tmtwo}_2}}$.
\end{lemma}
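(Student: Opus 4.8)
The plan is to argue by induction on the structure of the context $\con$, following the same four-way case split used in the proof of \rlem{subject_reduction}: $\con = \conbase$, $\con = \lamp{\labtwo}{\vartwo}{\con'}$, $\con = \con'\,\ls{\tmthree}$ (hole in the head), and $\con = \tmthree_0\,[\tmfour_1,\hdots,\con',\hdots,\tmfour_m]$ (hole in one of the arguments). In every case the strategy is identical: expand the two substitutions appearing in the claimed equation using \rdef{substitution}, invoke the inductive hypothesis on the immediate subcontext $\con'$, and re-fold the definition of substitution on $\con$. Apart from this, the only ingredient needed is \rlem{substitution_permutation}, which lets us rearrange argument lists freely.

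The two easy cases come first. For $\con = \conbase$: since $\subs{\con}{\var}{\ls{\tmtwo}_1}$ is defined and $\varlabel{\var}{\conbase} = [\,]$, the list $\ls{\tmtwo}_1$ must be empty and $\ls{\tmtwo}_2$ a permutation of $\ls{\tmtwo}$, so $\subs{\conof{\tm}}{\var}{\ls{\tmtwo}} = \subs{\tm}{\var}{\ls{\tmtwo}}$, which by \rlem{substitution_permutation} equals $\subs{\tm}{\var}{\ls{\tmtwo}_2} = \conbase\of{\subs{\tm}{\var}{\ls{\tmtwo}_2}}$. For $\con = \lamp{\labtwo}{\vartwo}{\con'}$: by the variable convention $\var \neq \vartwo$ and $\vartwo \notin \fv{\ls{\tmtwo}}$, hence also $\vartwo \notin \fv{\ls{\tmtwo}_1}$, so the abstraction clause of \rdef{substitution} applies to both sides; pushing the substitutions under $\lamp{\labtwo}{\vartwo}{-}$ reduces the goal to the inductive hypothesis for $\con'$ with the same partition $(\ls{\tmtwo}_1,\ls{\tmtwo}_2)$. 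One checks that the definedness hypotheses required to invoke the inductive hypothesis follow from the definedness of the two sides of the original equation, using that $\varlabel{\var}{\con'\of{\tm}}$ is unaffected by wrapping a $\lamp{\labtwo}{\vartwo}{-}$ around it when $\var \neq \vartwo$.

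The interesting cases are the two application cases; I describe the head case, the argument case being symmetric. Here $\conof{\tm} = \con'\of{\tm}\,\ls{\tmthree}$ with $\ls{\tmthree} = [\tmthree_1,\hdots,\tmthree_m]$. From definedness we get $\tmlabel{\ls{\tmtwo}_1} = \varlabel{\var}{\con} = \varlabel{\var}{\con'} +_{j=1}^{m}\varlabel{\var}{\tmthree_j}$ as multisets of types, so $\ls{\tmtwo}_1$ can be split into sublists $(\ls{\tmtwo}_{1,0},\ls{\tmtwo}_{1,1},\hdots,\ls{\tmtwo}_{1,m})$ with $\tmlabel{\ls{\tmtwo}_{1,0}} = \varlabel{\var}{\con'}$ and $\tmlabel{\ls{\tmtwo}_{1,j}} = \varlabel{\var}{\tmthree_j}$. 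Then $(\ls{\tmtwo}_{1,0} + \ls{\tmtwo}_2,\ \ls{\tmtwo}_{1,1},\hdots,\ls{\tmtwo}_{1,m})$ is an admissible partition for applying the application clause of \rdef{substitution} to the left-hand side, since $\tmlabel{\ls{\tmtwo}_{1,0} + \ls{\tmtwo}_2} = \varlabel{\var}{\con'} + \varlabel{\var}{\tm} = \varlabel{\var}{\con'\of{\tm}}$. Computing $\subs{\conof{\tm}}{\var}{\ls{\tmtwo}}$ with this partition, applying the inductive hypothesis to $\con'$ with the partition $(\ls{\tmtwo}_{1,0},\ls{\tmtwo}_2)$ of $\ls{\tmtwo}_{1,0}+\ls{\tmtwo}_2$, and re-reading \rdef{substitution} on $\con = \con'\,\ls{\tmthree}$ then identifies the result with $\subs{\con}{\var}{\ls{\tmtwo}_1}\of{\subs{\tm}{\var}{\ls{\tmtwo}_2}}$, as wanted.

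The main obstacle, and the only place calling for care, is the partition bookkeeping in the application cases: one needs (i) that a refinement of the given partition $(\ls{\tmtwo}_1,\ls{\tmtwo}_2)$ compatible with the type-labels of the subterms always exists --- a purely combinatorial fact about lists and their multisets of types --- and (ii) that re-associating an argument list into a different admissible partition does not change the value of the substitution. Point (ii) is exactly the content of the Remark following \rdef{substitution}, which is legitimate here because all terms of $\lambdadist$ are correct and hence, by \rlem{correct_subterms}, so are all subterms of $\conof{\tm}$, so the uniqueness-of-partition argument applies. Everything else is a routine unfolding of \rdef{substitution}.
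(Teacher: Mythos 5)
Your proposal is correct and follows essentially the same route as the paper, which proves this lemma by induction on $\con$ ``similar to the Substitution Lemma'' and leaves the details implicit; your four-way case split and the partition bookkeeping in the application cases are exactly the details that proof elides. The appeal to \rlem{substitution_permutation} in the base case and to the uniqueness-of-partition remark after \rdef{substitution} is the right way to discharge the re-association steps.
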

\begin{proof}
The proof is similar to the Substitution Lemma, by induction on $\con$.
\end{proof}

\begin{lemma}[Reduction inside a substitution]
\llem{reduce_in_substitution}
Let $1 \leq i \leq n$ and $\tmtwo_i \tolab{\lab} \tmtwo'_i$.
Then:
\[
 \subs{\tm}{\var}{[\tmtwo_1,\hdots,\tmtwo_{i-1},\tmtwo_i,\tmtwo_{i+1},\hdots,\tmtwo_n]} \tolab{\lab}
 \subs{\tm}{\var}{[\tmtwo_1,\hdots,\tmtwo_{i-1},\tmtwo'_i,\tmtwo_{i+1},\hdots,\tmtwo_n]}
\].
\end{lemma}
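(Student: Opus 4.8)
The plan is to argue by induction on the structure of the term $\tm$, with the inductive hypothesis quantified over all substitution lists (so that the length $n$ and the index $i$ are free to change when we descend into subterms). Throughout, $\tm$ and $\tmtwo_1,\hdots,\tmtwo_n$ are correct terms. The first point to record is that, since $\tmtwo_i \tolab{\lab} \tmtwo'_i$, \resultname{Subject Reduction}~(\rlem{subject_reduction}) gives that $\tmtwo'_i$ is correct and has the same type as $\tmtwo_i$. Hence $\tmlabel{[\tmtwo_1,\hdots,\tmtwo'_i,\hdots,\tmtwo_n]} = \tmlabel{[\tmtwo_1,\hdots,\tmtwo_i,\hdots,\tmtwo_n]} = \varlabel{\var}{\tm}$, so the right-hand side of the claimed reduction is a legal substitution and both sides are well-defined.

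The base and abstraction cases are immediate. If $\tm = \var^\typ$, then necessarily $n = 1$ and $i = 1$, and $\subs{\var^\typ}{\var}{[\tmtwo_1]} = \tmtwo_1 \tolab{\lab} \tmtwo'_1 = \subs{\var^\typ}{\var}{[\tmtwo'_1]}$. If $\tm = \vartwo^\typ$ with $\vartwo \neq \var$, the list is empty and the hypothesis $1 \leq i \leq n$ is vacuous. If $\tm = \lamp{\labtwo}{\vartwo}{\tmthree}$, then $\subs{\tm}{\var}{\ls{\tmtwo}} = \lamp{\labtwo}{\vartwo}{\subs{\tmthree}{\var}{\ls{\tmtwo}}}$ and likewise after replacing $\tmtwo_i$, so the \ih on $\tmthree$ yields a $\tolab{\lab}$-step between the two bodies, which extends to the required step since $\todist$ is closed under the abstraction context.

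The interesting case is the application $\tm = \tmthree_0[\tmthree_j]_{j=1}^{m}$, where $\subs{\tm}{\var}{\ls{\tmtwo}} = \subs{\tmthree_0}{\var}{\ls{\tmtwo}_0}[\subs{\tmthree_j}{\var}{\ls{\tmtwo}_j}]_{j=1}^{m}$ for a partition $(\ls{\tmtwo}_0,\hdots,\ls{\tmtwo}_m)$ of $\ls{\tmtwo}$ with $\varlabel{\var}{\tmthree_j} = \tmlabel{\ls{\tmtwo}_j}$. Since $\tm$ is correct, $\varlabel{\var}{\tm}$ is sequential, so the type of $\tmtwo_i$ occurs in $\varlabel{\var}{\tmthree_k}$ for exactly one $k \in \set{0,\hdots,m}$; hence $\tmtwo_i$ belongs to exactly one sublist $\ls{\tmtwo}_k$. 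As $\tmtwo'_i$ has the same type as $\tmtwo_i$, replacing $\tmtwo_i$ by $\tmtwo'_i$ in $\ls{\tmtwo}$ corresponds to making the same replacement inside $\ls{\tmtwo}_k$ only (up to a permutation, which is immaterial by the Remark following \rdef{substitution}), the other sublists being unchanged. The \ih on $\tmthree_k$ then gives $\subs{\tmthree_k}{\var}{\ls{\tmtwo}_k} \tolab{\lab} \subs{\tmthree_k}{\var}{\ls{\tmtwo}'_k}$, where $\ls{\tmtwo}'_k$ is $\ls{\tmtwo}_k$ with $\tmtwo_i$ replaced by $\tmtwo'_i$, and since all other components of the application are untouched, $\subs{\tm}{\var}{\ls{\tmtwo}}$ performs exactly this step in its head (if $k = 0$) or in its $k$-th argument (if $k \geq 1$), with label $\lab$, as required.

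I expect the bookkeeping in the application case to be the only real obstacle: one must check that the reducing argument $\tmtwo_i$ is routed into a single block of the partition and that substituting $\tmtwo'_i$ for it does not perturb the other blocks. This relies precisely on the sequentiality of $\varlabel{\var}{\tm}$ afforded by correctness, which makes the matching of free occurrences of $\var$ with arguments unique, together with \resultname{Subject Reduction} to keep the relevant types aligned; once these are secured, the remaining verification is routine.
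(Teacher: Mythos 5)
Your proof is correct and follows essentially the same route as the paper, which simply states the result as "straightforward by induction" on the term being substituted into; you have filled in the details, in particular the key observation that sequentiality of $\varlabel{\var}{\tm}$ routes $\tmtwo_i$ into a unique block of the partition and that \resultname{Subject Reduction} keeps the types (and hence the partition) aligned after the step.
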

\begin{proof}
Straightforward by induction on $\tmtwo$.
\end{proof}

The proof of \rprop{strong_permutation} proceeds as follows.
Let $\redex : \tm_0 \todistl{\lab} \tm_1$ and $\redextwo : \tm_0 \todistl{\lab'} \tm_2$
be coinitial steps, and let us show that the diagram may be closed.
The step $\redex$ is of the form
$
  \tm_0 = \conof{(\lamp{\lab}{\var}{\tm})\ls{\tmtwo}}
           \todistl{\lab} \conof{\subs{\tm}{\var}{\ls{\tmtwo}}} = \tm_1
$.
Recall that $\redex \neq \redextwo$ by hypothesis.
We proceed by induction on $\con$.
\begin{enumerate}
\item {\bf Empty context, $\con = \conbase$.}
  Then
  $
    \redex : \tm_0 = (\lamp{\lab}{\var}{\tm})\ls{\tmtwo}
             \todistl{\lab} \subs{\tm}{\var}{\ls{\tmtwo}} = \tm_1
  $.
  There are two subcases, depending on whether the pattern of $\redextwo$ is inside $\tm$
  or inside $\ls{\tmtwo}$:
  \begin{enumerate}
  \item {\bf The pattern of $\redextwo$ is in $\tm$.}
    \label{strong_permutation__case_redextwo_inside_tm}
    Using \rlem{substitution_lemma_for_contexts}, the situation is:
    \[
    \xymatrix@C=.25cm@R=.5cm{
     (\lamp{\lab}{\var}{\conof{(\lamp{\lab'}{\vartwo}{\tmfour}) \ls{\tmthree}}}) \ls{\tmtwo}
                        \ar[d]_{\lab}
                        \ar[r]^{\lab'} &
     (\lamp{\lab}{\var}{\conof{\subs{\tmfour}{\vartwo}{\ls{\tmthree}}}}) \ls{\tmtwo}
                        \ar@{.>}[d]_{\lab} \\
     \subs{\conof{(\lamp{\lab'}{\vartwo}{\tmfour}) \ls{\tmthree}}}{\var}{\ls{\tmtwo}}
                        \ar@{=}[d] &
     \subs{\conof{\subs{\tmfour}{\vartwo}{\ls{\tmthree}}}}{\var}{\ls{\tmtwo}}
                        \ar@{=}[d] \\
     \subs{\con}{\var}{\ls{\tmtwo}_1}
       \of{(\lamp{\lab'}{\vartwo}{\subs{\tmfour}{\var}{\ls{\tmtwo}_2}})
           \subs{\ls{\tmthree}}{\var}{\ls{\tmtwo}_3}}
                        \ar@{.>}[d]_-{\lab'} &
     \subs{\con}{\var}{\ls{\tmtwo}_1}
       \of{\subs{\subs{\tmfour}{\vartwo}{\ls{\tmthree}}}{\var}{\ls{\tmtwo}_4}} \\
     \subs{\con}{\var}{\ls{\tmtwo}_1}
       \of{\subs{\subs{\tmfour}{\var}{\ls{\tmtwo}_2}}{\vartwo}
                 {\subs{\ls{\tmthree}}{\var}{\ls{\tmtwo}_3}}}
                         &  \\
    }
    \]
    where $(\ls{\tmtwo}_1,\ls{\tmtwo}_2,\ls{\tmtwo}_3)$ and $(\ls{\tmtwo}_1,\ls{\tmtwo}_4)$ are partitions of $\ls{\tmtwo}$.
    Note that $(\ls{\tmtwo}_2, \ls{\tmtwo}_3)$ is a partition of $\ls{\tmtwo}_4$,
    so it suffices to show that
    $\subs{\subs{\tmfour}{\vartwo}{\ls{\tmthree}}}{\var}{\ls{\tmtwo}_4}
      =
      \subs{\subs{\tmfour}{\var}{\ls{\tmtwo}_2}}{\vartwo}
        {\subs{\ls{\tmthree}}{\var}{\ls{\tmtwo}_3}}
    $, which is an immediate consequence of the Substitution Lemma (\rlem{substitution_lemma}).
  \item {\bf The pattern of $\redextwo$ is inside $\ls{\tmtwo}$.}
    \label{strong_permutation__case_redextwo_inside_tmtwo}
    In this case, $\ls{\tmtwo} = [\ls{\tmtwo}_1, \conof{(\lamp{\lab'}{\vartwo}{\tmfour}) \ls{\tmthree}}, \ls{\tmtwo}_2]$.
    \[
    \xymatrix@C=.5cm@R=.5cm{
     (\lamp{\lab}{\var}{\tm}) [\ls{\tmtwo}_1,
                               \conof{(\lamp{\lab'}{\vartwo}{\tmfour}) \ls{\tmthree}},
                               \ls{\tmtwo}_2]
                        \ar[d]_-{\lab}
                        \ar[r]^{\lab'} &
     (\lamp{\lab}{\var}{\tm}) [\ls{\tmtwo}_1,
                               \subs{\tmfour}{\vartwo}{\ls{\tmthree}},
                               \ls{\tmtwo}_2]
                        \ar@{.>}[d]_-{\lab} \\
     \subs{\tm}{\var}{[\ls{\tmtwo}_1,
                       \conof{(\lamp{\lab'}{\vartwo}{\tmfour}) \ls{\tmthree}},
                       \ls{\tmtwo}_2]}
                       \ar@{.>}[r]^-{\lab'}
                       &
      \subs{\tm}{\var}{[\ls{\tmtwo}_1,
                        \subs{\tmfour}{\vartwo}{\ls{\tmthree}},\ls{\tmtwo}_2]}        \\
    }
    \]
    The arrow of the bottom of the diagram exists as a consequence of \rlem{reduce_in_substitution}.
  \end{enumerate}
\item {\bf Under an abstraction, $\con = \lamp{\lab''}{\vartwo}{\con'}$.}
  Straightforward by \ih.
\item {\bf Left of an application, $\con = \con'\ls{\tmthree}$.}
    There are three subcases, depending on whether the redex $\redextwo$ is
    at the root, to the left of the application, or to the right of the application.
    \begin{enumerate}
    \item {\bf The pattern of $\redextwo$ is at the root}.
      Then $\contwoof{(\lamp{\lab'}{\var}{\tm})\ls{\tmtwo}}$ must have a lambda at the root,
      so it is of the form $\lamp{\lab'}{\vartwo}{\con'''\of{(\lamp{\lab}{\var}{\tm}) \ls{\tmtwo}}}$.
      Hence, the starting term is
      $(\lamp{\lab'}{\vartwo}{\con'''\of{(\lamp{\lab}{\var}{\tm}) \ls{\tmtwo}}}) \ls{\tmthree}$.
      The symmetric case has already been studied in item~\refcase{strong_permutation__case_redextwo_inside_tm}.
    \item {\bf The pattern of $\redextwo$ is inside $\con'$.} Straightforward by \ih.
    \item {\bf The pattern of $\redextwo$ is inside $\ls{\tmthree}$.}
      It is immediate to close the diagram since the steps are at disjoint positions:
        \[
        \xymatrix@C=.5cm@R=.5cm{
         \contwoof{(\lamp{\lab}{\var}{\tm}) \ls{\tmtwo}}
           [\ls{\tmthree}_1, \con''\of{(\lamp{\lab'}{\vartwo}{\tmfour}) \ls{\tmfive}}, \ls{\tmthree}_2]
                            \ar[d]_-{\lab}
                            \ar[r]^{\lab'} &
         \contwoof{(\lamp{\lab}{\var}{\tm})\ls{\tmtwo}}
             [\ls{\tmthree}_1, \con''\of{\subs{\tmfour}{\vartwo}{\ls{\tmfive}}}, \ls{\tmthree}_2]
                           \ar@{.>}[d]_-{\lab} \\
         \contwoof{\subs{\tm}{\var}{\ls{\tmtwo}}}
           [\ls{\tmthree}_1, \con''\of{(\lamp{\lab'}{\vartwo}{\tmfour}) \ls{\tmfive}}, \ls{\tmthree}_2]
                            \ar@{.>}[r]^{\lab'}
                           &
         \contwoof{\subs{\tm}{\var}{\ls{\tmtwo}}}
             [\ls{\tmthree}_1, \con''\of{\subs{\tmfour}{\vartwo}{\ls{\tmfive}}}, \ls{\tmthree}_2]
        }
        \]
    \end{enumerate}
\item {\bf Right of an application,
  $\con = \tmfour[\tmthree_1, \hdots, \tmthree_{i-1}, \con',\tmthree_{i+1}, \hdots, \tmthree_n]$}.
  There are four subcases, depending on whether the redex $\redextwo$ is
  at the root, to the left of the application (that is, inside $\tmfour$),
  or to the right of the application (that is, either inside $\contwo$ or $\tmthree_j$ for some $j$).
  \begin{enumerate}
  \item {\bf The pattern of $\redextwo$ is at the root.}
    Then $\tmfour$ has a lambda at the root, \ie it is of the form $\lamp{\lab'}{\vartwo}{\tmthree}$.
    Hence the starting term is
    $
      (\lamp{\lab'}{\vartwo}{\tmthree}) [\tmthree_{1:i-1}, \contwoof{(\lamp{\lab}{\var}{\tm})\tmtwo},\tmthree_{i+1:n}]
    $.
    The symmetric case has already been studied in item~\refcase{strong_permutation__case_redextwo_inside_tmtwo}.
  \item {\bf The pattern of $\redextwo$ is inside $\tmfour$.}
    The steps are disjoint, so it is immediate.
  \item {\bf The pattern of $\redextwo$ is inside $\tmthree_j$ for some $j \neq i$.}
    The steps are disjoint, so it is immediate.
  \item {\bf The pattern of $\redextwo$ is inside $\contwo$.}
    Straightforward by \ih.
  \end{enumerate}
\end{enumerate}

\subsection{Full Stability}
\label{appendix_full_stability}

In this section, we state and prove a strong version of Stability
in the sense of L\'evy~\cite{levy_redex_stability},
called Full Stability (\rlem{full_stability}).
This lemma is crucial in establishing that 
$\ulbDerivDist{\tm}$ forms a distributive lattice (\rprop{lambdadist_lattice}).
First we need a few auxiliary results:

\begin{definition}[Alternative Substitution]
An alternative definition for capture-avoiding substitution,
written $\tm\sub{\var}{\ls{\tmtwo}}$,
may be defined as follows,
provided that
$\varlabel{\var}{\tm} \subseteq \tmlabel{\ls{\tmtwo}}$
and the multiset of types of $\ls{\tmtwo} = [\tmtwo_1,\hdots,\tmtwo_m]$
is sequential:
\[
  \begin{array}{rcll}
    \var^{\typ}\sub{\var}{\ls{\tmtwo}}                 & \eqdef & \tmtwo_i & \text{ where $i$ is the unique index such that} \\
                                                                         &&& \text{ $\tmlabel{\tmtwo_i}$ is the external label of $\typ$}\\
    \vartwo^{\typ}\sub{\var}{\ls{\tmtwo}}              & \eqdef & \vartwo^\typ & \text{ if $\var \neq \vartwo$} \\
    (\lamp{\lab}{\vartwo}{\tm})\sub{\var}{\ls{\tmtwo}} & \eqdef & \lamp{\lab}{\vartwo}{\tm\sub{\var}{\ls{\tmtwo}}} & \text{ if $\var \neq \vartwo$ and $\vartwo \not\in \fv{\ls{\tmtwo}}$}\\
    (\tm[\tmthree_i]_{i=1}^{k})\sub{\var}{\ls{\tmtwo}} & \eqdef & \tm\sub{\var}{\ls{\tmtwo}}[\tmthree_i\sub{\var}{\ls{\tmtwo}}]_{i=1}^{k}
  \end{array}
\]
Moreover,
$[\tm_1,\hdots,\tm_n]\sub{\var}{\ls{\tmtwo}}$
stands for $[\tm_1\sub{\var}{\ls{\tmtwo}},\hdots,\tm_n\sub{\var}{\ls{\tmtwo}}]$,
whenever each substitution $\tm_i\sub{\var}{\ls{\tmtwo}}$ is well-defined.
\end{definition}

\begin{lemma}
Let $\tm,\ls{\tmtwo},\ls{\tmthree}$ be correct terms such that
$\varlabel{\var}{\tm} = \tmlabel{\ls{\tmtwo}}$ and
suppose that $\ls{\tmtwo}$ is contained in $\ls{\tmthree}$,
\ie there exists a list $\ls{\tmfour}$ such that
$(\ls{\tmtwo},\ls{\tmfour})$ is a partition of $\ls{\tmthree}$.
Then $\subs{\tm}{\var}{\ls{\tmtwo}} = \tm\sub{\var}{\ls{\tmthree}}$.
\end{lemma}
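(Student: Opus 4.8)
The plan is to argue by structural induction on $\tm$. Before the induction I would record two preliminary observations: first, the right-hand side is only meaningful when $\tmlabel{\ls{\tmthree}}$ is sequential (this is a side condition of the alternative substitution), and second, under the stated hypotheses $\varlabel{\var}{\tm} = \tmlabel{\ls{\tmtwo}} \subseteq \tmlabel{\ls{\tmthree}}$, since $(\ls{\tmtwo},\ls{\tmfour})$ is a partition of $\ls{\tmthree}$; hence both side conditions of $\tm\sub{\var}{\ls{\tmthree}}$ are in force. The conceptual content of the lemma is that selecting an argument \emph{by its type} --- which is how $\subs{\cdot}{\cdot}{\cdot}$ splits $\ls{\tmtwo}$ --- coincides with selecting it \emph{by external label} --- which is how $\sub{\cdot}{\cdot}$ reads off $\ls{\tmthree}$; sequentiality is precisely what makes the two selections agree.

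For the variable cases: if $\tm = \vartwo^\typ$ with $\var \neq \vartwo$, then $\varlabel{\var}{\tm} = [\,]$, so $\ls{\tmtwo}$ is forced to be empty and both sides give $\vartwo^\typ$; if $\tm = \var^\typ$, then $\ls{\tmtwo}$ is a singleton $[\tmtwo]$ whose element has type $\typ$, so $\subs{\var^\typ}{\var}{[\tmtwo]} = \tmtwo$, and since $\tmtwo$ occurs in $\ls{\tmthree}$ and $\tmlabel{\ls{\tmthree}}$ is sequential, $\tmtwo$ is the unique element of $\ls{\tmthree}$ whose type has the external label of $\typ$, so $\var^\typ\sub{\var}{\ls{\tmthree}} = \tmtwo$ as well. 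The abstraction case $\tm = \lamp{\lab}{\vartwo}{\tmthree_0}$ is immediate from the induction hypothesis applied to $\tmthree_0$, using $\varlabel{\var}{\tm} = \varlabel{\var}{\tmthree_0}$ and, by Barendregt's convention, $\var \neq \vartwo$ and $\vartwo \notin \fv{\ls{\tmthree}}$ (hence $\vartwo \notin \fv{\ls{\tmtwo}}$), so that both substitutions commute with $\lamp{\lab}{\vartwo}{-}$.

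The application case $\tm = \tmthree_0[\tmthree_j]_{j=1}^{m}$ is where the bookkeeping lives, though it is still routine. Here $\subs{\tm}{\var}{\ls{\tmtwo}}$ splits $\ls{\tmtwo}$ into a partition $(\ls{\tmtwo}_0,\dots,\ls{\tmtwo}_m)$ with $\varlabel{\var}{\tmthree_j} = \tmlabel{\ls{\tmtwo}_j}$ for each $j$, whereas $\tm\sub{\var}{\ls{\tmthree}}$ simply propagates the entire list $\ls{\tmthree}$ to every immediate subterm. Since $\ls{\tmtwo}_j$ is contained in $\ls{\tmtwo}$, which is contained in $\ls{\tmthree}$, and containment of lists is transitive, $\ls{\tmtwo}_j$ is contained in $\ls{\tmthree}$; together with $\varlabel{\var}{\tmthree_j} = \tmlabel{\ls{\tmtwo}_j}$ this lets me invoke the induction hypothesis on each $\tmthree_j$ to obtain $\subs{\tmthree_j}{\var}{\ls{\tmtwo}_j} = \tmthree_j\sub{\var}{\ls{\tmthree}}$, and recombining these equalities gives the claim. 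I would also cite \rlem{substitution_permutation} to note that $\subs{\cdot}{\cdot}{\cdot}$ is insensitive to the particular partition chosen, so the equation is well posed.

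The main obstacle is purely organizational: keeping the two distinct mechanisms straight --- partitioning the short list $\ls{\tmtwo}$ according to types of free occurrences of $\var$ versus scanning the long list $\ls{\tmthree}$ by external label --- and verifying that correctness of $\tm$ (which forces $\varlabel{\var}{\tm}$ to be sequential, hence the free occurrences of $\var$ to carry pairwise-distinct external labels) together with sequentiality of $\tmlabel{\ls{\tmthree}}$ makes the two agree at every leaf. Beyond the substitution-permutation lemma already available, no genuinely difficult ingredient is required.
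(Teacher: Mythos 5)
Your proof is correct and takes the same route as the paper, which simply records this lemma as ``straightforward by induction on $\tm$'': a structural induction in which the two variable cases are resolved by sequentiality of $\tmlabel{\ls{\tmthree}}$ (so selection by type and selection by external label pick the same element), and the application case uses transitivity of list containment to push the induction hypothesis through each piece of the partition. Nothing further is needed.
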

\begin{proof}
Straightforward by induction on $\tm$.
\end{proof}

\begin{lemma}[Creation]
\llem{creation}
If $\redex$ creates $\redextwo$ then it is according to
one of the following cases:
\begin{enumerate}
\item \CreationCase{I}:
  $
    \con\of{ (\lamp{\lab}{\var}{\var^{\typ}})\,[\lamp{\labtwo}{\vartwo}{\tm}]\, \ls{\tmtwo}}
    \ \todist\ %
    \con\of{ (\lamp{\labtwo}{\vartwo}{\tm})\,\ls{\tmtwo} }
    \ \todist\ %
    \con\of{ \tm\sub{\vartwo}{\ls{\tmtwo}} }
  $.
\item \CreationCase{II}:
  $
     \con\of{ (\lamp{\lab}{\var}{ \lamp{\labtwo}{\vartwo}{\tm} })\,\ls{\tmtwo}\,\ls{\tmthree} }
     \ \todist\ %
     \con\of{ (\lamp{\labtwo}{\vartwo}{\tm'})\,\ls{\tmthree} }
     \ \todist\ %
     \con\of{ \tm'\sub{\vartwo}{\ls{\tmthree}} }
  $, where
  $
     \tm' = \tm\sub{\var}{\ls{\tmtwo}}
  $.
\item \CreationCase{III}:
  $
    \con_1\of{(\lamp{\lab}{\var}{\con_2\of{\var^{\typ}\,\ls{\tm}}})\ls{\tmtwo}}
    \ \todist\ %
    \con_1\of{\con'_2\of{(\lamp{\labtwo}{\vartwo}{\tmthree})\,\ls{\tm'}}}
    \ \todist\ %
    \con_1\of{\con'_2\of{\tmthree\sub{\vartwo}{\ls{\tm'}}}}
  $,
  where \[
  \begin{array}{rcl}
    \con_2\sub{\var}{\ls{\tmtwo}}      & = & \con'_2 \\
    \var^{\typ}\sub{\var}{\ls{\tmtwo}} & = & \lamp{\labtwo}{\vartwo}{\tmthree} \\
    \ls{\tm}\sub{\var}{\ls{\tmtwo}}    & = & \ls{\tm'} \\
    \ls{\tmtwo}                        & = & [\ls{\tmtwo_1},\lamp{\labtwo}{\vartwo}{\tmthree},\ls{\tmtwo_2}] \\
  \end{array}
  \]
\end{enumerate}
\end{lemma}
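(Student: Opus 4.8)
The plan is a case analysis on where the pattern of the created step $\redextwo$ sits inside the target of $\redex$, discarding every position that would make $\redextwo$ a residual of a step coinitial with $\redex$. Write $\redex : \tm_0 = \conof{(\lamp{\lab}{\var}{\tm})\ls{\tmtwo}} \todist \conof{\subs{\tm}{\var}{\ls{\tmtwo}}} = \tm_1$ and put $\labtwo = \name(\redextwo)$, so that the pattern of $\redextwo$ is a subterm of the shape $(\lamp{\labtwo}{\vartwo}{\cdot})\,\ls{\cdot}$ of $\tm_1$. By \rlem{labels_over_lambdas_substitution} the multiset of labels decorating the lambdas of $\tm_1$ is $\Lambda(\tm_0)$ with one occurrence of $\lab$ removed; hence $\labtwo \neq \lab$ and $\tm_0$ contains a lambda labelled $\labtwo$, which is moreover unique since $\tm_0$ is correct. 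By \rprop{orthogonality} the residuals of a step are exactly the cofinal steps carrying the same label, so $\redextwo$ is \emph{created} by $\redex$ precisely when $\tm_0$ has no subterm of the form $(\lamp{\labtwo}{\vartwo}{\cdot})\,\ls{\cdot}$; equivalently, when the $\labtwo$-labelled lambda of $\tm_0$ does not stand in function position of an application.

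The core of the argument is to trace that lambda through the contraction. Along the label-preserving correspondence of \rlem{labels_over_lambdas_substitution}, the $\labtwo$-lambda of $\tm_1$ is inherited either from the one-hole context $\con$, from the body $\tm$, or from one of the arguments in $\ls{\tmtwo}$; and how these pieces are reassembled into $\tm_1 = \conof{\subs{\tm}{\var}{\ls{\tmtwo}}}$ is dictated by the definition of substitution, which I would analyse with the help of \rlem{substitution_lemma_for_contexts} and of the coincidence of ordinary substitution with the alternative substitution $\sub{\cdot}{\cdot}$. First I would discard the non-creating positions: if the pattern of $\redextwo$ lies in a part of $\tm_1$ disjoint from the contractum, or if the $\labtwo$-lambda already forms a redex inside $\con$, inside $\tm$, or inside some $\tmtwo_i$, then that very redex occurs in $\tm_0$ and $\redextwo$ is a residual; and if the $\labtwo$-lambda is a \emph{proper} subterm of $\con$, of $\tm$, or of some $\tmtwo_i$ that is not in function position there, then substitution and contraction cannot move it into function position, so $\redextwo$ would fail to be a step. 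Ruling these out leaves exactly three configurations, which I would then match against the schemas: $\tm$ is itself a lambda and the hole of $\con$ is in function position ($\con = \con'\of{\conbase\,\ls{\tmtwo}}$), giving \CreationCase{II}; the body of the contracted redex is a single bound-variable occurrence $\var^\typ$ whose matching argument is a lambda and the hole is in function position, giving \CreationCase{I}; an occurrence of $\var$ standing in function position inside $\tm$ is replaced by a lambda $\var^\typ\sub{\var}{\ls{\tmtwo}} = \lamp{\labtwo}{\vartwo}{\tmthree}$, giving \CreationCase{III}. In each case the displayed equations are obtained simply by evaluating the substitution involved (for \CreationCase{III}, by pushing $\sub{\var}{\ls{\tmtwo}}$ through the context $\con_2$).

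The main obstacle is the bookkeeping just alluded to: making precise how $\con$, $\tm$ and $\ls{\tmtwo}$ are glued together in $\conof{\subs{\tm}{\var}{\ls{\tmtwo}}}$, so that one can be certain that a lambda not in function position before the contraction cannot be brought into function position by it, and conversely that a lambda already in function position before the contraction yields an honest residual. Once this reassembly is pinned down --- which is a routine induction on the relevant contexts, mirroring the case analyses already carried out for \resultname{Subject Reduction}~(\rlem{subject_reduction}) and \resultname{Permutation}~(\rprop{strong_permutation}) --- reading off the three surviving configurations and their accompanying equations is immediate.
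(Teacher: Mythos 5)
Your proposal is correct and follows essentially the same route as the paper: an exhaustive case analysis on where the pattern of $\redextwo$ sits relative to the contractum of $\redex$, discarding every configuration in which $\redextwo$ has an ancestor and matching the three survivors to \CreationCase{I}--\CreationCase{III}. The paper organizes the cases by the relative positions of the two redex contexts in the source and target (disjoint, one a prefix of the other) rather than by forward-tracking the uniquely-labelled $\labtwo$-lambda as you do, but this is a presentational difference and the bookkeeping you defer is exactly the context/substitution analysis the paper carries out.
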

\begin{proof}
Let $\redex : \conof{(\lamp{\lab}{\var}{\tm})\ls{\tmtwo}} \todist \conof{\tm\sub{\var}{\ls{\tmtwo}}}$ be a step,
and let $\redextwo : \conof{\tm\sub{\var}{\ls{\tmtwo}}} \todist \tmfive$ another step
such that $\redex$ creates $\redextwo$.
The redex contracted by the step $\redextwo$ is below a context $\con_1$,
so let $\conof{\tm\sub{\var}{\ls{\tmtwo}}} = \con_1\of{(\lamp{\labtwo}{\vartwo}{\tmthree})\ls{\tmfour}}$,
where $(\lamp{\labtwo}{\vartwo}{\tmthree})\ls{\tmfour}$ is the redex contracted by $\redextwo$.
We consider three cases, depending on the relative positions of the holes of $\con$ and $\con_1$,
namely they may be disjoint,
$\con$ may be a prefix of $\con_1$,
or $\con_1$ may be a prefix of $\con$:
\begin{enumerate}
\item {\bf If the holes of $\con$ and $\con_1$ are disjoint.}
  Then there is a two-hole context $\conhat$ such that
  $
    \con = \conhat\of{\conbase,(\lamp{\labtwo}{\vartwo}{\tmthree})\ls{\tmfour}}
  $ and $
    \con_1 = \conhat\of{\tm\sub{\var}{\ls{\tmtwo}},\conbase}
  $.
  So
  $
    \conhat\of{(\lamp{\lab}{\var}{\tm})\ls{\tmtwo},(\lamp{\labtwo}{\vartwo}{\tmthree})\ls{\tmfour}}
    \todist \conhat\of{\tm\sub{\var}{\ls{\tmtwo}},(\lamp{\labtwo}{\vartwo}{\tmthree})\ls{\tmfour}}
    \todist \conhat\of{\tm\sub{\var}{\ls{\tmtwo}},\tmthree\sub{\vartwo}{\tmfour}}
  $.
  Observe that $\redextwo$ has an ancestor $\redextwo_0$, contradicting the hypothesis that $\redex$ creates $\redextwo$.
  So this case is impossible.
\item {\bf If $\con$ is a prefix of $\con_1$.}
  Then there exists a context $\con_2$ such that $\con_1 = \conof{\con_2}$,
  and we have that
  $\tm\sub{\var}{\ls{\tmtwo}} = \con_2\of{(\lamp{\labtwo}{\vartwo}{\tmthree})\ls{\tmfour}}$.
  We consider two subcases, depending on whether the position of the hole $\con_2$ lies
  inside the term $\tm$, or it reaches a free occurrence of $\var$ in $\tm$
  and ``jumps'' to one of the arguments in the list $\ls{\tmtwo}$.
  \begin{enumerate}
  \item {\bf If the position of the hole of $\con_2$ lies in $\tm$.}
    More precisely, there is a context $\con'_2$ and a term $\tm'$ such that:
    $\tm = \con'_2\of{\tm'}$,
    $\con_2 = \con'_2\sub{\var}{\ls{\tmtwo}}$,
    and
    $(\lamp{\labtwo}{\vartwo}{\tmthree})\ls{\tmfour} = \tm'\sub{\var}{\ls{\tmtwo}}$.
    We consider three further subcases for the term $\tm'$.
    It cannot be an abstraction, so it may be a variable $\var^{\typ}$,
    or an application.
    Besides, if it is an application, the head may be a variable $\var^{\typ}$ or an abstraction.
    \begin{enumerate}
    \item {\bf Variable, \ie $\tm' = \var^{\typ}$.}
      Then the list $\ls{\tmtwo}$ may be split as $\ls{\tmtwo} = [\ls{\tmtwo_1},(\lamp{\labtwo}{\vartwo}{\tmthree})\ls{\tmfour},\ls{\tmtwo_2}]$,
      and:
      $
        \conof{(\lamp{\lab}{\var}{\con'_2\of{\var^{\typ}}})[\ls{\tmtwo_1},(\lamp{\labtwo}{\vartwo}{\tmthree})\ls{\tmfour},\ls{\tmtwo_2}]}
        \ \todist\ %
        \conof{\con_2\of{ (\lamp{\labtwo}{\vartwo}{\tmthree})\ls{\tmfour} }}
        \ \todist\ %
        \conof{\con_2\of{ \tmthree\sub{\vartwo}{\ls{\tmfour}} }}
      $.
      Observe that $\redextwo$ has an ancestor $\redextwo_0$, contradicting the hypothesis that $\redex$ creates $\redextwo$.
      So this case is impossible.
    \item {\bf Application of a variable, \ie $\tm' = \var^{\typ}\,\ls{\tmfour'}$.}
      Given that
      $(\lamp{\labtwo}{\vartwo}{\tmthree})\ls{\tmfour} = \tm'\sub{\var}{\ls{\tmtwo}}$,
      we have $\ls{\tmfour} = \ls{\tmfour'}\sub{\var}{\ls{\tmtwo}}$,
      and the list $\ls{\tmtwo}$ may be split as $\ls{\tmtwo} = [\ls{\tmtwo_1},\lamp{\labtwo}{\vartwo}{\tmthree},\ls{\tmtwo_2}]$.
      Then
      $
        \conof{(\lamp{\lab}{\var}{\con_2'\of{\var^{\typ}\,\ls{\tmfour'}}})[\ls{\tmtwo_1},\lamp{\labtwo}{\vartwo}{\tmthree},\ls{\tmtwo_2}]}
        \ \todist\ %
        \conof{\con_2\of{(\lamp{\labtwo}{\vartwo}{\tmthree})\,\ls{\tmfour}}}
        \ \todist\ %
        \conof{\con_2\of{\tmthree\sub{\vartwo}{\ls{\tmfour}}}}
      $,
      and we are in the situation of \CreationCase{III}.
    \item {\bf Application of an abstraction, \ie $\tm' = (\lamp{\lab}{\vartwo}{\tmthree'})\,\ls{\tmfour'}$.}
      Given that
      $(\lamp{\labtwo}{\vartwo}{\tmthree})\ls{\tmfour} = \tm'\sub{\var}{\ls{\tmtwo}}$,
      we have $\ls{\tmthree} = \ls{\tmthree'}\sub{\var}{\ls{\tmtwo}}$
      and $\ls{\tmfour} = \ls{\tmfour'}\sub{\var}{\ls{\tmtwo}}$.
      Then:
      $
        \conof{(\lamp{\lab}{\var}{\con_2'\of{ (\lamp{\lab}{\vartwo}{\tmthree'})\,\ls{\tmfour'} }})\ls{\tmtwo}}
        \ \todist\ %
        \conof{\con_2\of{ (\lamp{\lab}{\vartwo}{\tmthree})\,\ls{\tmfour} }}
        \ \todist\ %
        \conof{\con_2\of{ \tmthree\sub{\varthree}{\ls{\tmfour}} }}
      $.
      Then $\redextwo$ has an ancestor $\redextwo_0$, contradicting the hypothesis that $\redex$ creates $\redextwo$.
      So this case is impossible.
    \end{enumerate}
  \item {\bf If the position of the hole of $\con_2$ ``jumps'' through a free occurrence of $\var$.}
    More precisely,
    there exist $\con_3$, $\typ$, $\ls{\tmtwo_1}$, $\ls{\tmtwo_2}$, and $\con_4$
    such that:
    $
      \tm = \con_3\of{\var^{\typ}}
      \HS
      \ls{\tmtwo} = [\ls{\tmtwo_1},\con_4\of{ (\lamp{\labtwo}{\vartwo}{\tmthree})\ls{\tmfour} },\ls{\tmtwo_2}]
    $
    in such a way that $\tmlabel{\con_4\of{ (\lamp{\labtwo}{\vartwo}{\tmthree})\ls{\tmfour} }} = \varlabel{\var}{\var^{\typ}}$,
    and $\con_2 = \con_3\sub{\var}{\ls{\tmtwo}}\of{\con_4}$.
    Then:
    \[
      \begin{array}{rl}
              & \conof{(\lamp{\lab}{\var}{\con_3\of{\var^{\typ}}})[\ls{\tmtwo_1},\con_4\of{ (\lamp{\labtwo}{\vartwo}{\tmthree})\ls{\tmfour} },\ls{\tmtwo_2}]} \\
      \todist & \conof{\con_3\sub{\var}{\ls{\tmtwo}}\of{\con_4\of{(\lamp{\labtwo}{\vartwo}{\tmthree})\ls{\tmfour}}}} \\
      \todist & \conof{\con_3\sub{\var}{\ls{\tmtwo}}\of{\con_4\of{ \tmthree\sub{\vartwo}{\ls{\tmfour}} }}}
      \end{array}
    \]
    Then $\redextwo$ has an ancestor $\redextwo_0$, contradicting the hypothesis that $\redex$ creates $\redextwo$.
    So this case is impossible.
  \end{enumerate}
\item {\bf If $\con_1$ is a prefix of $\con$.}
  Then there exists a context $\con_2$ such that $\con = \con_1\of{\con_2}$.
  This means that $\con_2\of{\tm\sub{\var}{\ls{\tmtwo}}} = (\lamp{\labtwo}{\vartwo}{\tmthree})\ls{\tmfour}$.
  We consider three subcases, depending on whether $\con_2$ is empty,
  or the hole of $\con_2$ lies to the left or to the right of the application.
  \begin{enumerate}
  \item {\bf Empty, $\con_2 = \conbase$.}
    Then $\con = \con_1$ so $\con$ is a prefix of $\con_1$. We have already considered this case.
  \item {\bf Left, $\con_2 = \con'_2\,\ls{\tmfour}$.}
    Then the step $\redex$ is of the form
    $
      \redex :
      \con_1\of{\con'_2\of{(\lamp{\lab}{\var}{\tm})\ls{\tmtwo}}\,\ls{\tmfour}}
      \todist
      \con_1\of{\con'_2\of{\tm\sub{\var}{\ls{\tmtwo}}}\,\ls{\tmfour}}
    $.
    In particular we know that $\con'_2\of{\tm\sub{\var}{\ls{\tmtwo}}} = \lamp{\labtwo}{\vartwo}{\tmthree}$.
    There are two cases, depending on whether $\con'_2$ is empty or non-empty.
    \begin{enumerate}
    \item {\bf Empty, $\con'_2 = \conbase$}
      Then we have that $\tm\sub{\var}{\ls{\tmtwo}} = \lamp{\labtwo}{\vartwo}{\tmthree}$.
      We consider two further subcases, depending on whether $\tm$ is an occurrence of the variable $\var$.
      \begin{enumerate}
      \item {\bf If $\tm = \var^{\typ}$ for some type $\typ$.}
        Then the list $\ls{\tmtwo}$ must be of the form $[\lamp{\labtwo}{\vartwo}{\tmthree}]$
        where the external label of $\typ$ is precisely $\labtwo$.
        Then
        $
          \con_1\of{(\lamp{\lab}{\var}{\var^{\typ}})[\lamp{\labtwo}{\vartwo}{\tmthree}]\,\ls{\tmfour}}
          \todist
          \con_1\of{(\lamp{\labtwo}{\vartwo}{\tmthree})\,\ls{\tmfour}}
          \todist
          \con_1\of{\tmthree\sub{\vartwo}{\ls{\tmfour}}}
        $,
        and we are in the situation of \CreationCase{I}.
      \item {\bf If $\tm \neq \var^{\typ}$ for any type $\typ$.}
        Then $\tm$ must be an abstraction, namely $\tm = \lamp{\labtwo}{\vartwo}{\tmthree'}$
        where $\tmthree'\sub{\var}{\ls{\tmtwo}} = \tmthree$. 
        Then
        $
          \con_1\of{(\lamp{\lab}{\var}{\lamp{\labtwo}{\vartwo}{\tmthree'}})\,\ls{\tmtwo}\,\ls{\tmfour}}
          \todist
          \con_1\of{(\lamp{\labtwo}{\vartwo}{\tmthree'\sub{\var}{\ls{\tmtwo}}})\,\ls{\tmfour}}
          \todist
          \con_1\of{\tmthree'\sub{\var}{\ls{\tmtwo}}\sub{\vartwo}{\ls{\tmfour}}}
        $,
        and we are in the situation of \CreationCase{II}.
      \end{enumerate}
    \item {\bf Non-empty, $\con'_2 \neq \conbase$}
      Then necessarily $\con'_2$ must be an abstraction, namely $\con'_2 = \lamp{\labtwo}{\vartwo}{\con''_2}$.
      Then:
      \[
        \begin{array}{rcl}
        \con_1\of{(\lamp{\labtwo}{\vartwo}{\con''_2\of{ (\lamp{\lab}{\var}{\tm})\,\ls{\tmtwo} }})\,\ls{\tmfour}}
        & \todist &
        \con_1\of{(\lamp{\labtwo}{\vartwo}{\con''_2\of{ \tm\sub{\var}{\ls{\tmtwo}} }})\,\ls{\tmfour}}
        \\
        & \todist &
        \con_1\of{\con''_2\of{ \tm\sub{\var}{\ls{\tmtwo}} }\sub{\vartwo}{\ls{\tmfour}}}
        \end{array}
      \]
      Then $\redextwo$ has an ancestor $\redextwo_0$, contradicting the hypothesis that $\redex$ creates $\redextwo$.
      So this case is impossible.
    \end{enumerate}
  \item {\bf Right, $\con_2 = (\lamp{\labtwo}{\vartwo}{\tmthree})[\ls{\tmfour}_1,\con'_2,\ls{\tmfour}_2]$.}
    Then:
    \[
      \begin{array}{rcl}
      \con_1\of{(\lamp{\labtwo}{\vartwo}{\tmthree})\,[\ls{\tmfour}_1,\con'_2\of{(\lamp{\lab}{\var}{\tm})\ls{\tmtwo}},\ls{\tmfour}_2]}
      & \todist &
      \con_1\of{(\lamp{\labtwo}{\vartwo}{\tmthree})\,[\ls{\tmfour}_1,\con'_2\of{\tm\sub{\var}{\ls{\tmtwo}}},\ls{\tmfour}_2]}
      \\
      & \todist &
      \con_1\of{\tmthree\sub{\vartwo}{[\ls{\tmfour}_1,\con'_2\of{\tm\sub{\var}{\ls{\tmtwo}}},\ls{\tmfour}_2]}}
      \end{array}
    \]
    Then $\redextwo$ has an ancestor $\redextwo_0$, contradicting the hypothesis that $\redex$ creates $\redextwo$.
    So this case is impossible.
  \end{enumerate}
\end{enumerate}
\end{proof}

\begin{lemma}[Basic Stability]
\llem{basic_stability}
Let $\redex \neq \redextwo$.
If $\redexthree_3 \in \redexthree_1/(\redextwo/\redex)$
and $\redexthree_3 \in \redexthree_2/(\redex/\redextwo)$
then there exists a step $\redexthree_0$ such that
$\redexthree_1 \in \redexthree_0/\redex$
and
$\redexthree_2 \in \redexthree_0/\redextwo$.
\end{lemma}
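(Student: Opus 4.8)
The plan is to take $\redexthree_0$ to be the unique redex of $\tm_0$ carrying the label common to $\redexthree_1,\redexthree_2,\redexthree_3$, and to prove that such a redex must exist by ruling out creation. Concretely, write $\tm_1:=\tgt(\redex)$, $\tm_2:=\tgt(\redextwo)$, $\lab:=\name(\redex)$ and $\lab':=\name(\redextwo)$; since $\redex\neq\redextwo$ and lambdas are uniquely labelled (correctness), $\lab\neq\lab'$. By \resultname{Permutation}~(\rprop{strong_permutation}) the elementary diagram closes over some term $\tm_3$, via a step $\redextwo/\redex:\tm_1\todistl{\lab'}\tm_3$ and a step $\redex/\redextwo:\tm_2\todistl{\lab}\tm_3$. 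Recall from \resultname{Orthogonality}~(\rprop{orthogonality}) that in $\lambdadist$ the set of residuals of one step after another consists exactly of the steps bearing the first step's label, so it is empty iff the two steps are equal, a singleton otherwise, and in any case every residual of a step bears that step's label. Applying this to the hypotheses $\redexthree_3\in\redexthree_1/(\redextwo/\redex)$ and $\redexthree_3\in\redexthree_2/(\redex/\redextwo)$ gives $\name(\redexthree_1)=\name(\redexthree_3)=\name(\redexthree_2)=:\lab_0$.

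By \rlem{labels_over_lambdas_substitution}, contracting a step only removes the contracted lambda's label from the multiset of labels of lambdas, introducing no new one; hence $\lab_0\neq\lab$, because $\lab_0$ labels a lambda of $\tm_1$, and symmetrically $\lab_0\neq\lab'$. Consequently, if $\tm_0$ contains a redex $\redexthree_0$ with $\name(\redexthree_0)=\lab_0$, then $\redexthree_0$ differs from both $\redex$ and $\redextwo$, so $\redexthree_0/\redex$ and $\redexthree_0/\redextwo$ are non-empty singletons; since, by uniqueness of labelled lambdas, $\tm_1$ and $\tm_2$ each contain a single redex labelled $\lab_0$ --- necessarily $\redexthree_1$ and $\redexthree_2$ --- we obtain $\redexthree_0/\redex=\{\redexthree_1\}$ and $\redexthree_0/\redextwo=\{\redexthree_2\}$, which is the desired conclusion. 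Thus the whole statement reduces to showing that $\tm_0$ contains a redex labelled $\lab_0$.

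I would prove this by contradiction using the \resultname{Creation} lemma~(\rlem{creation}), and this is the step I expect to be the main obstacle. Suppose $\tm_0$ has no redex labelled $\lab_0$. Then $\redexthree_1$, a redex labelled $\lab_0$ on $\tm_1$, has no ancestor along $\redex$ (such an ancestor would be a redex labelled $\lab_0$ on $\tm_0$), so $\redex$ creates $\redexthree_1$; symmetrically $\redextwo$ creates $\redexthree_2$. Applying \rlem{creation} to each creation --- both concerning the same label $\lab_0$, by the first paragraph --- the key observation is that in each of its three cases \CreationCase{I}, \CreationCase{II}, \CreationCase{III} the label $\lab_0$ already decorates a lambda of $\tm_0$, which is unique by correctness, and the lemma locates its immediate parent in the syntax tree: in cases \CreationCase{I} and \CreationCase{III} that lambda sits in the argument list of the redex contracted by $\redex$ (resp.~$\redextwo$), so its parent is that redex's application node; in case \CreationCase{II} it is the body of the abstraction contracted by $\redex$ (resp.~$\redextwo$), so its parent is that abstraction node. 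Since the $\lab_0$-labelled lambda occurs at a single position in $\tm_0$, it has one fixed parent node, so the two readings --- one from the creation along $\redex$, one from that along $\redextwo$ --- must agree; but an application node can never equal an abstraction node, and if both readings give application (resp.\ both give abstraction) nodes then their heads coincide, forcing $\name(\redex)=\name(\redextwo)$, \ie $\lab=\lab'$, against $\redex\neq\redextwo$. This contradiction shows $\tm_0$ contains a redex labelled $\lab_0$, and the reduction above then finishes the proof. The only genuinely delicate point is reading off \rlem{creation} in each case; once that is granted the nominal nine combinations of creation-cases collapse to the few parent-node comparisons just made, with essentially no computation.
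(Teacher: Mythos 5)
Your proof is correct, but it takes a genuinely different route from the paper's. The paper first reduces \resultname{Basic Stability} (using the absence of duplication and erasure) to the statement that creation is preserved by projection --- if $\redex$ creates $\redexthree$ then $\redex/\redextwo$ creates $\redexthree/(\redextwo/\redex)$ --- and then proves that by a long exhaustive induction on the context surrounding $\redex$, with sub-cases on the relative position of $\redextwo$ and on the creation case, explicitly computing all the projected terms. You instead use the label discipline to reduce everything to the purely local question of whether the source term already contains a redex carrying the common label $\lab_0$ (your observation that $\Lambda(\tgt(\redex)) = \Lambda(\src(\redex)) - [\name(\redex)]$ correctly gives $\lab_0 \neq \name(\redex), \name(\redextwo)$, so such a redex, if it exists, is distinct from both and has exactly $\redexthree_1$ and $\redexthree_2$ as residuals). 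The negative case is then excluded by noting that each of the three shapes in \rlem{creation} pins the unique $\lab_0$-labelled lambda either as a direct element of the argument list of the creating redex (\CreationCase{I}, \CreationCase{III}) or as the immediate body of its abstraction (\CreationCase{II}), so uniqueness of the parent node in the syntax tree forces the two creating redexes to coincide, contradicting $\redex \neq \redextwo$. This is considerably shorter and avoids all computation of substituted terms; the price is that it leans harder on the precise positional content of \rlem{creation} (that the created lambda is a \emph{direct} argument or the \emph{immediate} body --- which is indeed what that lemma states), whereas the paper re-derives that information inside each branch of its induction. Both arguments rest on the same prior results (\rlem{creation}, \rprop{strong_permutation}, unique labelling of lambdas), so there is no circularity.
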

\begin{proof}
We prove an equivalent statement.
Namely,
we prove that if $\redex,\redextwo$ are different coinitial steps such that
$\redex$ creates a step $\redexthree$,
then $\redex/\redextwo$ creates the step $\redexthree/(\redextwo/\redex)$.
It is easy to see that these are equivalent since in $\lambdadist$ there is
no erasure or duplication.

Let $\redex : \conof{(\lamp{\lab}{\var}{\tm})\ls{\tmtwo}} \todist \conof{\subs{\tm}{\var}{\ls{\tmtwo}}}$,
let $\redextwo \neq \redex$ be a step coinitial to $\redex$,
and suppose that $\redex$ creates a step $\redexthree$.
By induction on the context $\con$ we argue that $\redex/\redextwo$ creates $\redexthree/(\redextwo/\redex)$.
\begin{enumerate}
\item {\bf Empty context, $\con = \conbase$.}
  Then $\redex : (\lamp{\lab}{\var}{\tm})\ls{\tmtwo} \todist \tm\sub{\var}{\ls{\tmtwo}}$.
  There are two cases for $\redextwo$, depending on whether it is internal to $\tm$
  or internal to one of the arguments of the list $\ls{\tmtwo}$.
  \begin{enumerate}
  \item {\bf If $\redextwo$ is internal to $\tm$.}
    Then $\tm = \con_1\of{\Sigma}$ where $\Sigma$ is the redex contracted by $\redextwo$.
    Let $\Sigma'$ denote the contractum of $\Sigma$. Then:
    \[
      \xymatrix@R=.5cm{
        (\lamp{\lab}{\var}{\con_1\of{\Sigma}})\,\ls{\tmtwo}
        \ar[r]^-{\redex}
        \ar[d]_-{\redextwo}
      &
        \con_1\of{\Sigma}\sub{\var}{\ls{\tmtwo}}
        \ar[d]_-{\redextwo/\redex}
        \ar[r]^-{\redexthree}
      &
      \\
        (\lamp{\lab}{\var}{\con_1\of{\Sigma'}})\,\ls{\tmtwo}
        \ar[r]^-{\redex/\redextwo}
      &
        \con_1\of{\Sigma'}\sub{\var}{\ls{\tmtwo}}
        \ar[r]^-{\redexthree/(\redextwo/\redex)}
      &
      }
    \]
    By Creation~(\rlem{creation}),
    $\redexthree$ must be created by \CreationCase{III},
    since there are no applications surrounding the subterm contracted by $\redex$.
    This means that $\con_1\of{\Sigma} = \con_2\of{\var^{\typ}\,\ls{\tmthree}}$
    where, moreover, $\ls{\tmtwo}$ may be split as $[\ls{\tmtwo}_1,\lamp{\labtwo}{\vartwo}{\tmfour},\ls{\tmtwo}_2]$
    in such a way that $\labtwo$ is the external label of $\typ$.
    We consider three subcases, depending on whether the contexts
    $\con_1$ and $\con_2$ are disjoint,
    $\con_1$ is a prefix of $\con_2$,
    or $\con_2$ is a prefix of $\con_1$.
    \begin{enumerate}
    \item {\bf If $\con_1$ and $\con_2$ are disjoint.}
      Then there is a two-hole context $\conhat$ such that
      $
        \conhat\of{\conbase,\var^{\typ}\ls{\tmthree}} = \con_1
      $ and $
        \conhat\of{\Sigma,\conbase} = \con_2
      $.
      Given any term, context, or list of terms $\anon$ let us write $\anon^*$ to denote $\anon\sub{\var}{\ls{\tmtwo}}$.
      Then $\redex/\redextwo$ creates $\redexthree/(\redextwo/\redex)$:
      \[
        \xymatrix@R=.5cm{
          (\lamp{\lab}{\var}{\conhat\of{\Sigma,\var^{\typ}\ls{\tmthree}}})\,\ls{\tmtwo}
          \ar[r]^-{\redex}
          \ar[d]_-{\redextwo}
        &
          \conhat^*\of{\Sigma^*,(\lamp{\labtwo}{\vartwo}{\tmfour})\ls{\tmthree}^*}
          \ar[d]_-{\redextwo/\redex}
          \ar[r]^-{\redexthree}
        &
          \conhat^*\of{\Sigma^*,\tmfour\sub{\vartwo}{\ls{\tmthree}}}
        \\
          (\lamp{\lab}{\var}{\conhat\of{\Sigma',\var^{\typ}\ls{\tmthree}}})\,\ls{\tmtwo}
          \ar[r]^-{\redex/\redextwo}
        &
          \conhat^*\of{\Sigma'^*,(\lamp{\labtwo}{\vartwo}{\tmfour})\ls{\tmthree}^*}
          \ar[r]^-{\redexthree/(\redextwo/\redex)}
        &
          \conhat^*\of{\Sigma'^*,\tmfour\sub{\vartwo}{\ls{\tmthree}^*}}
        }
      \]
    \item {\bf If $\con_1$ is a prefix of $\con_2$.}
      Then $\con_2 = \con_1\of{\con'_2}$ which means that $\Sigma = \con'_2\of{\var^{\typ}\ls{\tmthree}}$.
      Recall that $\Sigma$ is a redex, so let us write
      $\Sigma = (\lamp{\labthree}{\varthree}{\tmfive})\ls{\tmsix}$.
      We consider two further subcases, depending on whether the hole of $\con'_2$
      lies to the left or to the right of the application (observe that it cannot be at the root
      since $\lamp{\labthree}{\varthree}{\tmfive}$ is not a variable).
      \begin{enumerate}
      \item {\bf If the hole of $\con'_2$ lies to the left of $(\lamp{\labthree}{\varthree}{\tmfive})\ls{\tmsix}$.}
        More precisely, we have that
        $\con'_2 = (\lamp{\labthree}{\varthree}{\con''_2})\ls{\tmsix}$ and $\tmfive = \con''_2\of{\var^{\typ}\ls{\tmthree}}$.
        Given any term, context, or list of terms $\anon$ let us write $\anon^*$ to denote $\anon\sub{\var}{\ls{\tmtwo}}$.
        Then $\redex/\redextwo$ creates $\redexthree/(\redextwo/\redex)$:
        \[
        {\footnotesize
          \xymatrix@C=.5cm{
            (\lamp{\lab}{\var}{\con_1\of{  (\lamp{\labthree}{\varthree}{\con''_2\of{\var^{\typ}\ls{\tmthree}} })\ls{\tmsix} }})\,\ls{\tmtwo}
            \ar[r]^-{\redex}
            \ar[d]_-{\redextwo}
          &
            \con_1^*\of{  (\lamp{\labthree}{\varthree}{{\con''_2}^*\of{(\lamp{\labtwo}{\vartwo}{\tmfour})\ls{\tmthree}^*} })\ls{\tmsix}^* }
            \ar[d]_-{\redextwo/\redex}
            \ar[r]^-{\redexthree}
          &
            \con_1^*\of{  (\lamp{\labthree}{\varthree}{{\con''_2}^*\of{ \tmfour\sub{\vartwo}{\ls{\tmthree}^*} } })\ls{\tmsix}^* }
          \\
            (\lamp{\lab}{\var}{\con_1\of{  \con''_2\of{\var^{\typ}\ls{\tmthree}} \sub{\varthree}{\ls{\tmsix}} }})\,\ls{\tmtwo}
            \ar[r]^-{\redex/\redextwo}
          &
            \con_1^*\of{  {\con''_2}^*\of{(\lamp{\labtwo}{\vartwo}{\tmfour})\ls{\tmthree}^*} \sub{\varthree}{\ls{\tmsix}^*} }
            \ar[r]^-{\redexthree/(\redextwo/\redex)}
          &
            \con_1^*\of{  {\con''_2}^*\of{ \tmfour\sub{\vartwo}{\ls{\tmthree}^*} } \sub{\varthree}{\ls{\tmsix}^*} }
          }
          }
        \]
      \item {\bf If the hole of $\con'_2$ lies to the right of $(\lamp{\labthree}{\varthree}{\tmfive})\ls{\tmsix}$.}
        More precisely, we have that $\con'_2 = (\lamp{\labthree}{\varthree}{\tmfive})[\ls{\tmsix}_1,\con''_2,\ls{\tmsix}_2]$
        and $\ls{\tmsix} = [\ls{\tmsix}_1,\con''_2\of{\var^\typ\,\ls{\tmthree}},\ls{\tmsix}_2]$.
        Given any term, context, or list of terms $\anon$ let us write $\anon^*$ to denote $\anon\sub{\var}{\ls{\tmtwo}}$.
        Moreover, since parameters are in 1--1 correspondence with arguments,
        there is exactly one free occurrence of $\varthree$ in $\tmfive$ whose type coincides with
        the type of $\con''_2\of{\var^\typ\,\ls{\tmthree}}$.
        Let $\tmfive = \con_3\of{\varthree^{\typtwo}}$ where the hole of $\con_3$ marks the position
        of such occurrence.
        Then:
        \[
          \con_3\of{\varthree^{\typtwo}}\sub{\varthree}{\ls{\tmsix}^*}
          = \con_3\sub{\varthree}{\ls{\tmsix}^*}\of{(\con''_2\of{\var^\typ\,\ls{\tmthree}})^*}
          = \con_3\sub{\varthree}{\ls{\tmsix}^*}\of{{\con''_2}^*\of{(\lamp{\labtwo}{\vartwo}{\tmfour})\,\ls{\tmthree}^*})}
        \]
        Moreover, if we write $\con'_3$ for the context $\con_3\sub{\varthree}{\ls{\tmsix}}$:
        \[
          (\con'_3)^*
          = (\con_3\sub{\varthree}{\ls{\tmsix}})^*
          = \con_3^*\sub{\varthree}{\ls{\tmsix}^*}
        \]
        by the Substitution Lemma for contexts~(\rlem{substitution_lemma_alt}).
        Then:
        \[
          \xymatrix@R=.5cm{
            (\lamp{\lab}{\var}{\con_1\of{ (\lamp{\labthree}{\varthree}{\con_3\of{\varthree^\typtwo}})\ls{\tmsix} }})\,\ls{\tmtwo}
            \ar[r]^-{\redex}
            \ar[d]_-{\redextwo}
          &
            \con_1^*\of{ (\lamp{\labthree}{\varthree}{\con_3^*\of{\varthree^\typtwo}})\ls{\tmsix}^* }
            \ar[d]_-{\redextwo/\redex}
            \ar[r]^-{\redexthree}
          &
          \\
            (\lamp{\lab}{\var}{\con_1\of{ \con'_3\of{ \con''_2\of{\var^\typ\,\ls{\tmthree}} } }})\,\ls{\tmtwo}
            \ar[r]^-{\redex/\redextwo}
          &
            \con_1^*\of{ (\con'_3)^*\of{ {\con''_2}^*\of{(\lamp{\labtwo}{\vartwo}{\tmfour})\,\ls{\tmthree}^*} } }
            \ar[r]^-{\redexthree/(\redextwo/\redex)}
          &
          }
        \]
        The labels of the steps $\redexthree$ and $\redexthree/(\redextwo/\redex)$ are both $\labtwo$,
        which means that $\redex/\redextwo$ creates $\redexthree/(\redextwo/\redex)$.
      \end{enumerate}
    \item {\bf If $\con_2$ is a prefix of $\con_1$.}
      Then $\con_1 = \con_2\of{\con'_1}$
      which means $\var^{\typ}\,\ls{\tmthree} = \con'_1\of{\Sigma}$.
      Since $\Sigma$ is a redex, $\con'_1$ must be of the form $\var^\typ\,[\ls{\tmthree}_1,\con''_1,\ls{\tmthree}_2]$
      and $\ls{\tmthree} = [\ls{\tmthree}_1,\con''_1\of{\Sigma},\ls{\tmthree}_2]$.
      Given any term, context, or list of terms $\anon$ let us write $\anon^*$ to denote $\anon\sub{\var}{\ls{\tmtwo}}$.
      Then: $\redex/\redextwo$ creates $\redexthree/(\redextwo/\redex)$:
      {\small
      \[
        \xymatrix@R=.5cm{
          (\lamp{\lab}{\var}{\con_2\of{\var^\typ[\ls{\tmthree}_1,\con''_1\of{\Sigma},\ls{\tmthree}_2]}})\ls{\tmtwo}
          \ar[r]^-{\redex}
          \ar[d]_-{\redextwo}
        &
          \con_2^*\of{(\lamp{\labtwo}{\vartwo}{\tmfour})[\ls{\tmthree}_1^*,{\con''_1}^*\of{\Sigma^*},\ls{\tmthree}_2^*]}
          \ar[d]_-{\redextwo/\redex}
          \ar[r]^-{\redexthree}
        &
          \con_2^*\of{ \tmfour\sub{\vartwo}{[\ls{\tmthree}_1^*,{\con''_1}^*\of{\Sigma^*},\ls{\tmthree}_2^*]} }
        \\
          (\lamp{\lab}{\var}{\con_2\of{\var^\typ[\ls{\tmthree}_1,\con''_1\of{\Sigma'},\ls{\tmthree}_2]}})\ls{\tmtwo}
          \ar[r]^-{\redex/\redextwo}
        &
          \con_2^*\of{(\lamp{\labtwo}{\vartwo}{\tmfour})[\ls{\tmthree}_1^*,{\con''_1}^*\of{\Sigma'^*},\ls{\tmthree}_2^*]}
          \ar[r]^-{\redexthree/(\redextwo/\redex)}
        &
          \con_2^*\of{ \tmfour\sub{\vartwo}{[\ls{\tmthree}_1^*,{\con''_1}^*\of{\Sigma'^*},\ls{\tmthree}_2^*]} }
        }
      \]}
    \end{enumerate}
  \item {\bf If $\redextwo$ is internal to $\ls{\tmtwo}$.}
    Then $\ls{\tmtwo} = [\ls{\tmtwo}_1,\con_1\of{\Sigma},\ls{\tmtwo}_2]$
    where $\Sigma$ is the redex contracted by $\redextwo$.
    Let $\Sigma'$ denote the contractum of $\Sigma$. Then:
    \[
      \xymatrix@R=.5cm{
        (\lamp{\lab}{\var}{\tm})\,[\ls{\tmtwo}_1,\con_1\of{\Sigma},\ls{\tmtwo}_2]
        \ar[r]^-{\redex}
        \ar[d]_-{\redextwo}
      &
        \tm\sub{\var}{[\ls{\tmtwo}_1,\con_1\of{\Sigma},\ls{\tmtwo}_2]}
        \ar[d]_-{\redextwo/\redex}
        \ar[r]^-{\redexthree}
      &
      \\
        (\lamp{\lab}{\var}{\tm})\,[\ls{\tmtwo}_1,\con_1\of{\Sigma'},\ls{\tmtwo}_2]
        \ar[r]^-{\redex/\redextwo}
      &
        \tm\sub{\var}{[\ls{\tmtwo}_1,\con_1\of{\Sigma'},\ls{\tmtwo}_2]}
        \ar[r]^-{\redexthree/(\redextwo/\redex)}
      &
      }
    \]
    By Creation~(\rlem{creation}), $\redexthree$ must be created by
    \CreationCase{III},
    since there are no applications surrounding the subterm contracted by $\redex$.
    This means that $\tm = \con_2\of{\var^\typ\,\ls{\tmthree}}$,
    and there is a term in the list $[\ls{\tmtwo}_1,\con_1\of{\Sigma},\ls{\tmtwo}_2]$ 
    of the form $\lamp{\labtwo}{\vartwo}{\tmfour}$
    such that $\labtwo$ is also the external label of the type $\typ$.
    There are two subcases, depending on whether such term is $\con_1\of{\Sigma}$
    or a different term in the list $[\ls{\tmtwo}_1,\con_1\of{\Sigma},\ls{\tmtwo}_2]$.
    \begin{enumerate}
    \item {\bf If $\lamp{\labtwo}{\vartwo}{\tmfour} = \con_1\of{\Sigma}$.}
      Note that $\con_1$ cannot be empty, since this would imply that $\lamp{\labtwo}{\vartwo}{\tmfour} = \Sigma$;
      however $\Sigma$ is a redex, and in particular an application, so this cannot be the case.
      Hence the context $\con_1$ must be non-empty, \ie $\con_1 = \lamp{\labtwo}{\vartwo}{\con'_1}$ with $\tmfour = \con'_1\of{\Sigma}$.
      Given any term, context, or list of terms $\anon$ let us write $\anon^*$ to denote $\anon\sub{\var}{[\ls{\tmtwo}_1,\ls{\tmtwo}_2]}$.
      Then $\redex/\redextwo$ creates $\redexthree/(\redextwo/\redex)$:
      \[
        \xymatrix@R=.5cm{
          (\lamp{\lab}{\var}{\con_2\of{\var^\typ\,\ls{\tmthree}}})[\ls{\tmtwo}_1,\lamp{\labtwo}{\vartwo}{\con'_1\of{\Sigma}},\ls{\tmtwo}_2]
          \ar[r]^-{\redex}
          \ar[d]_-{\redextwo}
        &
          \con_2^*\of{(\lamp{\labtwo}{\vartwo}{\con'_1\of{\Sigma}})\,\ls{\tmthree}^*}
          \ar[d]_-{\redextwo/\redex}
          \ar[r]^-{\redexthree}
        &
        \\
          (\lamp{\lab}{\var}{\con_2\of{\var^\typ\,\ls{\tmthree}}})[\ls{\tmtwo}_1,\lamp{\labtwo}{\vartwo}{\con'_1\of{\Sigma'}},\ls{\tmtwo}_2]
          \ar[r]^-{\redex/\redextwo}
        &
          \con_2^*\of{(\lamp{\labtwo}{\vartwo}{\con'_1\of{\Sigma'}})\,\ls{\tmthree}^*}
          \ar[r]^-{\redexthree/(\redextwo/\redex)}
        &
        }
      \]
    \item {\bf If $\lamp{\labtwo}{\vartwo}{\tmfour} \neq \con_1\of{\Sigma}$.}
      Then $\lamp{\labtwo}{\vartwo}{\tmfour}$
      is either one of the terms in the list $\ls{\tmtwo}_1$
      or one of the terms in the list $\ls{\tmtwo}_2$.
      Given any term, context,
      or list of terms $\anon$ let $\anon^*$ denote $\anon\sub{\var}{[\ls{\tmtwo}_1,\con_1\of{\Sigma}\ls{\tmtwo}_2]}$
      and let $\anon^\dagger$ denote $\anon\sub{\var}{[\ls{\tmtwo}_1,\con_1\of{\Sigma'}\ls{\tmtwo}_2]}$.
      Then $\redex/\redextwo$ creates $\redexthree/(\redextwo/\redex)$:
      \[
        \xymatrix@R=.5cm{
          (\lamp{\lab}{\var}{\con_2\of{\var^\typ\,\ls{\tmthree}}})\,[\ls{\tmtwo}_1,\con_1\of{\Sigma},\ls{\tmtwo}_2]
          \ar[r]^-{\redex}
          \ar[d]_-{\redextwo}
        &
          \con_2^*\of{(\lamp{\labtwo}{\vartwo}{\tmfour})\,\ls{\tmthree}^*}
          \ar[d]_-{\redextwo/\redex}
          \ar[r]^-{\redexthree}
        &
          \con_2^*\of{  \tmfour\sub{\vartwo}{\ls{\tmthree}^*} }
        \\
          (\lamp{\lab}{\var}{\con_2\of{\var^\typ\,\ls{\tmthree}}})\,[\ls{\tmtwo}_1,\con_1\of{\Sigma'},\ls{\tmtwo}_2]
          \ar[r]^-{\redex/\redextwo}
        &
          \con_2^\dagger\of{(\lamp{\labtwo}{\vartwo}{\tmfour})\,\ls{\tmthree}^\dagger}
          \ar[r]^-{\redexthree/(\redextwo/\redex)}
        &
          \con_2^\dagger\of{  \tmfour\sub{\vartwo}{\ls{\tmthree}^\dagger} }
        }
      \]
    \end{enumerate}
  \end{enumerate}
\item {\bf Under an abstraction, $\con = \lamp{\lab}{\vartwo}{\con'}$.}
      \label{stability:case_under_abstraction}
  Straightforward by \ih.
\item {\bf Left of an application, $\con = \con'\ls{\tmthree}$.}
  Let $\Delta$ be the redex contracted by $\redex$, and let $\Delta'$ denote its contractum.
  The step $\redex$ is of the form $\con'\of{\Delta}\ls{\tmthree} \todist \con'\of{\Delta'}\ls{\tmthree}$.
  We consider three subcases, according to Creation~(\rlem{creation}), depending on whether
  $\redexthree$ is created by \CreationCase{I}, \CreationCase{II}, or \CreationCase{III}.
  \begin{enumerate}
  \item \CreationCase{I}
    Then $\con'$ is empty and $\Delta$ has the following particular shape:
    $\Delta = (\lamp{\lab}{\var}{\var^{\typ}})[\lamp{\labtwo}{\vartwo}{\tmfour}]$.
    The step $\redextwo$ can be either internal to the subterm $\tmfour$,
    or internal to one of the subterms in the list $\ls{\tmthree}$.
    If $\redextwo$ is internal to $\tmfour$, let $\tmfour'$ denote the term that results
    from $\tmfour$ after the step $\redextwo$.
    Then:
    \[
      \xymatrix@R=.5cm{
        (\lamp{\lab}{\var}{\var^{\typ}})[\lamp{\labtwo}{\vartwo}{\tmfour}]\,\ls{\tmthree}
        \ar[r]^-{\redex}
        \ar[d]_-{\redextwo}
      &
        (\lamp{\labtwo}{\vartwo}{\tmfour})\,\ls{\tmthree}
        \ar[d]_-{\redextwo/\redex}
        \ar[rr]^-{\redexthree}
      &&
        \tmfour\sub{\vartwo}{\ls{\tmthree}}
      \\
        (\lamp{\lab}{\var}{\var^{\typ}})[\lamp{\labtwo}{\vartwo}{\tmfour'}]\,\ls{\tmthree}
        \ar[r]^-{\redex/\redextwo}
      &
        (\lamp{\labtwo}{\vartwo}{\tmfour'})\,\ls{\tmthree}
        \ar[rr]^-{\redexthree/(\redextwo/\redex)}
      &&
        \tmfour'\sub{\vartwo}{\ls{\tmthree}}
      }
    \]
    Note that $\redex/\redextwo$ creates $\redexthree/(\redextwo/\redex)$, as required.
    If, on the other hand, $\redextwo$ is internal to $\ls{\tmthree}$, the situation is similar.
  \item \CreationCase{II}
    Then $\con'$ is empty and $\Delta$ has the following particular shape:
    $\Delta = (\lamp{\lab}{\var}{\lamp{\labtwo}{\vartwo}{\tmfour}})\,\ls{\tmtwo}$.
    The step $\redextwo$ can be either internal to $\tmfour$, internal to $\ls{\tmtwo}$,
    or internal to $\ls{\tmthree}$.
    If $\redextwo$ is internal to $\tmfour$, let $\tmfour'$ denote the term that results
    from $\tmfour$ after the step $\redextwo$. Then:
    \[
      \xymatrix@R=.5cm{
        (\lamp{\lab}{\var}{\lamp{\labtwo}{\vartwo}{\tmfour}})\,\ls{\tmtwo}\,\ls{\tmthree}
        \ar[r]^-{\redex}
        \ar[d]_-{\redextwo}
      &
        (\lamp{\labtwo}{\vartwo}{\tmfour\sub{\var}{\ls{\tmtwo}}})\,\ls{\tmthree}
        \ar[d]_-{\redextwo/\redex}
        \ar[r]^-{\redexthree}
      &
        \tmfour\sub{\var}{\ls{\tmtwo}}\sub{\vartwo}{\ls{\tmthree}}
      \\
        (\lamp{\lab}{\var}{\lamp{\labtwo}{\vartwo}{\tmfour'}})\,\ls{\tmtwo}\,\ls{\tmthree}
        \ar[r]^-{\redex/\redextwo}
      &
        (\lamp{\labtwo}{\vartwo}{\tmfour'\sub{\var}{\ls{\tmtwo}}})\,\ls{\tmthree}
        \ar[r]^-{\redexthree/(\redextwo/\redex)}
      &
        \tmfour'\sub{\var}{\ls{\tmtwo}}\sub{\vartwo}{\ls{\tmthree}}
      }
    \]
    Note that $\redex/\redextwo$ creates $\redexthree/(\redextwo/\redex)$, as required.
    If, on the other hand, $\redextwo$ is internal to $\ls{\tmtwo}$ or $\ls{\tmthree}$, the situation is similar.
  \item \CreationCase{III}
    Recall that the step $\redex$ is of the form
    $\redex : \con'\of{\Delta}\,\ls{\tmthree} \todist \con'\of{\Delta'}\,\ls{\tmthree}$.
    Since the step $\redexthree$ is created by \CreationCase{III},
    it does not take place at the root of $\con'[\Delta']\,\ls{\tmthree}$,
    but rather it is internal to $\con'[\Delta']$.
    We consider three subcases, depending on whether the step $\redextwo$ takes place
    at the root, to the left of the application or to the right of the application.
    \begin{enumerate}
    \item {\bf If $\redextwo$ takes place at the root.}
      Then $\con'\of{\Delta}$ is an abstraction,
      so $\con'$ cannot be empty, \ie we have that $\con' = \lamp{\labtwo}{\vartwo}{\con''}$.
      Moreover, since $\redexthree$ is created by \CreationCase{III},
      we have that $\Delta$ has the following specific shape:
      $\Delta = (\lamp{\lab}{\var}{\con_2\of{\var^\typ\,\ls{\tmfour}}})\,\ls{\tmtwo}$
      where $\ls{\tmtwo} = [\ls{\tmtwo}_1,\lamp{\labthree}{\varthree}{\tmfive},\ls{\tmtwo}_2]$
      and such that $\labthree$ is the external label of $\typ$.

      Given any term, context, or list of terms $\anon$ let $\anon^*$ denote $\anon\sub{\var}{\ls{\tmtwo}}$
      and let $\anon^\dagger$ denote $\anon\sub{\vartwo}{\ls{\tmthree}}$.
      Note also that by the Substitution lemma~(\rlem{substitution_lemma_alt}) we have that
      $\anon^\dagger\sub{\var}{\ls{\tmtwo}^\dagger} = {\anon^*}^\dagger$.
      Then $\redex/\redextwo$ creates $\redexthree/(\redextwo/\redex)$:
      {\small
      \[
        \xymatrix@R=.5cm{
          (\lamp{\labtwo}{\vartwo}{\con''\of{ (\lamp{\lab}{\var}{\con_2\of{\var^\typ\,\ls{\tmfour}}})\,\ls{\tmtwo} }})\,\ls{\tmthree}
          \ar[r]^-{\redex}
          \ar[d]_-{\redextwo}
        &
          (\lamp{\labtwo}{\vartwo}{\con''\of{ \con_2^*\of{(\lamp{\labthree}{\varthree}{\tmfive})\,\ls{\tmfour}^*}}})\,\ls{\tmthree}
          \ar[d]_-{\redextwo/\redex}
          \ar[r]^-{\redexthree}
        &
          (\lamp{\labtwo}{\vartwo}{\con''\of{ \con_2^*\of{  \tmfive\sub{\varthree}{\ls{\tmfour}^*} }}})\,\ls{\tmthree}
        \\
          {\con''}^\dagger\of{(\lamp{\lab}{\var}{\con_2^\dagger\of{\var^\typ\,\ls{\tmfour}^\dagger}})\,\ls{\tmtwo}^\dagger}
          \ar[r]^-{\redex/\redextwo}
        &
          {\con''}^\dagger\of{ {\con_2^*}^\dagger\of{(\lamp{\labthree}{\varthree}{\tmfive^\dagger})\,{\ls{\tmfour}^*}{}^\dagger} }
          \ar[r]^-{\redexthree/(\redextwo/\redex)}
        &
          {\con''}^\dagger\of{ {\con_2^*}^\dagger\of{  \tmfive^\dagger\sub{\varthree}{{\ls{\tmfour}^*}{}^\dagger} } }
        }
      \]}
    \item {\bf If $\redextwo$ is internal to $\con'\of{\Delta}$.}
      Then it is straightforward by \ih.
    \item {\bf If $\redextwo$ is internal to $\ls{\tmthree}$.}
          \label{stability:case_left_application:III:internal_left}
      Then:
      \[
        \xymatrix@R=.5cm{
          \con'\of{\Delta}\,\ls{\tmthree}
          \ar[r]^-{\redex}
          \ar[d]_-{\redextwo}
        &
          \con'\of{\Delta'}\,\ls{\tmthree}
          \ar[d]_-{\redextwo/\redex}
          \ar[rr]^-{\redexthree}
        &&
          \tmfive\,\ls{\tmthree}
        \\
          \con'\of{\Delta}\,\ls{\tmfour}
          \ar[r]^-{\redex/\redextwo}
        &
          \con'\of{\Delta'}\,\ls{\tmfour}
          \ar[rr]^-{\redexthree/(\redextwo/\redex)}
        &&
          \tmfive\,\ls{\tmfour}
        }
      \]
      It is immediate to note in this case that $\redex/\redextwo$ creates $\redexthree/(\redextwo/\redex)$,
      since the two-step sequences $\redex\redexthree$ and $(\redex/\redextwo)(\redexthree/(\redextwo/\redex))$
      are both isomorphic to
      $\con'\of{\Delta} \todist \con'\of{\Delta'} \todist \tmfive$, only going below different contexts.
    \end{enumerate}
  \end{enumerate}
\item {\bf Right of an application, $\con = \tmthree[\ls{\tmfour}_1,\con',\ls{\tmfour}_2]$.}
  Let $\Delta$ be the redex contracted by $\redex$, and let $\Delta'$ denote its contractum.
  The step $\redex$ is of the form
  $\tmthree[\ls{\tmfour}_1,\con'\of{\Delta},\ls{\tmfour}_2] \todist \tmthree[\ls{\tmfour}_1,\con'\of{\Delta'},\ls{\tmfour}_2]$.
  We consider three subcases, depending on whether the step $\redextwo$ takes place at the root,
  to the left of the application, or to the right of the application.
  \begin{enumerate}
  \item {\bf If $\redextwo$ takes place at the root.}
    Then $\tmthree$ is an abstraction $\tmthree = \lamp{\labtwo}{\vartwo}{\tmthree'}$.
    Moreover, since parameters are in 1--1 correspondence with arguments,
    there is a free occurrence of $\vartwo$ in $\tmthree'$ whose type is also the type of
    the argument $\con'\of{\Delta}$. Let us write $\tmthree'$ as $\tmthree' = \con_1\of{\vartwo^\typ}$,
    where the hole of $\con_1$ marks the position of such occurrence.
    Given any term, context, or list of terms $\anon$ let $\anon^*$ denote $\anon\sub{\var}{\ls{\tmfour}_1,\ls{\tmfour}_2}$.
    Then:
    \[
      \xymatrix@R=.5cm{
        (\lamp{\labtwo}{\vartwo}{\con_1\of{\vartwo^\typ}})[\ls{\tmfour}_1,\con'\of{\Delta},\ls{\tmfour}_2]
        \ar[r]^-{\redex}
        \ar[d]_-{\redextwo}
      &
        (\lamp{\labtwo}{\vartwo}{\con_1\of{\vartwo^\typ}})[\ls{\tmfour}_1,\con'\of{\Delta'},\ls{\tmfour}_2]
        \ar[d]_-{\redextwo/\redex}
        \ar[r]^-{\redexthree}
      &
        (\lamp{\labtwo}{\vartwo}{\con_1\of{\vartwo^\typ}})[\ls{\tmfour}_1,\tmfive,\ls{\tmfour}_2]
      \\
        \con_1^*\of{ \con'\of{\Delta} }
        \ar[r]^-{\redex/\redextwo}
      &
        \con_1^*\of{ \con'\of{\Delta'} }
        \ar[r]^-{\redexthree/(\redextwo/\redex)}
      &
        \con_1^*\of{ \tmfive }
      }
    \]
    So $\redex/\redextwo$ creates $\redexthree/(\redextwo/\redex)$,
    since the two-step sequences $\redex\redexthree$ and $(\redex/\redextwo)(\redexthree/(\redextwo/\redex))$
    are both isomorphic to $\con'\of{\Delta} \todist \con'\of{\Delta'} \todist \tmfive$,
    only going below different contexts.
  \item {\bf If $\redextwo$ is internal to the left of the application.}
    Then:
    \[
      \xymatrix@R=.5cm{
        \tmthree[\ls{\tmfour}_1,\con'\of{\Delta},\ls{\tmfour}_2]
        \ar[r]^-{\redex}
        \ar[d]_-{\redextwo}
      &
        \tmthree[\ls{\tmfour}_1,\con'\of{\Delta'},\ls{\tmfour}_2]
        \ar[d]_-{\redextwo/\redex}
        \ar[r]^-{\redexthree}
      &
        \tmthree[\ls{\tmfour}_1,\tmfive,\ls{\tmfour}_2]
      \\
        \tmthree'[\ls{\tmfour}_1,\con'\of{\Delta},\ls{\tmfour}_2]
        \ar[r]^-{\redex/\redextwo}
      &
        \tmthree'[\ls{\tmfour}_1,\con'\of{\Delta'},\ls{\tmfour}_2]
        \ar[r]^-{\redexthree/(\redextwo/\redex)}
      &
        \tmthree'[\ls{\tmfour}_1,\tmfive,\ls{\tmfour}_2]
      }
    \]
    Then since $\redex\redexthree$ and $(\redex/\redextwo)(\redexthree/(\redextwo/\redex))$
    are isomorphic, it is immediate to conclude,
    similarly as in item~\ref{stability:case_left_application:III:internal_left} of this lemma.
  \item {\bf If $\redextwo$ is internal to the right of the application.}
    We consider two further subcases, depending on whether $\redextwo$ is
    internal to the argument $\con'\of{\Delta}$, or otherwise.
    \begin{enumerate}
    \item {\bf If $\redextwo$ is internal to the argument $\con'\of{\Delta}$.}
      Then it is straightforward by \ih.
    \item {\bf If $\redextwo$ is not internal to the argument $\con'\of{\Delta}$.}
      Then it is either internal to $\ls{\tmfour}_1$ or to $\ls{\tmfour}_2$.
      If it is internal to $\ls{\tmfour}_1$, then:
      \[
        \xymatrix@R=.5cm{
          \tmthree[\ls{\tmfour}_1,\con'\of{\Delta},\ls{\tmfour}_2]
          \ar[r]^-{\redex}
          \ar[d]_-{\redextwo}
        &
          \tmthree[\ls{\tmfour}_1,\con'\of{\Delta'},\ls{\tmfour}_2]
          \ar[d]_-{\redextwo/\redex}
          \ar[r]^-{\redexthree}
        &
          \tmthree[\ls{\tmfour}_1,\tmfive,\ls{\tmfour}_2]
        \\
          \tmthree[\ls{\tmfour}'_1,\con'\of{\Delta},\ls{\tmfour}_2]
          \ar[r]^-{\redex/\redextwo}
        &
          \tmthree[\ls{\tmfour}'_1,\con'\of{\Delta'},\ls{\tmfour}_2]
          \ar[r]^-{\redexthree/(\redextwo/\redex)}
        &
          \tmthree[\ls{\tmfour}'_1,\tmfive,\ls{\tmfour}_2]
        }
      \]
      Then since $\redex\redexthree$ and $(\redex/\redextwo)(\redexthree/(\redextwo/\redex))$
      are isomorphic, it is immediate to conclude,
      similarly as in item~\ref{stability:case_left_application:III:internal_left} of this lemma.
    \end{enumerate}
  \end{enumerate}
\end{enumerate}
\end{proof}

\begin{lemma}[Full Stability]
\llem{full_stability}
Let $\redseq,\redseqtwo$ be coinitial derivations
such that $\names(\redseq) \cap \names(\redseqtwo) = \emptyset$.
Let $\redexthree_1,\redexthree_2,\redexthree_3$ be steps such that
$\redexthree_3 = \redexthree_1/(\redseqtwo/\redseq) = \redexthree_2/(\redseq/\redseqtwo)$.
Then there is a step $\redexthree_0$ such that $\redexthree_1 = \redexthree_0/\redseq$
and $\redexthree_2 = \redexthree_0/\redseqtwo$.
\end{lemma}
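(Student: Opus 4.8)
\noindent\emph{Proof proposal.}
The plan is to identify steps with their labels, reduce \resultname{Full Stability} to a pure existence claim, and then settle that claim by lifting the single--step case \resultname{Basic Stability}~(\rlem{basic_stability}) through an induction on $|\redseq| + |\redseqtwo|$. The only genuinely computational input will be \rlem{basic_stability} (which is itself an exhaustive case analysis built on the \resultname{Creation} classification, \rlem{creation}); everything else is bookkeeping with labels.

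First I would record the label discipline. Since correct terms have uniquely labeled lambdas and substitution is linear, the multiset $\Lambda(\cdot)$ of lambda labels of a term is in fact a set, and every $\todist$-step removes exactly the label of the contracted lambda and introduces nothing new (by \rlem{labels_over_lambdas_substitution} together with subject reduction, \rlem{subject_reduction}). Consequently a contracted label never reappears, $\Lambda(\tgt(\redseqthree)) = \Lambda(\src(\redseqthree)) \setminus \names(\redseqthree)$ for every derivation $\redseqthree$, and — using that $\lambdadist$ has no duplication or erasure (\resultname{Permutation}, \rprop{strong_permutation}) — residuals preserve labels, $\redex/\redextwo$ is a singleton whenever $\redex \neq \redextwo$, and a step is determined by its label within a fixed term. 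Writing $\lab := \name(\redexthree_3)$, the hypothesis $\redexthree_3 = \redexthree_1/(\redseqtwo/\redseq) = \redexthree_2/(\redseq/\redseqtwo)$ then forces $\name(\redexthree_1) = \name(\redexthree_2) = \lab$, and moreover $\lab \notin \names(\redseq) \cup \names(\redseqtwo)$, since otherwise the $\lab$-labeled lambda would already have been destroyed at $\tgt(\redseq)$ (resp. $\tgt(\redseqtwo)$), contradicting the existence of $\redexthree_1$ (resp. $\redexthree_2$).

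Next I would reduce the goal to: \emph{there exists a step $\redexthree_0$ coinitial with $\redseq$ and $\redseqtwo$ carrying the label $\lab$}. Indeed, given such $\redexthree_0$, from $\lab \notin \names(\redseq)$ it follows that $\redexthree_0$ is never erased along $\redseq$, so $\redexthree_0/\redseq$ is a single step with label $\lab$ based at $\tgt(\redseq)$, hence equal to $\redexthree_1$ by uniqueness of labels; symmetrically $\redexthree_0/\redseqtwo = \redexthree_2$. Moreover $\src(\redseq)$ certainly contains an $\lab$-labeled lambda $L$ (because $\lab \in \Lambda(\tgt(\redseq)) \subseteq \Lambda(\src(\redseq))$), so the existence of $\redexthree_0$ is equivalent to $L$ already sitting at the head of an application in $\src(\redseq)$, i.e. to $\redexthree_1$ admitting an ancestor along $\redseq$. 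Finally, note that if $\redexthree_2$ admits an ancestor $\redexthree_0'$ at $\src(\redseq)$, then $\name(\redexthree_0') = \lab$ and $\redexthree_0'/\redseq$ is the unique $\lab$-step at $\tgt(\redseq)$, namely $\redexthree_1$ — so $\redexthree_1$ also admits an ancestor. Thus ``$\exists\,\redexthree_0$'', ``$\redexthree_1$ is not created along $\redseq$'', and ``$\redexthree_2$ is not created along $\redseqtwo$'' are all equivalent, and it suffices to rule out that $\redexthree_1$ is created along $\redseq$ \emph{and} $\redexthree_2$ is created along $\redseqtwo$.

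So assume both are created; then $\redseq$ and $\redseqtwo$ are non-empty, and I proceed by induction on $|\redseq| + |\redseqtwo|$. Write $\redseq = \redex\redseq^-$, $\redseqtwo = \redextwo\redseqtwo^-$, noting $\redex \neq \redextwo$, with $\redex' := \redex/\redextwo$ and $\redextwo' := \redextwo/\redex$ single steps and $\redex\redextwo'$, $\redextwo\redex'$ cofinal (\resultname{Orthogonality}, \rprop{orthogonality}). I would then case on which step first puts $L$ at the head of an application along each derivation: when it is the first step, \rlem{basic_stability} — in its equivalent ``creation propagates'' form — transports the creation across the first step of the other derivation; when it is a later step, the induction hypothesis applies to a strictly shorter pair obtained by peeling and projecting (label-disjointness is preserved because projections never create steps, so $\names(\redseqtwo/\redseq) \subseteq \names(\redseqtwo)$). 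Combining the two sides exhibits, at the common term $\tgt(\redex) = \src(\redextwo')$ (or $\tgt(\redextwo) = \src(\redex')$), a single-step instance of creation from two directions, which \rlem{basic_stability} forbids — the required contradiction. The main obstacle is precisely this last matching: arranging the two creation witnesses so that, after the projections forced by the induction hypothesis, \rlem{basic_stability} is invoked on a genuinely coinitial pair of single steps, and verifying that the hypotheses (label-disjointness, and that the relevant residuals are singletons) survive each projection. The underlying single-step statement \rlem{basic_stability} is where the actual term manipulation happens, and it is discharged by the exhaustive case analysis on \rlem{creation}.
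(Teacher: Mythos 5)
Your proposal is correct and follows essentially the same route as the paper: both reduce \resultname{Full Stability} to \resultname{Basic Stability}~(\rlem{basic_stability}) by working through the $\redseq$-versus-$\redseqtwo$ permutation grid one elementary tile at a time, using label-disjointness to guarantee that the two sides of every tile are distinct steps with singleton residuals so that \rlem{basic_stability} applies (the paper invokes it positively $n \times m$ times, whereas you run the equivalent ``creation propagates'' contrapositive and close with a contradiction). Your preliminary label bookkeeping --- residuals preserve labels, a step is determined by its label within a term, and $\name(\redexthree_3) \notin \names(\redseq) \cup \names(\redseqtwo)$ --- is sound and matches auxiliary facts the paper establishes elsewhere in the appendix, but it adds nothing beyond what the tiling argument already yields.
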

\begin{proof}
The proof of Full Stability (\rlem{full_stability})
is straightforward using {\em Basic Stability} (\rlem{basic_stability}).
Consider the diagram
formed by $\redseq = \redex_1\hdots\redex_n$
and $\redseqtwo = \redextwo_1\hdots\redextwo_m$.
By Permutation (\rprop{strong_permutation})
the diagram can be tiled using squares, as in $\lambdadist$
there is no erasure or duplication. Note that for all $i \neq j$ we have
that $\redex_i \neq \redextwo_j$,
since $\names(\redseq)$ and $\names(\redseqtwo)$ are disjoint.
It suffices to apply Basic Stability $n \times m$ times, once for each tile.
\end{proof}

\subsection{Proof of \rprop{lambdadist_lattice} -- $\ulbDerivDist{\tm}$ is a lattice}
\label{appendix_lambdadist_lattice}

Let $\cls{\redseq}$ denote the permutation equivalence class of
a derivation $\redseq$,
and let $\ulbDerivDist{\tm} = \set{\cls{\redseq} \ST \src(\redseq) = \tm}$
be the set of $\todist$-derivations
starting on $\tm$, modulo permutation equivalence.
We already know that $(\ulbDerivDist{\tm},\permle)$ is an upper
semilattice, where the order is given by
$\cls{\redseq} \permle \cls{\redseqtwo} \iffdef \redseq \permle \redseqtwo$,
the bottom is $\bot = \cls{\emptyDerivation}$
and the join is
$\cls{\redseq} \sqcup \cls{\redseqtwo} = \cls{\redseq \sqcup \redseqtwo}$.

Note that $\ulbDerivDist{\tm}$ has a top element:
given that $\lambdadist$ is strongly normalizing~(\rprop{strong_normalization}),
consider a derivation $\redseq : \tm \rtodist^* \tmtwo$ to normal form.
Then for any other derivation $\redseqtwo : \tm \rtodist^* \tmthree$
we have that $\redseqtwo/\redseq = \emptyDerivation$, so it suffices to take
$\top = \cls{\redseq}$.

To see that $\ulbDerivDist{\tm}$ is a lattice for all $\tm \in \termsdist$,
we are left to show that any two coinitial derivations $\redseq,\redseqtwo$
have a {\em meet} $\redseq \sqcap \redseqtwo$
with respect to the prefix order $\permle$,
and define $\cls{\redseq} \sqcap \cls{\redseqtwo} \eqdef \cls{\redseq \sqcap \redseqtwo}$.
To this purpose, we prove various auxiliary results.

\begin{definition}
A step $\redex$ \defn{belongs to} a coinitial derivation $\redseq$,
written $\redex \in \redseq$,
if and only if
some residual of $\redex$ is contracted along $\redseq$.
More precisely,
$\redex \in \redseq$
if there exist $\redseq_1,\redex',\redseq_2$ such that
$\redseq = \redseq_1\redex\redseq_2$ and $\redex' \in \redex/\redseq_1$,
We write $\redex \not\in \redseq$
if it is not the case that $\redex \in \redseq$.
\end{definition}

\begin{lemma}[Characterization of belonging]
\llem{characterization_of_belonging}
Let $\redex$ be a step and $\redseq$ a coinitial derivation
in $\lambdadist$.
Then the following are equivalent:
\begin{enumerate}
\item $\redex \in \redseq$,
\item $\redex \permle \redseq$,
\item $\name(\redex) \in \names(\redseq)$.
\end{enumerate}
{\em Note:} the hypothesis that $\redex$ and $\redseq$ are coinitial is crucial.
In particular, $(1)$ and $(2)$ by definition only hold when $\redex$ and $\redseq$ are coinitial,
while $(3)$ might hold even if $\redex$ and $\redseq$ are not coinitial.
\end{lemma}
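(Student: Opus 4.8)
The plan is to establish the three equivalences by the cycle $(1)\Rightarrow(3)\Rightarrow(2)\Rightarrow(1)$. First I would isolate three elementary facts about $\lambdadist$, all essentially already contained in the proofs of \rprop{orthogonality} and \rprop{strong_permutation}. (a) For coinitial steps $\redex,\redextwo$, the set $\redex/\redextwo$ is empty if $\redex=\redextwo$ and otherwise a singleton $\{\redex'\}$ with $\name(\redex')=\name(\redex)$; this is immediate from the definition of residuals in $\lambdadist$ (the steps of the target that carry the label $\name(\redex)$) together with the fact that correct terms have uniquely labeled lambdas. (b) By induction on the length of $\redseqtwo$, for every step $\redex$ and coinitial derivation $\redseqtwo$ the set $\redex/\redseqtwo$ is empty or a singleton, and when nonempty its unique element carries the label $\name(\redex)$; here one uses that, since substitution is linear and correctness is preserved by reduction (\rlem{subject_reduction}), a label that has been contracted never reappears. (c) Two coinitial steps with the same label coincide, since they contract the same lambda.

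Then the three implications are short. For $(1)\Rightarrow(3)$: unfolding $\redex\in\redseq$ as $\redseq=\redseq_1\redex'\redseq_2$ with $\redex'\in\redex/\redseq_1$, fact (b) forces $\redex/\redseq_1=\{\redex'\}$ and $\name(\redex')=\name(\redex)$, and $\redex'$ is a step of $\redseq$, so $\name(\redex)\in\names(\redseq)$. For $(3)\Rightarrow(2)$: pick a step $\redextwo$ of $\redseq$ with $\name(\redextwo)=\name(\redex)$ and write $\redseq=\redseq_1\redextwo\redseq_2$; then $\redex/\redseq=((\redex/\redseq_1)/\redextwo)/\redseq_2$, and by (b) the set $\redex/\redseq_1$ is either empty — in which case we are done — or a singleton whose element is coinitial with $\redextwo$ and has the same label, hence equals $\redextwo$ by (c), so \textit{Autoerasure} makes it vanish after $\redextwo$; either way $\redex/\redseq=\emptyDerivation$, i.e. $\redex\permle\redseq$. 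For $(2)\Rightarrow(1)$: induction on the length of $\redseq$, which is nonempty because $\redex/\emptyDerivation=\redex\neq\emptyDerivation$; writing $\redseq=\redextwo\redseq'$, if $\redex=\redextwo$ then $\redex$ is the unique residual of $\redex$ after the empty prefix, witnessing $\redex\in\redseq$, and if $\redex\neq\redextwo$ then (a) gives $\redex/\redextwo=\{\redex_1\}$ with $\redex_1/\redseq'=\redex/\redseq=\emptyDerivation$, so the induction hypothesis yields $\redseq'=\redseq'_1\redex'_1\redseq'_2$ with $\redex'_1\in\redex_1/\redseq'_1=\redex/(\redextwo\redseq'_1)$, which is precisely $\redex\in\redseq$.

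I do not expect a real obstacle: once (a)–(c) are in place the argument is a bookkeeping exercise. The one point that requires a little care is the precise statement and proof of (b) — in particular the claim that a contracted label does not come back, so that ``the residual of $\redex$ along $\redseqtwo$'' is genuinely well-defined whenever it exists; this rests on linearity of reduction in $\lambdadist$ and on \rlem{subject_reduction}. Finally, coinitiality of $\redex$ and $\redseq$ is used in every implication — it is what makes $\redex/\redseq$ and the relation $\permle$ meaningful — which is exactly the caveat recorded in the Note.
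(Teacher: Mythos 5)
Your proof is correct and rests on exactly the same ingredients as the paper's: in $\lambdadist$ residuals of distinct coinitial steps are label-preserving singletons (no duplication or erasure), coinitial steps with equal labels coincide because lambdas are uniquely labeled, and Autoerasure kills the residual once the matching step is reached. The only difference is organizational — you close the cycle as $(1)\Rightarrow(3)\Rightarrow(2)\Rightarrow(1)$ and handle $(3)\Rightarrow(2)$ by a direct decomposition at the matching step rather than the paper's inductions for $(2)\Rightarrow(3)$ and $(3)\Rightarrow(1)$ — which changes nothing of substance.
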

\begin{proof}
$(1 \Rightarrow 2)$
  Let $\redseq = \redseq_1\redextwo\redseqtwo_2$ where $\redextwo$ is a residual of $\redex$.
  Suppose moreover, without loss of generality,
  that $\redseq_1$ is minimal, \ie that $\redex \not\in \redseq_1$.
  Since in $\lambdadist$ there is no duplication or erasure,
  $\redex/\redseq_1$ is a singleton, so $\redex/\redseq_1 = \redextwo$.
  This means that $\redex/\redseq_1\redextwo\redseq_2 = \emptyset$,
  so indeed $\redex \permle \redseq_1\redextwo\redseq_2$.\\
$(2 \Rightarrow 3)$
  By induction on $\redseq$.
  If $\redseq$ is empty, the implication is vacuously true,
  so let $\redseq = \redextwo\redseqtwo$
  and consider two subcases, depending on whether $\redex = \redextwo$.
  If $\redex = \redextwo$, then indeed the first step of $\redseq = \redex\redseqtwo$
  has the same label as $\redex$.
  On the other hand, if $\redex \neq \redextwo$, 
  since in $\lambdadist$ there is no duplication or erasure,
  we have that $\redex/\redextwo = \redex'$,
  where $\name(\redex) = \name(\redex')$.
  Note that $\redex \permle \redextwo\redseqtwo$ 
  so $\redex/\redextwo = \redex' \permle \redseqtwo$.
  By applying the \ih we obtain that there must be a step in $\redseqtwo$
  whose label is $\name(\redex) = \name(\redex')$, and we are done.\\
$(3 \Rightarrow 1)$
  By induction on $\redseq$.
  If $\redseq$ is empty, the implication is vacuously true,
  so let $\redseq = \redextwo\redseqtwo$
  and consider two subcases, depending on whether $\name(\redex) = \name(\redextwo)$.
  If $\name(\redex) = \name(\redextwo)$ then $\redex$ and $\redextwo$ must be the same step,
  since terms are correct, which means that labels decorating lambdas are pairwise distinct.
  Hence $\redex \in \redseq = \redex\redseqtwo$.
  On the other hand, if $\name(\redex) \neq \name(\redextwo)$, then
  $\redex \neq \redextwo$
  so since in $\lambdadist$ there is no duplication or erasure
  we have that $\redex/\redextwo = \redex'$,
  where $\name(\redex) = \name(\redex')$.
  By hypothesis, there is a step in the derivation $\redseq = \redextwo\redseqtwo$ whose label
  is $\name(\redex)$, and it is not $\redextwo$, so there must be at least one step in the
  derivation $\redseqtwo$ whose label is $\name(\redex) = \name(\redex')$.
  By \ih $\redex' \in \redseqtwo$
  and then, since $\redex'$ is a residual of $\redex$,
  we conclude that $\redex \in \redextwo\redseqtwo$,
  as required.
\end{proof}

It is immediate to note that,
when composing derivations $\redseq, \redseqtwo$,
the set of labels of $\redseq\redseqtwo$
results from the union of the labels of $\redseq$ and $\redseqtwo$:
\begin{remark}
\lremark{names_concatenation_union}
$\names(\redseq\redseqtwo) = \names(\redseq) \cup \names(\redseqtwo)$
\end{remark}
Indeed, a stronger results holds. Namely, the union is disjoint:

\begin{lemma}
\llem{names_concatenation_disjoint_union}
$\names(\redseq\redseqtwo) = \names(\redseq) \uplus \names(\redseqtwo)$
\end{lemma}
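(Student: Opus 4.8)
The plan is to reduce the statement to \rremark{names_concatenation_union}, which already gives $\names(\redseq\redseqtwo) = \names(\redseq) \cup \names(\redseqtwo)$; it then suffices to prove that this union is disjoint, \ie $\names(\redseq) \cap \names(\redseqtwo) = \emptyset$. The intuition is that a $\todist$-step destroys the contracted lambda, and since substitution is linear and terms are correct (so lambda labels are unique), that label can never reappear further along the derivation.

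To make this precise I would first extend the multiset-of-lambda-labels operator $\Lambda$ used in \rlem{labels_over_lambdas_substitution} to contexts in the obvious way ($\Lambda(\conbase) = []$, $\Lambda(\lamp{\lab}{\var}{\con}) = [\lab] + \Lambda(\con)$, and so on), so that $\Lambda(\conof{\tm}) = \Lambda(\con) + \Lambda(\tm)$ for every context $\con$ and term $\tm$. Then the key step is: for every $\todist$-step $\redex : \tm \todist \tmtwo$ one has $\Lambda(\tmtwo) = \Lambda(\tm)$ with exactly one occurrence of $\name(\redex)$ removed, and $\name(\redex) \notin \Lambda(\tmtwo)$. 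Indeed, writing $\redex : \conof{(\lamp{\lab}{\var}{\tmthree})\ls{\tmfour}} \todist \conof{\subs{\tmthree}{\var}{\ls{\tmfour}}}$, one has $\Lambda(\tm) = \Lambda(\con) + [\lab] + \Lambda(\tmthree) +_i \Lambda(\tmfour_i)$, while \rlem{labels_over_lambdas_substitution} gives $\Lambda(\tmtwo) = \Lambda(\con) + \Lambda(\tmthree) +_i \Lambda(\tmfour_i)$; since $\tm$ is correct its lambdas are uniquely labeled, so $\Lambda(\tm)$ is repetition-free, hence $\lab$ occurs exactly once in it, is removed, and does not occur in $\Lambda(\tmtwo)$. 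By \rlem{subject_reduction} all terms along a derivation are correct, so this applies at every step.

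From the key step two routine inductions on the length of a derivation close the argument. First, $\names(\redseq) \subseteq \Lambda(\src(\redseq))$ for every $\redseq$: trivial when $\redseq = \emptyDerivation$, and for $\redseq = \redex\redseq'$ we have $\name(\redex) \in \Lambda(\src(\redex))$ and, by the \ih together with the key step, $\names(\redseq') \subseteq \Lambda(\src(\redseq')) = \Lambda(\tgt(\redex)) \subseteq \Lambda(\src(\redex))$. Second, $\Lambda(\tgt(\redseq)) = \Lambda(\src(\redseq)) \setminus \names(\redseq)$, again by induction using the key step. Now let $\redseq,\redseqtwo$ be composable; applying the first fact to $\redseqtwo$ and the second to $\redseq$ yields $\names(\redseqtwo) \subseteq \Lambda(\src(\redseqtwo)) = \Lambda(\tgt(\redseq)) = \Lambda(\src(\redseq)) \setminus \names(\redseq)$, so $\names(\redseq) \cap \names(\redseqtwo) = \emptyset$. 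Combined with \rremark{names_concatenation_union} this gives $\names(\redseq\redseqtwo) = \names(\redseq) \uplus \names(\redseqtwo)$, as required.

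The only mildly delicate point — the main obstacle, such as it is — is the key step asserting that $\Lambda$ decreases by exactly the contracted label: it requires pushing \rlem{labels_over_lambdas_substitution} through an arbitrary surrounding context and invoking correctness to ensure $\Lambda$ of the source term is repetition-free. Everything downstream of that is bookkeeping by induction on derivation length.
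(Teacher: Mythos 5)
Your proof is correct and takes essentially the same route as the paper: the paper's own argument is an induction on $\redseq$ showing that the labels decorating the lambdas of $\tgt(\redseq)$ are disjoint from $\names(\redseq)$, using exactly your key observation that a step permanently consumes its (uniquely occurring) lambda label, and then concludes since $\names(\redseqtwo)$ is drawn from the lambdas of $\tgt(\redseq)$. Your version merely makes the invariant explicit via $\Lambda$ and spells out the bookkeeping.
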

\begin{proof}
Let $\redseq : \tm \rtodist \tmtwo$.
By induction on $\redseq$ we argue that the set of labels decorating the lambdas in $\tmtwo$
is disjoint from $\names(\redseq)$.
The base case is immediate, so let $\redseq = \redexthree\redseqthree$.
The step $\redexthree$ is of the form:
$
  \redexthree :
  \tm =
  \conof{(\lamp{\lab}{\var}{\tmthree})\ls{\tmfour}}
  \todist
  \conof{\subs{\tmthree}{\var}{\ls{\tmfour}}}
$.
Hence the target of $\redexthree$ has no lambdas decorated with the label $\lab$.
Moreover, the derivation $\redseqthree$ is of the form:
$
  \redseqthree :
  \conof{\subs{\tmthree}{\var}{\ls{\tmfour}}}
  \rtodist
  \tmtwo
$
and by \ih the set of labels decorating the lambdas in $\tmtwo$
is disjoint from $\names(\redseqthree)$.
As a consequence,
the set of labels decorating the lambdas in $\tmtwo$
is disjoint from both $\names(\redseqthree)$
and $\set{\lab}$.
This completes the proof.
\end{proof}
As an easy consequence:

\begin{corollary}
\lcoro{length_of_derivation_is_number_of_distinct_names}
If $\lengthof{\redseq}$ denotes the length of $\redseq$,
then
$\lengthof{\redseq} = \#(\names(\redseq))$.
\end{corollary}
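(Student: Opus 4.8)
The plan is to derive this directly from \rlem{names_concatenation_disjoint_union} by induction on the length $n$ of $\redseq = \redex_1\hdots\redex_n$. In the base case $n = 0$, the derivation is empty, so $\lengthof{\emptyDerivation} = 0$ and $\names(\emptyDerivation) = \emptyset$, whence $\#(\names(\emptyDerivation)) = 0$ as well.

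For the inductive step, write $\redseq = \redex_1\,\redseqtwo$ where $\redseqtwo = \redex_2\hdots\redex_n$ has length $n - 1$. By \rlem{names_concatenation_disjoint_union} applied with the single step $\redex_1$, we have $\names(\redseq) = \names(\redex_1) \uplus \names(\redseqtwo)$. Since $\redex_1$ is a single step, $\names(\redex_1) = \set{\name(\redex_1)}$ is a singleton, so $\#(\names(\redseq)) = 1 + \#(\names(\redseqtwo))$. Applying the induction hypothesis to $\redseqtwo$ gives $\#(\names(\redseqtwo)) = \lengthof{\redseqtwo} = n - 1$, hence $\#(\names(\redseq)) = 1 + (n-1) = n = \lengthof{\redseq}$, as required.

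There is no real obstacle here: the only subtle point is that one genuinely needs the \emph{disjointness} of the union provided by \rlem{names_concatenation_disjoint_union}, and not merely the set-theoretic union of \rremark{names_concatenation_union}; without disjointness a label could be shared by two distinct steps of $\redseq$ and the cardinality of $\names(\redseq)$ would be strictly smaller than $\lengthof{\redseq}$. Everything else is a routine induction unwinding the definitions of $\lengthof{\cdot}$ and $\names(\cdot)$.
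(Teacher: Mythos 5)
Your proof is correct and is exactly the argument the paper intends: the corollary is stated as an immediate consequence of \rlem{names_concatenation_disjoint_union}, and your induction peeling off one step at a time, using disjointness to get $\#(\names(\redseq)) = 1 + \#(\names(\redseqtwo))$, is the natural way to spell that out. Your remark that plain union (\rremark{names_concatenation_union}) would not suffice is also the right observation about where the content lies.
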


\begin{lemma}
\llem{names_after_projection_along_a_step}
Let $\redseq$ and $\redseqtwo$ be coinitial.
Then
$\names(\redseq/\redseqtwo) = \names(\redseq) \setminus \names(\redseqtwo)$
\end{lemma}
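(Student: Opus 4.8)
The plan is to prove the identity by a double induction --- peeling off one step of $\redseqtwo$, then one step of $\redseq$ --- reducing everything to the single-step residual behaviour of labels, which in $\lambdadist$ is tightly controlled because there is neither duplication nor erasure.

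First I would reduce to the case in which $\redseqtwo$ is a single step $\redextwo$, by induction on $\lengthof{\redseqtwo}$. The empty case is immediate, since $\redseq/\emptyDerivation = \redseq$ and $\names(\emptyDerivation) = \emptyset$. For $\redseqtwo = \redextwo\redseqtwo'$ I would use $\redseq/\redseqtwo = (\redseq/\redextwo)/\redseqtwo'$, apply the induction hypothesis to the coinitial pair $\redseq/\redextwo$ and $\redseqtwo'$, invoke the single-step case on $\redseq$ and $\redextwo$, and combine the two using $\names(\redseqtwo) = \set{\name(\redextwo)} \uplus \names(\redseqtwo')$ (from \rlem{names_concatenation_disjoint_union}) together with the elementary identity $(S \setminus \set{a}) \setminus T = S \setminus (\set{a} \cup T)$.

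The core is the single-step case: for every step $\redextwo$ and every $\redseq$ coinitial with it, $\names(\redseq/\redextwo) = \names(\redseq) \setminus \set{\name(\redextwo)}$, which I would prove by induction on $\lengthof{\redseq}$. Writing $\redseq = \redex\redseq'$ and unfolding $\redseq/\redextwo = (\redex/\redextwo)(\redseq'/(\redextwo/\redex))$, there are two cases. If $\redex = \redextwo$, then \emph{Autoerasure} collapses this to $\redseq'$, and the claim follows because $\name(\redex) = \name(\redextwo) \notin \names(\redseq')$ by the disjointness in \rlem{names_concatenation_disjoint_union}. If $\redex \neq \redextwo$, then since $\lambdadist$ has no duplication or erasure (\resultname{Permutation}~(\rprop{strong_permutation})), $\redex/\redextwo$ and $\redextwo/\redex$ are single steps $\redex'$, $\redextwo'$ with $\name(\redex') = \name(\redex)$ and $\name(\redextwo') = \name(\redextwo)$, and moreover $\name(\redex) \neq \name(\redextwo)$ because distinct coinitial steps contract distinctly-labelled lambdas (correctness, cf.\ the proof of \rprop{orthogonality}). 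Applying the induction hypothesis to the shorter $\redseq'$ with the step $\redextwo'$ and using \rlem{names_concatenation_disjoint_union} twice then gives
\[
  \names(\redseq/\redextwo) = \set{\name(\redex)} \uplus (\names(\redseq') \setminus \set{\name(\redextwo)}) = (\set{\name(\redex)} \uplus \names(\redseq')) \setminus \set{\name(\redextwo)} = \names(\redseq) \setminus \set{\name(\redextwo)},
\]
the middle equality being legitimate precisely because $\name(\redex) \neq \name(\redextwo)$.

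The only delicate point --- the main obstacle, such as it is --- lies in the bookkeeping of this last case: one must set up the single-step induction over \emph{all} pairs $(\redextwo,\redseq)$ with $\lengthof{\redseq}$ as the induction measure, so that the hypothesis may legitimately be applied to the \emph{different} step $\redextwo'$, and one must make explicit use of two facts that are cheap individually but exactly what drives the label arithmetic: single-step residuals of single steps are again single steps carrying the same label, and coinitial distinct steps carry distinct labels. Both follow at once from \resultname{Permutation}~(\rprop{strong_permutation}) and correctness; no appeal to \resultname{Full Stability}~(\rlem{full_stability}) is needed.
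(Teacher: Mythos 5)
Your proposal is correct and follows essentially the same route as the paper: first establish the single-step case $\names(\redseq/\redex) = \names(\redseq) \setminus \set{\name(\redex)}$ by induction on $\redseq$ (splitting on whether the head step equals $\redex$, and using that residuals in $\lambdadist$ are label-preserving singletons and that distinct coinitial steps have distinct labels), then extend to arbitrary $\redseqtwo$ by induction on its length via $\redseq/\redex\redseqthree = (\redseq/\redex)/\redseqthree$. The point you flag as delicate --- that the inner induction hypothesis is applied to the residual step $\redextwo'$ rather than $\redextwo$ itself, so the induction must range over all coinitial pairs measured by $\lengthof{\redseq}$ --- is handled identically in the paper's proof.
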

\begin{proof}
First we claim that $\names(\redseq/\redex) = \names(\redseq) \setminus \set{\name(\redex)}$.
By induction on $\redseq$.
The base case is immediate, so let $\redseq = \redextwo\redseqtwo$
and consider two subcases, depending on whether $\redex = \redextwo$.
If $\redex = \redextwo$,
then
$
  \names(\redseq/\redex)
  = \names(\redex\redseqtwo/\redex)
  = \names(\redseqtwo)
  = \names(\redex\redseqtwo) \setminus \set{\name(\redex)}
$.
On the other hand if $\redex \neq \redextwo$,
then since
in $\lambdadist$ there is no duplication or erasure
we have that $\redex/\redextwo = \redex'$
where $\name(\redex) = \name(\redex')$
and, similarly,
$\redextwo/\redex = \redextwo'$,
where $\name(\redextwo) = \name(\redextwo')$. Hence:
\[
  \begin{array}{rcll}
  \names(\redseq/\redex)
  & = & \names(\redextwo\redseqtwo/\redex) \\
  & = & \names((\redextwo/\redex)(\redseqtwo/(\redex/\redextwo))) \\
  & = & \names(\redextwo/\redex) \cup \names(\redseqtwo/(\redex/\redextwo)) \\
  & = & \names(\redextwo') \cup \names(\redseqtwo/\redex') \\
  & = & \names(\redextwo') \cup (\names(\redseqtwo) \setminus \set{\name(\redex')} & \text{ by \ih} \\
  & = & \names(\redextwo) \cup (\names(\redseqtwo) \setminus \set{\name(\redex)}) \\
  & = & (\names(\redextwo) \cup \names(\redseqtwo)) \setminus \set{\name(\redex)} & \text{since $\name(\redex) \neq \name(\redextwo)$} \\
  & = & \names(\redextwo\redseqtwo) \setminus \set{\name(\redex)} \\
  & = & \names(\redseq) \setminus \set{\name(\redex)} \\
  \end{array}
\]
which completes the claim.
To see that $\names(\redseq/\redseqtwo) = \names(\redseq) \setminus \names(\redseqtwo)$
for an arbitrary derivation $\redseqtwo$,
proceed by induction on $\redseqtwo$.
If $\redseqtwo$ is empty it is trivial, so
consider the case in which $\redseqtwo = \redex\redseqthree$.
\[
  \begin{array}{rcll}
  \names(\redseq/\redex\redseqthree) & = & \names((\redseq/\redex)/\redseqthree) \\
                                     & = & \names(\redseq/\redex) \setminus \names(\redseqthree) & \text{ by \ih} \\
                                     & = & (\names(\redseq) \setminus \set{\name(\redex)}) \setminus \names(\redseqthree) & \text{ by the previous claim} \\
                                     & = & \names(\redseq) \setminus (\set{\name(\redex)} \cup \names(\redseqthree)) \\
                                     & = & \names(\redseq) \setminus \names(\redex\redseqthree) \\
  \end{array}
\]
\end{proof}

\begin{proposition}[Prefixes as subsets]
\lprop{prefixes_as_subsets}
Let $\redseq,\redseqtwo$ be coinitial derivations in the distributive lambda-calculus.
Then
$\redseq \permle \redseqtwo$ if and only if $\names(\redseq) \subseteq \names(\redseqtwo)$.
\end{proposition}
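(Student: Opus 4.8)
The plan is to derive the statement directly from the label-set bookkeeping lemmas already established, so that essentially no new combinatorial argument is needed. Recall that $\redseq \permle \redseqtwo$ is defined to mean $\redseq/\redseqtwo = \emptyDerivation$. First I would observe that a derivation is empty precisely when it has length zero, and by \rcoro{length_of_derivation_is_number_of_distinct_names} a derivation has length zero precisely when its set of labels is empty; hence $\redseq/\redseqtwo = \emptyDerivation$ if and only if $\names(\redseq/\redseqtwo) = \emptyset$.

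Next I would invoke \rlem{names_after_projection_along_a_step}, which (for coinitial $\redseq,\redseqtwo$, exactly our hypothesis) gives $\names(\redseq/\redseqtwo) = \names(\redseq) \setminus \names(\redseqtwo)$. Combining the two observations, $\redseq \permle \redseqtwo$ holds if and only if $\names(\redseq) \setminus \names(\redseqtwo) = \emptyset$, which is exactly the inclusion $\names(\redseq) \subseteq \names(\redseqtwo)$. This settles both directions at once.

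An alternative, more hands-on route would be: for the forward direction, if $\redseq = \redex_1\hdots\redex_n$ and $\redseq \permle \redseqtwo$, then for each $i$ a residual of $\redex_i$ along $\redex_1\hdots\redex_{i-1}$ is a prefix of $\redseqtwo/(\redex_1\hdots\redex_{i-1})$, so by \rlem{characterization_of_belonging} its label lies in the corresponding label set, and one accumulates labels using \rlem{names_concatenation_disjoint_union}; for the converse one exhibits a development of (the residuals of) $\redseq$ that stays inside $\redseqtwo$, again via \rlem{characterization_of_belonging}. This is strictly more laborious than the computation above, so I would only fall back on it if the length-based characterization of the empty derivation needed more care.

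I do not expect a genuine obstacle here: the real work has been front-loaded into the supporting lemmas — in particular the absence of duplication and erasure in $\lambdadist$, a consequence of \rprop{strong_permutation}, which underlies \rlem{characterization_of_belonging}, \rlem{names_concatenation_disjoint_union}, and \rlem{names_after_projection_along_a_step}. The only point requiring a moment's thought is the equivalence ``$\redseq = \emptyDerivation$ iff $\names(\redseq) = \emptyset$'', and this is immediate from \rcoro{length_of_derivation_is_number_of_distinct_names}.
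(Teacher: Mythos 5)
Your proof is correct, and it takes a genuinely more direct route than the paper's. The paper proves this proposition by induction on $\redseq$: it peels off the first step $\redexthree$ of $\redseq = \redexthree\redseqthree$, establishes the auxiliary equivalence $\redexthree\redseqthree \permle \redseqtwo \iff (\redexthree \permle \redseqtwo \land \redseqthree \permle \redseqtwo/\redexthree)$, and then chains through \rlem{characterization_of_belonging} (for the single-step case), the induction hypothesis, the single-step instance of \rlem{names_after_projection_along_a_step}, and finally \rlem{names_concatenation_disjoint_union} to reassemble the label sets. You instead observe that $\redseq \permle \redseqtwo$ means $\redseq/\redseqtwo = \emptyDerivation$, that emptiness of a derivation is equivalent to emptiness of its label set (via \rcoro{length_of_derivation_is_number_of_distinct_names}, or indeed directly), and then apply \rlem{names_after_projection_along_a_step} at full strength to get $\names(\redseq/\redseqtwo) = \names(\redseq)\setminus\names(\redseqtwo)$, so the statement reduces to the tautology $\names(\redseq)\setminus\names(\redseqtwo) = \emptyset \iff \names(\redseq) \subseteq \names(\redseqtwo)$. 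This is non-circular, since \rlem{names_after_projection_along_a_step} is proved in the paper before and independently of this proposition, and it settles both directions at once; the paper's induction essentially re-derives inline, one step at a time, what the full-generality form of that lemma already contains. The only thing your route requires that the paper's does not make explicit is the (immediate) equivalence between a derivation being empty and its label set being empty, which you correctly flag and justify.
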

\begin{proof}
By induction on $\redseq$.
The base case is immediate since $\emptyDerivation \permle \redseqtwo$
and $\emptyset \subseteq \names(\redseqtwo)$ both hold.
So let $\redseq = \redexthree\redseqthree$. First note that
the following equivalence holds:
\begin{equation}
\leqn{prefixes_as_subsets:1}
    \redexthree\redseqthree \permle \redseqtwo
    \HS\iff\HS
    \redexthree \permle \redseqtwo \ \land\ \redseqthree \permle \redseqtwo/\redexthree
\end{equation}
Indeed:
\begin{itemize}
\item[] $(\Rightarrow)$
    Suppose that $\redexthree\redseqthree \permle \redseqtwo$.
    Then, on one hand, $\redexthree \permle \redexthree\redseqthree \permle \redseqtwo$.
    On the other hand, projection is monotonic, so
    $\redseqthree = \redexthree\redseqthree/\redexthree \permle \redseqtwo/\redexthree$.
\item[] $(\Leftarrow)$
    Since $\redseqthree \permle \redseqtwo/\redexthree$
    we have that $\redexthree\redseqthree \permle \redexthree(\redseqtwo/\redexthree) \permeq \redseqtwo(\redexthree/\redseqtwo) = \redseqtwo$
    since $\redexthree/\redseqtwo = \emptyDerivation$.
\end{itemize}
So we have that:
\[
  \begin{array}{rcll}
    \redexthree\redseqthree \permle \redseqtwo
    & \iff &
    \redexthree \permle \redseqtwo \ \land\ \redseqthree \permle \redseqtwo/\redexthree
      & \text{ by \reqn{prefixes_as_subsets:1}} \\
    & \iff &
    \name(\redexthree) \in \names(\redseqtwo) \ \land\ \redseqthree \permle \redseqtwo/\redexthree
      & \text{ by \rlem{characterization_of_belonging}} \\
    & \iff &
    \name(\redexthree) \in \names(\redseqtwo) \ \land\ \names(\redseqthree) \subseteq \names(\redseqtwo/\redexthree)
      & \text{ by \ih} \\
    & \iff &
    \name(\redexthree) \in \names(\redseqtwo) \ \land\ \names(\redseqthree) \subseteq \names(\redseqtwo) \setminus \set{\name(\redexthree)}
      & \text{ by \rlem{names_after_projection_along_a_step}} \\
    & \iff &
    \names(\redexthree\redseqthree) \subseteq \names(\redseqtwo)
  \end{array}
\]
To justify the very last equivalence, the $(\Rightarrow)$ direction is immediate.
For the $(\Leftarrow)$ direction,
the difficulty is ensuring that
$\names(\redseqthree) \subseteq \names(\redseqtwo) \setminus \set{\name(\redexthree)}$
from the fact that
$\names(\redexthree\redseqthree) \subseteq \names(\redseqtwo)$.
To see this it suffices to observe that by \rlem{names_concatenation_disjoint_union},
$\names(\redexthree\redseqthree)$
is the {\em disjoint} union
$\names(\redexthree) \uplus \names(\redseqthree)$,
which means that
$\name(\redexthree) \not\in \names(\redseqthree)$.
\end{proof}
As an easy consequence:

\begin{corollary}[Permutation equivalence in terms of labels]
\lcoro{permutation_equivalence_in_terms_of_names}
Let $\redseq,\redseqtwo$ be coinitial derivations in $\lambdadist$.
Then
$\redseq \permeq \redseqtwo$ if and only if $\names(\redseq) = \names(\redseqtwo)$.
\end{corollary}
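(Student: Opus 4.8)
The plan is to derive this statement as an immediate corollary of \resultname{Prefixes as subsets}~(\rprop{prefixes_as_subsets}), using nothing more than the definition of permutation equivalence. Recall that for coinitial derivations, $\redseq \permeq \redseqtwo$ holds by definition exactly when $\redseq \permle \redseqtwo$ and $\redseqtwo \permle \redseq$. So the first step is simply to unfold $\permeq$ into this conjunction of two prefix relations.

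Next I would invoke \rprop{prefixes_as_subsets} twice, once for each conjunct: it gives that $\redseq \permle \redseqtwo$ holds if and only if $\names(\redseq) \subseteq \names(\redseqtwo)$, and, swapping the two derivations, that $\redseqtwo \permle \redseq$ holds if and only if $\names(\redseqtwo) \subseteq \names(\redseq)$. Both applications are legitimate because $\redseq$ and $\redseqtwo$ are assumed coinitial, which is exactly the hypothesis required by the proposition. Finally, the conjunction of the two inclusions $\names(\redseq) \subseteq \names(\redseqtwo)$ and $\names(\redseqtwo) \subseteq \names(\redseq)$ is equivalent, by antisymmetry of set inclusion, to the equality $\names(\redseq) = \names(\redseqtwo)$, and this chain of equivalences is precisely the claimed biconditional.

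I do not anticipate any real obstacle: no induction, case analysis, or appeal to the rewriting structure beyond \rprop{prefixes_as_subsets} is needed. The only points to watch are that the proposition is applied in both directions with the roles of the two derivations exchanged, and that coinitiality---needed for both $\permle$ assertions to even be well-formed---is carried along from the hypothesis of the statement.
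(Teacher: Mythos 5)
Your proposal is correct and matches the paper's intent exactly: the paper presents this corollary as ``an easy consequence'' of \rprop{prefixes_as_subsets}, and the argument is precisely the one you give---unfold $\permeq$ as mutual prefix, apply the proposition in both directions, and conclude by antisymmetry of set inclusion.
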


\begin{lemma}[Projections are decreasing]
\llem{projections_are_decreasing}
Let $\redex \in \redseq$. Then $\lengthof{\redseq} = 1 + \lengthof{\redseq/\redex}$.
\end{lemma}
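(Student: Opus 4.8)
The plan is to reduce the statement entirely to the combinatorics of label sets, using the three earlier results that express lengths, projections, and membership in terms of $\names(-)$. Concretely, I would combine \rcoro{length_of_derivation_is_number_of_distinct_names} ($\lengthof{\redseq} = \#\names(\redseq)$), \rlem{names_after_projection_along_a_step} (instantiated to a single step: $\names(\redseq/\redex) = \names(\redseq) \setminus \set{\name(\redex)}$), and \rlem{characterization_of_belonging} (which gives $\name(\redex) \in \names(\redseq)$ from $\redex \in \redseq$).

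\textbf{Key steps, in order.} First, from the hypothesis $\redex \in \redseq$ — which presupposes that $\redex$ and $\redseq$ are coinitial — apply \rlem{characterization_of_belonging} to obtain $\name(\redex) \in \names(\redseq)$. Second, apply \rcoro{length_of_derivation_is_number_of_distinct_names} to the derivation $\redseq/\redex$ to get $\lengthof{\redseq/\redex} = \#\names(\redseq/\redex)$. Third, rewrite $\names(\redseq/\redex)$ using \rlem{names_after_projection_along_a_step} (with the projected-out derivation being the one-step derivation $\redex$), so that $\names(\redseq/\redex) = \names(\redseq) \setminus \set{\name(\redex)}$. Fourth, since $\name(\redex) \in \names(\redseq)$ by the first step, removing a single element of a finite set decreases its cardinality by exactly one, hence $\#(\names(\redseq) \setminus \set{\name(\redex)}) = \#\names(\redseq) - 1$. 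Finally, apply \rcoro{length_of_derivation_is_number_of_distinct_names} again to $\redseq$ itself to replace $\#\names(\redseq)$ by $\lengthof{\redseq}$, concluding $\lengthof{\redseq/\redex} = \lengthof{\redseq} - 1$, i.e. $\lengthof{\redseq} = 1 + \lengthof{\redseq/\redex}$.

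\textbf{Main obstacle.} There is essentially no deep obstacle here; the lemma is a bookkeeping corollary of the label-set machinery already developed for $\lambdadist$. The only point that must be handled with care is the appeal to \rlem{characterization_of_belonging}: its statement is only valid when $\redex$ and $\redseq$ are coinitial, and this coinitiality is exactly what is built into the notation $\redex \in \redseq$, so I would make that hypothesis explicit when invoking it. One also tacitly uses that $\names(\redseq)$ is finite (so that ``cardinality drops by one when removing a member'' makes sense), which is immediate since derivations are finite sequences of steps. No residual-theoretic case analysis is needed, as all the heavy lifting — in particular the absence of duplication and erasure in $\lambdadist$ — has already been absorbed into the cited lemmas.
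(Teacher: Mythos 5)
Your proof is correct and follows essentially the same route as the paper's: both reduce the claim to counting labels via \rcoro{length_of_derivation_is_number_of_distinct_names}. The only cosmetic difference is that the paper decomposes $\redseq \permeq \redex(\redseq/\redex)$ and counts the concatenation with \rlem{names_concatenation_disjoint_union} and \rcoro{permutation_equivalence_in_terms_of_names}, whereas you compute $\names(\redseq/\redex)$ directly as a set difference via \rlem{names_after_projection_along_a_step}; both yield the same one-element drop in cardinality.
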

\begin{proof}
Observe that $\redex \permle \redseq$ by \rlem{characterization_of_belonging}.
So $\redseq \permeq \redex(\redseq/\redex)$, which gives us that:
\[
  \begin{array}{rcll}
  \lengthof{\redseq} & = & \#\names(\redseq) & \text{ by \rcoro{length_of_derivation_is_number_of_distinct_names}} \\
                     & = & \#\names(\redex(\redseq/\redex)) & \text{ by \rcoro{permutation_equivalence_in_terms_of_names}, since $\redseq \permeq \redex(\redseq/\redex)$} \\
                     & = & \#(\names(\redex) \uplus \names(\redseq/\redex) & \text{ by \rlem{names_concatenation_disjoint_union}} \\
                     & = & 1 + \#\names(\redseq/\redex) \\
                     & = & 1 + \lengthof{\redseq/\redex} & \text{ by \rcoro{length_of_derivation_is_number_of_distinct_names}} \\
  \end{array}
\]
\end{proof}

The proof of \rprop{lambdadist_lattice} proceeds as follows.
If $\redseq$ and $\redseqtwo$ are derivations, we say that a step $\redex$
is a \defn{common} (to $\redseq$ and $\redseqtwo$) whenever $\redex \in \redseq$ and $\redex \in \redseqtwo$.
Define $\redseq \sqcap \redseqtwo$ as follows,
by induction on the length of $\redseq$:
\[
  \redseq \sqcap \redseqtwo \eqdef
    \begin{cases}
    \emptyDerivation & \text{if there are no common steps to $\redseq$ and $\redseqtwo$} \\
    \redex((\redseq/\redex) \sqcap (\redseqtwo/\redex)) & \text{if the step $\redex$ is common to $\redseq$ and $\redseqtwo$ } \\
    \end{cases}
\]
In the second case of the definition, there might be more than one $\redex$
common to $\redseq$ and $\redseqtwo$.
Any such step may be chosen and we make no further assumptions.
To see that this recursive construction is well-defined,
note that the length of $\redseq/\redex$ is lesser than the length of $\redseq$
by the fact that projections are decreasing (\rlem{projections_are_decreasing}).
To conclude the construction, we show that $\redseq \sqcap \redseqtwo$ is an infimum, \ie a greatest lower bound:
\begin{enumerate}
\item {\bf Lower bound.}
  Let us show that $\redseq \sqcap \redseqtwo \permle \redseq$
  by induction on the length of $\redseq$.
  There are two subcases, depending on whether
  there is a step common to $\redseq$ and $\redseqtwo$.

  If there is no common step,
  then $\redseq \sqcap \redseqtwo = \emptyDerivation$
  trivially verifies $\redseq \sqcap \redseqtwo \permle \redseq$.

  On the other hand, if there is a common step, we have by definition that
  $\redseq \sqcap \redseqtwo = \redex((\redseq/\redex) \sqcap (\redseqtwo/\redex))$
  where $\redex$ is common to $\redseq$ and $\redseqtwo$.
  Recall that projections are decreasing (\rlem{projections_are_decreasing})
  so $\lengthof{\redseq} > \lengthof{\redseq/\redex}$.
  This allows us to apply the \ih and conclude:
  \[
  \begin{array}{rcll}
    \redseq \sqcap \redseqtwo & =       & \redex((\redseq/\redex) \sqcap (\redseqtwo/\redex)) & \text{ by definition} \\
                              & \permle & \redex(\redseq/\redex)   & \text{ since by \ih $(\redseq/\redex) \sqcap (\redseqtwo/\redex) \permle \redseq/\redex$} \\
                              & \permeq & \redseq(\redex/\redseq)  \\
                              & =       & \redseq                  & \text{ since $\redex \permle \redseq$ by \rlem{characterization_of_belonging}.} \\
  \end{array}
  \]
  Showing that $\redseq \sqcap \redseqtwo \permle \redseqtwo$ is symmetric,
  by induction on the length of $\redseqtwo$.
\item {\bf Greatest lower bound.}
  Let $\redseqthree$ be a lower bound for $\set{\redseq, \redseqtwo}$,
  \ie $\redseqthree \permle \redseq$ and $\redseqthree \permle \redseqtwo$,
  and let us show that $\redseqthree \permle \redseq \sqcap \redseqtwo$.
  We proceed by induction on the length of $\redseq$.
  There are two subcases, depending on whether
  there is a step common to $\redseq$ and $\redseqtwo$.

  If there is no common step,
  we claim that $\redseqthree$ must be empty.
  Otherwise we would have that $\redseqthree = \redexthree\redseqthree' \permle \redseq$
  so in particular $\redexthree \permle \redseq$ and $\redexthree \in \redseq$ by \rlem{characterization_of_belonging}.
  Similarly, $\redexthree \in \redseqtwo$ so
  $\redexthree$ is a step common to $\redseq$ and $\redseqtwo$,
  which is a contradiction.
  We obtain that $\redseqthree$ is empty,
  so trivially $\redseqthree = \emptyDerivation \permle \redseq \sqcap \redseqtwo$.

  On the other hand, if there is a common step, we have by definition that
  $\redseq \sqcap \redseqtwo = \redex((\redseq/\redex) \sqcap (\redseqtwo/\redex))$
  where $\redex$ is common to $\redseq$ and $\redseqtwo$.
  Moreover, since $\redseqthree \permle \redseq$ and $\redseqthree \permle \redseqtwo$,
  by projecting along $\redex$ we know that
  $\redseqthree/\redex \permle \redseq/\redex$ and $\redseqthree/\redex \permle \redseqtwo/\redex$.
  So:
  \[
    \begin{array}{rcll}
      \redseqthree & \permle & \redseqthree(\redex/\redseqthree) \\
                   & \permeq & \redex(\redseqthree/\redex) \\
                   & \permle & \redex((\redseq/\redex) \sqcap (\redseqtwo/\redex)) & \text{ since by \ih $\redseqthree/\redex \permle (\redseq/\redex) \sqcap (\redseqtwo/\redex)$} \\
                   & =       & \redseq \sqcap \redseqtwo & \text{ by definition}
    \end{array}
  \]
\end{enumerate}
Finally, observe that
the meet of $\set{\redseq,\redseqtwo}$
is unique modulo permutation equivalence,
since
if $\redseqthree$ and $\redseqthree'$ both
have the universal property of being a greatest lower bound,
then $\redseqthree \permle \redseqthree'$
and $\redseqthree' \permle \redseqthree$.

\subsection{Proof of \rprop{labels_morphism} -- The function $\names$ is a morphism of lattices}
\label{appendix_labels_morphism}

We need the following auxiliary result:

\begin{lemma}[Disjoint derivations]
\llem{properties_of_disjoint_derivations}
Let $\redseq$ and $\redseqtwo$ be coinitial derivations in $\lambdadist$.
The following are equivalent:
\begin{enumerate}
\item $\names(\redseq) \cap \names(\redseqtwo) = \emptyset$.
\item $\redseq \sqcap \redseqtwo = \emptyDerivation$.
\item There is no step common to $\redseq$ and $\redseqtwo$.
\end{enumerate}
In this case we say that $\redseq$ and $\redseqtwo$ are \defn{disjoint}.
\end{lemma}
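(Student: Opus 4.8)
The plan is to establish the four implications $(1)\Rightarrow(3)$, $(3)\Rightarrow(1)$, $(2)\Rightarrow(3)$ and $(3)\Rightarrow(2)$, so that all three conditions become equivalent; three of them are routine and essentially all the work is in $(3)\Rightarrow(1)$. Write $\tm$ for the common source of $\redseq$ and $\redseqtwo$. The equivalence $(2)\Leftrightarrow(3)$ follows directly from the recursive definition of $\redseq\sqcap\redseqtwo$ in the proof of \rprop{lambdadist_lattice}: if $\redseq$ and $\redseqtwo$ have no common step, then $\redseq\sqcap\redseqtwo=\emptyDerivation$ by the first clause; and if $\redex$ is a step common to $\redseq$ and $\redseqtwo$, then $\redseq\sqcap\redseqtwo=\redex\,((\redseq/\redex)\sqcap(\redseqtwo/\redex))$ begins with $\redex$, hence is not $\emptyDerivation$. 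The implication $(1)\Rightarrow(3)$ is equally easy: a common step $\redex$ is coinitial with both $\redseq$ and $\redseqtwo$, so by \rlem{characterization_of_belonging} its label lies in $\names(\redseq)\cap\names(\redseqtwo)$, contradicting $(1)$. It remains to prove $(3)\Rightarrow(1)$, which I would phrase contrapositively: if $\names(\redseq)\cap\names(\redseqtwo)\neq\emptyset$ then $\redseq$ and $\redseqtwo$ have a common step.

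I would prove this by induction on $\lengthof{\redseq}$. The case $\redseq=\emptyDerivation$ is vacuous, so let $\redseq=\redex\,\redseq'$ and fix $\lab\in\names(\redseq)\cap\names(\redseqtwo)$. If $\redex\in\redseqtwo$ then $\redex$ is the required common step. Otherwise $\name(\redex)\notin\names(\redseqtwo)$ by \rlem{characterization_of_belonging}; since $\names(\redseq)=\{\name(\redex)\}\uplus\names(\redseq')$ by \rlem{names_concatenation_disjoint_union} and $\names(\redseqtwo/\redex)=\names(\redseqtwo)\setminus\{\name(\redex)\}=\names(\redseqtwo)$ by \rlem{names_after_projection_along_a_step} (using $\name(\redex)\notin\names(\redseqtwo)$), we obtain $\lab\in\names(\redseq')\cap\names(\redseqtwo/\redex)$. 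As $\redseq'$ and $\redseqtwo/\redex$ are coinitial and $\lengthof{\redseq'}<\lengthof{\redseq}$, the induction hypothesis yields a step $\redex'$ common to $\redseq'$ and $\redseqtwo/\redex$. If $\redex'$ is \emph{not} created by $\redex$, it has an ancestor $\redex_0$ along $\redex$, which is coinitial with $\tm$ and satisfies $\name(\redex_0)=\name(\redex')$ since residuals preserve labels in $\lambdadist$; as $\name(\redex')\in\names(\redseq')\subseteq\names(\redseq)$ and $\name(\redex')\in\names(\redseqtwo/\redex)=\names(\redseqtwo)$, \rlem{characterization_of_belonging} gives $\redex_0\in\redseq$ and $\redex_0\in\redseqtwo$, so $\redex_0$ is the common step we want.

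The remaining possibility—$\redex$ \emph{creates} $\redex'$—is the main obstacle, and I would eliminate it using the Creation lemma (\rlem{creation}). In each of its three patterns the $\lambda$ carrying the label $\name(\redex')$ already occurs in $\tm$, nested inside the redex that $\redex$ contracts: in patterns \CreationCase{I} and \CreationCase{III} as the argument matching a bound occurrence of that redex's binding variable, and in pattern \CreationCase{II} as the body of that binder. In all three cases, because $\lambdadist$-reduction never duplicates or erases and substitution is type-directed, no sequence of $\todist$-steps can bring this $\lambda$ into head position—and hence enable a step with label $\name(\redex')$—without first contracting the redex of $\redex$; a short induction on the length of a derivation makes this precise, showing that every derivation from $\tm$ whose label set contains $\name(\redex')$ also contains $\name(\redex)$. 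Applying this to $\redseqtwo$, in which $\name(\redex')\in\names(\redseqtwo/\redex)=\names(\redseqtwo)$, forces $\name(\redex)\in\names(\redseqtwo)$, i.e.\ $\redex\in\redseqtwo$ by \rlem{characterization_of_belonging}, contradicting the case hypothesis. Hence this case does not arise, which completes the induction and the proof.
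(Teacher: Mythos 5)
Your treatment of $(1)\Rightarrow(3)$ and of $(2)\Leftrightarrow(3)$ matches the paper's and is fine. For $(3)\Rightarrow(1)$ (contrapositively) you take a genuinely different route: the paper locates the \emph{first} step $\redex$ of $\redseq$ whose label occurs in $\redseqtwo$ and the corresponding step $\redextwo$ of $\redseqtwo$, shows that each has a unique residual at the join and that these residuals coincide (by uniqueness of labels), and applies \resultname{Full Stability}~(\rlem{full_stability}) once to obtain a common ancestor; you instead peel off one step of $\redseq$ at a time and split on whether the common step supplied by the induction hypothesis has an ancestor along $\redex$ or is created by $\redex$. Your induction set-up and the non-creation branch are correct.

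The gap is in the creation branch. The statement you need there --- if $\redex$ creates $\redex'$, then every derivation from $\src(\redex)$ whose label set contains $\name(\redex')$ also contains $\name(\redex)$ --- is true, but it does not follow from \rlem{creation} plus ``a short induction on the length of a derivation''. The assertion that no sequence of $\todist$-steps can bring the lambda labelled $\name(\redex')$ into redex position without first firing $\redex$ is exactly the uniqueness-of-creation property, i.e.\ the content of \resultname{Basic Stability}~(\rlem{basic_stability}) (if $\redex$ creates $\redexthree$ and $\redextwo\neq\redex$ is coinitial, then $\redex/\redextwo$ creates $\redexthree/(\redextwo/\redex)$), which the paper establishes by an exhaustive case analysis of how each creation pattern interacts with every possible relative position of a second step; the invariant you gesture at (``the lambda stays nested inside the redex of $\redex$ in the same way'') is precisely what that case analysis maintains, and it is not short. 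The fix is cheap: write $\redseqtwo=\redseqtwo_1\redextwo\redseqtwo_2$ with $\name(\redextwo)=\name(\redex')$, observe that $\name(\redex),\name(\redex')\notin\names(\redseqtwo_1)$, so $\redex'/(\redseqtwo_1/\redex)$ and $\redextwo/(\redex/\redseqtwo_1)$ are both singletons with label $\name(\redex')$ and hence the same step, and then \resultname{Full Stability}~(\rlem{full_stability}) (applied to the single-step derivation $\redex$ and to $\redseqtwo_1$) yields an ancestor of $\redex'$ at $\src(\redex)$, contradicting creation. With that citation in place of the ``short induction'' your proof is complete --- at the price of invoking the same stability machinery the paper uses, just one level deeper inside your induction.
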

\begin{proof}
The implication $(1 \implies 2)$ is immediate since if we suppose that $\redseq \sqcap \redseqtwo$ is non-empty
then the first step of $\redseq \sqcap \redseqtwo$ is a step $\redexthree$
such that $\redexthree \in \redseq$ and $\redexthree \in \redseqtwo$.
By \rlem{characterization_of_belonging},
this means that $\name(\redexthree) \in \names(\redseq) \cap \names(\redseqtwo)$,
contradicting the fact that $\name(\redexthree)$ and $\names(\redseq)$ are disjoint.

The implication
$(2 \implies 3)$ is immediate by the definition of $\redseq \sqcap \redseqtwo$ (defined in \rprop{lambdadist_lattice}).

Let us check that the implication $(3 \implies 1)$ holds.
By the contrapositive, suppose that $\names(\redseq)$ and $\names(\redseqtwo)$
are not disjoint, and let us show that there is a step common to $\redseq$ and $\redseqtwo$.
Since $\names(\redseq) \cap \names(\redseqtwo) \neq \emptyset$,
we know that the derivation $\redseq$ can be written as $\redseq = \redseq_1 \redex \redseq_2$
where $\name(\redex) \in \names(\redseqtwo)$.
Without loss of generality we may suppose
that $\redex$ is the first step in $\redseq$ with that property,
\ie that $\names(\redseq_1) \cap \names(\redseqtwo) = \emptyset$.
Moreover, let us write $\redseqtwo$ as $\redseqtwo = \redseqtwo_1 \redextwo \redseqtwo_2$ 
where $\name(\redex) = \name(\redextwo)$.

Observe that the label of $\redex$ does not appear anywhere along the sequence of steps $\redseq_1$,
\ie that $\name(\redex) \not\in \names(\redseq_1)$, as a consequence of the
fact that no labels are ever repeated in any sequence of steps (\rlem{names_concatenation_disjoint_union}).
This implies that $\name(\redextwo) \not\in \names(\redseq_1/\redseqtwo_1)$.
Indeed:
\[
  \name(\redextwo)
  = \name(\redex)
  \not\in \names(\redseq_1)
  \supseteq \names(\redseq_1) \setminus \names(\redseqtwo_1)
  =^{(\text{\rlem{names_after_projection_along_a_step}})} \names(\redseq_1/\redseqtwo_1)
\]
This means that $\redextwo$ is not erased by the derivation $\redseq_1/\redseqtwo_1$.
More precisely, $\redextwo/(\redseq_1/\redseqtwo_1)$ is a singleton.

Symmetrically, $\redex/(\redseqtwo_1/\redseq_1)$ is a singleton.
Moreover, $\name(\redextwo/(\redseq_1/\redseqtwo_1)) = \name(\redextwo) = \name(\redex) = \name(\redex/(\redseqtwo_1/\redseq_1))$ 
so we have that $\redextwo/(\redseq_1/\redseqtwo_1) = \redex/(\redseqtwo_1/\redseq_1)$.
The situation is the following, where $\names(\redseq_1) \cap \names(\redseqtwo_1) = \emptyset$:
\[
  \xymatrix@R=.5cm@C=.5cm{
    &
    &
    & \ar@{->>}[ld]_{\redseq_1} \ar@{->>}[rd]^{\redseqtwo_1} &
  \\
    &
    \ar@{->>}[l]_{\redseq_2}
    &
    \ar[l]_{\redex}
    \ar@{->>}[rd]_{\redseqtwo_1/\redseq_1} & & \ar@{->>}[ld]^{\redseq_1/\redseqtwo_1} \ar[r]^{\redextwo}
    &
    \ar@{->>}[r]^{\redseqtwo_2}
    &
  \\
    &
    &
    &
    \ar[d]_{\redex/(\redseqtwo_1/\redseq_1) = \redextwo/(\redseq_1/\redseqtwo_1)}
    &
  \\
    &
    &&&
  }
\]
By Full stability~(\rlem{full_stability}) this means that there exists a step $\redexthree$
such that $\redexthree/\redseq_1 = \redex$ and $\redexthree/\redseqtwo_1 = \redextwo$.
Then $\redexthree \in \redseq_1\redex\redseq_2 = \redseq$
and also $\redexthree \in \redseqtwo_1\redextwo\redseqtwo_2 = \redseqtwo$
so $\redexthree$ is common to $\redseq$ and $\redseqtwo$,
by which we conclude.
\end{proof}

Let $X$ be the set of labels of the steps involved in some derivation to normal form.
More precisely, let $\redseq_0 : \tm \todist^* \tmtwo$ be a derivation to normal form, which exists
by virtue of \resultname{Strong Normalization}~(\rprop{strong_normalization}),
and let $X = \names(\redseq_0)$.
Then we claim that:
\[
  \begin{array}{rrcl}
  \names : & \ulbDerivDist{\tm} & \to     & \mathcal{P}(X) \\
           & \cls{\redseq}      & \mapsto & \names(\redseq)
  \end{array}
\]
is a monomorphism of lattices.
Note that $\names$ is well-defined
over permutation equivalence classes
since if $\redseq \permeq \redseqtwo$ then
$\names(\redseq) = \names(\redseqtwo)$
by \rcoro{permutation_equivalence_in_terms_of_names}.
We are to show that $\names$ is monotonic, that it preserves bottom, joins, top, and meets,
and finally that it is a monomorphism:
\begin{enumerate}
\item {\bf Monotonic, \ie
  $\classof{\redseq} \permle \classof{\redseqtwo}$
  implies $\names(\redseq) \subseteq \names(\redseqtwo)$.}
  By \rprop{prefixes_as_subsets}.
\item {\bf Preserves bottom.}
  Indeed, $\names(\cls{\emptyDerivation}) = \emptyset$.
\item {\bf Preserves joins,
  \ie
  $\names(\classof{\redseq} \sqcup \classof{\redseqtwo}) =
   \names(\redseq) \cup \names(\redseqtwo)$.}
  Indeed:
  \[
    \begin{array}{rcll}
    \names(\redseq \sqcup \redseqtwo) & = & \names(\redseq(\redseqtwo/\redseq)) & \text{ by definition of $\sqcup$} \\
                                      & = & \names(\redseq) \cup \names(\redseqtwo/\redseq) \\
                                      & = & \names(\redseq) \cup (\names(\redseqtwo) \setminus \names(\redseq)) & \text{ by \rlem{names_after_projection_along_a_step}} \\
                                      & = & \names(\redseq) \cup \names(\redseqtwo) \\
    \end{array}
  \]
\item {\bf Preserves top.}
  Recall that $\top = \cls{\redseq_0}$ where $\redseq_0 : \tm \rtodist^* \tmtwo$
  is the derivation to normal form.
  Note moreover that all derivations to normal form
  are permutation equivalent in an orthogonal axiomatic rewrite system, so the choice
  does not matter.
  To conclude, observe that $\names(\redseq_0) = X$ by definition of $X$,
  and $X$ is indeed the top element of $\mathcal{P}(X)$.
\item {\bf Preserves meets,
      \ie $\names(\classof{\redseq} \sqcap \classof{\redseqtwo}) =
           \names(\redseq) \cap \names(\redseqtwo)$.}
  We show the two inclusions:
  \begin{enumerate}
  \item $(\subseteq)$
    It suffices to check that
    $\names(\redseq \sqcap \redseqtwo) \subseteq \names(\redseq)$
    (the inclusion
    $\names(\redseq \sqcap \redseqtwo) \subseteq \names(\redseqtwo)$
    is symmetric).
    By induction on the length of $\redseq \sqcap \redseqtwo$.
    If $\redseq \sqcap \redseqtwo$ is empty it is immediate.
    If it is non-empty,
    $\redseq \sqcap \redseqtwo =
     \redexthree(\redseq/\redexthree \sqcap \redseqtwo/\redexthree)$,
    where $\redexthree$ is a step common to $\redseq$ and $\redseqtwo$.
    Then since $\redexthree$ is common to $\redseq$ and $\redseqtwo$,
    we have that $\name(\redexthree) \in \names(\redseq)$.
    Moreover, by \ih $\names(\redseq/\redexthree \sqcap \redseqtwo/\redexthree) \subseteq \names(\redseq/\redexthree)$.
    So:
    \[
      \begin{array}{rcll}
      \names(\redseq \cap \redseqtwo) & = & \set{\name(\redexthree)} \cup \names(\redseq/\redexthree \sqcap \redseqtwo/\redexthree) \\
                                      & \subseteq & \names(\redseq) \cup \names(\redseq/\redexthree) & \text{by \ih} \\
                                      & = & \names(\redseq) \cup (\names(\redseq) \setminus \set{\name(\redexthree)} & \text{by \rlem{names_after_projection_along_a_step}} \\
                                      & = & \names(\redseq)
      \end{array}
    \]
  \item $(\supseteq)$
    To show  $\names(\redseq) \cap \names(\redseqtwo) \subseteq \names(\redseq \sqcap \redseqtwo)$,
    we first prove the following claim:\bigskip

    {\bf Claim.} $\names(\redseq/(\redseq \sqcap \redseqtwo)) \cap \names(\redseqtwo/(\redseq \sqcap \redseqtwo) = \emptyset$.\\
    {\em Proof of the claim.}
    By \rlem{properties_of_disjoint_derivations} it suffices to show that
    $(\redseq/(\redseq \sqcap \redseqtwo)) \sqcap (\redseqtwo/(\redseq \sqcap \redseqtwo)) = \emptyDerivation$.
    By contradiction, suppose that there is a step $\redexthree$ common to
    the derivations
    $\redseq/(\redseq \sqcap \redseqtwo)$ and $\redseqtwo/(\redseq \sqcap \redseqtwo)$.
    Then the derivation $(\redseq \sqcap \redseqtwo)\redexthree$ is a lower bound for $\set{\redseq,\redseqtwo}$,
    \ie
    $(\redseq \sqcap \redseqtwo)\redexthree \permle \redseq$
    and
    $(\redseq \sqcap \redseqtwo)\redexthree \permle \redseqtwo$.
    Since $\redseq \sqcap \redseqtwo$ is the greatest lower bound for $\set{\redseq,\redseqtwo}$,
    we have that $(\redseq \sqcap \redseqtwo)\redexthree \permle \redseq \sqcap \redseqtwo$.
    But this implies that $\redexthree \permle \emptyDerivation$, which is a contradiction.
    {\em End of claim.}
    \medskip

    Note that $\redseq \sqcap \redseqtwo \permle \redseq$,
    so we have that $\redseq \permeq (\redseq \sqcap \redseqtwo)(\redseq/(\redseq \sqcap \redseqtwo))$,
    and this in turn implies that
    $\names(\redseq) = \names((\redseq \sqcap \redseqtwo)(\redseq/(\redseq \sqcap \redseqtwo)))$
    by \rcoro{permutation_equivalence_in_terms_of_names}.
    Symmetrically,
    $\names(\redseqtwo) = \names((\redseq \sqcap \redseqtwo)(\redseqtwo/(\redseq \sqcap \redseqtwo)))$.
    Then:
    \[
      \begin{array}{rcll}
        &&  \names(\redseq) \cap \names(\redseqtwo) \\
      & = & \names((\redseq \sqcap \redseqtwo)(\redseq/(\redseq \sqcap \redseqtwo))) \cap \names((\redseq \sqcap \redseqtwo)(\redseqtwo/(\redseq \sqcap \redseqtwo))) \\
      & = & \left(\names(\redseq \sqcap \redseqtwo) \cup \names(\redseq/(\redseq \sqcap \redseqtwo))\right) \cap \left(\names(\redseq \sqcap \redseqtwo) \cup \names(\redseqtwo/(\redseq \sqcap \redseqtwo))\right) \\
         && \text{ by \rremark{names_concatenation_union}} \\
      & = &
            \names(\redseq \sqcap \redseqtwo) \cup 
            \left( \names(\redseq/(\redseq \sqcap \redseqtwo)) \cap \names(\redseqtwo/(\redseq \sqcap \redseqtwo)) \right) \\
      && \text{ since $(A \cup B) \cap (A \cup C) = A \cup (B \cap C)$ for arbitrary sets $A,B,C$} \\
      & = &
            \names(\redseq \sqcap \redseqtwo) \\
      && \text{ since $\left( \names(\redseq/(\redseq \sqcap \redseqtwo)) \cap \names(\redseqtwo/(\redseq \sqcap \redseqtwo)) \right) = \emptyset$ by the previous claim} \\
      \end{array}
    \]
  \end{enumerate}

\item {\bf Monomorphism.}
  It suffices to show that $\names$ is injective.
  Indeed, 
  suppose that $\names(\classof{\redseq}) = \names(\classof{\redseqtwo})$.
  By \rcoro{permutation_equivalence_in_terms_of_names}
  we have that $\redseq \permeq \redseqtwo$,
  so
  $\classof{\redseq} = \classof{\redseqtwo}$.
\end{enumerate}

\subsection{Proof of \rprop{simulation} -- Simulation}
\label{appendix_simulation}

To prove \rprop{simulation} we need an auxiliary lemma.

\begin{lemma}[Refinement of a substitution: composition]
\llem{refinement_substitution}
If $\tm' \refines \tm$ and $\tmtwo'_i \refines \tmtwo$ for all $1 \leq i \leq n$,
then $\subs{\tm'}{\var}{[\tmtwo_1', \hdots, \tmtwo_n']} \refines \subs{\tm}{\var}{\tmtwo}$.
\end{lemma}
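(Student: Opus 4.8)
The plan is to proceed by induction on the derivation of $\tm' \refines \tm$; since the refinement relation is syntax-directed, this is the same as an induction on the structure of $\tm'$. Throughout we assume, as is implicit in the statement, that the left-hand side is defined, i.e.\ $\varlabel{\var}{\tm'} = \tmlabel{[\tmtwo'_1,\hdots,\tmtwo'_n]}$. There are exactly three cases, one per refinement rule. First I would treat \indrulename{r-var}, where $\tm' = \vartwo^\typ$ and $\tm = \vartwo$: if $\vartwo = \var$ then $\varlabel{\var}{\var^\typ} = [\typ]$ forces $n = 1$, so $\subs{\tm'}{\var}{[\tmtwo'_1]} = \tmtwo'_1$ and $\subs{\tm}{\var}{\tmtwo} = \tmtwo$, and we are done by the hypothesis $\tmtwo'_1 \refines \tmtwo$; if $\vartwo \neq \var$ then $\varlabel{\var}{\vartwo^\typ} = [\,]$ forces $n = 0$, and both sides reduce to $\vartwo^\typ \refines \vartwo$, which holds by \indrulename{r-var}.

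Next, for \indrulename{r-lam}, with $\tm' = \lamp{\lab}{\vartwo}{\tmthree'}$, $\tm = \lam{\vartwo}{\tmthree}$, and $\tmthree' \refines \tmthree$, I would invoke Barendregt's convention to assume $\vartwo \neq \var$ and $\vartwo \notin \fv{\tmtwo} \cup \fv{[\tmtwo'_1,\hdots,\tmtwo'_n]}$, so that substitution commutes with the binder on both sides (the third clause of \rdef{substitution} on the left, the analogous clause on the right). The induction hypothesis applied to $\tmthree' \refines \tmthree$ gives $\subs{\tmthree'}{\var}{[\tmtwo'_1,\hdots,\tmtwo'_n]} \refines \subs{\tmthree}{\var}{\tmtwo}$, and one application of \indrulename{r-lam} closes the case. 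The only case with real content is \indrulename{r-app}: here $\tm' = \tmthree'[\tmfour'_j]_{j=1}^m$, $\tm = \tmthree\,\tmfour$, $\tmthree' \refines \tmthree$, and $\tmfour'_j \refines \tmfour$ for all $j = 1..m$. Unfolding the last clause of \rdef{substitution} produces a partition $(\ls{\tmtwo'}_0,\hdots,\ls{\tmtwo'}_m)$ of $[\tmtwo'_1,\hdots,\tmtwo'_n]$ with $\varlabel{\var}{\tmthree'} = \tmlabel{\ls{\tmtwo'}_0}$ and $\varlabel{\var}{\tmfour'_j} = \tmlabel{\ls{\tmtwo'}_j}$, and
\[
  \subs{\tm'}{\var}{[\tmtwo'_1,\hdots,\tmtwo'_n]}
  = \subs{\tmthree'}{\var}{\ls{\tmtwo'}_0}[\subs{\tmfour'_j}{\var}{\ls{\tmtwo'}_j}]_{j=1}^m .
\]

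The key observation — and the one place that needs a little care — is that each $\ls{\tmtwo'}_j$ is a sublist of $[\tmtwo'_1,\hdots,\tmtwo'_n]$, hence all of its elements refine $\tmtwo$, and the label-matching conditions $\varlabel{\var}{\tmthree'} = \tmlabel{\ls{\tmtwo'}_0}$, $\varlabel{\var}{\tmfour'_j} = \tmlabel{\ls{\tmtwo'}_j}$ guarantee that every recursive substitution is defined. The induction hypothesis then applies to $\tmthree' \refines \tmthree$ with the list $\ls{\tmtwo'}_0$ and to each $\tmfour'_j \refines \tmfour$ with the list $\ls{\tmtwo'}_j$, yielding $\subs{\tmthree'}{\var}{\ls{\tmtwo'}_0} \refines \subs{\tmthree}{\var}{\tmtwo}$ and $\subs{\tmfour'_j}{\var}{\ls{\tmtwo'}_j} \refines \subs{\tmfour}{\var}{\tmtwo}$ for each $j$; a single application of \indrulename{r-app} gives $\subs{\tm'}{\var}{[\tmtwo'_1,\hdots,\tmtwo'_n]} \refines (\subs{\tmthree}{\var}{\tmtwo})(\subs{\tmfour}{\var}{\tmtwo}) = \subs{\tm}{\var}{\tmtwo}$. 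I do not expect a genuine obstacle: this is a routine structural induction, and the application case is the only one where the partition and well-definedness bookkeeping demands attention.
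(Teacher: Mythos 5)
Your proof is correct and follows essentially the same route as the paper, which dispatches this lemma with ``straightforward by induction on $\tm$'': since the refinement relation is syntax-directed, your induction on the derivation of $\tm' \refines \tm$ coincides with the paper's structural induction, and your case analysis (including the partition bookkeeping in the application case) is exactly the expected content of that induction.
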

\begin{proof}
Straightforward by induction on $\tm$.
\end{proof}
We prove items~1 and~2 of \rprop{simulation} separately:

\subsubsection{Item~1 of \rprop{simulation}: simulation of $\tobeta$ with $\todist$}

Let $\tm' \refines \tm$ and let $\tm = \conof{(\lam{\var}{\tmfive}) \tmsix} \tobeta \conof{\subs{\tmfive}{\var}{\tmsix}} = \tmtwo$.
We prove that there exists a term $\tmtwo'$ such that
$\tm' \todist^* \tmtwo'$ and $\tmtwo' \refines \tmtwo$
by induction on $\con$.
\begin{enumerate}
\item {\bf Empty context, $\con = \conbase$.}
  Then $\tm = (\lam{\var}{\tmfive}) \tmsix \to {\subs{\tmfive}{\var}{\tmsix}} = \tmtwo$
  and $\tm'$ is of the form $(\lamp{\lab}{\var}{\tmfive'}) [\tmsix_1', \hdots, \tmsix_n']$.
  We conclude by taking $\tmtwo' = \subs{\tmfive'}{\var}{[\tmsix_1', \hdots, \tmsix_n']}$, using \rlem{refinement_substitution}.
\item {\bf Under an abstraction, $\con = \lam{\vartwo}{\con'}$.}
  Immediate by \ih.
\item {\bf Left of an application, $\con = \con' \tmthree$.}
  Immediate by \ih.
\item {\bf Right of an application, $\con = \tmthree \con'$.}
Then $\tm = \tmthree\,\contwoof{(\lam{\var}{\tmfive}) \tmsix}
      \tobeta \tmthree\,\contwoof{\subs{\tmfive}{\var}{\tmsix}}
      = \tmtwo$
and $\tm' = \tmthree' [\tmthree'_1, \hdots, \tmthree'_n]$,
with $\tmthree' \refines \tmthree$
and $\tmthree'_i \refines \contwoof{(\lam{\var}{\tmfive}) \tmsix}$
for all $i=1..n$.
Applying the \ih on $\tmthree'_i$ for each $i=1..n$, we have:
\[
\xymatrix@R=.5cm{
 \contwoof{(\lam{\var}{\tmfive}) \tmsix} \ar@{}[d]|*=0[@]{\rtimes} \ar[r]^{\beta}
    & \contwoof{\subs{\tmfive}{\var}{\tmsix}} \ar@{}[d]|*=0[@]{\rtimes} \\
 \tmthree'_i \ar@{->>}[r]^{\dist} & \tmthree''_i \\
}
\]
So we may conclude as follows:
\[
\xymatrix@R=.5cm{
 \tmthree\,\contwoof{(\lam{\var}{\tmfive}) \tmsix}
 \ar@{}[d]|*=0[@]{\rtimes} \ar[r]^{\beta}
  & \tmthree\,\contwoof{\subs{\tmfive}{\var}{\tmsix}}
  \ar@{}[d]|*=0[@]{\rtimes} \\
 \tmthree' [\tmthree'_1, \hdots, \tmthree'_n] \ar@{->>}[r]^{\dist}
    & \tmthree' [\tmthree''_1, \hdots, \tmthree''_n] \\
}
\]
\end{enumerate}

\subsubsection{Item~2 of \rprop{simulation}: simulation of $\todist$ with $\tobeta$}

  Let $\tm' \refines \tm$ and $\tm' \todist \tmtwo'$.
  We prove that there exist terms $\tmtwo$ and $\tmtwo''$
  such that
  $\tm \tobeta \tmtwo$,
  $\tmtwo' \todist^* \tmtwo''$,
  and $\tmtwo'' \refines \tmtwo$
  by induction on $\tm'$.
  Moreover, $\tmtwo' \todist^* \tmtwo''$ is a multistep.
  \begin{enumerate}
  \item {\bf Variable $\tm' = \var$.} Impossible.
  \item {\bf Abstraction, $\tm' = \lamp{\lab}{\vartwo}{\tmfour'}$}
    Immediate by \ih.
  \item {\bf Application, $\tm' = \tmthree' [\tmfour'_1,\hdots,\tmfour'_n]$}
    Then $\tm = \tmthree \tmfour$
    where $\tmthree' \refines \tmthree$
    and $\tmfour'_i \refines \tmfour$ for all $i=1..n$.
    There are three subcases,
    depending on whether the step $\tm' \todist \tmtwo'$
    takes place at the root,
    inside $\tmthree'$,
    or inside $\tmfour'_i$ for some $i=1..n$,
    \begin{enumerate}
    \item {\bf Reduction at the root.}
      In this case $\tmthree' = \lamp{\lab}{\var}{\tmtwo'}$.
      Then by \rlem{refinement_substitution}:
      \[
      \xymatrix@R=.5cm@C=.25cm{
      (\lamp{\lab}{\var}{\tmtwo'}) [\tmfour'_1, \hdots, \tmfour'_n] \ar[d]^{\dist} & \refines & (\lam{\var}{\tmtwo}) \tmfour
      \ar@{->}[d]^{\beta} \\
      \subs{\tmtwo'}{\var}{[\tmfour'_1, \hdots, \tmfour'_n]} & \refines & \subs{\tmtwo}{\var}{\tmfour} \\
      }
      \]
      \item {\bf Reduction in $\tmthree'$.}
        Immediate by \ih.
      \item {\bf Reduction in $\tmfour'_i$ for some $i=1..n$.}
        We have that $\tmfour'_i \todist \tmfour''_{i}$, so by \ih
        there exist $\tmfive'_i$ and $\tmfive$ such that:
      \[
      \xymatrix@R=.5cm@C=.25cm{
      \tmfour'_i \ar[d]^{\dist} & \refines & \tmfour \ar[dd]^{\beta} \\
      \tmfour''_{i} \ar@{->>}[d]^{\dist} & & \\
      \tmfive'_{i} & \refines & \tmfive \\
      }
      \]
      For every $j=1..n$, $j \neq i$,
      we have that $\tmfour'_j \refines \tmfour$ and
          $\tmfour \tobeta \tmfive$.
      By item~1 of Simulation~(\rprop{simulation}),
      which we have just proved, we have the following diagram:
      \[
      \xymatrix@R=.5cm@C=.25cm{
      \tmfour'_j \ar@{->>}[d]^{\dist} & \refines & \tmfour \ar[d]^{\beta}\\
      \tmfive'_{j} & \refines & \tmfive \\
      }
      \]
      So:
      \[
      \xymatrix@R=.5cm@C=.25cm{
      \tmthree'[\tmfour'_1,\hdots,\tmfour'_{j-1},\tmfour'_j,\tmfour'_{j+1},\hdots,\tmfour'_n] \ar[d]^{\dist} & \refines & \tmthree\,\tmfour \ar[dd]^{\beta} \\
      \tmthree'[\tmfour'_1,\hdots,\tmfour'_{j-1},\tmfour''_j,\tmfour'_{j+1},\hdots,\tmfour'_n] \ar@{->>}[d]^{\dist} \\
      \tmthree'[\tmfive'_1,\hdots,\tmfive'_{j-1},\tmfive'_j,\tmfive'_{j+1},\hdots,\tmfive'_n] & \refines & \tmthree\,\tmfive \\
      }
      \]
    \end{enumerate}
  \end{enumerate}

\subsection{Proof of \rprop{refinements_and_head_normal_forms} -- Typability characterizes head normalization}
\label{appendix_refinements_and_head_normal_forms}

Before proving \rprop{refinements_and_head_normal_forms}, we need a few auxiliary results.

\begin{definition}[Head normal forms]
A $\lambda$-term $\tm$ is a \defn{head normal form} if it is of the form
$\tm = \lam{\var_1}{\hdots\lam{\var_n}{\vartwo\,\tm_1\,\hdots\,\tm_m}}$
where $n \geq 0$ and $m \geq 0$.
Note that $\vartwo$ might be or not be among the variables $\var_1,\hdots,\var_n$.
Similarly, a correct term $\tm \in \termsdist$ of $\lambdadist$ a head normal form
if it is of the form
$\tm = \lamp{\lab_1}{\var_1}{\hdots\lamp{\lab_n}{\var_n}{\vartwo^{\typ}\,\ls{\tm}_1\,\hdots\,\ls{\tm}_m}}$
where $n \geq 0$ and $m \geq 0$.
\end{definition}

\begin{lemma}
\llem{hnf_has_refinement}
If $\tm$ is a head normal form,
there exists $\tm' \in \termsdist$ such that $\tm' \refines \tm$.
Moreover $\tm'$ can be taken to be of the form
$\lamp{\lab_1}{\var_1}{\hdots\lamp{\lab_n}{\var_n}{\vartwo^{[\,] \tolab{\lab'_1} \hdots [\,] \tolab{\lab'_m} \alpha^{\lab''}} [\,] \hdots [\,]}}$.
\end{lemma}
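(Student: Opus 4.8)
The plan is to prove \rlem{hnf_has_refinement} by induction on the structure of the head normal form $\tm$, exhibiting an explicit refinement $\tm'$ of the promised shape. First I would recall that a head normal form is $\tm = \lam{\var_1}{\hdots\lam{\var_n}{\vartwo\,\tm_1\,\hdots\,\tm_m}}$, and observe that it suffices to handle the head structure: the outer abstractions are threaded through by the \indrulename{r-lam} rule, so the real content is at the head application $\vartwo\,\tm_1\,\hdots\,\tm_m$. I would build the refinement from the inside out: start with the variable occurrence $\vartwo$, assign it a type of the form $\typ = [\,] \tolab{\lab'_1} [\,] \tolab{\lab'_2} \hdots [\,] \tolab{\lab'_m} \basetyp^{\lab''}$ for fresh labels $\lab'_1,\hdots,\lab'_m,\lab''$, and then apply $\vartwo^{\typ}$ to $m$ \emph{empty} lists of arguments. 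Each such application uses the \indrulename{\toE} rule with $n = 0$ premises beyond the function, so it is well-typed: $\vartwo^{\typ}[\,]$ has type $[\,] \tolab{\lab'_2} \hdots \tolab{\lab'_m} \basetyp^{\lab''}$, and iterating we reach $\basetyp^{\lab''}$. By the \indrulename{r-app} rule with $n = 0$, the term $\vartwo^{\typ}[\,]\hdots[\,]$ refines $\vartwo\,\tm_1\,\hdots\,\tm_m$ \emph{regardless} of what the $\tm_i$ are, since \indrulename{r-app} with $n=0$ imposes no constraint on the argument. Finally, wrapping in the abstractions via \indrulename{r-lam} with fresh labels $\lab_1,\hdots,\lab_n$ gives $\tm' \refines \tm$ of exactly the stated form.

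The key steps in order are: (1) state the explicit candidate $\tm' = \lamp{\lab_1}{\var_1}{\hdots\lamp{\lab_n}{\var_n}{\vartwo^{\typ} [\,] \hdots [\,]}}$ with all labels chosen distinct and not occurring elsewhere, where $\typ = [\,] \tolab{\lab'_1} \hdots [\,] \tolab{\lab'_m} \basetyp^{\lab''}$; (2) check that $\tm'$ is typable, by exhibiting the derivation tree: the \indrulename{var} axiom gives $\vartwo : [\typ] \vdash \vartwo^{\typ} : \typ$, then $m$ applications of \indrulename{\toE} with no extra premises peel off the empty argument lists, then $n$ applications of \indrulename{\toI} add the abstractions (noting $\vartwo$ may or may not be captured by some $\var_i$, which only changes whether the corresponding \indrulename{\toI} premise binds $\var_i : [\,]$ or $\var_i : [\typ]$, both legal); (3) check correctness of $\tm'$ as per \rdef{sequentiality_and_correctness}: uniquely labeled lambdas holds because we chose the $\lab_i$ distinct; sequential contexts and sequential types hold because every multiset appearing anywhere in the typing is either a singleton $[\typ]$ or the empty multiset $[\,]$, and both are trivially sequential; (4) check $\tm' \refines \tm$ by building the refinement derivation with \indrulename{r-var}, then $m$ uses of \indrulename{r-app} with $n=0$, then $n$ uses of \indrulename{r-lam}.

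I do not expect a serious obstacle here; the lemma is essentially a direct construction. The one point requiring mild care is the bookkeeping of labels: we must ensure all lambda labels $\lab_1,\hdots,\lab_n$ are pairwise distinct (needed for \textbf{uniquely labeled lambdas}), and that the external labels $\lab'_1,\hdots,\lab'_m,\lab''$ used inside $\typ$ are chosen so that no sequentiality constraint is violated — but since every multiset in play is empty or a singleton, sequentiality is automatic and we have total freedom in choosing these labels. A second minor subtlety is the case where the head variable $\vartwo$ coincides with one of the bound variables $\var_i$: then the \indrulename{\toI} step for that $\var_i$ consumes the hypothesis $\var_i : [\typ]$ from the context rather than binding $\var_i : [\,]$, but in both situations the rule applies and the resulting type is unchanged, so the induction goes through uniformly. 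Hence the proof reduces to writing out the explicit derivation trees and invoking the three correctness conditions, each of which is immediate for this term.
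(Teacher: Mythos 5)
Your construction is exactly the one in the paper: the same explicit term $\lamp{\lab_1}{\var_1}{\hdots\lamp{\lab_n}{\var_n}{\vartwo^{[\,] \tolab{\lab'_1} \hdots [\,] \tolab{\lab'_m} \alpha^{\lab''}} [\,] \hdots [\,]}}$ with $n+m+1$ pairwise distinct fresh labels, after which the paper simply states that correctness and refinement are observed to hold, while you spell out the (routine) verification. The only nitpick is that when $\vartwo$ coincides with some $\var_i$ the overall type of $\tm'$ does change (that arrow gets source $[\typ]$ rather than $[\,]$), but since the lemma only constrains the shape of the term, this does not affect the argument.
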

\begin{proof}
Let $\tm = \lam{\var_1}{\hdots\lam{\var_n}{\vartwo\tm_1\hdots\tm_m}}$.
Let $\set{\lab_1,\hdots,\lab_n,\lab'_1,\hdots,\lab'_m,\lab''}$
be a set of $n+m+1$ pairwise distinct labels and let $\alpha$ be a base type.
Take $\tm' := \lamp{\lab_1}{\var_1}{\hdots\lamp{\lab_n}{\var_n}{\vartwo^{[\,] \tolab{\lab'_1} \hdots [\,] \tolab{\lab'_m} \alpha^{\lab''}} \underbrace{[\,] \hdots [\,]}_{\text{$m$ times}}}}$.
Observe that $\tm'$ is correct and $\tm' \refines \tm$.
\end{proof}

\begin{lemma}[$\todist$-normal forms refine head normal forms]
\llem{normal_forms_refine_hnfs}
Let $\tm' \in \termsdist$ be a $\todist$-normal form
and $\tm' \refines \tm$. Then $\tm$ is a head normal form.
\end{lemma}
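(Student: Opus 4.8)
The plan is to argue by structural induction on $\tm'$, exploiting that the refinement relation is structure-preserving: $\tm'$ is a variable, an abstraction, or an application precisely when $\tm$ is. Before the induction I would record two simple facts about a $\todist$-normal form $\tm'$: every subterm of $\tm'$ is again a $\todist$-normal form (since $\todist$ is closed under arbitrary contexts, a redex occurring in a subterm would lift to a redex in $\tm'$) and is correct, hence lies in $\termsdist$, by \rlem{correct_subterms}. This makes the inductive hypothesis available for the immediate subterms of $\tm'$.

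The variable case is immediate: if $\tm' = \var^\typ$ then $\tm = \var$, a head normal form (take $n = m = 0$). For an abstraction $\tm' = \lamp{\lab}{\var}{\tmfour'}$, we have $\tm = \lam{\var}{\tmfour}$ with $\tmfour' \refines \tmfour$ and $\tmfour'$ a $\todist$-normal form; by the inductive hypothesis $\tmfour$ is a head normal form, and prefixing a further $\lambda$ keeps it in that shape.

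The application case $\tm' = \tmthree'[\tmfour'_i]_{i=1}^{n}$, with $\tm = \tmthree\,\tmfour$, $\tmthree' \refines \tmthree$, and $\tmfour'_i \refines \tmfour$, is where the only care is needed. Since $\tm'$ is a $\todist$-normal form, $\tmthree'$ is a $\todist$-normal form and, crucially, cannot be an abstraction --- otherwise $\tm'$ would itself be a redex. Because refinement preserves the outermost constructor, $\tmthree$ is not an abstraction either, while the inductive hypothesis gives that $\tmthree$ is a head normal form. A head normal form that is not an abstraction necessarily has the form $\vartwo\,\tmtwo_1\cdots\tmtwo_k$ with $k \geq 0$, so $\tm = \tmthree\,\tmfour = \vartwo\,\tmtwo_1\cdots\tmtwo_k\,\tmfour$ is a head normal form, as required.

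I expect this application case to be the only subtle point: the decisive move is to combine ``$\tmthree$ is not an abstraction'' (which comes from normality of $\tm'$ together with the structural matching enforced by $\refines$) with ``$\tmthree$ is a head normal form'' (the inductive hypothesis) in order to pin down the shape of $\tmthree$, and hence of $\tm$.
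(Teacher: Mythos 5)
Your proof is correct and follows essentially the same route as the paper's: the paper first shows by induction that every $\todist$-normal form has the shape $\lamp{\lab_1}{\var_1}{\hdots\lamp{\lab_n}{\var_n}{\vartwo^\typ\,\ls{\tmtwo}_1\hdots\ls{\tmtwo}_m}}$ (the key point being, as in your application case, that the head of an application cannot be an abstraction) and then reads off the shape of $\tm$ from the structure-preservation of $\refines$. Your version merely merges these two stages into a single induction on $\tm'$; the mathematical content is the same.
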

\begin{proof}
Observe, by induction on $\tm'$
that if $\tm' \in \termsdist$ is a $\todist$-normal form, it must be
of the form
$
  \lamp{\lab_1}{\var_1}{\hdots\lamp{\lab_n}{\var_n} \vartwo^\typ \ls{\tmtwo}_1 \hdots \ls{\tmtwo}_m}
$,
as it cannot have a subterm of the form $(\lamp{\lab'}{\varthree}{\tmthree})\ls{\tmfour}$.
Then $\lamp{\lab_1}{\var_1}{\hdots\lamp{\lab_n}{\var_n} \vartwo^\typ \ls{\tmtwo}_1 \hdots \ls{\tmtwo}_m} \refines \tm$.
So $\tm$ is of the form $\lam{\var_1}{\hdots\lam{\var_n}{\vartwo \tmtwo_1 \hdots \tmtwo_n}}$,
that is, $\tm$ is a head normal form.
\end{proof}

The following lemma is an adaptation of Subject~Expansion in~\cite{bucciarelli2017non}.
\begin{lemma}[Subject Expansion]
\llem{subject_expansion}
If $\tctx \vdash \conof{\subs{\tm}{\var}{\ls{\tmtwo}}} : \typ$ is derivable,
then $\tctx \vdash \conof{(\lamp{\lab}{\var}{\tm})\ls{\tmtwo}} : \typ$ is derivable.
\end{lemma}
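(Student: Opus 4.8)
The plan is to mirror the proof of \rlem{subject_reduction}, of which this lemma is precisely the converse, by induction on the context $\con$. When $\con$ is non-empty --- an abstraction $\lamp{\labtwo}{\vartwo}{\con'}$, a left-application context $\con'\ls{\tmthree}$, or a right-application context $\tmthree[\ls{\tmfour}_1,\con',\ls{\tmfour}_2]$ --- I would invert the typing rule (\indrulename{\toI} or \indrulename{\toE}) applied at the root of $\conof{\subs{\tm}{\var}{\ls{\tmtwo}}}$ to obtain a typing judgment for the immediate subterm $\con'\of{\subs{\tm}{\var}{\ls{\tmtwo}}}$, apply the induction hypothesis to get a typing of $\con'\of{(\lamp{\lab}{\var}{\tm})\ls{\tmtwo}}$ under the same context and the same type, and re-apply the rule. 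So all the content lies in the base case $\con = \conbase$.

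For the base case I would first establish a \emph{splitting} statement: if $\subs{\tm}{\var}{[\tmtwo_i]_{i=1}^n}$ is defined and $\tctx \vdash \subs{\tm}{\var}{[\tmtwo_i]_{i=1}^n} : \typ$ is derivable, then there is a partition $\tctx = \tctx_0 +_{i=1}^n \tctx_i$ with $\tctx_0 \oplus (\var : \varlabel{\var}{\tm}) \vdash \tm : \typ$ derivable and $\tctx_i \vdash \tmtwo_i : \tmlabel{\tmtwo_i}$ derivable for each $i$. Granting this, the base case follows immediately: apply \indrulename{\toI} to obtain $\tctx_0 \vdash \lamp{\lab}{\var}{\tm} : \varlabel{\var}{\tm} \tolab{\lab} \typ$, use that $\varlabel{\var}{\tm} = \tmlabel{[\tmtwo_i]_{i=1}^n}$ (which holds because the substitution is defined), and apply \indrulename{\toE} with the premises $\tctx_i \vdash \tmtwo_i : \tmlabel{\tmtwo_i}$ to conclude $\tctx \vdash (\lamp{\lab}{\var}{\tm})[\tmtwo_i]_{i=1}^n : \typ$.

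The splitting statement is proved by induction on $\tm$, case-split along the four clauses of \rdef{substitution}. For $\tm = \var^\typ$ one has $n = 1$ and $\subs{\tm}{\var}{[\tmtwo_1]} = \tmtwo_1$, so take $\tctx_0$ empty and $\tctx_1 = \tctx$. For $\tm = \vartwo^\typ$ with $\vartwo \neq \var$ one has $n = 0$, so take $\tctx_0 = \tctx$, using that $\var : [\,]$ is vacuous. For $\tm = \lamp{\labtwo}{\vartwo}{\tmthree}$, invert \indrulename{\toI}, apply the inner induction hypothesis to $\tmthree$, and use Relevance (\rlem{relevance}) together with $\vartwo \notin \fv{[\tmtwo_i]}$ to conclude that $\vartwo$'s multiset sits entirely in $\tctx_0$ before re-applying \indrulename{\toI}. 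The substantive case is $\tm = \tmthree[\tmfour_j]_{j=1}^m$, where $\subs{\tm}{\var}{[\tmtwo_i]} = \subs{\tmthree}{\var}{\ls{\tmtwo}_0}[\subs{\tmfour_j}{\var}{\ls{\tmtwo}_j}]_{j=1}^m$ for the partition $(\ls{\tmtwo}_0,\dots,\ls{\tmtwo}_m)$ provided by the definition: invert \indrulename{\toE} to split $\tctx$ and the multiset of argument types, apply the induction hypothesis to $\tmthree$ and to each $\tmfour_j$, recombine the resulting body judgments by a single \indrulename{\toE} step to get $(\tctx_0^0 +_{j=1}^m \tctx_j^0) \oplus (\var : \varlabel{\var}{\tmthree} +_{j=1}^m \varlabel{\var}{\tmfour_j}) \vdash \tmthree[\tmfour_j]_{j=1}^m : \typ$, and finally note that $\varlabel{\var}{\tmthree} +_{j=1}^m \varlabel{\var}{\tmfour_j} = \varlabel{\var}{\tmthree[\tmfour_j]_{j=1}^m}$ and that the argument-typing contexts sum back to $\tctx$.

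I expect the main obstacle to be exactly this application case: reconciling the partition of the substitution list, the partition of the typing context forced by inverting \indrulename{\toE}, and the multisets $\varlabel{\var}{\cdot}$. All three line up because of the partition property built into \rdef{substitution}, namely $\varlabel{\var}{\tmthree_j} = \tmlabel{\ls{\tmtwo}_j}$, and because unique typing (the remark following the definition of the syntax) guarantees that the multiset any derivation assigns to $\var$ is precisely $\varlabel{\var}{\tm}$; modulo that bookkeeping, every step is a plain inversion or reapplication of a typing rule. No correctness reasoning is required, as the statement concerns derivable typing judgments only.
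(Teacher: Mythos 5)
Your proposal is correct and follows essentially the same route as the paper, whose own proof is only a sketch: induction on $\con$ with the base case by an inner induction on $\tm$, mirroring the ``substitution preserves typing'' half of Subject Reduction. Your explicitly stated splitting statement is precisely the inverse reading of that argument, and the bookkeeping you identify (aligning the partition of $\ls{\tmtwo}$, the context split forced by inverting \indrulename{\toE}, and the multisets $\varlabel{\var}{\cdot}$, with unique typing and Relevance handling the $\oplus$ side conditions) is exactly what makes it go through.
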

\begin{proof}
The proof proceeds by induction on $\con$,
and the base case by induction on $\tm$, 
similar to the proof that substitution preserves typing in
the proof of Subject~Reduction (\rsec{substitution_preserves_typing}).
\end{proof}

Correctness is not necessarily preserved by $\todist$-expansion.
We need a stronger invariant, \defn{strong sequentiality},
that will be shown to be preserved by expansion
under appropiate conditions:

\begin{definition}[Subterms and free subterms]
The set of \defn{subterms} $\subterms{\tm}$ of a term $\tm$ is formally defined as follows:
\[
  \begin{array}{rcl}
    \subterms{\var^\typ} & \eqdef & \set{\var^\typ} \\
    \subterms{\lamp{\lab}{\var}{\tm}} & \eqdef & \set{\lamp{\lab}{\var}{\tm}} \cup \subterms{\tm} \\
    \subterms{\tm[\tmtwo_i]_{i=1}^n} & \eqdef & \set{\tm[\tmtwo_i]_{i=1}^n} \cup \subterms{\tm} \cup \cup_{i=1}^{n} \subterms{\tmtwo_i}
  \end{array}
\]
The set of \defn{free subterms} $\fsubterms{\tm}$ of a term $\tm$ is defined similarly,
except for the abstraction case, which requires that the subterm in question
do not include occurrences of bound variables:
\[
  \begin{array}{rcl}
    \fsubterms{\var^\typ} & \eqdef & \set{\var^\typ} \\
    \fsubterms{\lamp{\lab}{\var}{\tm}} & \eqdef & \set{\lamp{\lab}{\var}{\tm}} \cup \set{\tmthree \in \fsubterms{\tm} \ST \var \not\in \fv{\tmthree}} \\
    \fsubterms{\tm[\tmtwo_i]_{i=1}^n} & \eqdef & \set{\tm[\tmtwo_i]_{i=1}^n} \cup \fsubterms{\tm} \cup \cup_{i=1}^{n} \fsubterms{\tmtwo_i}
  \end{array}
\]
\end{definition}

\begin{definition}[Strong sequentiality]
A term $\tm$ is \defn{strongly sequential}
if it is correct and, moreover,
for every subterm $\tmtwo \in \subterms{\tm}$
and any two free subterms $\tmtwo_1,\tmtwo_2 \in \fsubterms{\tmtwo}$ lying at disjoint positions of $\tmtwo$,
the types of $\tmtwo_1$ and $\tmtwo_2$ have different external labels.
\end{definition}

\begin{example}
The following examples illustrate the notion of strong sequentiality:
\begin{enumerate}
\item The term $\tm = (\lamp{1}{\var}{\vartwo^{\alpha^2}})[\,]$ is strongly sequential.
      Note that $\tm$ and $\vartwo^{\alpha^2}$ have the same type, namely $\alpha^2$,
      but they do not occur at {\em disjoint} positions.
\item The term $\tm = (\lamp{1}{\var}{\var^{\alpha^2}})[\vartwo^{\alpha^2}]$ is strongly sequential.
      Note that $\var^{\alpha^2}$ and $\vartwo^{\alpha^2}$ both have type $\alpha^2$,
      but they are not simultaneously free subterms of the same subterm of $\tm$.
\item The term $\tm = \lamp{1}{\vartwo}{ \var^{[\alpha^2] \tolab{3} [\alpha^2] \tolab{4} \beta^5} [\vartwo^{\alpha^2}] [\varthree^{\alpha^2}]}$
      is not strongly sequential,
      since $\vartwo^{\alpha^2}$ and $\varthree^{\alpha^2}$
      have the same type and they are both free subterms of
      $\var^{[\alpha^2] \tolab{3} [\alpha^2] \tolab{4} \beta^5} [\vartwo^{\alpha^2}] [\varthree^{\alpha^2}] \in \subterms{\tm}$.
\end{enumerate}
\end{example}

\begin{lemma}[Refinement of a substitution: decomposition]
\llem{refinement_substitution_inverse}
If $\tmthree' \refines \subs{\tm}{\var}{\tmtwo}$
and $\tmthree'$ is strongly sequential,
then $\tmthree'$ is of the form $\subs{\tm'}{\var}{[\tmtwo'_i]_{i=1}^n}$.
Moreover, given a fresh label~$\lab$,
the term $(\lamp{\lab}{\var}{\tm'})[\tmtwo'_i]_{i=1}^{n}$ is strongly sequential,
$\tm' \refines \tm$ and $\tmtwo'_i \refines \tmtwo$ for all $i=1..n$.
\end{lemma}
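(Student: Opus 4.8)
The plan is to argue by induction on the structure of the $\lambda$-term $\tm$, reconstructing $\tm'$ and the list $[\tmtwo'_i]_{i=1}^n$ by following, in lockstep, the recursive definition of ordinary $\lambda$-substitution $\subs{\tm}{\var}{\tmtwo}$ and of $\lambdadist$-substitution (\rdef{substitution}). I will use two auxiliary facts, each proved by a routine induction: (i) refinement preserves free variables, i.e.\ $\tm' \refines \tm$ implies $\fv{\tm'} = \fv{\tm}$; and (ii) every subterm of a strongly sequential term is strongly sequential (the analogue of \rlem{correct_subterms}). Throughout, ``fresh'' is taken to mean that the label in question occurs in $\tmthree'$ neither as the label of a lambda nor as the external label of a type of one of its subterms.

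Then I would go through the four cases. \emph{If $\tm = \var$}: here $\subs{\tm}{\var}{\tmtwo} = \tmtwo$, so $\tmthree' \refines \tmtwo$; letting $\typ$ be the unique type of $\tmthree'$, set $n := 1$, $\tm' := \var^{\typ}$, $\tmtwo'_1 := \tmthree'$, and note $\subs{\var^{\typ}}{\var}{[\tmthree']} = \tmthree'$, $\var^{\typ} \refines \var$, $\tmtwo'_1 \refines \tmtwo$. \emph{If $\tm = \vartwo$ with $\vartwo \neq \var$}: then $\subs{\tm}{\var}{\tmtwo} = \vartwo$, so $\tmthree' = \vartwo^{\typ}$; set $n := 0$, $\tm' := \vartwo^{\typ}$. \emph{If $\tm = \lam{\vartwo}{\tmfour}$} (so $\vartwo \neq \var$ and $\vartwo \notin \fv{\tmtwo}$ by Barendregt's convention): inverting \indrulename{r-lam} gives $\tmthree' = \lamp{\lab'}{\vartwo}{\tmfour'}$ with $\tmfour' \refines \subs{\tmfour}{\var}{\tmtwo}$ and $\tmfour'$ strongly sequential by (ii); by \ih, $\tmfour' = \subs{\tmfive'}{\var}{[\tmtwo'_i]_{i=1}^n}$ with $\tmfive' \refines \tmfour$ and $\tmtwo'_i \refines \tmtwo$; set $\tm' := \lamp{\lab'}{\vartwo}{\tmfive'}$, so $\tm' \refines \tm$ and $\subs{\tm'}{\var}{[\tmtwo'_i]} = \lamp{\lab'}{\vartwo}{\subs{\tmfive'}{\var}{[\tmtwo'_i]}} = \tmthree'$, which is legal because $\var \neq \vartwo$ and $\vartwo \notin \fv{[\tmtwo'_i]} = \fv{\tmtwo}$ by (i). \emph{If $\tm = \tmfour\,\tmsix$}: then $\subs{\tm}{\var}{\tmtwo} = (\subs{\tmfour}{\var}{\tmtwo})\,(\subs{\tmsix}{\var}{\tmtwo})$, so inverting \indrulename{r-app} gives $\tmthree' = \tmthree'_0[\tmthree'_j]_{j=1}^m$ with $\tmthree'_0 \refines \subs{\tmfour}{\var}{\tmtwo}$ and $\tmthree'_j \refines \subs{\tmsix}{\var}{\tmtwo}$, all strongly sequential by (ii); applying \ih to $\tmthree'_0$ and to each $\tmthree'_j$ yields $\tmthree'_0 = \subs{\tmfour'}{\var}{\ls{\tmtwo}_0}$ and $\tmthree'_j = \subs{\tmsix'_j}{\var}{\ls{\tmtwo}_j}$ with $\tmfour' \refines \tmfour$, $\tmsix'_j \refines \tmsix$; set $\tm' := \tmfour'[\tmsix'_j]_{j=1}^m$ and let $[\tmtwo'_i]_{i=1}^n$ be the concatenation $\ls{\tmtwo}_0 + \dots + \ls{\tmtwo}_m$, so that the application clause of \rdef{substitution} gives $\subs{\tm'}{\var}{[\tmtwo'_i]} = \tmthree'$ and $\tm' \refines \tm$ by \indrulename{r-app}.

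The heart of the argument --- and the step I expect to be the main obstacle --- is to check, in each case, that the reconstructed redex $(\lamp{\lab}{\var}{\tm'})[\tmtwo'_i]_{i=1}^n$ is strongly sequential. This decomposes into: (a) \emph{typability}, in particular that $\varlabel{\var}{\tm'} = \tmlabel{[\tmtwo'_i]}$ (immediate from the constructions) and that $\varlabel{\var}{\tm'}$ is sequential; (b) the three correctness conditions of \rdef{sequentiality_and_correctness}; and (c) the strong-sequentiality condition on disjoint free subterms. For (a)'s sequentiality: the free occurrences of $\var$ in $\tm'$ are pairwise disjoint free subterms of $\tm'$, and in $\tmthree' = \subs{\tm'}{\var}{[\tmtwo'_i]}$ each of them is replaced by some $\tmtwo'_i$ of the \emph{same} type; capture-avoidance makes these $\tmtwo'_i$ pairwise disjoint free subterms of $\tmthree'$, so strong sequentiality of $\tmthree'$ forces their types to have pairwise distinct external labels, as required. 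For (b), unique labelling of lambdas follows from freshness of $\lab$ and \rlem{labels_over_lambdas_substitution} (the labels decorating lambdas of $\tmthree'$ are precisely those of $\tm'$ together with those of the $\tmtwo'_i$, all distinct), while the sequential-context and sequential-type conditions are inherited from the corresponding conditions for $\tmthree'$ and for the strongly sequential redexes supplied by the induction hypothesis.

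Condition (c) is where the real bookkeeping lies, and I would handle it via a position-tracking \emph{correspondence}: every free subterm of $(\lamp{\lab}{\var}{\tm'})[\tmtwo'_i]$ that is neither the whole term nor the newly introduced lambda $\lamp{\lab}{\var}{\tm'}$ corresponds, with the same type and in a disjointness-preserving way, to a free subterm of $\tmthree'$ --- a $\var$-free free subterm of $\tm'$ survives unchanged under the substitution, and each term of the argument list $[\tmtwo'_i]$ occurs as a subterm of $\tmthree'$ at a position disjoint from the copies of the $\var$-free part. Hence disjoint such pairs have distinct external labels by strong sequentiality of $\tmthree'$, and any disjoint pair involving $\lamp{\lab}{\var}{\tm'}$ is settled because its type has external label $\lab$, which is fresh and therefore differs from the external label of every subterm of $\tmthree'$. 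In the abstraction and application cases one also reuses, for the subterms living entirely inside $\tmfive'$ or inside $\tmfour'$ and the $\tmsix'_j$, the strong sequentiality of the smaller redexes $(\lamp{\lab}{\var}{\tmfive'})[\tmtwo'_i]$, $(\lamp{\lab}{\var}{\tmfour'})[\ls{\tmtwo}_0]$ and $(\lamp{\lab_j}{\var}{\tmsix'_j})[\ls{\tmtwo}_j]$ obtained from the induction hypothesis (choosing the auxiliary labels $\lab, \lab_j$ distinct and fresh). Everything else in the four cases is routine.
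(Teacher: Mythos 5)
Your proposal is correct and follows essentially the same route as the paper's proof: induction on $\tm$ with the same four cases, the same reconstruction of $\tm'$ and the argument list, and the same crucial use of strong sequentiality of $\tmthree'$ (via pairwise-disjoint free subterms) to show that $\tmlabel{[\tmtwo'_i]_{i=1}^n}$ is sequential so that the substitution recomposes correctly. The only cosmetic difference is that the paper explicitly invokes Subject Expansion (\rlem{subject_expansion}) to obtain typability of the reconstructed redex, where you label this step ``immediate from the constructions''.
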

\begin{proof}
By induction on $\tm$.
\begin{enumerate}
\item {\bf Variable (same), $\tm = \var$.}
  Then $\tmthree' \refines \tmtwo$.
  Let $\typ$ be the type of $\tmthree'$.
  Taking $\tm' := \var^\typ \refines \var$
  we have that $(\lamp{\lab}{\var}{\var^\typ})[\tmthree']$ is strongly sequential.
  Regarding strong sequentiality,
  observe that $\var^\typ$ and $\tmthree'$ have the same type, but they are not
  simultaneously the free subterms of any subterm of $(\lamp{\lab}{\var}{\var^\typ})[\tmthree']$.
\item {\bf Variable (different), $\tm = \vartwo \neq \var$.}
  Then $\tmthree' \refines \vartwo$, so $\tmthree'$ is of the form $\vartwo^\typ$.
  Taking $\tm' := \vartwo^\typ$ we have that
  $(\lamp{\lab}{\var}{\vartwo^\typ})[\,]$ is strongly sequential.
\item {\bf Abstraction, $\tm = \lam{\vartwo}{\tmfour}$.}
  Then $\tmthree' \refines \lam{\vartwo}{\subs{\tmfour}{\var}{\tmtwo}}$ 
  so $\tmthree'$ is of the form $\lamp{\lab'}{\vartwo}{\tmthree''}$
  where $\tmthree'' \refines \subs{\tmfour}{\var}{\tmtwo}$.
  By \ih, $\tmthree''$ is of the form $\subs{\tmfour'}{\var}{[\tmtwo'_i]_{i=1}^{n}}$
  where $(\lamp{\lab}{\var}{\tmfour'})[\tmtwo'_i]_{i=1}^{n}$ is strongly sequential,
  $\tmfour' \refines \tmfour$ and $\tmtwo'_i \refines \tmtwo$ for all $i=1..n$.
  Taking $\tm' := \lamp{\lab'}{\vartwo}{\tmfour'}$,
  we have that
  $\tm' = \lamp{\lab'}{\vartwo}{\tmfour'} \refines \lam{\vartwo}{\tmfour} = \tm$.
  Moreover, the term
  $(\lamp{\lab}{\var}{\tm'})[\tmtwo_i]_{i=1}^{n}
   = (\lamp{\lab}{\var}{\lamp{\labtwo}{\vartwo}{\tmfour'}})[\tmtwo_i]_{i=1}^{n}$
  is strongly sequential.
  Typability is a consequence of Subject Expansion~(\rlem{subject_expansion}).
  The remaining properties are:
  \begin{enumerate}
  \item {\em Uniquely labeled lambdas.}
    The multiset of labels decorating the lambdas of
    $(\lamp{\lab}{\var}{\lamp{\labtwo}{\vartwo}{\tmfour'}})[\tmtwo_i]_{i=1}^{n}$
    is given by
    $\Lambda((\lamp{\lab}{\var}{\lamp{\labtwo}{\vartwo}{\tmfour'}})[\tmtwo_i]_{i=1}^{n})
     = [\lab,\labtwo] + \Lambda(\tmfour') +_{i=1}^{n} \Lambda(\tmtwo_i)$.
    It suffices to check that this multiset has no repeats.
    The label $\lab$ is assumed to be fresh, so it occurs only once.
    By \ih, $\tmthree'' = \subs{\tmfour'}{\var}{[\tmtwo'_i]_{i=1}^{n}}$,
    so using \rlem{labels_over_lambdas_substitution} we have
    $\Lambda(\tmthree') =
     \Lambda(\lamp{\lab'}{\vartwo}{\tmthree'}) =
     [\lab'] + \Lambda(\tmthree'') =
     [\lab'] + \Lambda(\subs{\tmfour'}{\var}{[\tmtwo'_i]_{i=1}^{n}}) =
     [\lab'] + \Lambda(\tmfour') +_{i=1}^n \Lambda(\tmtwo'_i)$.
    Moreover, the term $\tmthree' = \lamp{\lab'}{\vartwo}{\tmthree''}$ is correct,
    so this multiset has no repeats.
  \item {\em Sequential contexts.}
    Let $\tmsix$ be a subterm of $(\lamp{\lab}{\var}{\lamp{\labtwo}{\vartwo}{\tmfour'}})[\tmtwo'_i]_{i=1}^{n}$.
    If $\tmsix$ is a subterm of $\tmfour'$ or a subterm of $\tmtwo'_i$ for some $i=1..n$ we conclude by \ih
    since $(\lamp{\lab}{\var}{\tmfour'})[\tmtwo'_i]_{i=1}^{n}$ is known to be correct.
    Moreover, if $\tctx \oplus \var : \mtyp \oplus \vartwo : \mtyptwo$ is the typing context for $\tmfour'$,
    the typing contexts of
    $\lamp{\labtwo}{\vartwo}{\tmfour'}$ and $\lamp{\lab}{\var}{\lamp{\labtwo}{\vartwo}{\tmfour'}}$
    are respectively $\tctx \oplus \var : \mtyp$ and $\tctx$, which are also sequential.
    Finally, if $\tctxtwo_i$ is the typing context for $\tmtwo'_i$, for each $i=1..n$,
    the typing context for $(\lamp{\lab}{\var}{\tmfour'})[\tmtwo'_i]_{i=1}^{n}$ 
    is of the form $\tctx +_{i=1}^n \tctxtwo_i + \vartwo : \mtyptwo$, and it is sequential by \ih.
    Hence the typing context for the whole term is $\tctx +_{i=1}^n \tctxtwo_i$, and it is sequential.
  \item {\em Sequential types.}
    \label{refinement_substitution_inverse__case_lambda_sequential_types}
    Let $\tmsix$ be a subterm $(\lamp{\lab}{\var}{\lamp{\labtwo}{\vartwo}{\tmfour'}})[\tmtwo'_i]_{i=1}^{n}$,
    and let $\mtypthree \tolab{\lab''} \typthree$ be a type that occurs in the typing context
    or the type of $\tmsix$.
    As in the previous case, we have that
    $\tctx \oplus \var : [\typtwo_i]_{i=1}^n \oplus \vartwo : \mtyptwo \vdash \tmfour' : \typ$ is derivable
    and $\tctxtwo_i \vdash \tmtwo'_i : \typtwo_i$ is derivable for all $i=1..n$.
    Moreover, they are correct by \ih, so if $\tmsix$ is a subterm of $\tmfour'$ or a subterm of some $\tmtwo'_i$,
    we are done.
    There are three cases left for $\tmsix$:
    \begin{enumerate}
    \item {\bf Case $\tmsix = \lamp{\labtwo}{\vartwo}{\tmfour'}$.}
      The typing context is $\tctx \oplus \var : [\typtwo_i]_{i=1}^n$ and the type $\mtyptwo \tolab{\labtwo} \typ$.
    \item {\bf Case $\tmsix = \lamp{\lab}{\var}{\lamp{\labtwo}{\vartwo}{\tmfour'}}$.}
      The typing context is $\tctx$ and the type $[\typtwo_i]_{i=1}^n \tolab{\lab} \mtyptwo \tolab{\labtwo} \typ$.
    \item {\bf Case $\tmsix = (\lamp{\lab}{\var}{\lamp{\labtwo}{\vartwo}{\tmfour'}})[\tmtwo_i]_{i=1}^{n}$.}
      The typing context is $\tctx$ and the type $\mtyptwo \tolab{\labtwo} \typ$.
    \end{enumerate}
    In all three cases, if $\mtypthree \tolab{\lab''} \typthree$ occurs in the typing context
    or the type of $\tmsix$, then $\mtypthree$ can be shown to be sequential using the \ih.
  \item {\em Strong sequentiality.}
    \label{refinement_substitution_inverse__case_lambda_strong_sequentiality}
    Let $\tmsix \in \subterms{(\lamp{\lab}{\var}{\lamp{\labtwo}{\vartwo}{\tmfour'}})[\tmtwo_i]_{i=1}^{n}}$
    be a subterm,
    and let $\tmsix_1,\tmsix_2 \in \fsubterms{\tmsix}$ be free subterms lying at disjoint positions of $\tmsix$.
    We argue that the types of $\tmsix_1$ and $\tmsix_2$ have different external labels.
    Consider the following five possibilities for $\tmsix_1$:
    \begin{enumerate}
    \item {\bf Case $\tmsix_1 = (\lamp{\lab}{\var}{\lamp{\labtwo}{\vartwo}{\tmfour'}})[\tmtwo_i]_{i=1}^{n}$.}
      Impossible since $\tmsix_2$ must be at a disjoint position.
    \item {\bf Case $\tmsix_1 = \lamp{\lab}{\var}{\lamp{\labtwo}{\vartwo}{\tmfour'}}$.}
      Then the external label of the type of $\tmsix_1$ is the label $\lab$, which is fresh,
      so it cannot coincide with the type of any other subterm.
    \item {\bf Case $\tmsix_1 = \lamp{\labtwo}{\vartwo}{\tmfour'}$.}
      Then $\tmsix_2$ must be a subterm of $\tmtwo_i$ for some $i=1..n$.
      Note that $\tmsix$ must be the whole term,
      and $n > 0$, so there is at least one free occurrence of $\var$ in $\lamp{\labtwo}{\vartwo}{\tmfour'}$.
      This means that $\tmsix_1 \not\in \fsubterms{\tmsix}$, so this case is impossible.
    \item {\bf Case $\tmsix_1$ is a subterm of $\tmfour'$.}
      If $\tmsix_2$ is also a subterm of $\tmfour'$, we conclude since by \ih
      $(\lamp{\lab}{\var}{\tmfour'})[\tmtwo'_i]_{i=1}^{n}$ is strongly sequential.
      Otherwise, $\tmsix_2$ is a subterm of $\tmtwo_i$ for some $i=1..n$,
      and we also conclude by \ih.
    \item {\bf Case $\tmsix_1$ is a subterm of $\tmtwo_i$ for some $i=1..n$.}
      If $\tmsix_2$ is a subterm of $\tmtwo_j$ for some $j=1..n$,
      we conclude since by \ih
      $(\lamp{\lab}{\var}{\tmfour'})[\tmtwo'_i]_{i=1}^{n}$ is strongly sequential.
      If $\tmsix_2$ is any other subterm, note that the symmetric case has already been considered
      in one of the previous cases.
    \end{enumerate}
  \end{enumerate}
\item {\bf Application, $\tm = \tmfour\tmfive$.}
  Then $\tmthree' \refines \subs{(\tmfour\tmfive)}{\var}{\tmtwo}$,
  so it is of the form $\tmthree'_0[\tmthree'_j]_{j=1}^m$
  where $\tmthree'_0 \refines \subs{\tmfour}{\var}{\tmtwo}$
  and $\tmthree'_j \refines \subs{\tmfive}{\var}{\tmtwo}$ for all $j=1..m$.
  By \ih we have that $\tmthree'_0$ is of the form
  $\subs{\tmfour'}{\var}{\ls{\tmtwo}_0}$
  and for all $j=1..m$ the term $\tmthree'_j$ is of the form
  $\subs{\tmfive'_j}{\var}{\ls{\tmtwo}_j}$,
  where $\tmfour' \refines \tmfour$
  and $\tmfive'_j \refines \tmfive$ for all $j=1..m$.
  Moreover,
  the length of the list $\ls{\tmtwo}_j$ is $k_j$
  and
  $\ls{\tmtwo}_j = [\tmtwo^{(j)}_i]_{i=1}^{k_j}$ for all $j=0..m$,
  and we have that $\tmtwo^{(j)}_i \refines \tmtwo$ for all $j=0..m,i=1..k_j$.
  By \ih we also know that
  $(\lamp{\lab}{\var}{\tmfour'})\ls{\tmtwo}_0$ is strongly sequential
  and $(\lamp{\lab}{\var}{\tmfive'_j})\ls{\tmtwo}_j$ is strongly sequential for all $j=1..m$.

  Let $\tm' := \tmfour'[\tmfive'_j]_{j=1}^{m}$,
  let $n := \sum_{j=0}^m k_j$,
  and let $[\tmtwo'_1,\hdots,\tmtwo'_n] := \sum_{j=0}^m \ls{\tmtwo}_j$.
  Note that $\tm' = \tmfour'[\tmfive'_j]_{j=1}^{m} \refines \tmfour\tmfive = \tm$
  and $\tmtwo'_i \refines \tmtwo$ for all $i=1..n$.

  Moreover, we have to check that $\tmthree' = \subs{\tm'}{\var}{[\tmtwo'_1,\hdots,\tmtwo'_n]}$.
  To prove this,
  note that $\subs{\tm'}{\var}{[\tmtwo'_1,\hdots,\tmtwo'_n]} = \subs{(\tmfour'[\tmfive'_j]_{j=1}^{m})}{\var}{[\tmtwo'_1,\hdots,\tmtwo'_n]}$.
  Suppose that the multiset $\tmlabel{[\tmtwo'_1,\hdots,\tmtwo'_n]}$ were sequential.
  Then the list of terms $[\tmtwo'_1,\hdots,\tmtwo'_n]$ would be partitioned
  as $(\ls{\tmthree}_0,\hdots,\ls{\tmthree}_m)$
  where $\ls{\tmthree}_j$ is a permutation of $\ls{\tmtwo}_j$ for all $j=0..m$,
  and we would have indeed:
  \[
    \begin{array}{rcll}
      \subs{\tm'}{\var}{[\tmtwo'_1,\hdots,\tmtwo'_n]}
    & = &
      \subs{(\tmfour'[\tmfive'_j]_{j=1}^{m})}{\var}{[\tmtwo'_1,\hdots,\tmtwo'_n]} \\
    & = & 
      \subs{\tmfour'}{\var}{\ls{\tmthree}_0}[\subs{\tmfive'_j}{\var}{\ls{\tmthree}_j}]_{j=1}^{m})
      \hfill\text{ by $\tmlabel{[\tmtwo'_1,\hdots,\tmtwo'_n]}$ sequential} \\
    & = & 
      \subs{\tmfour'}{\var}{\ls{\tmtwo}_0}[\subs{\tmfive'_j}{\var}{\ls{\tmtwo}_j}]_{j=1}^{m})
   \\
    & = & 
      \subs{\tmfour'}{\var}{\ls{\tmtwo}_0}[\subs{\tmfive'_j}{\var}{\ls{\tmtwo}_j}]_{j=1}^{m}) \hfill\text{by \rlem{substitution_permutation}} \\
    & = & 
      \tmthree'_0[\tmthree'_j]_{j=1}^{m} = \tmthree' \hfill\text{by \ih}
    \end{array}
  \]
  To see that $\tmlabel{[\tmtwo'_1,\hdots,\tmtwo'_n]}$ is sequential,
  note that for every $i \neq j$,
  the terms $\tmtwo'_i$ and $\tmtwo'_j$ are free subterms of $\tmthree'$
  and they lie at disjoint positions of $\tmthree'$.
  Since $\tmthree'$ is strongly sequential, the types of $\tmtwo'_i$ and $\tmtwo'_j$
  have different external labels. Hence $\tmlabel{[\tmtwo'_1,\hdots,\tmtwo'_n]}$ is sequential.

  To conclude, we are left to check that $(\lamp{\lab}{\var}{\tm'})[\tmtwo'_1,\hdots,\tmtwo'_n]$ is strongly sequential:
  \begin{enumerate}
  \item {\em Uniquely labeled lambdas.}
    The multiset of labels decorating the lambdas of $(\lamp{\lab}{\var}{\tm'})[\tmtwo'_1,\hdots,\tmtwo'_n]$
    is given by
    $\Lambda((\lamp{\lab}{\var}{\tm'})[\tmtwo'_1,\hdots,\tmtwo'_n]) =
     [\lab] + \Lambda(\tm') +_{i=1}^{n} \Lambda(\tmtwo'_i)$.
    It suffices to check that this multiset has no repeats.
    The label $\lab$ is assumed to be fresh, so it occurs only once.
    We have already argued that
    $\tmthree' = \subs{\tm'}{\var}{[\tmtwo'_1,\hdots,\tmtwo'_n]}$,
    so using \rlem{labels_over_lambdas_substitution} we have
    $\Lambda(\tmthree') = \Lambda(\tm') +_{i=1}^n \Lambda(\tmtwo'_i)$.
    Moreover $\tmthree'$ is correct, so this multiset has no repeats.
  \item {\em Sequential contexts.}
    Suppose that
    $\tctx_0 \oplus \var : \mtyp_0 \vdash \tmfour' : [\typthree_j]_{j=1}^m \tolab{\labtwo} \typ$
    is derivable,
    $\tctx_j \oplus \var : \mtyp_j \vdash \tmfive'_j : \typthree_j$
    is derivable for all $j=1..m$,
    and $\tctxtwo_i \vdash \tmtwo'_i : \typtwo_i$ is derivable for all $i=1..n$.
    Note that $\sum_{j=0}^{m} \mtyp_j = [\typtwo_i]_{i=1}^{n}$.

    Let $\tmsix$ be a subterm of
    $(\lamp{\lab}{\var}{\tmfour'[\tmfive'_j]_{j=1}^{m}})[\tmtwo'_1,\hdots,\tmtwo'_n]$.
    Consider four cases for $\tmsix$:

    \begin{enumerate}
    \item {\bf Case $\tmsix = (\lamp{\lab}{\var}{\tmfour'[\tmfive'_j]_{j=1}^{m}})[\tmtwo'_1,\hdots,\tmtwo'_n]$.}
      The typing context is $\sum_{j=0}^{m} \tctx_j + \sum_{i=1}^n \tctxtwo_i$.
      By Subject~Expansion~(\rlem{subject_expansion})
      the typing context of $\subs{(\tmfour'[\tmfive'_j]_{j=1}^{m})}{\var}{[\tmtwo'_1,\hdots,\tmtwo'_n]} = \tmthree'$
      is also $\sum_{j=0}^{m} \tctx_j + \sum_{i=1}^n \tctxtwo_i$
      and $\tmthree'$ is correct by hypothesis.
      So $\sum_{j=0}^{m} \tctx_j + \sum_{i=1}^n \tctxtwo_i$ is sequential.
    \item {\bf Case $\tmsix = \lamp{\lab}{\var}{\tmfour'[\tmfive'_j]_{j=1}^{m}}$.}
      The typing context is
      $\sum_{j=0}^{m} \tctx_j$,
      which is sequential because $\sum_{j=0}^{m} \tctx_j + \sum_{i=1}^n \tctxtwo_i$ is sequential.
    \item {\bf Case $\tmsix = \tmfour'[\tmfive'_j]_{j=1}^{m}$.}
      The typing context is
      $\sum_{j=0}^{m} \tctx_j \oplus \var : [\typtwo_i]_{i=1}^n$,
      which is sequential because $\sum_{j=0}^{m} \tctx_j$ is sequential
      and, moreover, $[\typtwo_i]_{i=1}^n = \tmlabel{[\typtwo'_1,\hdots,\typtwo'_n]}$
      which we have already shown to be sequential.
    \item {\bf Otherwise.}
      Then $\tmsix$ is a subterm of $\tmfour'$, a subterm of $\tmfive'_j$ for some $j=1..m$,
      or a subterm of some $\tmtwo'_i$ for some $i=1..n$.
      Then we conclude since by \ih $(\lamp{\lab}{\var}{\tmfour'})\ls{\tmtwo}_0$
      and all the $(\lamp{\lab}{\var}{\tmfive'_j})\ls{\tmtwo}_j$
      are strongly sequential.
    \end{enumerate}
  \item {\em Sequential types.}
    Let $\tmsix$ be a subterm of $(\lamp{\lab}{\var}{\tmfour'[\tmfive'_j]_{j=1}^{m}})[\tmtwo'_1,\hdots,\tmtwo'_n]$.
    We claim that if $\mtyptwo \tolab{\lab''} \typfour$ occurs in the context or in the type of $\tmsix$,
    then $\mtyptwo$ is sequential.
    The proof is similar as for subcase \label{refinement_substitution_inverse__case_lambda_sequential_types}.
  \item {\em Strong sequentiality.}
    Let $\tmsix \in \subterms{(\lamp{\lab}{\var}{\tmfour'[\tmfive'_j]_{j=1}^{m}})[\tmtwo'_1,\hdots,\tmtwo'_n]}$
    be a subterm,
    and let $\tmsix_1,\tmsix_2 \in \fsubterms{\tmsix}$ be free subterms lying at disjoint positions of $\tmsix$.
    We claim that the types of $\tmsix_1$ and $\tmsix_2$ have different external labels.
    The proof is similar as for subcase \label{refinement_substitution_inverse__case_lambda_strong_sequentiality}.
  \end{enumerate}
\end{enumerate}
\end{proof}

\begin{lemma}[Backwards Simulation]
\llem{backwards_simulation}
Let $\tm,\tmtwo \in \terms$ be $\lambda$-terms and
let $\tmtwo' \in \termsdist$ be a strongly sequential term such that
$\tm \tobeta \tmtwo$ and $\tmtwo' \refines \tmtwo$.
Then there exists a strongly sequential term $\tm' \in \termsdist$ such that:
\[
\xymatrix@R=.25cm{
 \tm \ar@{}[d]|*=0[@]{\rtimes} \ar[r]^{\beta} & \tmtwo \ar@{}[d]|*=0[@]{\rtimes} \\
 \tm' \ar@{.>>}[r]^{\dist} & \tmtwo' \\
}
\]
\end{lemma}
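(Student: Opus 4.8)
The plan is to proceed by induction on the context $\con$ witnessing the given step, i.e.\ $\tm = \conof{(\lam{\var}{\tmfive})\,\tmsix} \tobeta \conof{\subs{\tmfive}{\var}{\tmsix}} = \tmtwo$, essentially running the argument of item~2 of \resultname{Simulation}~(\rprop{simulation}) backwards while tracking strong sequentiality. In the base case $\con = \conbase$ we have $\tm = (\lam{\var}{\tmfive})\,\tmsix$ and $\tmtwo = \subs{\tmfive}{\var}{\tmsix}$, with $\tmtwo' \refines \subs{\tmfive}{\var}{\tmsix}$ strongly sequential. Apply \resultname{Refinement of a substitution: decomposition}~(\rlem{refinement_substitution_inverse}): for a fresh label $\lab$ it writes $\tmtwo' = \subs{\tmfive'}{\var}{[\tmsix'_i]_{i=1}^n}$ with $(\lamp{\lab}{\var}{\tmfive'})[\tmsix'_i]_{i=1}^n$ strongly sequential, $\tmfive' \refines \tmfive$, and $\tmsix'_i \refines \tmsix$ for all $i$. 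Take $\tm' := (\lamp{\lab}{\var}{\tmfive'})[\tmsix'_i]_{i=1}^n$; then $\tm' \refines (\lam{\var}{\tmfive})\tmsix = \tm$ by \indrulename{r-lam}/\indrulename{r-app}, $\tm'$ is strongly sequential, and $\tm' \todist \subs{\tmfive'}{\var}{[\tmsix'_i]_{i=1}^n} = \tmtwo'$ in a single root step, so in particular $\tm' \rtodist^* \tmtwo'$.

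For the inductive cases: if $\con = \lam{\vartwo}{\con'}$ then $\tmtwo' = \lamp{\labtwo}{\vartwo}{\tmtwo''}$ with $\tmtwo'' \refines \contwoof{\subs{\tmfive}{\var}{\tmsix}}$ strongly sequential (strong sequentiality is hereditary to subterms, using \rlem{correct_subterms}), and we apply the IH to $\tmtwo''$ and prefix $\lamp{\labtwo}{\vartwo}{-}$, nothing new being added by the abstraction node. If $\con = \con'\,\tmthree$ then $\tmtwo' = \tmthree'_0[\tmthree'_j]_{j=1}^m$ with $\tmthree'_0 \refines \contwoof{\subs{\tmfive}{\var}{\tmsix}}$ and $\tmthree'_j \refines \tmthree$; apply the IH to $\tmthree'_0$ to get $\tmthree''_0$ with $\tmthree''_0 \rtodist^* \tmthree'_0$, and set $\tm' := \tmthree''_0[\tmthree'_j]_{j=1}^m$, which is typable since $\todist$ preserves types (\rlem{subject_reduction}). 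If $\con = \tmthree\,\con'$ then $\tmtwo' = \tmthree'[\tmthree'_j]_{j=1}^m$ with each $\tmthree'_j \refines \contwoof{\subs{\tmfive}{\var}{\tmsix}}$; apply the IH to each $\tmthree'_j$ to obtain $\tmthree''_j \rtodist^* \tmthree'_j$ and set $\tm' := \tmthree'[\tmthree''_j]_{j=1}^m$, with $\tm' \rtodist^* \tmtwo'$ obtained by reducing the argument positions one at a time (each reduction is the contextual closure of the reductions $\tmthree''_j \rtodist^* \tmthree'_j$). In all cases $\tm' \refines \tm$ is immediate from the refinement rules.

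The main obstacle is preserving \emph{strong sequentiality} when the pieces are reassembled in the last two cases: the $\todist$-expansion of one subterm introduces new lambdas and new subterms whose external labels, though fresh relative to that piece, could coincide with labels occurring in the other, untouched pieces, and then two free subterms at disjoint positions of $\tm'$ might carry types with equal external labels. I expect to deal with this by strengthening the induction hypothesis to: \emph{for every finite set $S$ of labels} there is a $\tm'$ as required such that, in addition, the labels of the redexes contracted along $\tm' \rtodist^* \tmtwo'$ are disjoint from $S$. This is consistent with \rlem{refinement_substitution_inverse}, which already quantifies over a fresh $\lab$, and by \rlem{labels_over_lambdas_substitution} (lifted along the reduction) it is exactly these contracted labels that constitute $\Lambda(\tm') \setminus \Lambda(\tmtwo')$. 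With this strengthening, in the case $\con = \tmthree\,\con'$ one invokes the IH on $\tmthree'_1,\dots,\tmthree'_m$ successively, each time enlarging $S$, so that the fresh labels in the various $\tmthree''_j$ become pairwise disjoint and disjoint from the labels of $\tmthree'$; strong sequentiality of $\tm'$ then reduces to that of $\tmtwo'$ (for old external labels, which remain constrained as in $\tmtwo'$) plus the observation that each newly introduced external label belongs to a single piece and so cannot clash across disjoint positions. The analogous adjustment handles $\con = \con'\,\tmthree$, and the routine verification of the three correctness conditions and of $\Lambda(\tm')$ having no repeats proceeds exactly as in the proof of \rlem{refinement_substitution_inverse}.
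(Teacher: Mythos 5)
Your proof is correct and follows essentially the same route as the paper's: induction on the context of the contracted redex, with the base case handled by \rlem{refinement_substitution_inverse} (taking a fresh label for the reconstructed abstraction) and the application cases by applying the induction hypothesis to the relevant subrefinements and reassembling. Your explicit strengthening of the induction hypothesis with a forbidden label set $S$ makes rigorous the freshness bookkeeping that the paper's right-of-application case leaves implicit (it merely cites Subject Expansion and the strong sequentiality of the individual pieces), so this is a refinement of the same argument rather than a divergence.
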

\begin{proof}
Let $\tm = \conof{(\lam{\var}{\tmthree})\tmfour} \tobeta \conof{\subs{\tmthree}{\var}{\tmfour}} = \tmtwo$.
The proof proceeds by induction on $\con$.
\begin{enumerate}
\item {\bf Empty, $\con = \conbase$.}
  By \rlem{refinement_substitution_inverse} we have that $\tmtwo'$
  is of the form $\subs{\tmthree'}{\var}{[\tmfour'_1,\hdots,\tmfour'_n]}$
  where $\tmthree' \refines \tmthree$ and $\tmfour'_i \refines \tmfour$ for all $i=1..n$.
  Moreover, taking $\lab$ to be a fresh label, $(\lamp{\lab}{\var}{\tmthree'})[\tmfour'_1,\hdots,\tmfour'_n]$ 
  is strongly sequential and
  $(\lamp{\lab}{\var}{\tmthree'})[\tmfour'_1,\hdots,\tmfour'_n] \refines (\lam{\var}{\tmthree})\tmfour$.
\item {\bf Under an abstraction, $\con = \lam{\var}{\con'}$.}
  Straightforward by \ih.
\item {\bf Left of an application, $\con = \con'\,\tmthree$}
  Straightforward by \ih.
\item {\bf Right of an application, $\con = \tmthree\,\con'$}
  Then $\tm = \tmthree\,\tmfour \tobeta \tmthree\,\tmfive = \tmtwo$
  where $\tmfour \tobeta \tmfive$ and $\tmtwo' \refines \tmtwo$.
  Then $\tmtwo'$ is of the form $\tmthree'[\tmfive'_1,\hdots,\tmfive'_n]$ where $\tmfive'_i \refines \tmfive$ for all $i=1..n$.
  By \ih, for all $i=1..n$ we have that there exist $\tmfour'_1,\hdots,\tmfour'_n$ such that:
  \[
  \xymatrix@R=.25cm{
   \tmfour \ar@{}[d]|*=0[@]{\rtimes} \ar[r]^{\beta} & \tmfive \ar@{}[d]|*=0[@]{\rtimes} \\
   \tmfour'_i \ar@{.>>}[r]^{\dist} & \tmfive'_i \\
  }
  \HS\HS\text{So we have:}
  \xymatrix@R=.25cm{
   \tmthree\,\tmfour \ar@{}[d]|*=0[@]{\rtimes} \ar[r]^{\beta} & \tmthree\tmfive \ar@{}[d]|*=0[@]{\rtimes} \\
   \tmthree'[\tmfour'_i]_{i=1}^{n} \ar@{.>>}[r]^{\dist} & \tmthree'[\tmfive'_i]_{i=1}^{n}  \\
  }
  \]
  Moreover, $\tmthree'[\tmfour'_i]_{i=1}^{n}$ is strongly sequential, which can be concluded
  from the facts that $\tmthree'[\tmfive'_i]_{i=1}^{n}$ is strongly sequential by hypothesis,
  $\tmfour'_i$ is strongly sequential for all $i=1..n$ by \ih,
  and $\tmfour'_i$ and $\tmfive'_i$ have the same types by Subject~Expansion~(\rlem{subject_expansion}).
\end{enumerate}
\end{proof}

To prove the equivalence $(1 \iff 2)$ of \rprop{refinements_and_head_normal_forms}, we claim that the following
three statements are equivalent:
\begin{enumerate}
\item There exists $\tm' \in \termsdist$ such that $\tm' \refines \tm$.
\item[1'.]
      There exists $\tm' \in \termsdist$ such that $\tm' \refines \tm$
      and $\tm' \todist^* \lamp{\lab_1}{\var_1}{\hdots\lamp{\lab_n}{\var_n}{\vartwo^\typ[]\hdots[]}}$.
\item There exists a head normal form $\tmtwo$ such that $\tm \tobeta^* \tmtwo$.
\end{enumerate}
Let us prove the chain of implications $1 \implies 2 \implies 1' \implies 1$:
\begin{itemize}
\item $(1 \implies 2)$
  Let $\tm' \refines \tm$.
  By Strong Normalization~(\rprop{strong_normalization}), reduce $\tm' \todist^* \tmtwo'$
  to normal form.
  We claim that there exists a term $\tmtwo$ such that $\tm \tobeta^* \tmtwo$ and $\tmtwo' \refines \tmtwo$.
  Observe that, since $\lambdadist$ is strongly normalizing~(\rprop{strong_normalization})
  and finitely branching, K\"onig's lemma ensures that there is a bound for the length of $\todist$-derivations
  going out from a term $\tm' \in \termsdist$.
  (Alternatively, according to the proof of Strong Normalization in~\rprop{strong_normalization},
  the bound may be explicitly taken to be the number of lambdas in $\tm'$).
  Call this bound the \defn{weight} of $\tm'$.

  We proceed by induction on the weight of $\tm'$.
  If the derivation is empty, we are done by taking $\tmtwo := \tm$.
  If the derivation is non-empty, it is of the form $\tm' \todist \tmthree' \todist^* \tmtwo'$.
  By Simulation~(\rprop{simulation}) there exist terms $\tmthree$ and $\tmthree''$
  such that $\tmthree' \todist^* \tmthree'' \refines \tmthree$
  and $\tm \tobeta \tmthree$. Since the $\lambdadist$-calculus is
  confluent~(\rprop{strong_permutation}) and $\tmtwo'$ is a normal form,
  we have that $\tmthree'' \rtodist \tmtwo'$.
  Note that the weight of $\tm'$ is strictly larger than the weight of $\tmthree''$,
  so by \ih there exists $\tmtwo$ such that $\tmthree \tobeta^* \tmtwo$
  and $\tmtwo' \refines \tmtwo$:
  \[
    \xymatrix@R=.5cm{
      \tm
        \ar@{}[d]|*=0[@]{\rtimes} 
        \ar[rr]^{\beta}
    &
    &
      \tmthree
        \ar@{}[d]|*=0[@]{\rtimes} 
        \ar@{->>}[r]^{\beta}
    &
      \tmtwo
        \ar@{}[d]|*=0[@]{\rtimes} 
    \\
      \tm'
        \ar[r]^{\dist}
    &
      \tmthree'
        \ar@{->>}[r]^{\dist}
        \ar@/_.75cm/@{->>}[rr]^{\dist}
    &
      \tmthree''
        \ar@{->>}[r]^{\dist}
    &
      \tmtwo'
    }
  \]
  Finally, since $\tmtwo'$ is a $\todist$-normal form and $\tmtwo' \refines \tmtwo$,
  \rlem{normal_forms_refine_hnfs}
  ensures that $\tmtwo$ is a head normal form, as required.
\item $(1' \implies 1)$ Obvious.
\item $(2 \implies 1')$
  Let $\tm \tobeta^* \tmtwo$ be a derivation to head normal form.
  We claim that there exists $\tm' \in \termsdist$ such that $\tm'$ is strongly sequential,
  and the normal form of $\tm'$ is of the form $\lamp{\lab_1}{\var_1}{\hdots\lamp{\lab_n}{\var_n}{\vartwo^\typ[]\hdots[]}}$.
  By induction on the length of the derivation $\tm \tobeta^* \tmtwo$.
  If the derivation is empty, $\tm = \tmtwo$ is a head normal form
  and we conclude by \rlem{hnf_has_refinement}, observing that the constructed term $\tm' \refines \tm$
  is strongly sequential.
  If the derivation is non-empty, conclude using the \ih and Backwards Simulation~(\rlem{backwards_simulation}).
\end{itemize}

\subsection{Proof of \rprop{compatibility_of_simulation_residuals_and_permutation_equivalence} -- Compatibility}
\label{appendix_compatibility_of_simulation_residuals_and_permutation_equivalence}

We need a few auxiliary results.

\begin{lemma}[Simulation residuals and composition]
\llem{simulation_residuals_composition}
If $\redseq,\redseqtwo$ are composable derivations and $\tm' \refines \src(\redseq)$,
then:
\begin{enumerate}
\item $\tm'/\redseq\redseqtwo = (\tm'/\redseq)/\redseqtwo$
\item $\redseq\redseqtwo/\tm' = (\redseq/\tm')(\redseqtwo/(\tm'/\redseq))$
\end{enumerate}
\end{lemma}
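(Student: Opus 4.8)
The plan is to prove both identities by a straightforward induction on the length of the derivation $\redseq$, exploiting the fact that the simulation residuals $\tm'/\redseq$ and $\redseq/\tm'$ are defined by recursion on the \emph{first} step of $\redseq$. I would prove item~1 first, since it is used in the inductive step of item~2. Throughout I rely on the already-noted fact that $\redseq/\tm' : \tm' \todist^* \tm'/\redseq$ with $\tm'/\redseq \refines \tgt(\redseq)$ (proved earlier by induction on $\redseq$), which guarantees that all compositions and projections below are well-typed; in particular, for item~2, item~1 shows that $\redseqtwo/(\tm'/\redseq) : \tm'/\redseq \todist^* \tm'/\redseq\redseqtwo$, so that $(\redseq/\tm')(\redseqtwo/(\tm'/\redseq))$ is composable and has the same endpoints as $\redseq\redseqtwo/\tm'$. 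I also use the trivial identity $(\redex\redseqthree)\redseqtwo = \redex(\redseqthree\redseqtwo)$ on sequences of steps.

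For item~1 the base case $\redseq = \emptyDerivation$ is immediate, as both sides reduce to $\tm'/\redseqtwo$ using $\tm'/\emptyDerivation = \tm'$. For the inductive step write $\redseq = \redex\redseqthree$; then
\[
  \tm'/(\redex\redseqthree)\redseqtwo
  = \tm'/\redex(\redseqthree\redseqtwo)
  = (\tm'/\redex)/(\redseqthree\redseqtwo)
  \eqih ((\tm'/\redex)/\redseqthree)/\redseqtwo
  = (\tm'/\redex\redseqthree)/\redseqtwo,
\]
where the second and last equalities are the defining clause $\tm'/\redex\redseqthree' = (\tm'/\redex)/\redseqthree'$, and the inductive hypothesis is applied to $\redseqthree$ at the refinement $\tm'/\redex \refines \tgt(\redex)$.

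For item~2 the base case $\redseq = \emptyDerivation$ is again immediate: the left-hand side is $\redseqtwo/\tm'$ and the right-hand side is $(\emptyDerivation/\tm')(\redseqtwo/(\tm'/\emptyDerivation)) = \emptyDerivation(\redseqtwo/\tm') = \redseqtwo/\tm'$. For the inductive step $\redseq = \redex\redseqthree$ I would compute
\[
  (\redex\redseqthree)\redseqtwo/\tm'
  = \redex(\redseqthree\redseqtwo)/\tm'
  = (\redex/\tm')\bigl((\redseqthree\redseqtwo)/(\tm'/\redex)\bigr),
\]
then apply the inductive hypothesis to $\redseqthree$ at refinement $\tm'/\redex$ to rewrite $(\redseqthree\redseqtwo)/(\tm'/\redex)$ as $(\redseqthree/(\tm'/\redex))\bigl(\redseqtwo/((\tm'/\redex)/\redseqthree)\bigr)$, use item~1 (or just the defining clause for $\tm'/\redex\redseqthree$) to replace $(\tm'/\redex)/\redseqthree$ by $\tm'/\redex\redseqthree = \tm'/\redseq$, and finally regroup via the defining clause $\redex\redseqthree/\tm' = (\redex/\tm')(\redseqthree/(\tm'/\redex))$ to obtain $(\redseq/\tm')(\redseqtwo/(\tm'/\redseq))$.

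Since every step is an application of a definition or of the inductive hypothesis, there is no genuine difficulty here; the only point that deserves care is that $\redseq/\tm'$, like ordinary projection, is only canonically defined modulo permutation equivalence (it is built from complete developments of the sets $\redex/\tm'$). Accordingly, the equalities between $\todist$-derivations in item~2 should be read up to $\permeq$ — or, equivalently, relative to a fixed choice of canonical complete developments, exactly as the paper already does for $\redseq/\redseqtwo$ — using that projection is compatible with permutation equivalence (the identity $\redseq\permeq\redseqtwo \implies \redseqthree/\redseq = \redseqthree/\redseqtwo$ from the summary of orthogonal axiomatic rewrite systems). This bookkeeping is the main, and essentially the only, subtlety.
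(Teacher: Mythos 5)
Your proof is correct and is exactly the argument the paper intends: the paper's proof of this lemma is literally ``Straightforward by induction on $\redseq$,'' and your write-up just unfolds the defining clauses for $\tm'/\redex\redseqtwo$ and $\redex\redseqtwo/\tm'$ in that induction, proving item~1 first and using it in item~2. The closing caveat about reading the item~2 equalities relative to canonical complete developments (or up to $\permeq$) is also consistent with the paper's conventions.
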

\begin{proof}
Straightforward by induction on $\redseq$.
\end{proof}
An $n$-hole context $\con'$ is a term with $n$ ocurrences of a hole $\conbase$.
We extend refinement for contexts by setting $\conbase \refines \conbase$.
We write $\conof{\tm_1,\hdots,\tm_n}$ for the capturing substitution of the
$i$-th occurrence of $\conbase$ in $\con$ by $\tm_i$ for all $i=1..n$.
If $\redex : \tm \tobeta \tmtwo$ is a step, we write $\conof{\redex}$
for the step $\conof{\redex} : \conof{\tm} \tobeta \conof{\tmtwo}$.
Note that in general a context $\con$ may be refined by an $n$-hole context,
for example a $0$-hole context, $\var^{[] \tolab{1} \alpha^2}[\,] \refines \var(\vartwo\,\conbase)$,
or a $2$-hole context, $\var^{[\alpha^1,\beta^2] \tolab{3} \gamma^4}[\conbase,\conbase] \refines \var\conbase$.

\begin{lemma}[Simulation of contexts]
\llem{simulation_of_context}
Let $\contwo \refines \con$,
let $\tm'_1, \hdots, \tm'_n \in \termsdist$ and $\tm \in \terms$
such that $\tm'_i \refines \tm'$ for each $i \in \set{1, \hdots, n}$,
and $\contwoof{\tm'_1, \hdots, \tm'_n} \refines \conof{\tm}$.
Moreover, let $\redex : \tm \tobeta \tmtwo$.
Then $\contwoof{\tm'_1, \hdots, \tm'_n} / \conof{\redex} = \contwoof{\tm_1 / \redex, \hdots, \tm_n / \redex}$,
and $\names(\conof{\redex} / \contwoof{\tm'_1, \hdots, \tm'_n}) = \sum_{i=1}^n \names(\redex / \tm'_i)$.
\end{lemma}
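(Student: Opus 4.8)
I would prove this by induction on the single-hole context $\con$, inverting the refinement $\contwo \refines \con$ in each case to determine the shape of the $n$-hole context $\contwo$ — the same pattern used in the proof of \rlem{subject_reduction}. The key observation driving every case is that the $\beta$-redex contracted by $\redex$ lies entirely inside the subterm $\tm$ that is plugged into $\con$; consequently, after refining, its pattern lies inside each of the copies $\tm'_1,\hdots,\tm'_n$ plugged into the holes of $\contwo$. Hence, when the constructive proof of \resultname{Simulation}~(\rprop{simulation}) simulates $\conof{\redex}$ starting from $\contwoof{\tm'_1,\hdots,\tm'_n}$, the set of $\todist$-steps it produces is the disjoint union, over $i=1..n$, of the simulation residuals of $\redex$ inside $\tm'_i$, and a complete development of it reaches $\contwoof{\tm'_1/\redex,\hdots,\tm'_n/\redex}$. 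Once this ``localization'' is established, both stated equalities follow.

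The base case $\con = \conbase$ is immediate: the only context refining $\conbase$ is $\conbase$ itself, so $n = 1$, $\contwo = \conbase$, $\conof{\redex} = \redex$, and there is nothing to prove. For $\con = \lam{\var}{\con'}$ and $\con = \con'\,\tmthree$, inverting the refinement gives respectively $\contwo = \lamp{\lab}{\var}{\contwo'}$ and $\contwo = \contwo'[\tmthree'_1,\hdots,\tmthree'_k]$ with $\contwo' \refines \con'$ still carrying the $n$ holes; in both cases $\conof{\redex}$ acts inside the part governed by $\con'$ only (under the binder, resp.\ to the left of the application, leaving the argument copies $\tmthree'_l$ fixed and contributing no labels), so the claim reduces to the inductive hypothesis on $\con'$.

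The interesting case is $\con = \tmthree\,\con'$, where the hole of $\con$ sits inside the argument of an application. Here inverting the refinement yields $\contwo = \tmthree'[\contwo_1,\hdots,\contwo_k]$ with $\tmthree' \refines \tmthree$ (hence $\tmthree'$ has no holes) and each $\contwo_l \refines \con'$ a multi-hole context with $m_l$ holes, where $m_1 + \hdots + m_k = n$. This induces a partition $I_1 \uplus \hdots \uplus I_k$ of $\set{1,\hdots,n}$ grouping the holes of $\contwo$ by the copy they belong to, so that $\contwoof{\tm'_1,\hdots,\tm'_n} = \tmthree'[\contwo_1\of{(\tm'_i)_{i \in I_1}},\hdots,\contwo_k\of{(\tm'_i)_{i \in I_k}}]$, each block $\contwo_l\of{(\tm'_i)_{i \in I_l}}$ refining $\con'\of{\tm}$. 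I would then argue, using the argument-duplication case of the constructive proof of \rprop{simulation} (together with \rlem{simulation_residuals_composition} for propagating residuals through the application node), that simulating $\conof{\redex}$ proceeds ``in parallel'' inside the blocks without touching $\tmthree'$; applying the inductive hypothesis to each $\contwo_l \refines \con'$ then gives the residual and the label decomposition block-by-block, and assembling them — using $\sum_{l=1}^k \sum_{i \in I_l} \names(\redex/\tm'_i) = \sum_{i=1}^n \names(\redex/\tm'_i)$ — yields both equalities.

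The main obstacle is exactly this localization in the argument case: making precise that the simulation residuals of $\conof{\redex}$ inside $\contwoof{\tm'_1,\hdots,\tm'_n}$ are the union, over the $n$ hole-copies, of the simulation residuals of $\redex$ inside each $\tm'_i$, with no interference from the surrounding $\tmthree'$. This needs unfolding the constructive definition of simulation residuals from the proof of \rprop{simulation} and tracking how it distributes over the duplicated argument; the remaining bookkeeping — the hole partition $I_1 \uplus \hdots \uplus I_k$ and the multiset sums of labels (which are disjoint unions, since $\contwoof{\tm'_1,\hdots,\tm'_n}$ is a correct term) — is routine.
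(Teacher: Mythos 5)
Your proposal is correct and follows exactly the route the paper takes: the paper's own proof of \rlem{simulation_of_context} is simply ``straightforward by induction on $\con$,'' and your case analysis (inverting the refinement of the context in each case, with the only nontrivial work in the argument-of-application case where the hole is replicated across the $k$ copies) is precisely the intended elaboration of that induction. No gaps.
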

\begin{proof}
Straightforward by induction on $\con$.
\end{proof}

In the proof of the following lemma~(\rlem{cube_lemma_for_simulation_residuals}),
sometimes we will use the previous lemma~(\rlem{simulation_of_context}) implicitly.

\begin{lemma}[Basic Cube Lemma for simulation residuals]
\llem{cube_lemma_for_simulation_residuals}
Let $\redex : \tm \tobeta \tmtwo$ and $\redextwo : \tm \tobeta \tmthree$ be coinitial steps,
and let $\tm' \in \termsdist$ be a correct term such that $\tm' \refines \tm$.
Then the following equality between sets of coinitial steps holds:
$
  (\redex/\tm')/(\redextwo/\tm') = (\redex/\redextwo)/(\tm'/\redextwo)
$.
\end{lemma}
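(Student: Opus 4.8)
The statement is the Basic Cube Lemma for simulation residuals: given coinitial $\beta$-steps $\redex : \tm \tobeta \tmtwo$ and $\redextwo : \tm \tobeta \tmthree$ and a refinement $\tm' \refines \tm$, we must show $(\redex/\tm')/(\redextwo/\tm') = (\redex/\redextwo)/(\tm'/\redextwo)$ as sets of $\todist$-steps. The plan is to proceed by induction on the context that frames the redex contracted by $\redextwo$, in exact parallel with the case analysis already used for \resultname{Permutation}~(\rprop{strong_permutation}) and for the proof of \resultname{Simulation}~(\rprop{simulation}). The two sides of the equation are both built by running the constructive simulation of a $\beta$-step; the content of the lemma is that these constructions commute with the residual operation in the $\lambda$-calculus.

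First I would unfold both sides using the compositionality of simulation residuals, namely \rlem{simulation_residuals_composition}, so that each side becomes a family of $\todist$-steps indexed by the occurrences of the redex pattern of $\redex$ inside the relevant refined terms. Then I would set up the induction on the position of $\redextwo$'s redex relative to $\redex$'s redex: the two redexes are either disjoint, nested one inside the other, or overlapping at the root (the classical critical-pair case, $\redex = \redextwo$ being excluded or trivial since then both sides are empty by Autoerasure). In the disjoint case the equality is immediate because the simulation acts independently on the two subterms, using \rlem{simulation_of_context}. In the nested cases one redex's contraction transports the other's pattern through a substitution, so I would invoke the \resultname{Substitution Lemma}~(\rlem{substitution_lemma}) together with the refinement-of-substitution results (\rlem{refinement_substitution} and \rlem{refinement_substitution_inverse}) to identify, occurrence by occurrence, the $\todist$-steps produced on each side. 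The overlapping/root case is where $\tm = \conof{(\lam{\var}{\tmfive})\tmsix}$ and both $\redex$ and $\redextwo$ act at that redex; here $\redex/\redextwo = \emptyset$, so the right-hand side is empty, and one must check that $(\redex/\tm')/(\redextwo/\tm')$ is also empty, which follows because $\redex/\tm'$ and $\redextwo/\tm'$ are the same multistep (the complete development of the same set of $\todist$-redexes determined by the refinement), using the well-definedness of $\redex/\tm'$ modulo permutation equivalence and Autoerasure in $\lambdadist$.

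The main obstacle I expect is bookkeeping in the nested case where $\redextwo$'s pattern lies inside the \emph{argument} of $\redex$'s redex (or vice versa): contracting $\redex$ then duplicates or erases $\redextwo$'s pattern across the several refined copies of that argument, and one must match these copies against the occurrences produced when $\redextwo$ is contracted first and its effect is simulated inside each refinement. Controlling this requires being careful that the partition of the argument list used in the definition of substitution (\rdef{substitution}) is the same — up to permutation, which is harmless by \rlem{substitution_permutation} and the uniqueness remark following \rdef{substitution} — on both computation paths. Once the indexing is pinned down, each individual $\todist$-step on the left is visibly the same as the corresponding one on the right, both being obtained by simulating the residual $\redex/\redextwo$ against the appropriate refinement; the induction then closes. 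The abstraction and ``left/right of an application'' context cases are routine and handled by a direct appeal to the induction hypothesis, exactly as in the corresponding spots of the \resultname{Permutation} and \resultname{Simulation} proofs.
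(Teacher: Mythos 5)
Your proposal is correct and follows essentially the same route as the paper: a structural induction with case analysis on the relative positions of the two redexes (equal, disjoint, nested, one at the root), relying on \rlem{simulation_of_context} and the substitution machinery, with the duplication/erasure situation (one redex sitting inside the argument of the other) being the only delicate case, exactly as you anticipate. The one labour-saving device the paper uses that you might borrow is that in a correct term a $\todist$-step is determined by its label, so the set equality can be checked by comparing $\names$ of the two sides instead of matching occurrences one by one.
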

\begin{proof}
If $\redex = \redextwo$ then it is easy to see that the proposition holds, so we can assume that $\redex \neq \redextwo$.
  Also, note that it is enough to see that $\names((\redex/\tm')/(\redextwo/\tm')) = \names(((\redex/\redextwo)/(\tm'/\redextwo))$, as we will do that in some cases.
We proceed by induction on $\tm$.
\begin{enumerate}
  \item {\bf Variable $\tm = \var$.} Impossible.
  \item {\bf Abstraction, $\tm = \lam{\var}{\tmthree}$.}
    Immediate by \ih.
  \item {\bf Application, $\tm = \tmfive \tmsix$.}
    Three cases, depending on the position of $\redex$.
    \begin{enumerate}
      \item {\bf If $\redex$ is at the root.}
        Then $\tm = (\lam{\var}{\tmfour}) \tmsix$, and $\redex : (\lam{\var}{\tmfour}) \tmsix \to \subs{\tmfour}{\var}{\tmsix}$.
        Then $\redextwo$ may be inside $\tmfour$ or inside $\tmsix$. In any case,
        the situation is:
        \[
         \xymatrix@R=.25cm{
           (\lam{\var}{\tmfour}) \tmsix
             \ar@{->}[r]^{\redex} \ar@{->}[d]^{\redextwo}
               & \subs{\tmfour}{\var}{\tmsix} \\
           (\lam{\var}{\tmfour^\circ}) \tmsix^\circ
         }
         \xymatrix@R=.25cm{
           (\lamp{\lab}{\var}{\tmfour'}) \ls{\tmsix}
             \ar@{->}[r]^-{\redex / \tm'} \ar@{->>}[d]^{\redextwo / \tm'}
             & \subs{\tmfour'}{\var}{\ls{\tmsix}} \\
           (\lamp{\lab}{\var}{\tmfour'^\circ}) \ls{\tmsix}^\circ
         }
         \]
         Note that $(\redex / \tm') / (\redextwo / \tm')$ has only one element $\redex_1$,
         namely the step contracting the lambda labeled with $\lab$.
         Note that $\redex / \redextwo$ also happens to have only one element,
         $\redex_2 : (\lam{\var}{\tmfour^\circ}) \tmsix^\circ \tobeta \subs{\tmfour^\circ}{\var}{\tmsix^\circ}$.
         It is easy to see that $\redex_2/((\lamp{\lab}{\var}{\tmfour'}) \ls{\tmsix}) = \set{\redex_1}$, as required.
      \item {\bf If $\redex$ is in $\tmfive$.}
        We consider three subcases, depending on the position of $\redextwo$.
        \begin{enumerate}
          \item {\bf If $\redextwo$ is at the root.}
            Then $\tmfive = \lam{x}{\conof{(\lam{\vartwo}{\tmthree}) \tmthreevariant}}$ and
            (for appropiate $\con^\circ,\tmthree^\circ,\tmthreevariant^\circ$):
            \[
            \xymatrix@R=.25cm@C=1cm{
              (\lam{\var}{\conof{(\lam{\vartwo}{\tmthree}) \tmthreevariant}}) \tmsix
                \ar@{->}[r]^{\redex} \ar@{->}[d]^{\redextwo} 
                  & (\lam{\var}{\conof{\subs{\tmthree}{\vartwo}{\tmthreevariant}}}) \tmsix & \\
              \con^{\circ}\of{(\lam{\vartwo}{\tmthree^{\circ}}) \tmthreevariant^{\circ}}
                  \ar@{->}[r]^{\redex / \redextwo}
                  & \con^{\circ}\of{\subs{\tmthree^{\circ}}{\vartwo}{\tmthreevariant^{\circ}}} \\
            }
            \]
            \[
            \xymatrix@R=.25cm@C=1cm{
              (\lamp{\lab}{\var}{\con'\of{(\lamp{\lab_1}{\vartwo}{\tmthree_1}) \tmthreevariant_1, \hdots, (\lamp{\lab_n}{\vartwo}{\tmthree_n}) \tmthreevariant_n}}) \ls{\tmsix}
                \ar@{->>}[r]^{\redex / \tm'} \ar@{->}[d]^{\redextwo / \tm'}
                  & (\lamp{\lab}{\var}{\con'\of{\subs{\tmthree_1}{\vartwo}{\tmthreevariant_1}, \hdots, \subs{\tmthree_n}{\vartwo}{\tmthreevariant_n}}}) \ls{\tmsix} & \\
                \con'^{\circ}\of{(\lamp{\lab_1}{\vartwo}{\tmthree^{\circ}_1}) \tmthreevariant^{\circ}_1, \hdots, (\lamp{\lab_n}{\vartwo}{\tmthree^{\circ}_n}) \tmthreevariant^{\circ}_n}
                  \ar@{->>}[r]^{(\redex / \tm') / (\redextwo / \tm')}
                  & \con'^{\circ}\of{\subs{\tmthree^{\circ}_1}{\vartwo}{\tmthreevariant^{\circ}_1}, \hdots, \subs{\tmthree^{\circ}_n}{\vartwo}{\tmthreevariant^{\circ}_n}} \\
              }
            \]
            Note that $\names((\redex / \tm') / (\redextwo / \tm')) = \set{\lab_1, \hdots, \lab_n}$.
            Similarly, $\names((\redex / \redextwo) / (\tm' / \redextwo)) = \set{\lab_1, \hdots, \lab_n}$,
            as required.
          \item {\bf If $\redextwo$ is in $\tmfive$.}
            Straightforward by \ih.
          \item {\bf If $\redextwo$ is in $\tmsix$.}
            Straightforward since the steps are disjoint.
        \end{enumerate}
      \item {\bf If $\redex$ is in $\tmsix$.}
        We consider three subcases, depending on the position of $\redextwo$.
        \begin{enumerate}
          \item {\bf If $\redextwo$ is at the root.} Then $\tmfive = \lam{\var}{\tmtwo}$
            and $\tm = (\lam{\var}{\tmtwo}) \conof{(\lam{\vartwo}{\tmthree}) \tmthreevariant}$.
            \[
            \xymatrix@R=.25cm@C=1cm{
              (\lam{\var}{\tmtwo}) \conof{(\lam{\vartwo}{\tmthree}) \tmthreevariant}
                \ar@{->}[r]^{\redex} \ar@{->}[d]^{\redextwo}
                  & (\lam{\var}{\tmtwo}) \conof{\subs{\tmthree}{\vartwo}{\tmthreevariant}} & \\
                \subs{\tmtwo}{\var}{\conof{(\lam{\vartwo}{\tmthree}) \tmthreevariant}}
                  \ar@{->>}[r]^{\redex / \redextwo}
                    & \subs{\tmtwo}{\var}{\conof{\subs{\tmthree}{\vartwo}{\tmthreevariant}}} \\
            }
            \]
            \[
            \xymatrix@R=.25cm@C=2cm{
                \ar@{->>}[r]^{\redex / \tm'} \ar@{->}[d]^{\redextwo / \tm'}
              (\lamp{\lab}{\var}{\tmtwo'}) [\con_i\of{(\lamp{\lab_{i,j}}{\vartwo}{\tmthree_{i,j}} \tmthreevariant_{i,j})}_{j=1}^{m_i}]_{i=1}^n
                  & (\lamp{\lab}{\var}{\tmtwo'}) [\con_i\of{\subs{\tmthree_{i,j}}{\vartwo}{\tmthreevariant_{i,j}}}_{j=1}^{m_i}]_{i=1}^n & \\
               \subs{\tmtwo'}{\var}{[\con_i\of{(\lamp{\lab_{i,j}}{\vartwo}{\tmthree_{i,j}} \tmthreevariant_{i,j})}_{j=1}^{m_i}]_{i=1}^n} \ar@{->}[r]^{(\redex / \tm') / (\redextwo / \tm')}
                  & \subs{\tmtwo'}{\var}{[\con_i\of{\subs{\tmthree_{i,j}}{\vartwo}{\tmthreevariant_{i,j}}}_{j=1}^{m_i}]_{i=1}^n}
              }
            \]
            Note that $\redex / \redextwo$ has as many elements as there are free occurrences
            of $\var$ in $\tmtwo$. In particular, $\redextwo$ may erase or multiply $\redex$.
            In turn $\tmtwo'$ has a number of free occurrences of $\var$,
            more precisely $n$ free occurrences of $\var$,
            \ie the cardinality of the argument of the application.
            By \rlem{simulation_of_context}, each step in $\redex / \redextwo$,
            when projected onto $\tm' / \redextwo$ yields a set of labels $\set{\lab_{i,1}, \hdots, \lab_{i, m_i}}$.
            So
            $\names((\redex / \redextwo) / (\tm' / \redextwo)) = \sum_{i=1}^n \set{\lab_{i,1}, \hdots, \lab_{i, m_i}} = \names((\redex / \tm') / (\redextwo / \tm'))$
            as required.
          \item {\bf If $\redextwo$ is in $\tmfive$.}
            Straightforward since the steps are disjoint.
          \item {\bf If $\redextwo$ is in $\tmsix$.}
            Straightforward by \ih.
        \end{enumerate}
    \end{enumerate}
\end{enumerate}
\end{proof}

\begin{lemma}[Simulation residual of a development]
\llem{simulation_residual_of_a_development}
Let $\redexset$ be a set of coinitial steps in the $\lambda$-calculus,
let $\redseq$ be a complete development of $\redexset$,
and let $\tm' \in \termsdist$ be a correct term such that $\tm' \refines \src(\redseq)$.
Then $\redseq/\tm'$ is a complete development of $\redexset/\tm'$.
\end{lemma}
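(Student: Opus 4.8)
The plan is to proceed by induction on the length of the complete development $\redseq$, relying on three ingredients: (i) the standard characterisation, valid in any orthogonal axiomatic rewrite system, that a derivation is a complete development of a set of coinitial steps precisely when it develops that set and leaves no residual of it; (ii) the Basic Cube Lemma for simulation residuals~(\rlem{cube_lemma_for_simulation_residuals}); and (iii) an elementary ``assembly'' fact for developments: if $\redexsettwo \subseteq \redexset$, $\redseqtwo$ is a complete development of $\redexsettwo$, and $\redseqthree$ is a complete development of $\redexset/\redseqtwo$, then $\redseqtwo\redseqthree$ is a complete development of $\redexset$. Fact (iii) is immediate from residual composition: every step of $\redseqtwo$ is a residual of some element of $\redexsettwo \subseteq \redexset$, every step of $\redseqthree$ is a residual of some element of $\redexset/\redseqtwo$ and hence of some element of $\redexset$, and $\redexset/(\redseqtwo\redseqthree) = (\redexset/\redseqtwo)/\redseqthree = \emptyset$.

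For the base case $\redseq = \emptyDerivation$ completeness forces $\redexset = \emptyset$, and then $\redseq/\tm' = \emptyDerivation$ is a complete development of $\redexset/\tm' = \emptyset$. For the inductive step, write $\redseq = \redex_1\redseqtwo$. Since $\redseq$ develops $\redexset$, its first step satisfies $\redex_1 \in \redexset$, and a routine residual calculation shows that $\redseqtwo$ is a complete development of $\redexset/\redex_1$: it develops $\redexset/\redex_1$, and $(\redexset/\redex_1)/\redseqtwo = \redexset/\redseq = \emptyset$. By the definition of simulation residuals of derivations, $\redseq/\tm' = (\redex_1/\tm')\,(\redseqtwo/(\tm'/\redex_1))$, where $\redex_1/\tm'$ denotes a canonical complete development of the set $\redex_1/\tm'$, with target $\tm'/\redex_1$; and the induction hypothesis applied to $\redseqtwo$ together with $\tm'/\redex_1 \refines \tgt(\redex_1) = \src(\redseqtwo)$ gives that $\redseqtwo/(\tm'/\redex_1)$ is a complete development of $(\redexset/\redex_1)/(\tm'/\redex_1)$.

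It then remains to identify this last set. Applying \rlem{cube_lemma_for_simulation_residuals} to each step $\redex \in \redexset$ together with $\redex_1$ yields $(\redex/\tm')/(\redex_1/\tm') = (\redex/\redex_1)/(\tm'/\redex_1)$; taking the union over $\redex \in \redexset$, and using that residuals of a set of coinitial steps along a fixed derivation distribute over unions (so that $\redexset/\tm' = \bigcup_{\redex \in \redexset}\redex/\tm'$ and $\redexset/\redex_1 = \bigcup_{\redex \in \redexset}\redex/\redex_1$ project elementwise), I obtain the set equality $(\redexset/\tm')/(\redex_1/\tm') = (\redexset/\redex_1)/(\tm'/\redex_1)$. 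Since $\redex_1 \in \redexset$ gives $\redex_1/\tm' \subseteq \redexset/\tm'$, the derivation $\redseq/\tm'$ is a complete development of $\redex_1/\tm'$ followed by a complete development of $(\redexset/\tm')/(\redex_1/\tm')$, so ingredient (iii) concludes that $\redseq/\tm'$ is a complete development of $\redexset/\tm'$.

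The conceptual content is light; the care goes into ingredient (ii)'s set-level formulation — justifying that the single-step Basic Cube Lemma lifts to an arbitrary finite $\redexset$ through distributivity of residuals over unions, and that none of the constructions depend on the chosen canonical complete development — together with the bookkeeping in ingredient (iii) and in the decomposition $\redseq = \redex_1\redseqtwo$. I expect the lifting of the Cube Lemma to sets to be the only step requiring real attention; the induction itself is short once these facts are in place.
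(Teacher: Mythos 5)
Your proof is correct and follows essentially the same route as the paper's: peel off the first step of the development (which necessarily belongs to $\redexset$), apply the induction hypothesis to the tail, use the Basic Cube Lemma~(\rlem{cube_lemma_for_simulation_residuals}) lifted to sets to identify $(\redexset/\redex_1)/(\tm'/\redex_1)$ with $(\redexset/\tm')/(\redex_1/\tm')$, and reassemble. The only difference is cosmetic: you induct on the length of the given derivation $\redseq$, whereas the paper inducts on the depth of $\redexset$ (the K\"onig bound on development length); both measures decrease under the same decomposition.
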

\begin{proof}
Recall that developments are finite and the $\lambda$-calculus is finitely branching so,
by K\"onig's lemma, the length of a development of a set $\redexset$ is bounded.
This bound is called the {\em depth} of $\redexset$.
We proceed by induction on the depth of $\redexset$.
\begin{enumerate}
  \item {\bf Base case.}
    Immediate.
  \item {\bf Induction.}
    Then $\redexset$ must be non-empty. Let $\redexset = \set{\redex_1, \hdots, \redex_{n},\redextwo}$ for some $n \geq 0$,
    and let $\redseq$ be a complete development of $\redexset$.
    Without loss of generality, we may assume that $\redseq = \redextwo\redseq'$,
    where $\redseq'$ is a complete development of $\redexset' = \redexset/\redextwo$.
    Note that the depth of $\redexset'$ is strictly lesser than the depth of $\redexset$.
    By \ih, $\redseq' / (\tm' / \redextwo)$ is a complete development of
    $\redexset'/(\tm'/\redextwo) = \bigcup_{i=1}^n \frac{\redex_i / \redextwo}{\tm' / \redextwo}$.
    By the Basic Cube Lemma~(\rlem{cube_lemma_for_simulation_residuals}) this set is equal to
    $\bigcup_{i=1}^n \frac{\redex_i / \tm'}{\redextwo / \tm'}$.
    Then $(\redextwo / \tm') (\redseq' / (\tm' / \redextwo))$ is a complete development of
    $\redexset / \tm' = (\redex_1 / \tm') \cup \hdots \cup (\redex_{n} / \tm') \cup (\redextwo/\tm')$.
    To conclude, note that
    $\redseq / \tm' =
        \redextwo\redseq'/\tm' =
        (\redextwo/\tm')(\redseq'/(\tm' / \redextwo))$ so we are done.
\end{enumerate}
\end{proof}

\begin{lemma}[Compatibility for developments]
\llem{compatibility_for_developments}
Let $\redexset$ be a set of coinitial steps,
and let $\redseq$ and $\redseqtwo$ be complete developments of $\redexset$,
and let $\tm' \in \termsdist$ be a correct term such that $\tm' \refines \src(\redseq)$.
Then $\redseq/\tm' \equiv \redseqtwo/\tm'$.
\end{lemma}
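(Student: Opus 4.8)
The plan is to reduce everything to the label-based characterization of permutation equivalence in $\lambdadist$ (\rcoro{permutation_equivalence_in_terms_of_names}): two coinitial $\todist$-derivations are permutation equivalent if and only if they carry the same set of labels. First I would apply \rlem{simulation_residual_of_a_development} to each of $\redseq$ and $\redseqtwo$: since both are complete developments of the same set $\redexset$, we obtain that $\redseq/\tm'$ and $\redseqtwo/\tm'$ are complete developments of one and the same set of coinitial $\todist$-steps, namely $\redexset/\tm'$. They are moreover coinitial, both starting from $\tm'$. So the statement becomes an instance of the general claim that any two complete developments of a fixed set of coinitial steps in $\lambdadist$ have equal sets of labels.

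To prove that claim, let $\redexsettwo$ be a set of coinitial steps in $\lambdadist$ and $\redseqthree$ a complete development of it. The inclusion $\names(\redseqthree) \subseteq \set{\name(\redextwo) \mid \redextwo \in \redexsettwo}$ is immediate, since every step of a development of $\redexsettwo$ is a residual of some $\redextwo \in \redexsettwo$ and residuals in $\lambdadist$ preserve labels (by the definition of the residual function in \rprop{orthogonality}). For the reverse inclusion, suppose toward a contradiction that $\name(\redextwo) \notin \names(\redseqthree)$ for some $\redextwo \in \redexsettwo$. By \rlem{characterization_of_belonging}, $\redextwo \notin \redseqthree$, so no residual of $\redextwo$ is contracted along $\redseqthree$; since $\lambdadist$ has neither duplication nor erasure (a consequence of \rprop{strong_permutation}), $\redextwo/\redseqthree$ is a singleton $\set{\redextwo'}$, and then $\redseqthree\redextwo'$ is a strictly longer development of $\redexsettwo$, contradicting the maximality of $\redseqthree$; finiteness of developments, which makes this argument well founded, follows from Strong Normalization (\rprop{strong_normalization}). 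Hence the two label sets coincide.

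Combining the two steps, $\names(\redseq/\tm') = \set{\name(\redex') \mid \redex' \in \redexset/\tm'} = \names(\redseqtwo/\tm')$, and \rcoro{permutation_equivalence_in_terms_of_names} gives $\redseq/\tm' \permeq \redseqtwo/\tm'$, which is the desired conclusion. I do not expect any real obstacle here: the one mildly delicate point is the maximality argument for the intermediate claim, but it is routine given the rigid (no-duplication, no-erasure) residual structure of $\lambdadist$ together with \rlem{characterization_of_belonging}. One could instead quote the intermediate claim directly from the general theory of orthogonal axiomatic rewrite systems, since $\lambdadist$ is orthogonal by \rprop{orthogonality}, but handling it via labels keeps the proof self-contained and in line with the rest of the development.
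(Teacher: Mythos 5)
Your proposal is correct and follows essentially the same route as the paper: both apply \rlem{simulation_residual_of_a_development} to conclude that $\redseq/\tm'$ and $\redseqtwo/\tm'$ are complete developments of the same set $\redexset/\tm'$, and then invoke the fact that coinitial complete developments of a fixed set are permutation equivalent. The only difference is that the paper cites this last fact as standard for orthogonal axiomatic rewrite systems, whereas you additionally prove it for $\lambdadist$ via the label characterization (\rlem{characterization_of_belonging}, \rcoro{permutation_equivalence_in_terms_of_names}); that argument is sound.
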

\begin{proof}
This is an immediate consequence of \rlem{simulation_residual_of_a_development},
since $\redseq/\tm'$ and $\redseqtwo/\tm'$ are both complete developments of $\redexset/\tm'$,
hence permutation equivalent.
\end{proof}

The proof of \rprop{compatibility_of_simulation_residuals_and_permutation_equivalence}
proceeds as follows. Let $\redseq \permeq \redseqtwo$ be permutation equivalent derivations in the $\lambda$-calculus,
let $\tm' \in \termsdist$ be a correct term such that $\tm' \refines \src(\redseq)$,
and let us show that:
\begin{enumerate}
\item $\tm'/\redseq = \tm'/\redseqtwo$
\item $\redseq/\tm' \permeq \redseqtwo/\tm'$
\end{enumerate}
Recall that, in an orthogonal axiomatic rewrite system,
permutation equivalence may be defined as
the reflexive and transitive closure of the
permutation axiom
$\redseqthree_1 \redex \redseqtwo \redseqthree_2 \permeq \redseqthree_1 \redextwo \redseq \redseqthree_2$,
where
$\redseqthree_1$ and $\redseqthree_2$ are arbitrary derivations,
$\redseqtwo$ is a complete development of $\redextwo/\redex$,
and $\redseq$ is a complete development of $\redex/\redextwo$.
For this definition of permutation equivalence, see for example~\cite[Def.~2.17]{thesismellies},
and .
We prove each item separately:
\begin{enumerate}
\item
  For the first item, we proceed by induction on the derivation that $\redseq \permeq \redseqtwo$.
  Reflexivity and transitivity are trivial, so we concentrate on the permutation axiom itself.
  Let $\redseqthree_1 \redex \redseqtwo \redseqthree_2 \permeq \redseqthree_1 \redextwo \redseq \redseqthree_2$,
  where
  $\redseqthree_1$ and $\redseqthree_2$ are arbitrary derivations,
  $\redseqtwo$ is a complete development of $\redextwo/\redex$,
  and $\redseq$ is a complete development of $\redex/\redextwo$,
  and let us show that $\tm'/\redseqthree_1 \redex \redseqtwo \redseqthree_2 = \tm'/\redseqthree_1 \redextwo \redseq \redseqthree_2$.
  By \rlem{simulation_residuals_composition} we have that:
  \[
    \tm'/\redseqthree_1 \redex \redseqtwo \redseqthree_2 = ((\tm'/\redseqthree_1)/\redex \redseqtwo)/\redseqthree_2
    \HS
    \text{ and }
    \HS
    \tm'/\redseqthree_1 \redextwo \redseq \redseqthree_2 = ((\tm'/\redseqthree_1)/\redextwo \redseq)/ \redseqthree_2
  \]
  so, without loss of generality,
  it suffices to show that for an arbitrary term $\tmtwo' \in \termsdist$,
  we have $\tmtwo'/\redex\redseqtwo = \tmtwo'/\redextwo\redseq$.
  By definition of simulation residual, the derivations below have the indicated sources and targets:
  \[
    \begin{array}{rcl}
    \redex\redseqtwo/\tmtwo' & : & \tmtwo' \to \tmtwo'/\redex\redseqtwo \\
    \redextwo\redseq/\tmtwo' & : & \tmtwo' \to \tmtwo'/\redextwo\redseq \\
    \end{array}
  \]
  Moreover:
  \[
    \begin{array}{rcll}
      \redex\redseqtwo/\tmtwo'
    & = & (\redex/\tmtwo')(\redseqtwo/(\tmtwo'/\redex)) \\
    & \permeq & (\redex/\tmtwo')((\redextwo/\redex)/(\tmtwo'/\redex))     & \text{by \rlem{compatibility_for_developments}} \\
    & \permeq & (\redex/\tmtwo')((\redextwo/\tmtwo')/(\redex/\tmtwo'))    & \text{by the basic cube lemma~(\rlem{cube_lemma_for_simulation_residuals})} \\
    & \permeq & (\redextwo/\tmtwo')((\redex/\tmtwo')/(\redextwo/\tmtwo')) & \text{since $A(B/A) \permeq B(A/B)$ holds in general} \\
    & \permeq & (\redextwo/\tmtwo')((\redex/\redextwo)/(\tmtwo'/\redextwo)) & \text{by the basic cube lemma~(\rlem{cube_lemma_for_simulation_residuals})} \\
    & = & (\redextwo/\tmtwo')(\redseq/(\tmtwo'/\redextwo)) & \text{by \rlem{compatibility_for_developments}} \\
    & = & \redextwo\redseq/\tmtwo'
    \end{array}
  \]
  So $\redex\redseqtwo/\tmtwo'$ and $\redextwo\redseq/\tmtwo'$ are permutation equivalent.
  In particular, they have the same target, so
  $\tmtwo'/\redex\redseqtwo = \tmtwo'/\redextwo\redseq$ as required.
\item
  The proof of the second item is also by induction on the derivation that $\redseq \permeq \redseqtwo$.
  Let $\redseqthree_1 \redex \redseqtwo \redseqthree_2 \permeq \redseqthree_1 \redextwo \redseq \redseqthree_2$,
  where
  $\redseqthree_1$ and $\redseqthree_2$ are arbitrary derivations,
  $\redseqtwo$ is a complete development of $\redextwo/\redex$,
  and $\redseq$ is a complete development of $\redex/\redextwo$,
  and let us show that $\redseqthree_1 \redex \redseqtwo \redseqthree_2/\tm' = \redseqthree_1 \redextwo \redseq \redseqthree_2/\tm'$.
  By \rlem{simulation_residuals_composition} we have that:
  \[
    \redseqthree_1 \redex \redseqtwo \redseqthree_2/\tm' = (\redseqthree_1/\tm') (\redex \redseqtwo/\tmtwo') (\redseqthree_2/\tmthree')
    \HS
    \text{ and }
    \HS
    \redseqthree_1 \redextwo \redseq \redseqthree_2 / \tm'= (\redseqthree_1 / \tm') (\redextwo \redseq/\tmtwo') (\redseqthree_2/\tmthree'')
  \]
  where $\tmtwo' = \tm'/\redseqthree_1$,
        $\tmthree' = \tmtwo'/\redseqthree_1\redex\redseqtwo$,
        and
        $\tmthree'' = \tmtwo'/\redseqthree_1\redextwo\redseq$.
  Similarly as before, we can prove that $\redex\redseqtwo/\tmtwo' \permeq \redextwo\redseq/\tmtwo'$.
  By item 1 of this proposition, we have $\tmthree' = \tmthree''$,
  so $\redseqthree_2/\tmthree' \permeq \redseqthree_2/\tmthree''$,
  which finishes the proof.
\end{enumerate}

\subsection{Proof of \rlem{generalized_cube_lemma} -- Cube Lemma for simulation residuals}
\label{appendix_generalized_cube_lemma}

Let $\redseq : \tm \rtobeta \tmtwo$ and $\redseqtwo : \tm \rtobeta \tmthree$ be coinitial derivations,
and let $\tm' \in \termsdist$ be a correct term such that $\tm' \refines \tm$.
The statement of \rlem{generalized_cube_lemma} claims that the following equivalence holds:
\[
  (\redseq/\tm')/(\redseqtwo/\tm') \permeq (\redseq/\redseqtwo)/(\tm'/\redseqtwo)
\]
Before proving \rlem{generalized_cube_lemma},
we prove three auxiliary lemmas, all of which are particular cases of the main result.

\begin{lemma}
\llem{generalized_cube_lemma__claim1}
$(\redex / \tm') / (\redseqtwo / \tm') \permeq (\redex / \redseqtwo) / (\tm' / \redseqtwo)$.
\end{lemma}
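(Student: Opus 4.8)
The statement \rlem{generalized_cube_lemma__claim1} is the special case of the Cube Lemma (\rlem{generalized_cube_lemma}) in which the first derivation is a single step $\redex$ and only the second derivation $\redseqtwo$ is arbitrary. The plan is to prove it by induction on $\redseqtwo$, peeling off the last step and reducing to the single-step/single-step \emph{Basic Cube Lemma for simulation residuals} (\rlem{cube_lemma_for_simulation_residuals}).

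The base case $\redseqtwo = \emptyDerivation$ is immediate, since both sides collapse to $\redex/\tm'$. For the inductive step write $\redseqtwo = \redseqthree\redextwo$. Using the composition identities for simulation residuals (\rlem{simulation_residuals_composition}) together with the ordinary projection identities of the orthogonal rewrite system $\lambdadist$ and of the $\lambda$-calculus, one rewrites
\[
  (\redex/\tm')/(\redseqtwo/\tm')
  \;=\;
  \big((\redex/\tm')/(\redseqthree/\tm')\big)\big/\big(\redextwo/(\tm'/\redseqthree)\big),
  \qquad
  (\redex/\redseqtwo)/(\tm'/\redseqtwo)
  \;=\;
  \big((\redex/\redseqthree)/\redextwo\big)\big/\big((\tm'/\redseqthree)/\redextwo\big).
\]
Applying the inductive hypothesis to the shorter derivation $\redseqthree$, and using that projection respects permutation equivalence, replaces $(\redex/\tm')/(\redseqthree/\tm')$ by $(\redex/\redseqthree)/(\tm'/\redseqthree)$ up to $\permeq$, so the whole claim reduces to a single instance of the Basic‑Cube equality — but now with the \emph{development} $\redex/\redseqthree$ of a coinitial set of $\lambda$-steps in place of the single step, over the refinement $\tm'/\redseqthree$.

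The main obstacle is precisely this last point: because $\beta$-reduction duplicates, $\redex/\redseqthree$ is in general not a single step, so \rlem{cube_lemma_for_simulation_residuals} does not apply verbatim. I will therefore first establish a mild generalization of the Basic Cube Lemma allowing an arbitrary complete development $\redseq$ of a coinitial set $\redexset$ in place of the single step; this follows from the single-step case by a secondary induction on the length of $\redseq$, splitting $\redseq$ one step at a time via \rlem{simulation_residuals_composition} and invoking \rlem{simulation_residual_of_a_development} and \rlem{compatibility_for_developments} to handle the intermediate simulation residuals of developments, with \rprop{compatibility_of_simulation_residuals_and_permutation_equivalence} used throughout to move freely between permutation-equivalent derivations. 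With this generalized Basic Cube statement available, the reduction above closes the induction and proves \rlem{generalized_cube_lemma__claim1}.
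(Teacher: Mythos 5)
Your proposal follows the same basic architecture as the paper's proof: induction on $\redseqtwo$, with the single-step/single-step \resultname{Basic Cube Lemma}~(\rlem{cube_lemma_for_simulation_residuals}) as the engine. The difference is the direction of the decomposition. The paper peels off the \emph{first} step, $\redseqtwo = \redextwo\redseqtwo'$, so that the Basic Cube Lemma is invoked exactly in the form in which it is stated --- two genuine single steps $\redex,\redextwo$ over the original refinement $\tm'$ --- and the inductive hypothesis is then applied with the set $\redex/\redextwo$ occupying the first slot. You peel off the \emph{last} step, which keeps the inductive hypothesis application clean but pushes the difficulty to the final rewriting step, where you must apply a cube identity with the development $\redex/\redseqthree$ against the single step $\redextwo$; you correctly identify that this forces you to generalize the Basic Cube Lemma to developments before the induction closes.

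The caveat is that the secondary induction you sketch for that generalization does not quite close as stated. If you split the first step $\redexthree$ off a development of $\redexset$, then in the recursive call the second slot is occupied by $\redextwo/\redexthree$, which is in general a \emph{set} (a $\beta$-step can duplicate $\redextwo$), so an inductive hypothesis of the form ``development versus single step'' is not applicable; you need the statement with sets, or developments, in \emph{both} positions. This is precisely the bootstrapping the paper carries out across \rlem{generalized_cube_lemma__claim1}, \rlem{generalized_cube_lemma__claim2} and \rlem{generalized_cube_lemma__claim3} before assembling the full Cube Lemma, so the fix is routine, but it must be made explicit: state and prove the two-sided generalization first, then run your right-to-left induction. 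To be fair, the paper's own application of the inductive hypothesis with the set $\redex/\redextwo$ in the first slot is informal in a symmetric way, so neither organization escapes strengthening some induction hypothesis; the paper's choice merely keeps every appeal to the Basic Cube Lemma literal.
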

\begin{proof}
By induction on $\redseqtwo$:
\begin{enumerate}
  \item {\bf Empty, $\redseqtwo = \emptyDerivation$.}
    Immediate since $\redex / \tm' = \redex / \tm'$.
  \item {\bf Non-empty, $\redseqtwo = \redextwo \redseqtwo'$.}
    Then:
    \[
      \begin{array}{rcll}
        (\redex/\tm')/(\redextwo\redseqtwo'/\tm')
      & = &
        (\redex/\tm')/((\redextwo/\tm')(\redseqtwo'/(\tm'/\redextwo)))
        & \text{ by definition} \\
      & = &
        ((\redex/\tm')/(\redextwo/\tm'))/(\redseqtwo'/(\tm'/\redextwo))
        & \text{ by definition} \\
      & \permeq &
        ((\redex/\redextwo)/(\tm'/\redextwo))/(\redseqtwo'/(\tm'/\redextwo))
        & \text{ by \rlem{cube_lemma_for_simulation_residuals}} \\
      & \permeq &
        ((\redex/\redextwo)/\redseqtwo')/((\tm'/\redextwo)/\redseqtwo')
        & \text{ by \ih} \\
      & = &
        (\redex/\redextwo\redseqtwo')/(\tm'/\redextwo\redseqtwo')
        & \text{ by definition}
      \end{array}
    \]
\end{enumerate}
\end{proof}

\begin{lemma}
\llem{generalized_cube_lemma__claim2}
Let $\redexset$ be a set of coinitial steps in the $\lambda$-calculus.
Then $(\redexset/\tm')/(\redextwo/\tm')$ and $(\redexset/\redextwo)/(\tm'/\redextwo)$
are equal as sets.
\end{lemma}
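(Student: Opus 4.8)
The plan is to reduce the statement to the single-step case and then take unions, the genuine content being supplied entirely by the Basic Cube Lemma for simulation residuals (\rlem{cube_lemma_for_simulation_residuals}). Throughout, $\redextwo$ denotes a step coinitial with every step of $\redexset$, and $\tm' \in \termsdist$ is a correct term with $\tm' \refines \src(\redextwo)$.

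First I would unfold the notation on both sides. Writing $\redexset = \set{\redex_1,\hdots,\redex_n}$, each operation involved distributes over this finite union. On the left, $\redexset/\tm' = \bigcup_{i=1}^n \redex_i/\tm'$ is a set of coinitial $\todist$-steps, and projecting a set of $\todist$-steps along (a canonical complete development of) $\redextwo/\tm'$ is by definition the union of the projections of its elements, so $(\redexset/\tm')/(\redextwo/\tm') = \bigcup_{i=1}^n (\redex_i/\tm')/(\redextwo/\tm')$. On the right, $\redexset/\redextwo = \bigcup_{i=1}^n \redex_i/\redextwo$ is a set of coinitial $\lambda$-steps, and reading $(\redexset/\redextwo)/(\tm'/\redextwo)$ as the union $\bigcup_{\redexthree \in \redexset/\redextwo} \redexthree/(\tm'/\redextwo)$ of simulation residuals of single steps (which is consistent, via \rlem{simulation_residual_of_a_development}, with taking the simulation residual of any complete development of $\redexset/\redextwo$), we get $(\redexset/\redextwo)/(\tm'/\redextwo) = \bigcup_{i=1}^n (\redex_i/\redextwo)/(\tm'/\redextwo)$. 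Comparing the two unions term by term, it then suffices to prove $(\redex_i/\tm')/(\redextwo/\tm') = (\redex_i/\redextwo)/(\tm'/\redextwo)$ for each $i$, and this is exactly the statement of \rlem{cube_lemma_for_simulation_residuals}, which has already been established by exhaustive case analysis.

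Two small points of bookkeeping must be dispatched along the way, neither of which is a real obstacle. The first is that the left-hand side does not depend on the particular canonical complete development of $\redextwo/\tm'$ chosen to represent it: this holds because in $\lambdadist$ permutation-equivalent derivations induce the same residuals (the fact $\redseq \permeq \redseqtwo \implies \redexthree/\redseq = \redexthree/\redseqtwo$ from the axiomatic summary), and all complete developments of a set of coinitial steps are permutation equivalent. The second is that projecting a single $\todist$-step along a derivation yields again a single step or the empty set, since $\lambdadist$ has no duplication or erasure by \rprop{strong_permutation}; this is what makes ``union of projections'' a bona fide set operation. Consequently the only mild subtlety I anticipate is stating the distribution-over-unions conventions precisely enough that the single-step reduction is unambiguous; once that is in place, the proof is a one-line appeal to \rlem{cube_lemma_for_simulation_residuals}.
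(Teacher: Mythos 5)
Your proof is correct and takes essentially the same route as the paper: the paper's proof is exactly the three-equality chain $(\redexset/\tm')/(\redextwo/\tm') = \bigcup_{\redex \in \redexset} (\redex/\tm')/(\redextwo/\tm') = \bigcup_{\redex \in \redexset} (\redex/\redextwo)/(\tm'/\redextwo) = (\redexset/\redextwo)/(\tm'/\redextwo)$, with the middle step supplied by \rlem{cube_lemma_for_simulation_residuals}. Your additional bookkeeping remarks (independence from the choice of complete development, no duplication or erasure in $\lambdadist$) are sound but left implicit in the paper.
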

\begin{proof}
Using \rlem{cube_lemma_for_simulation_residuals}:
\[
    (\redexset/\tm')/(\redextwo/\tm')
  =
    \bigcup_{\redex \in \redexset} (\redex/\tm')/(\redextwo/\tm')
  =
    \bigcup_{\redex \in \redexset} (\redex/\redextwo)/(\tm'/\redextwo)
  =
    (\redexset/\redextwo)/(\tm'/\redextwo)
\]
\end{proof}

\begin{lemma}
\llem{generalized_cube_lemma__claim3}
Let $\redexset$ be a set of coinitial steps in the $\lambda$-calculus,
and let $\redexset$ also stand for some (canonical) complete development of $\redexset$.
Then $(\redseq/\tm')/(\redexset/\tm') \permeq (\redseq/\redexset)/(\tm'/\redexset)$.
\end{lemma}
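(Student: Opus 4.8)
The statement to prove is \rlem{generalized_cube_lemma__claim3}: for a set of coinitial steps $\redexset$ in the $\lambda$-calculus (with $\redexset$ also denoting a canonical complete development), and $\tm' \refines \src(\redseq)$, we have $(\redseq/\tm')/(\redexset/\tm') \permeq (\redseq/\redexset)/(\tm'/\redexset)$.

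\medskip

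The plan is to proceed by induction on the derivation $\redseq$, mirroring the structure of the proof of \rlem{generalized_cube_lemma__claim1} but now with the single step $\redextwo$ replaced by the development $\redexset$. First I would dispatch the base case $\redseq = \emptyDerivation$: both sides reduce to $\emptyDerivation/(\redexset/\tm') = \emptyDerivation$ and $\emptyDerivation/(\tm'/\redexset) = \emptyDerivation$, so the equivalence is immediate. For the inductive step, write $\redseq = \redex\redseq'$. Expanding the definition of simulation residuals of a composition (\rlem{simulation_residuals_composition}), the left-hand side becomes $((\redex/\tm')/(\redexset/\tm'))(\redseq'/((\tm'/\redex)/(\redexset/\tm')))$, using that $\redexset/\tm'$ is itself a derivation so projection of a composite derivation along it distributes. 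The key move is then to rewrite $(\redex/\tm')/(\redexset/\tm')$ using \rlem{generalized_cube_lemma__claim1} (or rather its set-level analogue; see below) so that it matches $(\redex/\redexset)/(\tm'/\redexset)$, and similarly to massage the tail $\redseq'/(\cdots)$ so the inductive hypothesis applies.

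\medskip

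The care needed here is that $\redexset/\tm'$ stands for a complete development of the set $\redexset/\tm'$, and $\redex/\redexset$ similarly denotes a complete development; so the equalities I want are only equalities modulo permutation equivalence, and I must check that all the intermediate projections are compatible with $\permeq$. This is where \rprop{compatibility_of_simulation_residuals_and_permutation_equivalence} and the general axiomatic fact that $\redseqthree/\redseq = \redseqthree/\redseqtwo$ whenever $\redseq \permeq \redseqtwo$ come in. The chain of rewrites I have in mind is: expand via \rlem{simulation_residuals_composition}; apply \rlem{generalized_cube_lemma__claim1} to handle the head step $\redex$ against $\redexset$ viewed stepwise (noting a complete development of $\redexset$ can be taken as $\redextwo_1\cdots\redextwo_k$ and \rlem{generalized_cube_lemma__claim1} applied to each in turn, or more directly appeal to \rlem{simulation_residual_of_a_development} so that $\redexset/\tm'$ is a complete development of $\redexset/\tm'$); then apply the inductive hypothesis on $\redseq'$ with refinement $\tm'/\redex$ and development $\redexset/\redex$; and finally repackage using \rlem{simulation_residuals_composition} in the other direction together with the Cube Lemma for sets (\rlem{generalized_cube_lemma__claim2}) to recognize the right-hand side $(\redex\redseq'/\redexset)/(\tm'/\redexset)$.

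\medskip

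The main obstacle I anticipate is bookkeeping the difference between working with $\redexset$ as an abstract set of steps versus a concrete complete development of it, and making sure every occurrence of a projection-along-a-development is justified up to $\permeq$ by \rprop{compatibility_of_simulation_residuals_and_permutation_equivalence} and \rlem{compatibility_for_developments}. In particular, when I invoke \rlem{generalized_cube_lemma__claim1} for the head step, I am implicitly choosing an ordering of the development $\redexset$, and I need \rlem{simulation_residual_of_a_development} to guarantee that $\redseq/\tm'$-type residuals of different complete developments of the same set coincide modulo $\permeq$; this is exactly \rlem{compatibility_for_developments}, so the argument goes through. Everything else is a routine chaining of the definitional equalities for simulation residuals together with these compatibility lemmas, so I expect no genuinely new difficulty beyond careful tracking of which objects are "a set" and which are "a chosen development of that set."
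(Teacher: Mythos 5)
Your proposal follows essentially the same route as the paper: induction on $\redseq$, with the head step handled by \rlem{generalized_cube_lemma__claim1} applied to (a complete development of) $\redexset$, the set-level conversion by \rlem{generalized_cube_lemma__claim2}, the tail by the inductive hypothesis, and the $\permeq$-bookkeeping by \rprop{compatibility_of_simulation_residuals_and_permutation_equivalence}. One small slip: after distributing the projection over the composition, the second factor should read $(\redseq'/(\tm'/\redex))/((\redexset/\tm')/(\redex/\tm'))$ rather than $\redseq'/((\tm'/\redex)/(\redexset/\tm'))$, but the principle you invoke is the correct one and the rest of the chain goes through exactly as in the paper.
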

\begin{proof}
By induction on $\redseq$:
\begin{enumerate}
  \item {\bf Empty, $\redseq = \emptyDerivation$.}
    Immediate since $\emptyDerivation = \emptyDerivation$.
  \item {\bf Non-empty, $\redseq = \redex\redseq'$.}
    Then:
    \[
      \begin{array}{rcll}
        (\redex\redseq' / \tm') / (\redexset / \tm') 
      & = &
        (\redex/\tm') (\redseq' / (\tm'/\redex)) / (\redexset / \tm')
      & \text{ by definition}
      \\
      & = &
        ((\redex/\tm') / (\redexset / \tm')) ((\redseq' / (\tm'/\redex)) / ((\redexset / \tm') / (\redex/\tm')) )
      & \text{ by definition}
      \\
      & \permeq &
        ((\redex/\redexset) / (\tm' / \redexset)) ((\redseq' / (\tm'/\redex)) / ((\redexset / \tm') / (\redex/\tm')) )
        & \text{ by \rlem{generalized_cube_lemma__claim1}} \\
      & = &
        ((\redex/\redexset) / (\tm' / \redexset)) ((\redseq' / (\tm'/\redex)) / ((\redexset / \redex) / (\tm'/\redex)) )
        & \text{ by \rlem{generalized_cube_lemma__claim2}} \\
      & = &
        ((\redex/\redexset) / (\tm' / \redexset)) ((\redseq' / (\redexset / \redex)) / ((\tm'/\redex) / (\redexset / \redex)))
        & \text{ by \ih} \\
      & = &
        ((\redex/\redexset) / (\tm' / \redexset)) ((\redseq' / (\redexset / \redex)) / ((\tm'/\redexset) / (\redex / \redexset)))
        & \text{ by \rprop{compatibility_of_simulation_residuals_and_permutation_equivalence} ($\star$)} \\
      & = &
        (\redex/\redexset)(\redseq' / (\redexset / \redex)) / (\tm' / \redexset)
        & \text{ by definition} \\
      & = &
        (\redex\redseq'/\redexset) / (\tm' / \redexset)
        & \text{ by definition} \\
      \end{array}
    \]
For the equality marked with ($\star$), observe that
$(\tm'/\redex) / (\redexset / \redex) = \tm'/(\redex \sqcup \redexset)$
and
$(\tm'/\redexset) / (\redex / \redexset) = \tm'/(\redexset \sqcup \redex)$
by definition.
Moreover $\redex \sqcup \redexset \permeq \redexset \sqcup \redex$,
so by Compatibility~(\rprop{compatibility_of_simulation_residuals_and_permutation_equivalence}),
$(\tm'/\redex) / (\redexset / \redex) = (\tm'/\redexset) / (\redex / \redexset)$.
\end{enumerate}
\end{proof}

Now the proof of \rlem{generalized_cube_lemma} proceeds by induction on $\redseq$:
\begin{enumerate}
  \item {\bf Empty $\redseq = \emptyDerivation$.} Immediate since $\emptyDerivation = \emptyDerivation$.
  \item {\bf Non-empty, $\redseq = \redex \redseq'$.} Then:
    \[
      \begin{array}{rcll}
        (\redex \redseq'/\redseqtwo)/(\tm'/\redseqtwo)
      & = &
        (\redex/\redseqtwo)(\redseq'/(\redseqtwo/\redex))/(\tm'/\redseqtwo)
        & \text{ by definition} \\
      & = &
        ((\redex/\redseqtwo)/(\tm'/\redseqtwo))((\redseq'/(\redseqtwo/\redex))/((\tm'/\redseqtwo)/(\redex/\redseqtwo)))
        & \text{ by definition} \\
      & \permeq &
        ((\redex/\tm')/(\redseqtwo/\tm'))((\redseq'/(\redseqtwo/\redex))/((\tm'/\redseqtwo)/(\redex/\redseqtwo)))
        & \text{ by \rlem{generalized_cube_lemma__claim1}} \\
      & = &
        ((\redex/\tm')/(\redseqtwo/\tm'))((\redseq'/(\redseqtwo/\redex))/((\tm'/\redex)/(\redseqtwo/\redex)))
        & \text{ by \rprop{compatibility_of_simulation_residuals_and_permutation_equivalence} ($\star$)} \\
      & = &
        ((\redex/\tm')/(\redseqtwo/\tm'))
        ((\redseq'/(\tm'/\redex))/((\redseqtwo/\redex)/(\tm'/\redex)))
        & \text{ by \ih} \\
      & = &
        ((\redex/\tm')/(\redseqtwo/\tm'))
        ((\redseq'/(\tm'/\redex))/((\redseqtwo/\tm')/(\redex/\tm')))
        & \text{ by \rlem{generalized_cube_lemma__claim3}} \\
      & = &
        (\redex/\tm')(\redseq'/(\tm'/\redex))/(\redseqtwo/\tm')
        & \text{ by definition} \\
      & = &
        (\redex\redseq'/\tm')/(\redseqtwo/\tm')
        & \text{ by definition} \\
      \end{array}
    \]
\end{enumerate}
For the equality marked with ($\star$), observe that
$(\tm'/\redseqtwo) / (\redex / \redseqtwo) = \tm'/(\redseqtwo \sqcup \redex)$
and
$(\tm'/\redex) / (\redseqtwo / \redex) = \tm'/(\redex \sqcup \redseqtwo)$
by definition.
Moreover $\redseqtwo \sqcup \redex \permeq \redex \sqcup \redseqtwo$,
so by Compatibility~(\rprop{compatibility_of_simulation_residuals_and_permutation_equivalence}),
$(\tm'/\redseqtwo) / (\redex / \redseqtwo) = (\tm'/\redex) / (\redseqtwo / \redex)$.

\subsection{Proof of \rprop{properties_of_garbage} -- Properties of garbage}
\label{appendix_properties_of_garbage}

\begin{enumerate}
\item Let $\redseqtwo \permle \redseq$.
      Then $\redseqtwo\redseqthree \permeq \redseq$ for some $\redseqthree$,
      so $\redseqtwo/\tm' \permle (\redseqtwo/\tm')(\redseqthree/(\tm'/\redseqtwo)) = \redseqtwo\redseqthree/\tm' \permeq \redseq/\tm'$
      by \resultname{Compatibility}~(\rprop{compatibility_of_simulation_residuals_and_permutation_equivalence}).
\item Note that $\redseq\redseqtwo/\tm' = (\redseq/\tm')(\redseqtwo/(\tm'/\redseq))$.
      So $\redseq\redseqtwo/\tm'$ is empty if and only if
      $\redseq/\tm'$ and $\redseqtwo/(\tm'/\redseq)$ are empty.
\item Suppose that $\redseq/\tm' = \emptyDerivation$.
      Then $(\redseq/\redseqtwo)/(\tm'/\redseqtwo) = (\redseq/\tm')/(\redseqtwo/\tm')$
      by the \resultname{Cube Lemma}~(\rlem{generalized_cube_lemma}).
\item By the \resultname{Cube Lemma}~(\rlem{generalized_cube_lemma}):
      $(\redseq \sqcup \redseqtwo)/\tm' = \redseq(\redseqtwo/\redseq)/\tm' =
      (\redseq/\tm')((\redseqtwo/\redseq)/(\tm'/\redseq) \permeq
      (\redseq/\tm')((\redseqtwo/\tm')/(\redseq/\tm')) =
      (\redseq/\tm') \sqcup (\redseqtwo/\tm')$.
      So $(\redseq \sqcup \redseqtwo)/\tm'$ is empty if and only if
      $\redseq/\tm'$ and $\redseqtwo/\tm'$ are empty.
\end{enumerate}

\subsection{Proof of \rlem{sieving_well_defined} -- Sieving is well-defined}
\label{appendix_sieving_well_defined}

Note that $\tm'/\redex_0 \refines \redseq/\redex_0$ by definition of simulation residual,
so the recursive call can be made.

To see that recursion is well-founded, consider the measure
given by $M(\redseq,\tm') = \#\names(\redseq/\tm')$.
Let $\redex_0$ be a coarse step for $(\redseq,\tm')$.
Observe that $\redex_0 \permle \redseq$
so $\redex_0/\tm' \permle \redseq/\tm'$ by~\rcoro{algebraic_simulation}.
Then $\names(\redex_0/\tm') \subseteq \names(\redseq/\tm')$ by \rprop{prefixes_as_subsets}.
So we have:
\[
  \begin{array}{rcll}
         \#\names(\redseq/\tm')
   & > & \#(\names(\redseq/\tm') \setminus \names(\redex_0/\tm')) & \text{since $\redex_0/\tm' \neq \emptyset$ and $\names(\redex_0/\tm') \subseteq \names(\redseq/\tm')$} \\
   & = & \#\names((\redseq/\tm') / (\redex_0/\tm')) & \text{by \rlem{names_after_projection_along_a_step}} \\
   & = & \#\names((\redseq/\redex_0) / (\tm'/\redex_0))
  \end{array}
\]
as required.

\subsection{Proof of \rprop{properties_of_sieving} -- Properties of sieving}
\label{appendix_properties_of_sieving}

To prove \rprop{properties_of_sieving} we first prove various auxiliary results.

\begin{lemma}[The sieve is a prefix]
\llem{sieve_is_prefix}
Let $\redseq : \tm \tobeta^* \tmtwo$ and $\tm' \refines \tm$.
Then $\redseq \sieve \tm' \permle \redseq$.
\end{lemma}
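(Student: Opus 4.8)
The plan is to induct on the measure $M(\redseq,\tm') = \#\names(\redseq/\tm')$ that was used to prove well-definedness of sieving in \rlem{sieving_well_defined}; equivalently, one could induct on the length of $\redseq \sieve \tm'$. Unfolding the definition of the sieve, the argument then splits into exactly the two cases appearing in that definition, and the whole proof is a short, essentially bookkeeping, induction.

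First I would handle the base case: if there are no coarse steps for $(\redseq,\tm')$, then by definition $\redseq \sieve \tm' = \emptyDerivation$, and $\emptyDerivation \permle \redseq$ holds trivially. For the inductive step, suppose there is a coarse step, and let $\redex_0$ be the leftmost one, so that $\redseq \sieve \tm' = \redex_0\,((\redseq/\redex_0) \sieve (\tm'/\redex_0))$. By the definition of a coarse step we already have $\redex_0 \permle \redseq$, hence $\redseq \permeq \redex_0(\redseq/\redex_0)$. The recursive call is legitimate: $\tm'/\redex_0 \refines \redseq/\redex_0$ by the definition of simulation residual, and the measure strictly decreases when passing from $(\redseq,\tm')$ to $(\redseq/\redex_0,\tm'/\redex_0)$ — this is precisely the computation in the proof of \rlem{sieving_well_defined}, which uses \rcoro{algebraic_simulation}, \rprop{prefixes_as_subsets} and \rlem{names_after_projection_along_a_step}. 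So the induction hypothesis applies and gives $(\redseq/\redex_0) \sieve (\tm'/\redex_0) \permle \redseq/\redex_0$.

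To finish, I would invoke the fact that prefix is a left congruence, i.e.\ $\redseqtwo \permle \redseqthree$ implies $\redseq\redseqtwo \permle \redseq\redseqthree$ (one of the properties in the summary of orthogonal axiomatic rewrite systems), to conclude
\[
  \redseq \sieve \tm'
  \;=\; \redex_0\,((\redseq/\redex_0) \sieve (\tm'/\redex_0))
  \;\permle\; \redex_0(\redseq/\redex_0)
  \;\permeq\; \redseq .
\]
I do not expect any genuine obstacle here: once sieving is known to be well-defined, the statement is an immediate induction, and the only care required is to set up the decreasing measure correctly (reusing the argument from \rlem{sieving_well_defined}) and to apply the left-congruence property of $\permle$.
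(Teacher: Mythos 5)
Your proof is correct and follows essentially the same route as the paper: induction on the length of the sieve, the trivial base case, and in the inductive step the chain $\redex_0((\redseq/\redex_0) \sieve (\tm'/\redex_0)) \permle \redex_0(\redseq/\redex_0) \permeq \redseq$ using the induction hypothesis, left congruence of $\permle$, and the fact that $\redex_0 \permle \redseq$ since $\redex_0$ is coarse. The extra care you take in justifying the recursive call and the decreasing measure is exactly the content of the well-definedness lemma and does not change the argument.
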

\begin{proof}
By induction on the length of $\redseq \sieve \tm'$.
If there are no coarse steps for $(\redseq,\tm')$,
then trivially $\redseq \sieve \tm' = \emptyDerivation \permle \redseq$.
If there is a coarse step for $(\redseq,\tm')$,
let $\redex_0$ be the leftmost such step. Then:
  \[
    \begin{array}{rcll}
      \redseq \sieve \tm'
      & = & \redex_0((\redseq/\redex_0) \sieve (\tm'/\redex_0)) \\
      & \permle & \redex_0(\redseq/\redex_0) & \text{by \ih} \\
      & \equiv  & \redseq(\redex_0/\redseq)  & \text{since $A(B/A) \permeq B(A/B)$ in general} \\
      & =       & \redseq                    & \text{since $\redex_0 \permle \redseq$ as $\redex_0$ is coarse for $(\redseq,\tm')$} \\
    \end{array}
  \]
\end{proof}

We also need the following technical lemma:
\begin{lemma}[Refinement of a context]
\llem{refinement_context}
The following are equivalent:
  \begin{enumerate}
  \item $\tm' \refines \conof{\tmtwo}$,
  \item $\tm'$ is of the form $\con'\of{\tmtwo'_{1},\hdots,\tmtwo'_n}$,
        where $\con'$ is an $n$-hole context such that $\con' \refines \con$ and
        $\tmtwo'_i \refines \tmtwo$ for all $1 \leq i \leq n$.
        Note that $n$ might be $0$, in which case $\con'$ is a term.
  \end{enumerate}
Moreover, in the implication $(1 \implies 2)$,
the decomposition is unique,
\ie the context $\con'$,
the number of holes $n \geq 0$,
and the terms $\tmtwo'_1,\hdots,\tmtwo'_n$
are the unique possible such objects.
\end{lemma}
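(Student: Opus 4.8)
The plan is to prove both implications, together with the uniqueness claim, by a single structural induction on the one-hole context $\con$, invoking at each step the fact that the refinement rules \indrulename{r-var}, \indrulename{r-lam}, \indrulename{r-app} --- and the clause $\conbase \refines \conbase$ by which refinement is extended to multi-hole contexts --- are syntax-directed. The conceptual content is that, although an application occurring in $\con$ may erase or duplicate the hole of $\con$, this erasure or duplication is confined to argument positions; hence a refinement of $\conof{\tmtwo}$ always decomposes as a refinement $\con'$ of $\con$ whose holes are filled by refinements of $\tmtwo$, and the number of holes of $\con'$ merely records how many copies of $\tmtwo$ survive.

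The direction $(2 \Rightarrow 1)$ is routine: given $\con' \refines \con$ with $\tmtwo'_i \refines \tmtwo$, an easy induction following the structure of $\con$ shows $\con'\of{\tmtwo'_1,\hdots,\tmtwo'_n} \refines \conof{\tmtwo}$, since plugging the $\tmtwo'_i$ into the holes of $\con'$ mirrors plugging $\tmtwo$ into the hole of $\con$; in the base case $\con = \conbase$ one has $\con' = \conbase$, $n = 1$, and $\con'\of{\tmtwo'_1} = \tmtwo'_1 \refines \tmtwo$. So the substance is $(1 \Rightarrow 2)$ with uniqueness, which I carry out by induction on $\con$. If $\con = \conbase$, then $\conof{\tmtwo} = \tmtwo$; since the only context refining $\conbase$ is $\conbase$ itself, necessarily $n = 1$, $\con' = \conbase$, $\tmtwo'_1 = \tm'$, and the hypothesis $\tm' \refines \tmtwo$ is exactly what is required. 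If $\con = \lam{\var}{\con_1}$, then $\conof{\tmtwo}$ is an abstraction, so by inversion $\tm' = \lamp{\lab}{\var}{\tm''}$ with $\tm'' \refines \con_1\of{\tmtwo}$; the \ih applied to $\tm''$ yields $\tm'' = \con'_1\of{\tmtwo'_1,\hdots,\tmtwo'_n}$, and we take $\con' := \lamp{\lab}{\var}{\con'_1}$. If $\con = \con_1\,\tm$, then $\conof{\tmtwo} = \con_1\of{\tmtwo}\,\tm$, so $\tm' = \tm'_0[\tmfour'_1,\hdots,\tmfour'_m]$ with $\tm'_0 \refines \con_1\of{\tmtwo}$ and $\tmfour'_j \refines \tm$; the \ih applied to $\tm'_0$ gives $\tm'_0 = \con'_1\of{\tmtwo'_1,\hdots,\tmtwo'_n}$, and we take $\con' := \con'_1[\tmfour'_1,\hdots,\tmfour'_m]$.

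The interesting case is $\con = \tm\,\con_1$, where $\conof{\tmtwo} = \tm\,\con_1\of{\tmtwo}$ and so $\tm' = \tm'_0[\tmfour'_1,\hdots,\tmfour'_m]$ with $\tm'_0 \refines \tm$ and each $\tmfour'_j \refines \con_1\of{\tmtwo}$ for $j = 1..m$; here $m$ may be $0$, which is precisely the case in which the argument, and with it the hole, is erased. Applying the \ih to each $\tmfour'_j$ separately, we obtain $\tmfour'_j = \con'_{1,j}\of{\tmtwo'_{j,1},\hdots,\tmtwo'_{j,k_j}}$ with $\con'_{1,j} \refines \con_1$ and $\tmtwo'_{j,i} \refines \tmtwo$; we then set $\con' := \tm'_0[\con'_{1,1},\hdots,\con'_{1,m}]$, an $n$-hole context with $n = \sum_{j=1}^{m} k_j$, whose list of hole-fillers is the concatenation of the lists $[\tmtwo'_{j,1},\hdots,\tmtwo'_{j,k_j}]$. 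I expect this to be the main obstacle, and the only point needing real care, namely verifying that $\con'$ indeed refines $\con = \tm\,\con_1$ in the extended sense: this uses that the \indrulename{r-app} rule takes a list of premises, one per argument copy, so $\con'_{1,j} \refines \con_1$ for each $j$ together with $\tm'_0 \refines \tm$ exactly fits the rule. Uniqueness propagates through the same induction: in every case the outermost constructor of $\tm'$ is forced by that of $\conof{\tmtwo}$, hence of $\con$, so the immediate components of $\tm'$ (the label and body of an abstraction, or the head and the argument list of an application) are determined, the partition of the holes among the $m$ argument slots is determined because the argument list of $\tm'$ is, and the \ih supplies uniqueness of the sub-decompositions.
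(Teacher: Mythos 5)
Your proof is correct and follows exactly the route the paper intends for this lemma, whose proof is stated only as ``straightforward by induction on $\con$'': a structural induction on the one-hole context, with the right-of-application case handling erasure/duplication of the hole by applying the induction hypothesis to each argument copy and concatenating the resulting hole lists. The uniqueness argument is also sound, since in each case the outermost refinement rule is forced by the shape of $\conof{\tmtwo}$ and the sub-decompositions are unique by the induction hypothesis.
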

\begin{proof}
Straightforward by induction on $\con$.
\end{proof}

\begin{lemma}[Garbage only interacts with garbage]
\llem{garbage_only_creates_garbage}
\llem{garbage_only_duplicates_garbage}
The following hold:
\begin{enumerate}
\item {\bf Garbage only creates garbage.}
  Let $\redex$ and $\redextwo$ be composable steps in the $\lambda$-calculus,
  and let $\tm' \refines \src(\redex)$.
  If $\redex$ creates $\redextwo$ and $\redex$ is $\tm'$-garbage,
  then $\redextwo$ is $(\tm'/\redex)$-garbage.
\item {\bf Garbage only duplicates garbage.}
  Let $\redex$ and $\redextwo$ be coinitial steps in the $\lambda$-calculus
  and let $\tm' \refines \src(\redex)$.
  If $\redex$ duplicates $\redextwo$, \ie $\#(\redextwo/\redex) > 1$,
  and $\redex$ is $\tm'$-garbage,
  then $\redextwo$ is $(\tm'/\redex)$-garbage.
\end{enumerate}
\end{lemma}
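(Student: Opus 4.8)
The plan is to reduce both items to a single structural criterion for a step being garbage, and then to run through the standard ways in which $\beta$-redexes get created or duplicated.

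First I would make the informal slogan ``$\redex$ is $\tm'$-garbage iff the pattern of $\redex$ lies inside a subterm left untyped by $\tm'$'' precise. Write $\src(\redex) = \conof{\Delta}$ with $\Delta = (\lam{\var}{\tm})\,\tmtwo$ the contracted redex and $\con$ a one-hole context. By \rlem{refinement_context} every refinement of $\conof{\Delta}$ has the form $\contwoof{\Delta'_1,\hdots,\Delta'_n}$ with $\contwo \refines \con$ and $\Delta'_i \refines \Delta$, and by \rlem{simulation_of_context} one has $\names(\redex/\tm') = \sum_{i=1}^n \names(\redex/\Delta'_i)$; since each $\Delta'_i$ has the form $(\lamp{\lab_i}{\var}{\tm'})[\ldots]$, the simulation of the root $\beta$-step contracts at least the lambda labelled $\lab_i$, so $\redex/\Delta'_i \neq \emptyDerivation$. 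Hence $\redex$ is $\tm'$-garbage iff $n = 0$, i.e.\ iff $\tm'$ is itself a context $\contwo \refines \con$ with no hole where $\Delta$ sits; equivalently, $\Delta$ lies inside an outermost subterm $\tmsix$ of $\src(\redex)$ that occurs as an argument $\tmfive\,\tmsix$ which $\tm'$ refines with an empty argument list, so everything inside $\tmsix$ is erased. In this case $\redex/\tm' = \emptyDerivation$ forces $\tm'/\redex = \tm'$, and by the simulation diagram $\tm' \refines \tgt(\redex)$, where now $\tm'$ erases the contractum $\tmsix'$ of $\tmsix$. The criterion I will use is therefore: for any refinement $\tm''$ of a term $\tmthree$, a step out of $\tmthree$ whose contracted redex is a subterm of a subterm erased by $\tm''$ is $\tm''$-garbage.

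For the first item I would take the $\tmsix$ above and invoke the classical classification of $\beta$-redex creation (three cases): (I)~$\Delta = (\lam{\var}{\var})(\lam{\vartwo}{\tm})$ feeding an outer application, creating $(\lam{\vartwo}{\tm})\,\tmthree$; (II)~$\Delta = (\lam{\var}{\lam{\vartwo}{\tm}})\,\tmtwo$ feeding an outer application, creating $(\lam{\vartwo}{\subs{\tm}{\var}{\tmtwo}})\,\tmthree$; (III)~$\Delta = (\lam{\var}{\con_2\of{\var\,\tmthree}})\,\tmtwo$ with $\tmtwo$ an abstraction substituted for $\var$, creating a redex at the former position of $\var$. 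In cases I and II the created redex is rooted at the application immediately above $\Delta$; since $\Delta$ is the \emph{left} child there while an erased argument is always a \emph{right} child, $\Delta \neq \tmsix$, so $\Delta$ is a proper subterm of $\tmsix$ and its parent application already lies inside the occurrence of $\tmsix$. In case III the created redex is rooted strictly below the root of $\Delta$, hence inside $\tmsix$. In every case the contracted redex of $\redextwo$ is a subterm of $\tmsix'$ in $\tgt(\redex)$, so the criterion applied to $\tm'/\redex = \tm'$ (as a refinement of $\tgt(\redex)$) gives that $\redextwo$ is $(\tm'/\redex)$-garbage.

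For the second item the argument is similar and more direct: if $\redex$ duplicates $\redextwo$, the contracted redex of $\redextwo$ sits inside the argument $\tmtwo$ of $\Delta$, hence inside $\tmsix$, so by the criterion (applied to $\tm'$ as a refinement of $\src(\redex)$) the step $\redextwo$ is $\tm'$-garbage; since $\redex$ is garbage we have $\tm'/\redex = \tm'$, which is exactly the stated conclusion, and each residual in $\redextwo/\redex$ is a copy of that redex inside one of the copies of $\tmtwo$ produced in the contractum $\subs{\tm}{\var}{\tmtwo}$, hence inside $\tmsix'$, matching the same picture in $\tgt(\redex)$. I expect the main obstacle to be precisely the first paragraph: turning ``the pattern lies in an untyped subterm'' into the count $n = 0$ via \rlem{refinement_context} and \rlem{simulation_of_context}, and verifying that one may choose $\tmsix$ large enough — the outermost erased argument above $\Delta$ — so that the parent application of $\Delta$ is still contained in it, which is what makes the upward creation cases I and II go through; the case analysis on creation and duplication is then routine.
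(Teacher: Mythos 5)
Your proposal is correct and follows essentially the same route as the paper's proof: both reduce ``$\redex$ is $\tm'$-garbage'' to the multiplicity $n=0$ in the decomposition given by \rlem{refinement_context}, and then run through L\'evy's three creation cases and the duplication case, observing that in cases I and II the created redex sits at the parent application of the contracted pattern (which is in function position, so erasure propagates up to it), while in case III and in duplication the relevant pattern sits inside the erased region. The paper phrases the final step of each case as a uniqueness argument for the decomposition of $\tm'$ with respect to the target term rather than via your ``outermost erased argument'' $\tmsix$, but this is only a difference of packaging, not of substance.
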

\begin{proof}
We prove each item separately:
\begin{enumerate}
\item
  According to L\'evy~\cite{Tesis:Levy:1978},
  there are three creation cases in the $\lambda$-calculus.
  We consider the three possibilities for $\redex$ creating $\redextwo$:
  \begin{enumerate}
  \item
    {\bf Case I,
      $
         \conof{(\lam{\var}{\var})(\lam{\vartwo}{\tmtwo})\tmthree}
       \toabeta{\redex}
         \conof{(\lam{\vartwo}{\tmtwo})\tmthree}
       \toabeta{\redextwo}
         \conof{\subs{\tmtwo}{\vartwo}{\tmthree}}
      $.}
    Then by \rlem{refinement_context},
    the term $\tm'$ is of the form
    $\con'\of{\Delta_1,\hdots,\Delta_n}$
    where $\con'$ is an $n$-hole context such that $\con' \refines \conof{\conbase\,\tmthree}$
    and $\Delta_i \refines (\lam{\var}{\var})(\lam{\vartwo}{\tmtwo})$ for all $1 \leq i \leq n$.
    Since $\redex$ is garbage, we know that actually $n = 0$.
    So $\tm' \refines \conof{\conbase\,\tmthree}$ and $\redex/\tm' : \tm' \rtodist \tm' = \tm'/\redex$ in zero steps.
    Hence $\tm' \refines \conof{(\lam{\vartwo}{\tmtwo})\tmthree}$,
    so by \rlem{refinement_context},
    the term $\tm'$ can be written in a unique way as
    $\con''\of{\Sigma_1,\hdots,\Sigma_m}$, where $\con''$ is an $m$-hole context
    such that $\con'' \refines \conof{\conbase\,\tmthree}$ and $\Sigma_i \refines \lam{\vartwo}{\tmtwo}$ for all $1 \leq i \leq m$.
    Since the decomposition is unique and $\tm' \refines \conof{\conbase\,\tmthree}$,
    we conclude that $m = 0$.
    Hence $\redextwo$ is $(\tm'/\redex)$-garbage.
  \item
    {\bf Case II,
      $
         \conof{(\lam{\var}{\lam{\vartwo}{\tmtwo}})\,\tmthree\,\tmfour}
       \toabeta{\redex}
         \conof{(\lam{\vartwo}{\subs{\tmtwo}{\var}{\tmthree}})\,\tmfour}
       \toabeta{\redextwo}
         \conof{\subs{\subs{\tmtwo}{\var}{\tmthree}}{\vartwo}{\tmfour}}
      $.}
    Then by \rlem{refinement_context},
    the term $\tm'$ is of the form
    $\con'\of{\Delta_1,\hdots,\Delta_n}$
    where $\con'$ is an $n$-hole context such that $\con' \refines \conof{\conbase\,\tmfour}$
    and $\Delta_i \refines (\lam{\var}{\lam{\vartwo}{\tmtwo}})\,\tmthree$ for all $1 \leq i \leq n$.
    Since $\redex$ is garbage, we know that actually $n = 0$.
    So $\tm' \refines \conof{\conbase\,\tmfour}$ and $\redex/\tm' : \tm' \rtodist \tm' = \tm'/\redex$
    in zero steps. Hence $\tm' \refines \conof{(\lam{\vartwo}{\subs{\tmtwo}{\var}{\tmthree}})\,\tmfour}$,
    so by \rlem{refinement_context}, the term $\tm'$ can be written in a unique way as
    $\con''\of{\Sigma_1,\hdots,\Sigma_n}$, where $\con''$ is an $m$-hole context such that
    $\con'' \refines \conof{\conbase\,\tmfour}$ and $\Sigma_i \refines \lam{\vartwo}{\subs{\tmtwo}{\var}{\tmthree}}$
    for all $1 \leq i \leq m$.
    Since the decomposition is unique and $\tm' \refines \conof{\conbase\,\tmfour}$,
    we conclude that $m = 0$.
    Hence $\redextwo$ is $(\tm'/\redex)$-garbage.
  \item {\bf Case III, $
         \con_1\of{(\lam{\var}{\con_2\of{\var\,\tmtwo}})\,(\lam{\vartwo}{\tmthree})}
       \toabeta{\redex}
         \con_1\of{\hat{\con}_2\of{(\lam{\vartwo}{\tmthree})\,\hat{\tmtwo}}}
       \toabeta{\redextwo}
         \con_1\of{\hat{\con}_2\of{\subs{\tmthree}{\vartwo}{\hat{\tmtwo}}}}
      $,
      where
        $\hat{\con}_2 = \subs{\con_2}{\var}{\lam{\vartwo}{\tmthree}}$
      and
        $\hat{\tm} = \subs{\tm}{\var}{\lam{\vartwo}{\tmthree}}$.}
    Then by \rlem{refinement_context},
    the term $\tm'$ is of the form $\con'\of{\Delta_1,\hdots,\Delta_n}$
    where $\con'$ is an $n$-hole context such that $\con' \refines \con_1$
    and $\Delta_i \refines (\lam{\var}{\con_2\of{\var\,\tmtwo}})\,(\lam{\vartwo}{\tmthree})$ for all $1 \leq i \leq n$.
    Since $\redex$ is garbage, we know that actually $n = 0$.
    So $\tm' \refines \con_1$ and $\redex/\tm' : \tm' \rtodist \tm' = \tm'/\redex$ in zero steps.
    Hence $\tm' \refines \con_1\of{\hat{\con}_2\of{(\lam{\vartwo}{\tmthree})\,\hat{\tmtwo}}}$,
    so by \rlem{refinement_context}, the term $\tm'$ can be written in a unique way as
    $\con''\of{\Sigma_1,\hdots,\Sigma_m}$, where $\con'' \refines \con_1$ and
    $\Sigma_i \refines \hat{\con}_2\of{(\lam{\vartwo}{\tmthree})\,\hat{\tmtwo}}$ for all $1 \leq i \leq m$.
    Since the decomposition is unique and $\tm' \refines \con_1$, we conclude that $m = 0$.
    Hence $\redextwo$ is $(\tm'/\redex)$-garbage.
  \end{enumerate}
\item
  Since $\redex$ duplicates $\redextwo$,
  the redex contracted by $\redextwo$ 
  lies inside the argument of $\redex$, that is,
  the source term is of the form $\con_1\of{(\lam{\var}{\tm})\con_2\of{(\lam{\vartwo}{\tmtwo})\tmthree}}$
  where the pattern of $\redex$ is $(\lam{\var}{\tm})\con_2\of{(\lam{\vartwo}{\tmtwo})\tmthree}$,
  and the pattern of $\redextwo$ is $(\lam{\vartwo}{\tmtwo})\tmthree$.
  By \rlem{refinement_context},
  the term $\tm'$ is of the form $\con'\of{\Delta_1,\hdots,\Delta_n}$
  where $\con'$ is an $n$-hole context
  such that $\con' \refines \con$ and $\Delta_i \refines (\lam{\var}{\tm})\con_2\of{(\lam{\vartwo}{\tmtwo})\tmthree}$ for all $1 \leq i \leq n$.
  Since $\redex$ is garbage, we know that $n = 0$.
  By \rlem{refinement_context},
  the term $\tm'$ can be written as $\con''\of{\Sigma_1,\hdots,\Sigma_m}$
  where $\con'' \refines \con_1\of{(\lam{\var}{\tm})\con_2}$
  and $\Sigma_i \refines (\lam{\vartwo}{\tmtwo})\tmthree$ for all $1 \leq i \leq m$.
  Note that $\tm' \refines \con_1$ so $\tm' \refines \con_1\of{(\lam{\var}{\tm})\con_2}$,
  as can be checked by induction on $\con_1$.
  Since the decomposition is unique, this means that $m = 0$,
  and thus $\redextwo$ is garbage.
\end{enumerate}
\end{proof}

\begin{proposition}[Characterization of garbage]
\lprop{characterization_of_garbage}
Let $\redseq : \tm \rtobeta \tmtwo$ and $\tm' \refines \tm$.
The following are equivalent:
\begin{enumerate}
\item $\redseq \sieve \tm' = \emptyDerivation$.
\item There are no coarse steps for $(\redseq,\tm')$.
\item The derivation $\redseq$ is $\tm'$-garbage.
\end{enumerate}
\end{proposition}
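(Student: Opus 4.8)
The plan is to prove the chain of implications $(1 \implies 2 \implies 3 \implies 1)$. The statement is \rprop{characterization_of_garbage}, asserting the equivalence of: (1) $\redseq \sieve \tm' = \emptyDerivation$; (2) there are no coarse steps for $(\redseq,\tm')$; and (3) $\redseq$ is $\tm'$-garbage.

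First I would handle $(1 \implies 2)$ and $(2 \implies 1)$ together, since they are essentially immediate from the definition of sieving. Recall that if there \emph{were} a coarse step for $(\redseq,\tm')$, then by definition $\redseq \sieve \tm' = \redex_0((\redseq/\redex_0) \sieve (\tm'/\redex_0))$ where $\redex_0$ is the leftmost coarse step, which is a nonempty derivation. Conversely, if there are no coarse steps, then by definition $\redseq \sieve \tm' = \emptyDerivation$. So $(1 \iff 2)$ is just unfolding the definition; the content is entirely in relating these to item~(3).

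Next I would prove $(2 \implies 3)$: if there are no coarse steps for $(\redseq,\tm')$, then $\redseq$ is $\tm'$-garbage. Writing $\redseq = \redex_1\redex_2\hdots\redex_n$, recall that a coarse step is a step $\redex$ with $\redex \permle \redseq$ and $\redex/\tm' \neq \emptyset$. In particular, each prefix step $\redex_1$ satisfies $\redex_1 \permle \redseq$, so the absence of coarse steps forces $\redex_1/\tm' = \emptyset$, i.e.\ $\redex_1$ is $\tm'$-garbage. I would then argue inductively on the length of $\redseq$: once $\redex_1$ is garbage, I need that $\redex_2\hdots\redex_n$ has no coarse steps for $(\redseq/\redex_1, \tm'/\redex_1)$, and conclude by the induction hypothesis together with the definition of garbage for a composite (a step $\redex_2$ being $(\tm'/\redex_1)$-garbage, etc., as spelled out in the paragraph preceding \rdef{garbage_free_derivation}). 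The subtle point is showing that a coarse step for $(\redseq/\redex_1, \tm'/\redex_1)$ would lift back to a coarse step for $(\redseq,\tm')$; here I would use \rlem{full_stability} (Full Stability) exactly as it is used in the well-definedness argument for sieving and in \rlem{properties_of_disjoint_derivations}, to produce a common ancestor, combined with \rcoro{algebraic_simulation} to track how projection to $\tm'$ behaves. Alternatively, since $\redex_1$ is garbage, \rlem{garbage_only_creates_garbage} (garbage only creates/duplicates garbage) controls what the remaining steps can be.

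Finally, $(3 \implies 2)$, which I expect to be the main obstacle: if $\redseq$ is $\tm'$-garbage, I must show there is no coarse step, i.e.\ no step $\redex$ with $\redex \permle \redseq$ and $\redex/\tm' \neq \emptyset$. Suppose toward a contradiction that such $\redex$ exists. Since $\redex \permle \redseq$, we have $\redex/\tm' \permle \redseq/\tm'$ by \rcoro{algebraic_simulation} (Algebraic Simulation), hence $\names(\redex/\tm') \subseteq \names(\redseq/\tm')$ by \rprop{prefixes_as_subsets}. But $\redseq$ being garbage means $\redseq/\tm' = \emptyDerivation$, so $\names(\redseq/\tm') = \emptyset$, forcing $\names(\redex/\tm') = \emptyset$ and thus $\redex/\tm' = \emptyDerivation$ by \rcoro{permutation_equivalence_in_terms_of_names}, contradicting $\redex/\tm' \neq \emptyset$. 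This direction is actually short once the machinery of \rsec{simulation} and the $\names$-characterization from \rprop{labels_morphism} are in hand; the only care needed is to confirm that "$\redex/\tm' \neq \emptyset$'' (a set of $\todist$-steps being nonempty) is the same as "$\redex/\tm'$ as a derivation is not $\permeq \emptyDerivation$'', which holds because a nonempty development has nonempty label set. With all three implications established, the proof is complete.
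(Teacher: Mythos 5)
Your proposal is correct and follows essentially the same route as the paper: $(1 \iff 2)$ by unfolding the definition of sieving, $(3 \implies 2)$ via Algebraic Simulation and the prefix/label characterization, and $(2 \iff 3)$'s hard direction by pulling a non-garbage step backwards through a garbage prefix using the Cube Lemma together with the ``garbage only creates/duplicates garbage'' lemma (the paper phrases this as the contrapositive with a decomposition $\redseq = \redseq_1\redex\redseq_2$ rather than your step-by-step induction, but the content is identical). One small note: of the two tools you hedge between for the lifting step, it is \rlem{garbage_only_creates_garbage} (and its duplication counterpart) that does the work, not Full Stability, which is not needed here.
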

\begin{proof}
It is immediate to check that items 1 and 2 are equivalent, by definition of sieving,
so let us prove $2 \implies 3$ and $3 \implies 2$:
\begin{itemize}
\item $(2 \implies 3)$
  We prove the contrapositive, namely that if $\redseq$ is not garbage,
  there is a coarse step for $(\redseq,\tm')$.
  Suppose that $\redseq$ is not garbage, \ie $\redseq/\tm' \neq \emptyDerivation$.
  Then by \rprop{properties_of_garbage}
  we have that $\redseq$ can be written as $\redseq = \redseq_1 \redex \redseq_2$ where
  all the steps in $\redseq_1$ are garbage and $\redex$ is not garbage.
  By the fact that garbage only creates garbage~(\rlem{garbage_only_creates_garbage})
  the step $\redex$ has an ancestor $\redex_0$, \ie $\redex \in \redex_0/\redseq_1$.
  Moreover, since garbage only duplicates garbage~(\rlem{garbage_only_duplicates_garbage})
  we have that $\redex_0/\redseq_1 = \redex$.
  Given that $\redex$ is not garbage, we have that:
  \[
    \begin{array}{rcll}
          (\redex_0/\tm')/(\redseq_1/\tm')
    & = & (\redex_0/\redseq_1)/(\tm'/\redseq_1) & \text{ by \rlem{generalized_cube_lemma}} \\
    & = & \redex/(\tm'/\redseq_1) \\
    & \neq & \emptyset \\
    \end{array}
  \]
  Since $(\redex_0/\tm')/(\redseq_1/\tm') \neq \emptyset$, in particular,
  $\redex_0/\tm' \neq \emptyset$, which means that $\redex_0$ is not garbage.
  Moreover, $\redex_0 \permle \redseq_1\redex\redseq_2 = \redseq$.
  So $\redex_0$ is coarse for $(\redseq,\tm')$.
\item $(3 \implies 2)$
  Let $\redseq$ be garbage, suppose that there is a coarse step $\redex$ for $(\redseq,\tm')$,
  and let us derive a contradiction.
  Since $\redex$ is coarse for $(\redseq,\tm')$,
  we have that $\redex \permle \redseq$,
  so $\redex/\tm' \permle \redseq/\tm'$ by \rcoro{simulation_residuals_and_prefixes}.
  But $\redseq/\tm'$ is empty because $\redseq$ is $\tm'$-garbage,
  that is, $\redex/\tm' \permle \redseqtwo/\tm' = \emptyDerivation$,
  which means that $\redex$ is also $\tm'$-garbage.
  This contradicts the fact that $\redex$ is coarse for $(\redseq,\tm')$.
\end{itemize}
\end{proof}

\begin{lemma}[The leftmost coarse step has at most one residual]
\llem{leftmost_coarse_step_has_at_most_one_residual}
Let $\redex_0$ be the leftmost coarse step for $(\redseq,\tm')$,
and let $\redseqtwo \permle \redseq$.
Then $\#(\redex_0/\redseqtwo) \leq 1$.
\end{lemma}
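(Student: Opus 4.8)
The plan is to argue by contradiction, using the fact that $\redex_0$ is the \emph{leftmost} coarse step, which I read (as is standard) with respect to the leftmost--outermost ordering of the redex occurrences of $\src(\redseq)$: in particular, no coarse step for $(\redseq,\tm')$ has a pattern that \emph{strictly contains} the pattern of $\redex_0$.

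Suppose $\#(\redex_0/\redseqtwo)\geq 2$. Since $\redex_0/\emptyDerivation=\set{\redex_0}$ and the set of residuals of $\redex_0$ along a prefix of $\redseqtwo$, once empty, stays empty, the cardinality of $\redex_0/\rho$ must be exactly $1$ for every prefix $\rho$ of $\redseqtwo$ that is not longer than the longest prefix $\redseqtwo_1$ for which the subsequent step $\redextwo$ (i.e.\ with $\redseqtwo_1\redextwo$ still a prefix of $\redseqtwo$) satisfies $\#((\redex_0/\redseqtwo_1)/\redextwo)\geq 2$. Write $\redex_0/\redseqtwo_1=\set{\redex'}$. Then $\redextwo$ duplicates $\redex'$, so the pattern of $\redex'$ lies strictly inside the argument of the pattern of $\redextwo$.

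Next I would trace this enclosure backwards along $\redseqtwo_1$, proving by induction on $|\rho|$ the following statement: \emph{if $\rho$ is a prefix of $\redseqtwo_1$ and $E$ is a redex of $\tgt(\rho)$ with $E\permle\redseqtwo$ whose pattern strictly contains the pattern of the (unique) residual $\redex_0/\rho$, then there is a redex $\redexthree$ of $\src(\redseq)$ with $\redexthree\permle\redseqtwo$ whose pattern strictly contains the pattern of $\redex_0$.} For the inductive step one writes $\rho=\rho'\redex''$ and examines how the enclosure of $\redex_0/\rho'\redex''$ inside the pattern of $E$ can have arisen from the single step $\redex''$: either $(a)$ $E$ has an ancestor across $\redex''$ whose pattern already contained the pattern of $\redex_0/\rho'$, or $(b)$ the enclosure is produced by $\redex''$ itself — either $E$ is created by $\redex''$, or $\redex''$ substitutes a residual of $\redex_0$ into the scope of $E$ — in which case the three redex-creation cases of $\beta$-reduction (\cf~\cite{Tesis:Levy:1978}) force the pattern of $\redex''$ to strictly contain the pattern of $\redex_0/\rho'$. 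In either case the inductive hypothesis applies at $\rho'$, and the recursion bottoms out at $\src(\redseq)$; instantiating the statement with $\rho:=\redseqtwo_1$ and $E:=\redextwo$ produces $\redexthree$, with $\redexthree\permle\redseqtwo\permle\redseq$.

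Finally, $\redexthree$ is coarse for $(\redseq,\tm')$: we have $\redexthree\permle\redseq$, and $\redexthree/\tm'\neq\emptyset$ because, by \rlem{refinement_context} applied to $\tm'\refines\src(\redseq)$ with the pattern of $\redexthree$ distinguished, $\tm'$ decomposes as $\con\of{\Delta_1,\dots,\Delta_n}$ with each $\Delta_i$ a refinement of the pattern of $\redexthree$, hence a $\lambdadist$-redex, and the constructive proof of \resultname{Simulation}~(\rprop{simulation}) makes $\redexthree/\tm'$ the complete development of $\set{\Delta_1,\dots,\Delta_n}$; were $n=0$, then neither the pattern of $\redexthree$ nor the pattern of $\redex_0$ contained in it would be refined in $\tm'$, contradicting $\redex_0/\tm'\neq\emptyset$. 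Since the pattern of $\redexthree$ strictly contains that of $\redex_0$, the step $\redexthree$ is a coarse step strictly dominating $\redex_0$ in the leftmost--outermost order, contradicting the choice of $\redex_0$; hence $\#(\redex_0/\redseqtwo)\leq 1$. I expect the main obstacle to be the trace-back induction: it requires a careful description of how the ``enclosure'' relation between a residual of $\redex_0$ and a surrounding redex is transported along a single $\beta$-step, in particular the subcase in which the surrounding redex, or the enclosure itself, is created by that step; the remaining steps are short combinations of residual identities, \rlem{refinement_context}, and the definition of the leftmost coarse step.
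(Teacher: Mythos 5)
There is a genuine gap in the trace-back induction, and it is located exactly where you predicted the main obstacle would be. The claimed dichotomy for the inductive step — that whenever the enclosure of $\redex_0/\rho'\redex''$ inside $E$ is ``produced by $\redex''$ itself'', the pattern of $\redex''$ must strictly contain the pattern of $\redex_0/\rho'$ — is false for L\'evy's creation cases I and II. In case II the created redex is $(\lam{\vartwo}{\subs{M}{\var}{N}})\,P$, whose pattern encloses the outer argument $P$; but $P$ is a \emph{sibling} of the creating redex $(\lam{\var}{\lam{\vartwo}{M}})\,N$, disjoint from it and to its right. Concretely, take $\redseqtwo = \redex''\redextwo$ on $\bigl((\lam{\var}{\lam{\vartwo}{\vartwo\,\vartwo}})\,N\bigr)\,\redex_0$: the first step yields $(\lam{\vartwo}{\vartwo\,\vartwo})\,\redex_0$ and the second duplicates $\redex_0$. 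The duplicating redex strictly encloses the (unique) residual of $\redex_0$, yet \emph{no} redex of the source term strictly encloses $\redex_0$ — the outer application is not a redex, and $\redex''$ sits at a disjoint position. So the conclusion of your inductive statement (a source redex $\redexthree$ whose pattern strictly contains that of $\redex_0$) is simply unobtainable, and the final contradiction cannot be reached as stated. (Case I breaks the same way.) The lemma survives in this scenario only because $\redex''=(\lam{\var}{\lam{\vartwo}{\vartwo\vartwo}})N$ is itself coarse — it heads an application whose argument contains the non-garbage step $\redex_0$, so it cannot be garbage — and lies to the \emph{left} of $\redex_0$, contradicting leftmost-ness. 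Repairing your argument means producing, in the bad cases, a coarse step that merely precedes $\redex_0$ in the leftmost--outermost order rather than containing it, and justifying its non-garbageness by a head-position argument rather than by containment; and since the enclosing/creating step $\redex''$ lives in an intermediate term and may itself have been created, the trace-back must recurse on creating steps as well, which your sketch does not address.

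For comparison, the paper avoids the trace-back entirely: it inducts forward on the length of $\redseqtwo$, peeling off its first step $\redextwo$ and splitting on whether $\redextwo$ is $\tm'$-garbage. If $\redextwo$ is not garbage it is coarse, hence (by leftmost-ness of $\redex_0$) lies weakly to the right of $\redex_0$ and cannot duplicate or erase it; if $\redextwo$ is garbage, \rlem{refinement_context} shows its argument is untyped, so the non-garbage $\redex_0$ cannot lie in that argument, and again exactly one residual survives. The induction then continues with that unique residual, whose coarseness is re-established by the label bookkeeping of $\lambdadist$ (\rlem{generalized_cube_lemma}, \rlem{names_after_projection_along_a_step}). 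That one-step dichotomy is precisely the analysis your approach ends up needing inside the broken creation cases, so the detour through the contradiction and the enclosure relation does not buy anything.
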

\begin{proof}
By induction on the length of $\redseqtwo$. The base case is immediate.
For the inductive step, let $\redseqtwo = \redextwo \redseqthree \permle \redseq$.
Then in particular $\redextwo \permle \redseq$.
We consider two cases, depending on whether $\redex_0 = \redextwo$.
\begin{enumerate}
\item {\bf Equal, $\redex_0 = \redextwo$.}
  Then $\redex_0/\redseqtwo = \redex_0/\redex_0\redseqthree = \emptyset$ and we are done.
\item {\bf Non-equal, $\redex_0 \neq \redextwo$.}
  First we argue that $\redex_0/\redextwo$ has exactly one residual $\redex_1 \in \redex_0/\redextwo$.
  To see this, we consider two further cases, depending on whether
  $\redextwo$ is $\tm'$-garbage or not:
  \begin{enumerate}
  \item {\bf If $\redextwo$ is not $\tm'$-garbage.}
    Then $\redextwo \permle \redseq$ and $\redextwo/\tm' = \emptyset$, so $\redextwo$ is coarse for $(\redseq,\tm')$.
    Since $\redex_0$ is the leftmost coarse step, this means that $\redex$ is to the left of $\redextwo$.
    So $\redex_0$ has exactly one residual $\redex_1 \in \redex_0/\redextwo$.
  \item {\bf If $\redextwo$ is $\tm'$-garbage.}
    Let us write the term $\tm$ as $\tm = \conof{(\lam{\var}{\tmtwo})\tmthree}$, where
    $(\lam{\var}{\tmtwo})\tmthree$ is the pattern of the redex $\redextwo$.
    By \rlem{refinement_context}
    the term $\tm'$ is of the form $\tm' = \con'[\Delta_1,\hdots,\Delta_n]$,
    where $\con'$ is a many-hole context such that $\con' \refines \con$
    and $\Delta_i \refines (\lam{\var}{\tmtwo})\tmthree$ for all $1 \leq i \leq n$.
    We know that $\redextwo$ is garbage, so $n = 0$ and $\tm' = \con'$ is actually
    a $0$-hole context (\ie a term).
    On the other hand, $\redex_0$ is coarse for $(\redseq,\tm')$, so in particular
    it is not $\tm'$-garbage.
    This means that the pattern of the redex $\redex_0$ cannot occur inside the argument $\tmthree$
    of the redex $\redextwo$.
    So $\redextwo$ does not erase or duplicate $\redex$,
    \ie $\redex_0$ has exactly one residual $\redex_1 \in \redex_0/\redextwo$.
  \end{enumerate}
  Now we have that $\redex_0/\redextwo\redseqthree = \redex_1/\redseqthree$.
  We are left to show that $\#(\redex_0/\redextwo\redseqthree) \leq 1$.
  Let us show that we may apply the \ih on $\redex_1$.
  More precisely, observe that $\redseqthree \permle \redseq/\redextwo$ since $\redextwo\redseqthree \permle \redseq$.
  To apply the \ih it suffices to show that $\redex_1$ is coarse for $(\redseq/\redextwo,\tm'/\redextwo)$.
  Indeed, we may check the two conditions
  for the definition that $\redex_1$ is coarse for $(\redseq/\redextwo,\tm'/\redextwo)$.
  \begin{enumerate}
  \item Firstly, $\redex_1 = \redex_0/\redextwo \permle \redseq/\redextwo$ holds, as a consequence of the fact that $\redex_0 \permle \redseq$.
  \item Secondly, we may check that $\redex_1$ is not $(\tm'/\redextwo)$-garbage.
        To see this, \ie that $\redex_1/(\tm'/\redextwo)$ is non-empty,
        we check that $\names(\redex_1/(\tm'/\redextwo))$ is non-empty.
        \[
          \begin{array}{rcll}
          \names(\redex_1/(\tm'/\redextwo))
          & = & \names((\redex_0/\redextwo)/(\tm'/\redextwo)) & \text{ by definition of $\redex_1$} \\
          & = & \names((\redex_0/\tm')/(\redextwo/\tm')) & \text{ by \rlem{generalized_cube_lemma}} \\
          & = & \names(\redex_0/\tm') \setminus \names(\redextwo/\tm') & \text{ by \rlem{names_after_projection_along_a_step}} \\
          & = & \names(\redex_0/\tm')
          \end{array}
        \]
        For the last step, note that $\names(\redex_0/\tm')$ and $\names(\redextwo/\tm')$ are disjoint,
        since $\redex_0 \neq \redextwo$.
  \end{enumerate}
  Applying the \ih, we obtain that $\#(\redex_0/\redextwo\redseqthree) = \#(\redex_1/\redseqthree) \leq 1$.  
\end{enumerate}
\end{proof}

\begin{lemma}
\llem{stable_non_garbage}
Let $\redex$ be a step, let $\redseq$ a coinitial derivation,
and let $\tm' \refines \src(\redex)$.
Suppose that $\redex$ is not $\tm'$-garbage,
and that $\redex / \redseq_1$ is a singleton for every prefix $\redseq_1 \permle \redseq$.
Then $\redex/\redseq$ is not $(\tm'/\redseq)$-garbage.
\end{lemma}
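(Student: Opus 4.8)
\textbf{Proof plan for \rlem{stable_non_garbage}.}
The plan is to proceed by induction on the length of $\redseq$, tracking the single residual of $\redex$ along the way. The base case $\redseq = \emptyDerivation$ is immediate, since then $\redex/\redseq = \redex$, which is not $\tm'$-garbage by hypothesis, and $\tm'/\redseq = \tm'$. For the inductive step, write $\redseq = \redextwo\redseqtwo$. By hypothesis $\redex/\redextwo$ is a singleton, say $\redex/\redextwo = \{\redex_1\}$; and for every prefix $\redseqtwo_1 \permle \redseqtwo$ we have $\redextwo\redseqtwo_1 \permle \redseq$, so $\redex/\redextwo\redseqtwo_1 = (\redex_1)/\redseqtwo_1$ is a singleton as well. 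Thus $\redex_1$ satisfies the same singleton-residual hypothesis relative to $\redseqtwo$.

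The key step is then to verify that $\redex_1$ is not $(\tm'/\redextwo)$-garbage, so that the induction hypothesis applies to $\redex_1$, $\redseqtwo$, and $\tm'/\redextwo$, yielding that $\redex_1/\redseqtwo = \redex/\redseq$ is not $((\tm'/\redextwo)/\redseqtwo)$-garbage $= (\tm'/\redseq)$-garbage, as required. To show $\redex_1$ is not $(\tm'/\redextwo)$-garbage, I compute its set of labels exactly as in the proof of \rlem{leftmost_coarse_step_has_at_most_one_residual}:
\[
  \names(\redex_1/(\tm'/\redextwo))
  = \names((\redex/\redextwo)/(\tm'/\redextwo))
  = \names((\redex/\tm')/(\redextwo/\tm'))
  = \names(\redex/\tm') \setminus \names(\redextwo/\tm')
\]
using the Cube Lemma~(\rlem{generalized_cube_lemma}) and \rlem{names_after_projection_along_a_step}. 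Now there are two cases. If $\redex = \redextwo$, then $\redex/\redextwo$ would be empty, contradicting the singleton hypothesis (taking the prefix $\redseqtwo_1 = \emptyDerivation$); so $\redex \neq \redextwo$, and hence $\names(\redex/\tm')$ and $\names(\redextwo/\tm')$ are disjoint (two distinct coinitial steps of $\lambdadist$ get disjoint label sets since terms are correct and, via \resultname{Simulation}, the simulation residuals of distinct steps lie over distinct parts of the refinement). Therefore $\names(\redex_1/(\tm'/\redextwo)) = \names(\redex/\tm')$, which is non-empty because $\redex$ is not $\tm'$-garbage. Hence $\redex_1/(\tm'/\redextwo) \neq \emptyset$, i.e. $\redex_1$ is not $(\tm'/\redextwo)$-garbage.

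The main obstacle is the disjointness claim $\names(\redex/\tm') \cap \names(\redextwo/\tm') = \emptyset$ for distinct coinitial $\lambda$-steps $\redex \neq \redextwo$. This should follow from the constructive nature of \resultname{Simulation}~(\rprop{simulation}): the simulation residuals $\redex/\tm'$ are the $\lambdadist$-steps contracting certain labeled redexes created in the refinement $\tm'$ according to where the pattern of $\redex$ sits, and since the patterns of $\redex$ and $\redextwo$ occupy disjoint-or-nested positions that are syntactically distinguished by the refinement, the corresponding labeled lambdas are pairwise distinct. Modulo citing the appropriate bookkeeping from the proof of Simulation, this is the same disjointness already implicitly used in \rlem{leftmost_coarse_step_has_at_most_one_residual}, so I would either invoke it there or isolate it as a small standalone lemma about simulation residuals of distinct steps. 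Everything else is routine induction and the algebraic identities already available.
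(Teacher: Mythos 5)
Your proposal is correct and follows essentially the same route as the paper's proof: the same induction on $\redseq$, the same use of the Cube Lemma~(\rlem{generalized_cube_lemma}) and \rlem{names_after_projection_along_a_step} to show the unique residual $\redex_1$ is not $(\tm'/\redextwo)$-garbage, and the same transfer of the singleton-residual hypothesis to apply the induction hypothesis. The disjointness $\names(\redex/\tm') \cap \names(\redextwo/\tm') = \emptyset$ that you flag as the main obstacle is asserted in the paper's proof with exactly the same one-line justification (``since $\redex \neq \redextwo$''), so your extra care there only makes the argument more explicit, not different.
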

\begin{proof}
By induction on $\redseq$.
The base case, when $\redseq = \emptyDerivation$, is immediate
since we know that $\redex$ is not garbage.
For the inductive step, suppose that $\redseq = \redextwo\redseqtwo$.
We know that $\redex/\redextwo$ is a singleton,
so let $\redex_1 = \redex/\redextwo$.
Note that
$
  \redex_1/(\tm'/\redextwo)
  = (\redex/\redextwo)/(\tm'/\redextwo)
  = (\redex/\tm')/(\redextwo/\tm')
$ by \rlem{generalized_cube_lemma}.
We know that $\redex/\tm'$ is non-empty, because $\redex$ is not garbage.
Moreover, $\names(\redex/\tm')$ and $\names(\redextwo/\tm')$ are disjoint
since $\redex \neq \redextwo$.
So $\#\names((\redex/\tm')/(\redextwo/\tm')) = \names(\redex/\tm')$ by \rlem{names_after_projection_along_a_step}.
This means that the set $\redex_1/(\tm'/\redextwo)$ is non-empty,
so $\redex_1$ is not $(\tm'/\redextwo)$-garbage .
By \ih we obtain that $\redex_1/\redseqtwo$ is not $((\tm'/\redextwo)/\redseqtwo)$-garbage,
which means that
$(\redex/\redextwo\redseqtwo)/(\tm'/\redextwo\redseqtwo) \neq \emptyset$,
\ie that $\redex/\redextwo\redseqtwo$ is not garbage, as required.
To be able to apply the \ih, observe that if $\redseqtwo_1$ is a prefix of $\redseqtwo$,
then $\redextwo\redseqtwo_1$ is a prefix of $\redseq$,
so the fact that $\redex$ has a single residual after $\redextwo\redseqtwo_1$
implies that the step $\redex_1 = \redex/\redextwo$ has a single residual after $\redseqtwo_1$.
\end{proof}

\begin{proposition}[Characterization of garbage-free derivations]
\lprop{characterization_of_garbage_free_derivations}
Let $\redseq : \tm \rtobeta \tmtwo$ and $\tm' \refines \tm$.
The following are equivalent:
\begin{enumerate}
\item $\redseq$ is $\tm'$-garbage-free.
\item $\redseq \permeq \redseq \sieve \tm'$.
\item $\redseq \permeq \redseqtwo \sieve \tm'$ for some derivation $\redseqtwo$.
\end{enumerate}
\end{proposition}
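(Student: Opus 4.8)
The plan is to prove the three statements equivalent by establishing the cycle $(1 \implies 2) \implies (3) \implies (1)$, leaning on the properties of garbage and sieving already available (\rprop{properties_of_garbage}, \rprop{characterization_of_garbage}, \rlem{sieve_is_prefix}, and the earlier axiomatic machinery).

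First I would prove $(1 \implies 2)$. Assume $\redseq$ is $\tm'$-garbage-free. By \rlem{sieve_is_prefix} we already have $\redseq \sieve \tm' \permle \redseq$, so it remains to show $\redseq \permle \redseq \sieve \tm'$. Setting $\redseqtwo := \redseq \sieve \tm'$, the projection $\redseq/\redseqtwo = \redseq/(\redseq \sieve \tm')$ is $(\tm'/(\redseq \sieve \tm'))$-garbage: this is exactly item~2 of \rprop{properties_of_sieving} — but since we are \emph{inside} the proof of that proposition, I would instead establish this garbage claim directly here (it should follow by induction on the length of $\redseq \sieve \tm'$ using \rprop{characterization_of_garbage} and \rlem{generalized_cube_lemma}), or reorganize so that the needed sub-statement is proved first. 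Given that $\redseqtwo \permle \redseq$ and $\redseq/\redseqtwo$ is $(\tm'/\redseqtwo)$-garbage, the definition of garbage-free (\rdef{garbage_free_derivation}) forces $\redseq/\redseqtwo = \emptyDerivation$, i.e. $\redseq \permle \redseqtwo = \redseq \sieve \tm'$. Combined with the other inequality this gives $\redseq \permeq \redseq \sieve \tm'$.

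The implication $(2 \implies 3)$ is trivial: take $\redseqtwo := \redseq$.

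For $(3 \implies 1)$, assume $\redseq \permeq \redseqtwo \sieve \tm'$ for some $\redseqtwo$. By the already-established remark that garbage-freeness is invariant under permutation equivalence, it suffices to show that $\redseqtwo \sieve \tm'$ is $\tm'$-garbage-free. So let $\redseqthree \permle \redseqtwo \sieve \tm'$ with $(\redseqtwo \sieve \tm')/\redseqthree$ being $(\tm'/\redseqthree)$-garbage; I must show $(\redseqtwo \sieve \tm')/\redseqthree = \emptyDerivation$. The key obstacle — and the technical heart of the proof — is to argue that $\redseqtwo \sieve \tm'$ cannot contain a non-garbage ``tail'' once a garbage prefix has been quotiented out; this is where \rlem{leftmost_coarse_step_has_at_most_one_residual} and \rlem{stable_non_garbage} are designed to be used. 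Concretely, I would proceed by induction on the length of $\redseqtwo \sieve \tm'$. If it is empty the claim is immediate. Otherwise write $\redseqtwo \sieve \tm' = \redex_0\,\redseqfour$ where $\redex_0$ is the leftmost coarse step for $(\redseqtwo, \tm')$ and $\redseqfour = (\redseqtwo/\redex_0) \sieve (\tm'/\redex_0)$, which is garbage-free by the induction hypothesis (via $3 \implies 1$ applied to $(\redseqtwo/\redex_0) \sieve (\tm'/\redex_0)$, i.e. with witness $\redseqtwo/\redex_0$). Using \rlem{leftmost_coarse_step_has_at_most_one_residual}, $\redex_0$ has at most one residual after any prefix of itself, so $\redex_0$ is ``stable'' along $\redseqthree$; by \rlem{stable_non_garbage}, $\redex_0/\redseqthree$ is not $(\tm'/\redseqthree)$-garbage whenever it is non-empty. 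Splitting $\redseqthree$ according to whether it contains (a residual of) $\redex_0$ then reduces the problem, via the induction hypothesis on $\redseqfour$, to showing $(\redseqtwo \sieve \tm')/\redseqthree = \emptyDerivation$. I expect the bookkeeping of these residual splittings — carefully tracking whether $\redseqthree$ ``passes'' the leftmost coarse step — to be the main subtlety; everything else is a routine application of the cube lemma, \rlem{names_after_projection_along_a_step}, and \rprop{characterization_of_garbage}.
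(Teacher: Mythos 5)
Your proposal is correct, and the hard direction $(3 \implies 1)$ follows essentially the paper's own argument: induction on the length of $\redseqtwo \sieve \tm'$, peeling off the leftmost coarse step $\redex_0$, invoking the induction hypothesis on $(\redseqtwo/\redex_0) \sieve (\tm'/\redex_0)$, and using \rlem{leftmost_coarse_step_has_at_most_one_residual} together with \rlem{stable_non_garbage} to rule out a surviving non-garbage residual of $\redex_0$; the paper organizes the residual bookkeeping you anticipate as two explicit implications (that $\redex_0/\redseqthree$ is empty whenever it is garbage, and that the projected tail is empty by the induction hypothesis), exploiting that a composition is garbage, respectively empty, iff both factors are. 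Where you genuinely diverge is in $(1 \implies 2)$: the paper re-runs an induction mirroring the recursive definition of the sieve (showing along the way that $\redseq/\redex_0$ is garbage-free), whereas you avoid induction entirely by combining $\redseq \sieve \tm' \permle \redseq$ (\rlem{sieve_is_prefix}) with the fact that $\redseq/(\redseq \sieve \tm')$ is garbage and then applying the definition of garbage-freeness once. This is shorter and arguably cleaner, and the circularity you worry about is not real: the garbage claim is exactly \rlem{projection_after_sieving_is_garbage}, which the paper proves by induction using only \rprop{characterization_of_garbage} and the cube lemma, independently of the present proposition — so your suggested reorganization (prove that sub-statement first) is precisely how the dependencies already work out.
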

\begin{proof}
Let us prove $1 \implies 2 \implies 3 \implies 1$:
\begin{itemize}
\item $(1 \implies 2)$
  Suppose that $\redseq$ is $\tm'$-garbage-free,
  and let us show that $\redseq \permeq \redseq \sieve \tm'$
  by induction on the length of $\redseq \sieve \tm'$.

  If there are no coarse steps for $(\redseq,\tm')$,
  By \rprop{characterization_of_garbage}, any derivation with no coarse steps is garbage.
  So $\redseq$ is $\tm'$-garbage.
  Since $\redseq$ is garbage-free, this means that $\redseq = \emptyDerivation$.
  Hence $\redseq = \emptyDerivation = \redseq \sieve \tm'$, as required.

  If there exists a coarse step for $(\redseq,\tm')$, let $\redex_0$ be the leftmost such step.
  Note that $\redseq \permeq \redex_0(\redseq/\redex_0)$ since $\redex_0 \permle \redseq$.
  Moreover, we claim that $\redseq/\redex_0$ is $(\tm'/\redex_0)$-garbage-free.
  Let $\redseqtwo \permle \redseq/\redex_0$ such that $(\redseq/\redex_0)/\redseqtwo$ is garbage
  with respect to the term $(\tm'/\redex_0)/\redseqtwo = \tm'/\redex_0\redseqtwo$,
  and let us show that $(\redseq/\redex_0)/\redseqtwo$ is empty.
  Note that:
  \[
    \begin{array}{rcll}
    \redex_0\redseqtwo
    & \permle & \redex_0(\redseq/\redex_0) & \text{since $\redseqtwo \permle \redseq/\redex_0$} \\
    & \permeq & \redseq                    & \text{as already noted} \\
    \end{array}
  \]
  Moreover, we know that the derivation $\redseq/\redex_0\redseqtwo = (\redseq/\redex_0)/\redseqtwo$ is
  $(\tm'/\redex_0\redseqtwo)$-garbage.
  So, given that $\redseq$ is $\tm'$-garbage-free, we conclude that
  $\redseq/\redex_0\redseqtwo = \emptyDerivation$, that is $(\redseq/\redex_0)/\redseqtwo = \emptyDerivation$,
  which completes the proof of the claim that $\redseq/\redex_0$ is $(\tm'/\redex_0)$-garbage-free.
  We conclude as follows:
  \[
    \begin{array}{rcll}
      \redseq & \permeq & \redex_0(\redseq/\redex_0)                          & \text{as already noted} \\
              & \permeq & \redex_0((\redseq/\redex_0) \sieve (\tm'/\redex_0)) & \text{by \ih since $\redseq/\redex_0$ is $(\tm'/\redex_0)$-garbage-free} \\
              & \permeq & \redseq \sieve \tm'                                 & \text{by definition of sieving} \\
    \end{array}
  \]
\item $(2 \implies 3)$ Obvious, taking $\redseqtwo := \redseq$.
\item $(3 \implies 1)$
  Let $\redseq \permeq \redseqtwo \sieve \tm'$.
  Let us show that $\redseq$ is garbage-free by induction on the length of $\redseqtwo \sieve \tm'$.

  If there are no coarse steps for $(\redseqtwo,\tm')$,
  then $\redseq \permeq \redseqtwo \sieve \tm' = \emptyDerivation$, which means that $\redseq = \emptyDerivation$.
  Observe that the empty derivation is trivially garbage-free.

  If there exists a coarse step for $(\redseqtwo,\tm')$,
  let $\redex_0$ be the leftmost such step.
  Then $\redseq \permeq \redseqtwo \sieve \tm' = \redex_0((\redseqtwo/\redex_0) \sieve (\tm'/\redex_0))$.
  To see that $\redseq$ is $\tm'$-garbage-free,
  let $\redseqthree \permle \redseq$ such that $\redseq/\redseqthree$ is garbage,
  and let us show that $\redseq/\redseqthree$ is empty. 
  We know that $\redseq/\redseqthree$ is of the following form (modulo permutation equivalence):
  \[
     \redseq/\redseqthree \permeq
     \frac{\redex_0((\redseqtwo/\redex_0) \sieve (\tm'/\redex_0))}{\redseqthree}
     =
     \left(\frac{\redex_0}{\redseqthree}\right) \left(\frac{(\redseqtwo/\redex_0) \sieve (\tm'/\redex_0)}{\redseqthree/\redex_0}\right)
  \]
  That is, we know that the following derivation is $(\tm'/\redseqthree)$-garbage,
  and it suffices to show that it is empty:
  \[
    \left(\frac{\redex_0}{\redseqthree}\right) \left(\frac{(\redseqtwo/\redex_0) \sieve (\tm'/\redex_0)}{\redseqthree/\redex_0}\right)
  \]
  Recall that, in general, $AB$ is garbage if and only if $A$ and $B$ are garbage (\rprop{properties_of_garbage}).
  Similarly, $AB$ is empty if and only if $A$ and $B$ are empty.
  So it suffices to prove the two following implications:
  \[
    \begin{array}{clll}
    {\bf (A)} &
    \text{ If } & \redex_0/\redseqthree \text{ is garbage,} & \text{ then it is empty.} \\
    {\bf (B)} &
    \text{ If } & ((\redseqtwo/\redex_0) \sieve (\tm'/\redex_0))/(\redseqthree/\redex_0) \text{ is garbage,} & \text{ then it is empty.}
    \end{array}
  \]
  Let us check that each implication holds:
  \begin{itemize}
  \item {\bf (A)}
    Suppose that $\redex_0/\redseqthree$ is $(\tm'/\redseqthree)$-garbage,
    and let us show that $\redex_0/\redseqthree$ is empty.
    Knowing that the derivation $\redex_0/\redseqthree$ is garbage means that
    $(\redex_0/\redseqthree)/(\tm'/\redseqthree) = \emptyset$.
    Since $\redex_0$ is the leftmost step coarse for $(\redseqtwo,\tm')$,
    by \rlem{leftmost_coarse_step_has_at_most_one_residual}
    we have that $\#(\redex_0/\redseqthree) \leq 1$.
    If $\redex_0/\redseqthree$ is empty we are done, since this is what we wanted to prove.

    The remaining possibility is that $\redex_0/\redseqthree$ be a singleton.
    We argue that this case is impossible.
    Note that for every prefix $\redseqthree_1 \permle \redseqthree$,
    the set $\redex_0/\redseqthree_1$ is also a singleton,
    since otherwise it would be empty, as a consequence of \rlem{leftmost_coarse_step_has_at_most_one_residual}.
    So we may apply \rlem{stable_non_garbage} and conclude that,
    since $\redex_0$ is not $\tm'$-garbage
    then $\redex_0/\redseqthree$ is not $(\tm'/\redseqthree)$-garbage.
    This contradicts the hypothesis.
  \item {\bf (B)}
    Suppose that $((\redseqtwo/\redex_0) \sieve (\tm'/\redex_0))/(\redseqthree/\redex_0)$ is garbage
    with respect to the term $(\tm'/\redex_0)/(\redseqthree/\redex_0)$, and let us show that it is empty.
    Since $(\redseqtwo/\redex_0) \sieve (\tm'/\redex_0)$
    is a shorter derivation than $\redseqtwo \sieve \tm'$,
    we may apply the \ih we obtain that
    $(\redseqtwo/\redex_0) \sieve (\tm'/\redex_0)$ is $(\tm'/\redex_0)$-garbage-free.
    Moreover, the following holds:
    \[\redseqthree/\redex_0 \permle \redseq/\redex_0 \permeq (\redseqtwo/\redex_0) \sieve (\tm'/\redex_0)\]

    So, by definition of $(\redseqtwo/\redex_0) \sieve (\tm'/\redex_0)$
    being garbage-free,
    the fact that
    the derivation
    $((\redseqtwo/\redex_0) \sieve (\tm'/\redex_0))/(\redseqthree/\redex_0)$ is garbage
    implies that it is empty, as required.
  \end{itemize}
\end{itemize}
\end{proof}
 
\begin{lemma}[The projection after a sieve is garbage]
\llem{projection_after_sieving_is_garbage}
Let $\redseq : \tm \rtobeta \tmtwo$ and $\tm' \refines \tm$.
Then $\redseq/(\redseq \sieve \tm')$ is $(\tm'/(\redseq \sieve \tm'))$-garbage.
\end{lemma}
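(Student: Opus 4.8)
The plan is to prove the statement by induction on the length of $\redseq \sieve \tm'$, mirroring the recursion in the definition of sieving; by \rlem{sieving_well_defined} this length strictly decreases along the recursive calls, so the induction is well-founded. In the base case there are no coarse steps for $(\redseq,\tm')$, hence $\redseq\sieve\tm' = \emptyDerivation$, so that $\redseq/(\redseq\sieve\tm') = \redseq$ and $\tm'/(\redseq\sieve\tm') = \tm'$; then by the Characterization of garbage (\rprop{characterization_of_garbage}) the absence of coarse steps for $(\redseq,\tm')$ says precisely that $\redseq$ is $\tm'$-garbage, which is exactly what is required.

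For the inductive step, assume there is a coarse step for $(\redseq,\tm')$ and let $\redex_0$ be the leftmost one, so that by definition $\redseq\sieve\tm' = \redex_0\bigl((\redseq/\redex_0)\sieve(\tm'/\redex_0)\bigr)$. Since $\redex_0$ is coarse we have $\redex_0 \permle \redseq$, hence $\redseq/\redex_0$ starts at $\tgt(\redex_0)$, and by the simulation diagram $\tm'/\redex_0 \refines \tgt(\redex_0) = \src(\redseq/\redex_0)$; thus the pair $(\redseq/\redex_0,\ \tm'/\redex_0)$ is eligible for the induction hypothesis, and its sieve $(\redseq/\redex_0)\sieve(\tm'/\redex_0)$ is one step shorter than $\redseq\sieve\tm'$. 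Using the axiomatic identity $\redseqthree/\redseq\redseqtwo = (\redseqthree/\redseq)/\redseqtwo$ together with \rlem{simulation_residuals_composition}, I then compute
\[
  \redseq/(\redseq\sieve\tm') = (\redseq/\redex_0)\big/\bigl((\redseq/\redex_0)\sieve(\tm'/\redex_0)\bigr)
  \quad\text{and}\quad
  \tm'/(\redseq\sieve\tm') = (\tm'/\redex_0)\big/\bigl((\redseq/\redex_0)\sieve(\tm'/\redex_0)\bigr).
\]
Applying the induction hypothesis to $(\redseq/\redex_0,\ \tm'/\redex_0)$ yields that $(\redseq/\redex_0)\big/\bigl((\redseq/\redex_0)\sieve(\tm'/\redex_0)\bigr)$ is $\Bigl((\tm'/\redex_0)\big/\bigl((\redseq/\redex_0)\sieve(\tm'/\redex_0)\bigr)\Bigr)$-garbage; by the two displayed equalities this is exactly the assertion that $\redseq/(\redseq\sieve\tm')$ is $\bigl(\tm'/(\redseq\sieve\tm')\bigr)$-garbage, completing the induction.

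This argument is short because the substantive work has already been done elsewhere: no creation/erasure analysis is needed here, since it has been encapsulated in \rprop{characterization_of_garbage}, and the cube and compatibility lemmas enter only indirectly through \rlem{simulation_residuals_composition}. The only points that require attention are routine bookkeeping — verifying that $(\redseq/\redex_0)\sieve(\tm'/\redex_0)$ is genuinely shorter than $\redseq\sieve\tm'$ (immediate, as the latter is obtained by prefixing the single step $\redex_0$) and that $\tm'/\redex_0$ refines $\src(\redseq/\redex_0)$ so the induction hypothesis applies — together with the formal manipulation of the projection identities in the displayed equations. I expect this last manipulation to be the only delicate part, since one must keep straight which projections are over single steps and which are over derivations.
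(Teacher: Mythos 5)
Your proof is correct and follows essentially the same route as the paper's: induction on the length of $\redseq \sieve \tm'$, with the base case discharged by the characterization of garbage (\rprop{characterization_of_garbage}) and the inductive step obtained by peeling off the leftmost coarse step $\redex_0$ and applying the induction hypothesis to $(\redseq/\redex_0,\ \tm'/\redex_0)$ via the projection-composition identities. The only cosmetic difference is that you cite \rlem{simulation_residuals_composition} explicitly where the paper just "unfolds the definitions".
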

\begin{proof}
By induction on the length of $\redseq \sieve \tm'$.

If there are no coarse steps for $(\redseq,\tm')$,
then $\redseq \sieve \tm' = \emptyDerivation$.
Moreover, by \rprop{characterization_of_garbage}, a derivation with no
coarse steps is garbage. So $\redseq/(\redseq \sieve \tm') = \redseq$
is garbage, as required.

If there exists a coarse step for $(\redseq,\tm')$,
let $\redex_0$ be the leftmost such step,
and let $\redseqtwo = \redseq/\redex_0$ and $\tmtwo' = \tm'/\redex_0$.
Then:
\[
  \begin{array}{rcll}
  \redseq/(\redseq \sieve \tm')
  & = & \redseq/(\redex_0((\redseq/\redex_0) \sieve (\tm'/\redex_0)) & \text{ by definition} \\
  & = & (\redseq/\redex_0)/((\redseq/\redex_0) \sieve (\tm'/\redex_0)) \\
  & = & \redseqtwo/(\redseqtwo \sieve \tmtwo') \\
  \end{array}
\]
By \ih, $\redseqtwo/(\redseqtwo \sieve \tmtwo')$ is $(\tmtwo'/(\redseqtwo \sieve \tmtwo'))$-garbage.
That is:
\[
  \frac{
    \redseqtwo/(\redseqtwo \sieve \tmtwo')
  }{
    \tmtwo'/(\redseqtwo \sieve \tmtwo')
  } = \emptyDerivation
\]
Unfolding the definitions of $\redseqtwo$ and $\tmtwo'$ we have
that:
\[
  \frac{
    (\redseq/\redex_0)/((\redseq/\redex_0) \sieve (\tm'/\redex_0))
  }{
    (\tm'/\redex_0)/((\redseq/\redex_0) \sieve (\tm'/\redex_0))
  } = \emptyDerivation
\]
Equivalently:
\[
  \frac{
    \redseq/\redex_0((\redseq/\redex_0) \sieve (\tm'/\redex_0))
  }{
    \tm'/\redex_0((\redseq/\redex_0) \sieve (\tm'/\redex_0))
  } = \emptyDerivation
\]
Finally, by definition of sieve,
\[
  \frac{
    \redseq/(\redseq \sieve \tm')
  }{
    \tm'/(\redseq \sieve \tm')
  } = \emptyDerivation
\]
which means that $\redseq/(\redseq \sieve \tm')$ is $(\tm'/(\redseq \sieve \tm'))$-garbage,
as required.
\end{proof}

The proof of \rprop{properties_of_sieving} is a consequence of the preceding results, namely:
\begin{enumerate}
\item {\bf Proof that $\redseq \sieve \tm'$ is $\tm'$-garbage-free and $\redseq \sieve \tm' \permle \redseq$.}\\
  Note that $\redseq \sieve \tm'$ is $\tm'$-garbage-free by \rprop{characterization_of_garbage_free_derivations}.
  Moreover, $\redseq \sieve \tm' \permle \redseq$ by~\rlem{sieve_is_prefix}.
\item {\bf Proof that $\redseq/(\redseq \sieve \tm')$ is $(\tm'/(\redseq \sieve \tm'))$-garbage.}\\
  This is precisely~\rlem{projection_after_sieving_is_garbage}.
\item {\bf Proof that $\redseq$ is $\tm'$-garbage if and only if $\redseq \sieve \tm' = \emptyDerivation$.}\\
  An immediate consequence of~\rprop{characterization_of_garbage}.
\item {\bf Proof that $\redseq$ is $\tm'$-garbage-free if and only if $\redseq \sieve \tm' \permeq \redseq$.}\\
  An immediate consequence of~\rprop{characterization_of_garbage_free_derivations}.
\end{enumerate}

\subsection{Proof of \rprop{semilattices_of_garbage_free_and_garbage_derivations} -- Semilattices of garbage-free and garbage derivations}
\label{appendix_semilattices_of_garbage_free_and_garbage_derivations}

To prove \rprop{semilattices_of_garbage_free_and_garbage_derivations} we need a few
auxiliary lemmas:

\begin{lemma}
\llem{sieving_is_compatible_with_permutation_equivalence}
Let $\redseq \permeq \redseqtwo$. Then $\redseq \sieve \tm' = \redseqtwo \sieve \tm'$.
\end{lemma}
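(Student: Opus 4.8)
The statement implicitly assumes $\tm' \refines \src(\redseq) = \src(\redseqtwo)$, so that both sieves are defined. I would prove the equality by well-founded induction on the measure $M(\redseq,\tm') := \#\names(\redseq/\tm')$ introduced in the proof of \rlem{sieving_well_defined}, unfolding the recursive definition of $\sieve$ on $\redseq$ and on $\redseqtwo$ in lockstep. (The same measure controls both recursions: by \resultname{Compatibility}~(\rprop{compatibility_of_simulation_residuals_and_permutation_equivalence}) we have $\redseq/\tm' \permeq \redseqtwo/\tm'$, hence $M(\redseq,\tm') = M(\redseqtwo,\tm')$ by \rcoro{permutation_equivalence_in_terms_of_names}, although this remark is not strictly needed for the induction.)

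The key observation—and the only point I would spell out carefully—is that the notion of \emph{coarse step} for $(\redseq,\tm')$, and the selection of the \emph{leftmost} such step, depend only on the permutation equivalence class $\cls{\redseq}$ and on $\tm'$, not on the chosen representative. Indeed, a step $\redex$ coinitial to $\redseq$ is coarse for $(\redseq,\tm')$ iff $\redex \permle \redseq$ and $\redex/\tm' \neq \emptyset$; the second condition does not mention $\redseq$, and the first is invariant under $\permeq$ since $\redex \permle \redseq \iff \redex/\redseq = \emptyDerivation$ and $\redexthree/\redseq = \redexthree/\redseqtwo$ whenever $\redseq \permeq \redseqtwo$. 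Moreover the steps coinitial to $\redseq$ are exactly the redex occurrences of the common source term $\src(\redseq) = \src(\redseqtwo)$, and ``leftmost'' refers to the fixed left-to-right ordering of these occurrences, which again depends only on the source term. Hence the set of coarse steps for $(\redseq,\tm')$ and for $(\redseqtwo,\tm')$ coincide, and when it is non-empty both recursions select the very same leftmost coarse step $\redex_0$.

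Granting this, the induction is routine. If there are no coarse steps for $(\redseq,\tm')$, there are none for $(\redseqtwo,\tm')$, so $\redseq \sieve \tm' = \emptyDerivation = \redseqtwo \sieve \tm'$. Otherwise let $\redex_0$ be the common leftmost coarse step, so that $\redseq \sieve \tm' = \redex_0((\redseq/\redex_0) \sieve (\tm'/\redex_0))$ and $\redseqtwo \sieve \tm' = \redex_0((\redseqtwo/\redex_0) \sieve (\tm'/\redex_0))$. From $\redseq \permeq \redseqtwo$ and the summary of properties of orthogonal axiomatic rewrite systems, projection along $\redex_0$ gives $\redseq/\redex_0 \permeq \redseqtwo/\redex_0$; moreover $\tm'/\redex_0 \refines \tgt(\redex_0) = \src(\redseq/\redex_0)$, and $M(\redseq/\redex_0, \tm'/\redex_0) < M(\redseq, \tm')$ exactly as in the proof of \rlem{sieving_well_defined}. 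The induction hypothesis therefore yields $(\redseq/\redex_0) \sieve (\tm'/\redex_0) = (\redseqtwo/\redex_0) \sieve (\tm'/\redex_0)$, and prepending $\redex_0$ gives $\redseq \sieve \tm' = \redseqtwo \sieve \tm'$. No genuine obstacle arises beyond making the ``leftmost coarse step is class-invariant'' claim precise.
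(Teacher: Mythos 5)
Your proof is correct and follows essentially the same route as the paper: the paper's own argument consists precisely of the observation that coarseness of a step is invariant under permutation equivalence (via $\redex \permle \redseq \iff \redex/\redseq = \emptyDerivation$ and the invariance of projection under $\permeq$), followed by a routine induction unfolding the two sieve recursions in lockstep. Your choice of the measure $M(\redseq,\tm') = \#\names(\redseq/\tm')$ instead of the length of $\redseq \sieve \tm'$ is an immaterial variation, and your extra care about the leftmost step depending only on the common source term is a welcome precision.
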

\begin{proof}
Observe that, given two permutation equivalent derivations $\redseq$ and $\redseqtwo$,
a step $\redex$ is coarse for $(\redseq,\tm')$ if and only if $\redex$ is coarse for $(\redseqtwo,\tm')$,
since
$
  (\redex \permle \redseq) \iff (\redex/\redseq = \emptyset) \iff (\redex/\redseqtwo = \emptyset) \iff (\redex \permle \redseqtwo)
$.
Using this observation, the proof is straightforward by induction on the length of $\redseq \sieve \tm'$.
\end{proof}

\begin{lemma}[Sieving trailing garbage]
\llem{sieving_trailing_garbage}
Let $\redseq$ and $\redseqtwo$ be composable derivations, and let $\tm' \refines \src(\redseq)$.
If $\redseqtwo$ is $(\tm'/\redseq)$-garbage, then $\redseq\redseqtwo \sieve \tm' = \redseq \sieve \tm'$.
\end{lemma}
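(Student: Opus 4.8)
The idea is to peel the suffix $\redseqtwo$ off one step at a time, reducing to the case where it is a single garbage step, and then to argue by induction on $\lengthof{\redseq \sieve \tm'}$; the only delicate point is that prepending $\redseq$ cannot create a coarse step to the left of the leftmost coarse step one already had for $(\redseq,\tm')$.

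\emph{Step 1: the single-step case.} Assume $\redseqtwo=\redex$ is one step and is $(\tm'/\redseq)$-garbage, and proceed by induction on $\lengthof{\redseq\sieve\tm'}$. If $\redseq\sieve\tm'=\emptyDerivation$, then $\redseq$ is $\tm'$-garbage by item~(3) of \rprop{properties_of_sieving}, so $\redseq\redex$ is $\tm'$-garbage by item~(2) of \rprop{properties_of_garbage}, so $\redseq\redex\sieve\tm'=\emptyDerivation=\redseq\sieve\tm'$. Otherwise let $\redex_0$ be the leftmost coarse step for $(\redseq,\tm')$, so $\redseq\sieve\tm'=\redex_0\bigl((\redseq/\redex_0)\sieve(\tm'/\redex_0)\bigr)$. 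I claim $\redex_0$ is also the leftmost coarse step for $(\redseq\redex,\tm')$. Since $\redseq\permle\redseq\redex$, every coarse step for $(\redseq,\tm')$ is coarse for $(\redseq\redex,\tm')$, so the leftmost coarse step $\redex_0'$ for $(\redseq\redex,\tm')$ does not lie strictly to the right of $\redex_0$; suppose it lies strictly to the left. Then $\redex_0'$ cannot be coarse for $(\redseq,\tm')$ (it would be a coarse step strictly left of $\redex_0$), yet $\redex_0'/\tm'\neq\emptyset$; hence $\redex_0'\not\permle\redseq$, so $\redex_0'/\redseq\neq\emptyDerivation$, and since $\redex_0'\permle\redseq\redex$ we get $\redex_0'/\redseq\permle\redex$, whence $\redex_0'/\redseq$ is $(\tm'/\redseq)$-garbage by item~(1) of \rprop{properties_of_garbage}. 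On the other hand, for every prefix $\redseq_1\permle\redseq$ (so also $\redseq_1\permle\redseq\redex$), \rlem{leftmost_coarse_step_has_at_most_one_residual} applied to $\redex_0'$ gives $\#(\redex_0'/\redseq_1)\leq 1$, while $\redex_0'/\redseq_1\neq\emptyDerivation$ (else $\redex_0'/\redseq=(\redex_0'/\redseq_1)/(\redseq/\redseq_1)=\emptyDerivation$); thus $\redex_0'/\redseq_1$ is a singleton for every such prefix, and \rlem{stable_non_garbage} then says $\redex_0'/\redseq$ is \emph{not} $(\tm'/\redseq)$-garbage --- a contradiction. So $\redex_0'=\redex_0$.

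\emph{Step 2: closing the single-step case.} From $\redex_0\permle\redseq$ we get $\redex_0/\redseq=\emptyDerivation$, hence $(\redseq\redex)/\redex_0=(\redseq/\redex_0)\,(\redex/(\redex_0/\redseq))=(\redseq/\redex_0)\,\redex$. Also $\redex_0(\redseq/\redex_0)\permeq\redseq$, so by \rlem{simulation_residuals_composition} and \rprop{compatibility_of_simulation_residuals_and_permutation_equivalence} we have $(\tm'/\redex_0)/(\redseq/\redex_0)=\tm'/\bigl(\redex_0(\redseq/\redex_0)\bigr)=\tm'/\redseq$, so $\redex$ is $\bigl((\tm'/\redex_0)/(\redseq/\redex_0)\bigr)$-garbage. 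As $\lengthof{(\redseq/\redex_0)\sieve(\tm'/\redex_0)}=\lengthof{\redseq\sieve\tm'}-1$, the induction hypothesis applies to $\bigl((\redseq/\redex_0)\redex,\ \tm'/\redex_0\bigr)$ and yields $\bigl((\redseq/\redex_0)\redex\bigr)\sieve(\tm'/\redex_0)=(\redseq/\redex_0)\sieve(\tm'/\redex_0)$. Therefore $\redseq\redex\sieve\tm'=\redex_0\bigl(((\redseq\redex)/\redex_0)\sieve(\tm'/\redex_0)\bigr)=\redex_0\bigl(((\redseq/\redex_0)\redex)\sieve(\tm'/\redex_0)\bigr)=\redex_0\bigl((\redseq/\redex_0)\sieve(\tm'/\redex_0)\bigr)=\redseq\sieve\tm'$.

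\emph{Step 3: the general case.} Induct on $\lengthof{\redseqtwo}$. If $\redseqtwo=\emptyDerivation$ there is nothing to prove. If $\redseqtwo=\redex\redseqtwo'$, item~(2) of \rprop{properties_of_garbage} gives that $\redex$ is $(\tm'/\redseq)$-garbage and $\redseqtwo'$ is $(\tm'/(\redseq\redex))$-garbage; the induction hypothesis applied to $(\redseq\redex,\redseqtwo')$ gives $\redseq\redseqtwo\sieve\tm'=(\redseq\redex)\redseqtwo'\sieve\tm'=(\redseq\redex)\sieve\tm'$, and Step~1 gives $(\redseq\redex)\sieve\tm'=\redseq\sieve\tm'$. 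I expect the main obstacle to be the claim in Step~1 that the leftmost coarse step is unchanged by prepending $\redseq$; this is precisely where \rlem{leftmost_coarse_step_has_at_most_one_residual} and \rlem{stable_non_garbage} must be combined, while everything else is routine manipulation of residuals and the handling of the two nested inductions.
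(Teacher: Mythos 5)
Your proof is correct and follows essentially the same strategy as the paper's: induction on the length of $\redseq \sieve \tm'$, with the crux being that the leftmost coarse step for $(\redseq,\tm')$ remains the leftmost coarse step after the garbage suffix is appended, established via \rlem{leftmost_coarse_step_has_at_most_one_residual}. The only differences are organizational: you peel the suffix off one step at a time rather than handling all of $\redseqtwo$ in a single induction, and you route the key contradiction through \rlem{stable_non_garbage} (whose proof is precisely the label-set computation), whereas the paper performs that $\names$ computation inline via \rlem{generalized_cube_lemma} and \rlem{names_after_projection_along_a_step}.
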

\begin{proof}
By induction on the length of $\redseq \sieve \tm'$.
If there are no coarse steps for $(\redseq,\tm')$,
then by \rprop{characterization_of_garbage},
the derivation $\redseq$ is $\tm'$-garbage,
so $\redseq\redseqtwo$ is $\tm'$-garbage by \rprop{properties_of_garbage}.
Resorting to \rprop{characterization_of_garbage}
we obtain that $\redseq\redseqtwo \sieve \tm' = \emptyDerivation = \redseq \sieve \tm'$,
as required.

If there exists a coarse step for $(\redseq,\tm')$,
let $\redex_0$ be the leftmost such step.
Then since $\redex_0 \permle \redseq$ also $\redex_0 \permle \redseq\redseqtwo$,
so $\redex_0$ is a coarse step for $(\redseq\redseqtwo,\tm')$.
In particular, since there exists at least one coarse step for $(\redseq\redseqtwo,\tm')$,
let $\redextwo_0$ be the leftmost such step.
We argue that $\redex_0 = \redextwo_0$.
We consider two cases, depending on whether $\redextwo_0$ is coarse for $(\redseq,\tm')$:
\begin{enumerate}
\item {\bf If $\redextwo_0$ is coarse for $(\redseq,\tm')$.}
  Then $\redex_0$ and $\redextwo_0$ are both simultaneously
  coarse for $(\redseq,\tm')$ and for $(\redseq\redseqtwo,\tm')$.
  Note that $\redex_0$ cannot be to the left of $\redextwo_0$,
  since then $\redextwo_0$ would not be the leftmost coarse step for $(\redseq\redseqtwo,\tm')$.
  Symmetrically, $\redextwo_0$ cannot be to the left of $\redex_0$,
  since then $\redex_0$ would not be the leftmost coarse step for $(\redseq,\tm')$.
  Hence $\redex_0 = \redextwo_0$ as claimed.
\item {\bf If $\redextwo_0$ is not coarse for $(\redseq,\tm')$.}
  We argue that this case is impossible.
  Note that $\redextwo_0 \permle \redseq\redseqtwo$ but it is not the case that $\redextwo_0 \permle \redseq$,
  so $\redextwo_0/\redseq$ is not empty.
  Note also that $\redextwo_0/\redseq \permle \redseqtwo$,
  so by \rcoro{simulation_residuals_and_prefixes}
  we have that $(\redextwo_0/\redseq)/(\tm'/\redseq) \permle \redseqtwo/(\tm'/\redseq)$.
  Moreover, $\redseqtwo/(\tm'/\redseq) = \emptyDerivation$ is empty because $\redseqtwo$ is $(\tm'/\redseq)$-garbage.
  This means that $(\redextwo_0/\redseq)/(\tm'/\redseq) = \emptyDerivation$.
  Then we have the following chain of equalities:
  \[
    \begin{array}{rcll}
      \emptyset & = & \names((\redextwo_0/\redseq)/(\tm'/\redseq)) \\
                & = & \names((\redextwo_0/\tm')/(\redseq/\tm'))               & \text{by \rcoro{permutation_equivalence_in_terms_of_names} and \rlem{generalized_cube_lemma}} \\
                & = & \names(\redextwo_0/\tm') \setminus \names(\redseq/\tm') & \text{by \rlem{names_after_projection_along_a_step}} \\
                & = & \names(\redextwo_0/\tm')
    \end{array}
  \]
  To justify the last step, start by noting that no residual of $\redextwo_0$ is contracted along the derivation $\redseq$.
  Indeed, if $\redextwo_0$ had a residual, then $\redseq = \redseq_1\redextwo_1\redseq_2$ where $\redextwo_1 \in \redextwo_0/\redseq$.
  But recall that $\redextwo_0$ is the leftmost coarse step for $(\redseq\redseqtwo,\tm')$ and
  $\redseq_1 \permle \redseq\redseqtwo$, so it has at most one residual (\rlem{leftmost_coarse_step_has_at_most_one_residual}).
  This means that $\redextwo_0/\redseq_1 = \redextwo_1$, so $\redextwo_0/\redseq = \emptyset$, which is a contradiction.
  Given that no residuals of $\redextwo_0$ are contracted along the derivation $\redseq$,
  we have that the sets $\names(\redextwo_0/\tm')$ and $\names(\redseq/\tm')$ are disjoint
  which justifies the last step.
  
  According to the chain of equalities above, we have that $\names(\redextwo_0/\tm') = \emptyset$.
  This means that $\redextwo_0$ is $\tm'$-garbage.
  This in turn contradicts the fact that $\redextwo_0$ is coarse for $(\redseq\redseqtwo,\tm')$,
  confirming that this case is impossible.
\end{enumerate}
We have just claimed that $\redex_0 = \redextwo_0$. Then we conclude as follows:
\[
  \begin{array}{rcll}
  \redseq\redseqtwo \sieve \tm'
  & = & \redex_0((\redseq\redseqtwo/\redex_0) \sieve (\tm'/\redex_0)) \\
  & = & \redex_0(((\redseq/\redex_0)(\redseqtwo/(\redex_0/\redseq))) \sieve (\tm'/\redex_0)) \\
  & = & \redex_0((\redseq/\redex_0) \sieve (\tm'/\redex_0)) \HS\text{by \ih, as $\redseqtwo/(\redex_0/\redseq)$ is garbage by \rprop{properties_of_garbage}} \\
  & = & \redseq \sieve \tm' \\
  \end{array}
\]
\end{proof}

Let us check the two items of \rprop{semilattices_of_garbage_free_and_garbage_derivations}:
\begin{enumerate}
\item
  {\bf The set $\ulbFree{\tm'}{\tm}$ forms a finite lattice.}
  Let us check all the conditions:
  \begin{enumerate}
  \item {\bf Partial order.}
    First let us show that $\leqF$ is a partial order.
    \begin{enumerate}
    \item {\bf Reflexivity.}
      It is immediate that $\cls{\redseq} \leqF \cls{\redseq}$ holds since $\redseq/\redseq = \emptyDerivation$ is garbage.
    \item {\bf Antisymmetry.}
      Let $\cls{\redseq} \leqF \cls{\redseqtwo} \leqF \cls{\redseq}$.
      This means that $\redseq/\redseqtwo$ and $\redseqtwo/\redseq$ are garbage.
      Then:
      \[
        \begin{array}{rcll}
        \redseq
        & \permeq & \redseq \sieve \tm' & \text{ since $\redseq$ is garbage-free, by \rprop{characterization_of_garbage_free_derivations}} \\
        & \permeq & \redseq(\redseqtwo/\redseq) \sieve \tm' & \text{ since $\redseqtwo/\redseq$ is garbage, by \rlem{sieving_trailing_garbage}} \\
        & \permeq & \redseqtwo(\redseq/\redseqtwo) \sieve \tm' & \text{ since $A(B/A) \permeq B(A/B)$ in general, using \rlem{sieving_is_compatible_with_permutation_equivalence}} \\
        & \permeq & \redseqtwo \sieve \tm' & \text{ since $\redseq/\redseqtwo$ is garbage, by \rlem{sieving_trailing_garbage}} \\
        & \permeq & \redseqtwo & \text{ since $\redseqtwo$ is garbage-free, by \rprop{characterization_of_garbage_free_derivations}} \\
        \end{array}
      \]
      Since $\redseq \permeq \redseqtwo$ we conclude that $\cls{\redseq} = \cls{\redseqtwo}$,
      as required.
    \item {\bf Transitivity.}
      Let $\cls{\redseq} \leqF \cls{\redseqtwo} \leqF \cls{\redseqthree}$
      and let us show that $\cls{\redseq} \leqF \cls{\redseqthree}$.
      Note that $\redseq/\redseqtwo$ and $\redseqtwo/\redseqthree$ are garbage.
      By the fact that the projection of garbage is garbage~(\rprop{properties_of_garbage})
      the derivation $(\redseq/\redseqtwo)/(\redseqtwo/\redseqthree)$ is garbage.
      The composition of garbage is also garbage~(\rprop{properties_of_garbage}),
      so $(\redseqtwo/\redseqthree)((\redseq/\redseqtwo)/(\redseqtwo/\redseqthree))$ is garbage.
      In general the following holds:
      \[
        \begin{array}{rcll}
        \redseq/\redseqthree
        & \permle & (\redseq/\redseqthree)((\redseqtwo/\redseqthree)/(\redseq/\redseqthree))    & \text{ since $A \permle AB$ in general} \\
        & \permeq & (\redseqtwo/\redseqthree)((\redseq/\redseqthree)/(\redseqtwo/\redseqthree)) & \text{ since $A(B/A) \permeq B(A/B)$ in general} \\
        & \permeq & (\redseqtwo/\redseqthree)((\redseq/\redseqtwo)/(\redseqthree/\redseqtwo))   & \text{ since $A(B/A) \permeq B(A/B)$ in general} \\
        \end{array}
      \]
      So since any prefix of a garbage derivation is garbage~(\rprop{properties_of_garbage})
      we conclude that $\redseq/\redseqthree$ is garbage.
      This means that $[\redseq] \leqF [\redseqthree]$, as required.
    \end{enumerate}
  \item {\bf Finite.}
    Let us check that $\ulbFree{\tm'}{\tm}$ is a finite lattice, \ie that it has a finite number of elements.
    Recall that $\ulbFree{\tm'}{\tm}$ is the set of equivalence classes $\cls{\redseq}$ where $\redseq$
    is $\tm'$-garbage-free.

    On one hand, recall that the $\lambdadist$-calculus is finitely branching and strongly normalizing~(\rprop{strong_normalization}).
    So by K\"onig's lemma the set of $\todist$-derivations starting from $\tm'$
    is bounded in length. More precisely, there is a constant $M$ such that
    for any $\tmtwo' \in \termsdist$ and any derivation $\redseq : \tm' \todist^* \tmtwo'$
    we have that the length $|\redseq|$ is bounded by $M$.

    On the other hand, let $F = \set{\redseq \ST \redseq \text{ is $\tm'$-garbage-free and } \redseq \sieve \tm' = \redseq}$.
    Consider the mapping $\varphi : \ulbFree{\tm'}{\tm} \to F$
    given by $\cls{\redseq} \mapsto \redseq \sieve \tm'$, and note that:
    \begin{itemize}
    \item
      $\varphi$ does not depend on the choice of representative,
      that is, if $\cls{\redseq} = \cls{\redseqtwo}$ then $\varphi(\cls{\redseq}) = \redseq \sieve \tm' = \redseqtwo \sieve \tm' = \varphi(\cls{\redseqtwo})$.
      This is a consequence of~\rlem{sieving_is_compatible_with_permutation_equivalence}.
    \item
      $\varphi$ is a well-defined function, that is $\varphi(\cls{\redseq}) \in F$.
      This is because $\varphi(\cls{\redseq}) = \redseq \sieve \tm'$ is $\tm'$-garbage-free by \rprop{properties_of_sieving}.
      Moreover, we have that $\varphi(\cls{\redseq}) \sieve \tm' = (\cls{\redseq} \sieve \tm') \sieve \tm' = \cls{\redseq} \sieve \tm'= \varphi(\cls{\redseq})$,
      which is also a consequence of~\rlem{sieving_is_compatible_with_permutation_equivalence},
      and the fact that $\cls{\redseq} \sieve \tm' \permeq \redseq$~(\rprop{properties_of_sieving}).
    \item
      $\varphi$ is injective. Indeed, suppose that $\varphi(\cls{\redseq}) = \varphi(\cls{\redseqtwo})$,
      that is, $\redseq \sieve \tm' = \redseqtwo \sieve \tm'$.
      Then $\redseq \permeq \redseq \sieve \tm' = \redseqtwo \sieve \tm' \permeq \redseqtwo$
      by \rprop{properties_of_sieving} since $\redseq$ and $\redseqtwo$ are $\tm'$-garbage-free.
      Hence $\cls{\redseq} = \cls{\redseqtwo}$.
    \end{itemize}
    Since $\varphi : \ulbFree{\tm'}{\tm} \to F$ is injective, it suffices to show that $F$ is finite
    to conclude that $\ulbFree{\tm'}{\tm}$ is finite.
    Recall that the $\lambda$-calculus is finitely branching, so the number of derivations of a certain length
    is finite. To show that $F$ is finite, it suffices to show that the length of derivations in $F$ is
    bounded by some constant.
    Let $\redseq$ be a derivation in $F$. We have that $\redseq = \redseq \sieve \tm'$.
    By construction of the sieve, none of the steps of $\redseq \sieve \tm'$ are garbage.
    That is $\redseq = \redseq \sieve \tm' = \redex_1\hdots\redex_n$ where for all $i$ we have that
    $\redex_i/(\tm'/\redex_1\hdots\redex_{i-1}) \neq \emptyset$.
    So we have that the length of $\redseq/\tm'$ is greater than the length of $\redseq$
    for all $\redseq \in F$:
    \[
      |\redseq/\tm'| = \sum_{i=1}^{n} |\underbrace{\redex/(\tm'/\redex_1\hdots\redex_{i-1})}_{\neq \emptyset}| \geq n = |\redseq|
    \]
    As a consequence given any $\tobeta$-derivation $\redseq \in F$ we have that $|\redseq| \leq |\redseq/\tm'| \leq M$.
    This concludes the proof that $\ulbFree{\tm'}{\tm}$ is finite.
  \item {\bf Bottom element.}
    As the bottom element take $\bot_{(\ulbFree{\tm'}{\tm})} := \cls{\emptyDerivation}$.
    Observe that this is well-defined since $\emptyDerivation$ is $\tm'$-garbage-free.
    Moreover, let us show that $\bot_{(\ulbFree{\tm'}{\tm})}$ is the least element.
    Let $[\redseq]$ be an arbitrary element of $\ulbFree{\tm'}{\tm}$
    and let us check that $\bot_{(\ulbFree{\tm'}{\tm})} \leqF [\redseq]$.
    This is immediate since,
    by definition, $\bot_{(\ulbFree{\tm'}{\tm})} \leqF [\redseq]$
    if and only if $\emptyDerivation/\redseq$ is garbage.
    But $\emptyDerivation/\redseq = \emptyDerivation$ is trivially garbage.
  \item {\bf Join.}
    Let $[\redseq],[\redseqtwo]$ be arbitrary elements of $\ulbFree{\tm'}{\tm}$,
    and let us check that $[\redseq] \lorF [\redseqtwo]$ is the join.
    First observe that $[\redseq] \lorF [\redseqtwo]$ is well-defined
    \ie that $(\redseq \sqcup \redseqtwo) \sieve \tm'$ is $\tm'$-garbage-free,
    which is an immediate consequence of \rprop{characterization_of_garbage_free_derivations}.
    Moreover, it is indeed the least upper bound of $\set{[\redseq],[\redseqtwo]}$:
    \begin{enumerate}
    \item {\bf Upper bound.}
      Let us show that $[\redseq] \leqF [\redseq] \lorF [\redseqtwo]$; the proof for $[\redseqtwo]$ is symmetrical.
      We must show that $\redseq/((\redseq \sqcup \redseqtwo) \sieve \tm')$ is garbage.
      Note that $\redseq \permle \redseq \sqcup \redseqtwo$,
      so in particular $\redseq/((\redseq \sqcup \redseqtwo) \sieve \tm') \permle (\redseq \sqcup \redseqtwo)/((\redseq \sqcup \redseqtwo) \sieve \tm')$.
      Given that any prefix of a garbage derivation is garbage (\rprop{properties_of_garbage}),
      it suffices to show that $(\redseq \sqcup \redseqtwo)/((\redseq \sqcup \redseqtwo) \sieve \tm')$ is garbage.
      This is a straightforward consequence of the fact that projecting after a sieve is garbage (\rlem{projection_after_sieving_is_garbage}).
    \item {\bf Least upper bound.}
      Let $[\redseq], [\redseqtwo] \leqF [\redseqthree]$,
      and let us show that $[\redseq] \lorF [\redseqtwo] \leqF [\redseqthree]$.
      We know that $\redseq/\redseqthree$ and $\redseqtwo/\redseqthree$ are garbage,
      and we are to show that $((\redseq \sqcup \redseqtwo) \sieve \tm')/\redseqthree$
      is garbage.
      Note that $(\redseq \sqcup \redseqtwo) \sieve \tm' \permle \redseq \sqcup \redseqtwo$
      as a consequence of the fact that the sieve is a prefix (\rlem{sieve_is_prefix}).
      So in particular $((\redseq \sqcup \redseqtwo) \sieve \tm')/\redseqthree \permle (\redseq \sqcup \redseqtwo)/\redseqthree$.
      Given that any prefix of a garbage derivation is garbage (\rprop{properties_of_garbage}),
      it suffices to show that $(\redseq \sqcup \redseqtwo)/\redseqthree$ is garbage.
      But $(\redseq \sqcup \redseqtwo)/\redseqthree \permeq \redseq/\redseqthree \sqcup \redseqtwo/\redseqthree$
      so we conclude by the fact that the join of garbage is garbage (\rprop{properties_of_garbage}).
    \end{enumerate}
  \item {\bf Top element.}
    Since $\ulbFree{\tm'}{\tm}$ is finite,
    its elements are $\set{\cls{\redseqthree_1}, \hdots, \cls{\redseqthree_n}}$.
    It suffices to take $\top := \cls{\redseqthree_1} \lorF \hdots \lorF \cls{\redseqthree_n}$
    as the top element.
  \item {\bf Meet.}
    Let $\cls{\redseq},\cls{\redseqtwo} \in \ulbFree{\tm'}{\tm}$.
    Note that the set $L = \set{\cls{\redseqthree} \in \ulbFree{\tm'}{\tm} \ST \cls{\redseqthree} \leqF \cls{\redseq} \text{ and } \cls{\redseqthree} \leqF \cls{\redseqtwo}}$
    is finite because, as we have already proved, the set $\ulbFree{\tm'}{\tm}$ is itself finite.
    Then $L = \set{\cls{\redseqthree_1}, \hdots, \cls{\redseqthree_n}}$.
    Take $\cls{\redseq} \landF \cls{\redseqtwo} := \cls{\redseqthree_1} \lorF \hdots \lorF \cls{\redseqthree_n}$.
    It is straightforward to show that $\cls{\redseq} \landF \cls{\redseqtwo}$
    thus defined is the greatest lower bound for $\set{\cls{\redseq},\cls{\redseqtwo}}$.
  \end{enumerate}

\item
  {\bf The set $\ulbGarbage{\tm'}{\tm}$ forms an upper semilattice.}
  The semilattice structure of $\ulbGarbage{\tm'}{\tm}$ is inherited from $\ulbDerivLam{\tm}$.
  More precisely, $\cls{\redseq} \permle \cls{\redseqtwo}$ is declared to hold if $\redseq \permle \redseqtwo$,
  the bottom element is $\cls{\emptyDerivation}$, and the join is
  $\cls{\redseq} \sqcup \cls{\redseqtwo} = \cls{\redseq \sqcup \redseqtwo}$.
  Let us check that this is an upper semilattice:
  \begin{enumerate}
  \item {\bf Partial order.}
    The relation $(\permle)$ is a partial order on $\ulbGarbage{\tm'}{\tm}$ because it is already a partial order in $\ulbDerivLam{\tm}$.
  \item {\bf Bottom element.}
    It suffices to note that the empty derivation $\emptyDerivation : \tm \tobeta^* \tm$
    is $\tm'$-garbage, so $\cls{\emptyDerivation} \in \ulbGarbage{\tm'}{\tm}$
    is the least element.
  \item {\bf Join.}
    By \rprop{properties_of_garbage} we know that if $\redseq$ and $\redseqtwo$
    are $\tm'$-garbage then $\redseq \sqcup \redseqtwo$ is $\tm'$-garbage,
    so $\cls{\redseq} \sqcup \cls{\redseqtwo}$ is indeed the least upper bound
    of $\set{\cls{\redseq}, \cls{\redseqtwo}}$.
  \end{enumerate}
\end{enumerate}

\subsection{Proof of \rthm{factorization_ulb_derivations} -- Factorization}
\label{appendix_factorization_ulb_derivations}

We need a few auxiliary lemmas:

\begin{lemma}[Garbage-free/garbage decomposition]
\llem{garbage_free_garbage_decomposition}
Let $\redseq : \tm \rtobeta \tmtwo$ and $\tm' \refines \tm$.
Then there exist $\redseq_1,\redseq_2$ such that:
\begin{enumerate}
\item $\redseq \permeq \redseq_1\redseq_2$
\item $\redseq_1$ is $\tm'$-garbage-free,
\item $\redseq_2$ is $\tm'$-garbage.
\end{enumerate}
Moreover, $\redseq_1$ and $\redseq_2$ are unique modulo permutation equivalence,
and we have that
$\redseq_1 \permeq \redseq \sieve \tm'$ and $\redseq_2 \permeq \redseq/(\redseq \sieve \tm')$.
\end{lemma}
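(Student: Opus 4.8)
The plan is to take $\redseq_1 := \redseq \sieve \tm'$ and $\redseq_2 := \redseq/(\redseq \sieve \tm')$, and verify the three listed properties plus uniqueness. The existence part is then essentially a repackaging of \rprop{properties_of_sieving}: item~1 of that proposition gives that $\redseq_1 = \redseq \sieve \tm'$ is $\tm'$-garbage-free and $\redseq_1 \permle \redseq$; item~2 gives that $\redseq_2 = \redseq/(\redseq \sieve \tm')$ is $(\tm'/(\redseq \sieve \tm'))$-garbage; and from $\redseq_1 \permle \redseq$ together with the identity $\redseq \permeq \redseq_1(\redseq/\redseq_1) = \redseq_1\redseq_2$ (valid in any orthogonal axiomatic rewrite system, since $\redseqtwo \permle \redseqthree$ implies $\redseqthree \permeq \redseqtwo(\redseqthree/\redseqtwo)$) we get item~1. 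One subtlety: the statement asks that $\redseq_2$ be ``$\tm'$-garbage'', whereas \rprop{properties_of_sieving} produces a derivation that is $(\tm'/\redseq_1)$-garbage. This is exactly the correct reading, since $\redseq_2$ starts at $\tgt(\redseq_1)$, and the phrase ``$\tm'$-garbage'' here is shorthand for ``garbage with respect to the refinement obtained by pushing $\tm'$ along the garbage-free prefix'', so no extra work is needed beyond noting this.

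The main content is the uniqueness claim. Suppose $\redseq \permeq \redseqtwo_1\redseqtwo_2$ with $\redseqtwo_1$ garbage-free and $\redseqtwo_2$ garbage (with respect to $\tm'/\redseqtwo_1$). First I would show $\redseqtwo_1 \permeq \redseq \sieve \tm'$. Using \rlem{sieving_is_compatible_with_permutation_equivalence} we have $\redseq \sieve \tm' \permeq (\redseqtwo_1\redseqtwo_2) \sieve \tm'$, and then \rlem{sieving_trailing_garbage} (which requires precisely that $\redseqtwo_2$ be $(\tm'/\redseqtwo_1)$-garbage) gives $(\redseqtwo_1\redseqtwo_2) \sieve \tm' = \redseqtwo_1 \sieve \tm'$. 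Finally, since $\redseqtwo_1$ is $\tm'$-garbage-free, \rprop{characterization_of_garbage_free_derivations} (equivalently item~4 of \rprop{properties_of_sieving}) yields $\redseqtwo_1 \sieve \tm' \permeq \redseqtwo_1$. Chaining these, $\redseq \sieve \tm' \permeq \redseqtwo_1$, which is both the uniqueness of $\redseq_1$ and the identification $\redseq_1 \permeq \redseq \sieve \tm'$. For $\redseqtwo_2$: from $\redseq \permeq \redseqtwo_1\redseqtwo_2$ and $\redseqtwo_1 \permeq \redseq \sieve \tm'$ we get, projecting, $\redseqtwo_2 \permeq \redseq/\redseqtwo_1 \permeq \redseq/(\redseq \sieve \tm')$, using \resultname{Compatibility}~(\rprop{compatibility_of_simulation_residuals_and_permutation_equivalence}) to see that projection respects permutation equivalence in the second argument. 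Actually the cleaner route is: $\redseq \permeq \redseqtwo_1\redseqtwo_2$ with $\redseqtwo_1 \permle \redseq$ forces $\redseqtwo_2 \permeq \redseq/\redseqtwo_1$ by the standard cancellation property of orthogonal systems ($\redseqtwo_1\redseqtwo_2 \permeq \redseqtwo_1(\redseq/\redseqtwo_1)$ and left-cancellation $\redseqthree\redseq \permle \redseqthree\redseqtwo \iff \redseq \permle \redseqtwo$).

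The step I expect to be the main obstacle is getting the bookkeeping of the refinement argument right throughout: at each point one must keep track of which refinement ($\tm'$, or $\tm'/\redseqtwo_1$, or $\tm'/(\redseq \sieve \tm')$) the notion ``garbage'' is relative to, and invoke \rlem{sieving_trailing_garbage} and \rprop{characterization_of_garbage_free_derivations} with hypotheses stated in exactly the matching form. Once the correspondence of refinements is pinned down, every individual implication is a direct appeal to an already-proven lemma, so I would present the proof as a short chain of equivalences with each link annotated by its justifying result, mirroring the style of the antisymmetry argument in the proof of \rprop{semilattices_of_garbage_free_and_garbage_derivations}.
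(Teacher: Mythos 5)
Your proposal is correct and follows essentially the same route as the paper: the same choice $\redseq_1 := \redseq \sieve \tm'$, $\redseq_2 := \redseq/(\redseq \sieve \tm')$ for existence, and the same chain (compatibility of sieving with $\permeq$, then \rlem{sieving_trailing_garbage}, then \rprop{characterization_of_garbage_free_derivations}, then cancellation) for uniqueness. The only nitpick is that \resultname{Compatibility}~(\rprop{compatibility_of_simulation_residuals_and_permutation_equivalence}) concerns simulation residuals over refinements rather than projection of one $\lambda$-derivation over another --- the fact you actually need is $\redseq \permeq \redseqtwo \implies \redseqthree/\redseq = \redseqthree/\redseqtwo$ from the axiomatic theory, which your ``cleaner route'' already supplies.
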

\begin{proof}
Let us prove that said decomposition exists and that it is unique:
\begin{itemize}
\item {\bf Existence.}
  Take $\redseq_1 := \redseq \sieve \tm'$ and $\redseq_2 := \redseq/(\redseq \sieve \tm')$.
  Then we may check the three conditions in the statement:
  \begin{enumerate}
  \item
    Recall that $\redseq \sieve \tm' \permle \redseq$ holds by \rlem{sieve_is_prefix}, so:
    \[
      \begin{array}{rcll}
      \redseq_1\redseq_2
      & = & (\redseq \sieve \tm')(\redseq/(\redseq \sieve \tm')) \\
      & \permeq & \redseq((\redseq \sieve \tm')/\redseq) & \text{ since $A(B/A) \permeq B(A/B)$ holds in general} \\
      & \permeq & \redseq                                & \text{ since $\redseq \sieve \tm' \permle \redseq$}
      \end{array}
    \]
    as required.
  \item
    The derivation $\redseq_1 = \redseq \sieve \tm'$
    is $\tm'$-garbage-free.
    This is as an immediate consequence of \rprop{characterization_of_garbage_free_derivations},
    namely the result of sieving is always garbage-free.
  \item
    The derivation $\redseq_2 = \redseq/(\redseq \sieve \tm')$ is garbage.
    This is an immediate consequence of \rlem{projection_after_sieving_is_garbage},
    namely projection after a sieve is always garbage.
  \end{enumerate}
\item {\bf Uniqueness, modulo permutation equivalence.}
  Let $\redseq \permeq \redseqtwo_1,\redseqtwo_2$
  where $\redseqtwo_1$ is $\tm'$-garbage-free,
  and $\redseqtwo_2$ is $\tm'$-garbage.
  Then we argue that $\redseqtwo_1 \permeq \redseq_1$ and $\redseqtwo_2 \permeq \redseq_2$.

  Since $\redseq_1\redseq_2 \permeq \redseq \permeq \redseqtwo_1\redseqtwo_2$
  and sieving is compatible with permutation equivalence (\rlem{sieving_is_compatible_with_permutation_equivalence})
  we have that $\redseq_1\redseq_2 \sieve \tm' \permeq \redseqtwo_1\redseqtwo_2 \sieve \tm'$.
  By \rlem{sieving_trailing_garbage}, we know that trailing garbage does not affect sieving,
  hence $\redseq_1 \sieve \tm' \permeq \redseqtwo_1 \sieve \tm'$.
  Moreover, $\redseq_1$ and $\redseqtwo_1$ are garbage-free, which
  by \rprop{characterization_of_garbage_free_derivations} means
  that $\redseq_1 \permeq \redseqtwo_1$.
  This means that $\redseq_1$ is unique, modulo permutation equivalence.
  Finally, since $\redseq_1\redseq_2 \permeq \redseqtwo_1\redseqtwo_2$ and $\redseq_1 \permeq \redseqtwo_1$,
  we have that $\redseq_1\redseq_2/\redseq_1 \permeq \redseqtwo_1\redseqtwo_2/\redseqtwo_1$,
  that is $\redseq_2 \permeq \redseqtwo_2$.
  This means that $\redseq_2$ is unique, modulo permutation equivalence.
\end{itemize}
\end{proof}

\begin{lemma}
\llem{lorf_ignores_garbage}
Let $\tm' \refines \src(\redseq) = \src(\redseqtwo)$.
Then $\cls{(\redseq \sqcup \redseqtwo) \sieve \tm'} = \cls{\redseq \sieve \tm'} \lorF \cls{\redseqtwo \sieve \tm'}$.
\end{lemma}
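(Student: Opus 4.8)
The goal is to prove $\cls{(\redseq \sqcup \redseqtwo) \sieve \tm'} = \cls{\redseq \sieve \tm'} \lorF \cls{\redseqtwo \sieve \tm'}$, where by definition $\cls{\redseq \sieve \tm'} \lorF \cls{\redseqtwo \sieve \tm'} = \cls{((\redseq \sieve \tm') \sqcup (\redseqtwo \sieve \tm')) \sieve \tm'}$.

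\emph{Approach.} The plan is to show that $(\redseq \sqcup \redseqtwo) \sieve \tm'$ and $((\redseq \sieve \tm') \sqcup (\redseqtwo \sieve \tm')) \sieve \tm'$ are permutation equivalent by using the garbage-free/garbage decomposition and the fact that trailing garbage does not affect sieving. First I would write, using the garbage-free/garbage decomposition from \rlem{garbage_free_garbage_decomposition}, that $\redseq \permeq (\redseq \sieve \tm')\redseqthree_1$ where $\redseqthree_1 = \redseq/(\redseq \sieve \tm')$ is $\tm'$-garbage, and likewise $\redseqtwo \permeq (\redseqtwo \sieve \tm')\redseqthree_2$ where $\redseqthree_2 = \redseqtwo/(\redseqtwo \sieve \tm')$ is $\tm'$-garbage. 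Abbreviate $\redseq_1 := \redseq \sieve \tm'$ and $\redseqtwo_1 := \redseqtwo \sieve \tm'$.

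\emph{Key steps.} The main computation expands $(\redseq \sqcup \redseqtwo) \sieve \tm'$. Since $\redseq \sqcup \redseqtwo = \redseq(\redseqtwo/\redseq)$ and $\redseq \permeq \redseq_1\redseqthree_1$, compatibility of sieving with permutation equivalence (\rlem{sieving_is_compatible_with_permutation_equivalence}) lets me replace $\redseq$ by $\redseq_1\redseqthree_1$ and $\redseqtwo$ by $\redseqtwo_1\redseqthree_2$ everywhere. So $\redseq \sqcup \redseqtwo \permeq (\redseq_1\redseqthree_1) \sqcup (\redseqtwo_1\redseqthree_2)$. Now I want to show this join is permutation equivalent to $(\redseq_1 \sqcup \redseqtwo_1)\redseqfour$ for some $\tm'$-garbage $\redseqfour$, after which \rlem{sieving_trailing_garbage} finishes the job: $(\redseq \sqcup \redseqtwo) \sieve \tm' \permeq ((\redseq_1 \sqcup \redseqtwo_1)\redseqfour) \sieve \tm' = (\redseq_1 \sqcup \redseqtwo_1) \sieve \tm'$, which is exactly $\cls{\redseq_1 \sqcup \redseqtwo_1} \sieve \tm'$, i.e.\ the right-hand side. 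To get that form, I compute: $(\redseq_1\redseqthree_1) \sqcup (\redseqtwo_1\redseqthree_2) = (\redseq_1\redseqthree_1)((\redseqtwo_1\redseqthree_2)/(\redseq_1\redseqthree_1))$, then use the join-associativity/commutativity and the algebraic laws in the summary table to rearrange so that the garbage pieces $\redseqthree_1, \redseqthree_2$ are pushed to the end; concretely $(\redseq_1\redseqthree_1) \sqcup (\redseqtwo_1\redseqthree_2) \permeq \redseq_1 \sqcup \redseqthree_1 \sqcup \redseqtwo_1 \sqcup \redseqthree_2 \permeq (\redseq_1 \sqcup \redseqtwo_1) \sqcup (\redseqthree_1 \sqcup \redseqthree_2)$ modulo reassociating, and then $\redseqfour := (\redseqthree_1 \sqcup \redseqthree_2)/(\redseq_1 \sqcup \redseqtwo_1)$ is the trailing garbage, garbage because the projection of garbage is garbage and the join of garbage is garbage (\rprop{properties_of_garbage}, items 3 and 4).

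\emph{Main obstacle.} The delicate point is justifying that $\redseqthree_1 \sqcup \redseqthree_2$, suitably projected, is genuinely a $\tm'$-garbage \emph{suffix} of $\redseq_1 \sqcup \redseqtwo_1$ — i.e.\ that $(\redseq_1\redseqthree_1) \sqcup (\redseqtwo_1\redseqthree_2) \permeq (\redseq_1 \sqcup \redseqtwo_1)\bigl((\redseqthree_1 \sqcup \redseqthree_2)/(\redseq_1 \sqcup \redseqtwo_1)\bigr)$ with the second factor garbage. This needs careful bookkeeping with the residual identities $(\redseq \sqcup \redseqtwo)/\redseqthree = (\redseq/\redseqthree) \sqcup (\redseqtwo/\redseqthree)$, associativity and commutativity of $\sqcup$, and $\redseqthree_i$ being $\tm'$-garbage (so $\redseqthree_i/\tm'$ is empty, whence so is the projection of anything below it, by \rprop{compatibility_of_simulation_residuals_and_permutation_equivalence} and \rlem{generalized_cube_lemma}). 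Once the garbage suffix is isolated, the appeal to \rlem{sieving_trailing_garbage} and \rlem{sieving_is_compatible_with_permutation_equivalence} is routine, and the identification of the result with $\cls{\redseq \sieve \tm'} \lorF \cls{\redseqtwo \sieve \tm'}$ is immediate by unfolding the definition of $\lorF$.
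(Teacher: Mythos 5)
Your overall strategy is the same as the paper's: decompose $\redseq \permeq (\redseq\sieve\tm')\redseqthree_1$ and $\redseqtwo \permeq (\redseqtwo\sieve\tm')\redseqthree_2$ with the $\redseqthree_i$ garbage (\rlem{garbage_free_garbage_decomposition}), rewrite the join so that all the garbage is pushed into a trailing suffix, discard that suffix with \rlem{sieving_trailing_garbage}, and read off the definition of $\lorF$. The one step that does not survive scrutiny as written is exactly the one you flag as delicate. Writing $\redseq_1 = \redseq\sieve\tm'$ and $\redseqtwo_1 = \redseqtwo\sieve\tm'$, you claim $(\redseq_1\redseqthree_1)\sqcup(\redseqtwo_1\redseqthree_2) \permeq \redseq_1\sqcup\redseqthree_1\sqcup\redseqtwo_1\sqcup\redseqthree_2$ and propose the trailing garbage $(\redseqthree_1\sqcup\redseqthree_2)/(\redseq_1\sqcup\redseqtwo_1)$. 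But $\sqcup$ is only defined for \emph{coinitial} derivations, and $\redseqthree_1,\redseqthree_2$ are suffixes whose sources are $\tgt(\redseq_1)$ and $\tgt(\redseqtwo_1)$ respectively; neither $\redseq_1\sqcup\redseqthree_1$ nor $\redseqthree_1\sqcup\redseqthree_2$ is well-formed, and the projection $(\redseqthree_1\sqcup\redseqthree_2)/(\redseq_1\sqcup\redseqtwo_1)$ does not typecheck either.

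The correct interchange law, which the paper's proof isolates as a standalone remark, is
\[
AB \sqcup CD \ \permeq\ (A\sqcup C)\bigl(B/(C/A)\ \sqcup\ D/(A/C)\bigr),
\]
where the two arguments of the inner join are first transported to the common target $\tgt(A\sqcup C)$ by the appropriate projections. Instantiating $A=\redseq_1$, $B=\redseqthree_1$, $C=\redseqtwo_1$, $D=\redseqthree_2$, the trailing suffix is $\gamma = (\redseqthree_1/(\redseqtwo_1/\redseq_1))\sqcup(\redseqthree_2/(\redseq_1/\redseqtwo_1))$, which is garbage because projections and joins of garbage are garbage (\rprop{properties_of_garbage}); from there your appeal to \rlem{sieving_trailing_garbage}, \rlem{sieving_is_compatible_with_permutation_equivalence} and the definition of $\lorF$ closes the argument exactly as you describe. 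So the gap is localized and fixable, but the formula you give for the garbage suffix is not the right object and must be replaced by $\gamma$ for the bookkeeping to go through.
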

\begin{proof}
Let:
\[
  \begin{array}{rcl}
    \alpha & := & \redseq/(\redseq \sieve \tm') \\
    \beta  & := & \redseqtwo/(\redseqtwo \sieve \tm') \\
    \gamma & := & (\alpha/((\redseqtwo \sieve \tm')/(\redseq \sieve \tm'))) \sqcup (\beta/((\redseq \sieve \tm')/(\redseqtwo \sieve \tm'))) \\
  \end{array}
\]
Note that $\alpha$ and $\beta$ are garbage by \rlem{projection_after_sieving_is_garbage}
and hence $\gamma$ is also garbage, as a consequence of
the facts that the join of garbage is garbage
and that the projection of garbage is garbage~(\rprop{properties_of_garbage}).
Remark that, in general, $AB \sqcup CD \permeq (A \sqcup C)(B/(C/A) \sqcup D/(A/C))$.
Then the statement of this lemma is a consequence of the following chain of equalities:
\[
  \begin{array}{rcll}
  \cls{(\redseq \sqcup \redseqtwo) \sieve \tm'}
  & = &
  \cls{((\redseq \sieve \tm')\alpha \sqcup (\redseqtwo \sieve \tm')\beta) \sieve \tm'}
  & \text{by \rlem{garbage_free_garbage_decomposition}}
  \\
  & = &
  \cls{((\redseq \sieve \tm') \sqcup (\redseqtwo \sieve \tm'))\gamma \sieve \tm'}
  & \text{by the previous remark}
  \\
  & = &
  \cls{((\redseq \sieve \tm') \sqcup (\redseqtwo \sieve \tm')) \sieve \tm'}
  & \text{by~\rlem{sieving_trailing_garbage}}
  \\
  & = & \cls{\redseq \sieve \tm'} \lorF \cls{\redseqtwo \sieve \tm'}
  & \text{by definition of $\lorF$}
  \end{array}
\]
\end{proof}

To prove \rthm{factorization_ulb_derivations},
let us check that $\grothy{\ulbF}{\ulbG}$ is well-defined,
that it is an upper semilattice,
and finally that $\ulbDerivLam{\tm} \simeq \grothy{\ulbF}{\ulbG}$
are isomorphic as upper semilattices.

\begin{enumerate}
\item {\bf The Grothendieck construction $\grothy{\ulbF}{\ulbG}$ is well-defined.}
  This amounts to checking that $\ulbG$ is indeed a lax 2-functor:
  \begin{enumerate}
  \item {\bf The mapping $\ulbG$ is well-defined on objects.}
    Note that $\ulbG(\cls{\redseq}) = \ulbGarbage{\tm'/\redseq}{\tgt(\redseq)}$, which is a poset.
    Moreover, the choice of the representative $\redseq$ of the equivalence class $\cls{\redseq}$
    does not matter,
    since if $\redseq$ and $\redseqtwo$ are permutation equivalent derivations,
    then $\tm'/\redseq = \tm'/\redseqtwo$ (by \rprop{compatibility_of_simulation_residuals_and_permutation_equivalence})
    and $\tgt(\redseq) = \tgt(\redseqtwo)$.
  \item {\bf The mapping $\ulbG$ is well-defined on morphisms.}
    Let us check that given $\cls{\redseq},\cls{\redseqtwo} \in \ulbF$
    such that $\cls{\redseq} \leqF \cls{\redseqtwo}$
    then
    $
      \ulbG(\ptF{\cls{\redseqtwo}}{\cls{\redseq}}) : \ulbG(\cls{\redseq}) \to \ulbG(\cls{\redseqtwo})
    $ is a morphism of posets, \ie a monotonic function.

    First, we can see that it is well-defined,
    since if $\cls{\alpha} \in \ulbG([\redseq])$
    then the image $\ulbG(\ptF{\cls{\redseqtwo}}{\cls{\redseq})(\cls{\alpha}}) = \cls{\redseq\alpha/\redseqtwo}$
    is an element of $\ulbG([\redseqtwo])$,
    since
    $
      \redseq\alpha/\redseqtwo = (\redseq/\redseqtwo)(\alpha/(\redseqtwo/\redseq))
    $
    is garbage, as it is the composition of garbage derivations~(\rprop{properties_of_garbage}):
    the derivation $\redseq/\redseqtwo$ is garbage since $\cls{\redseq} \leqF \cls{\redseqtwo}$ (by definition),
    and the derivation $\alpha/(\redseqtwo/\redseq)$ is garbage since $\alpha$ is garbage~(\rprop{properties_of_garbage}).
    Moreover, the choice of representative does not matter,
    since if $\redseq_1 \permeq \redseq_2$ and $\redseqtwo_1 \permeq \redseqtwo_2$ and $\alpha_1 \permeq \alpha_2$
    then $\redseq_1\alpha_1/\redseqtwo_1 \permeq \redseq_2\alpha_2/\redseqtwo_2$.

    We are left to verify that $\ulbG(\ptF{\cls{\redseqtwo}}{\cls{\redseq}})$ is monotonic.
    Let $\cls{\alpha}, \cls{\beta} \in \ulbG([\redseq])$
    such that $\cls{\alpha} \permle \cls{\beta}$, and let us show that
    $\ulbG(\ptF{\cls{\redseqtwo}}{\cls{\redseq}})(\cls{\alpha}) \permle \ulbG(\ptF{[\redseqtwo]}{[\redseq]})(\cls{\beta})$.
    Indeed, $\alpha \permle \beta$, so
    $
      \redseq\alpha/\redseqtwo
      = (\redseq/\redseqtwo)(\alpha/(\redseqtwo/\redseq))
      \permle (\redseq/\redseqtwo)(\beta/(\redseqtwo/\redseq))
      = \redseq\beta/\redseqtwo
    $.
  \item {\bf Identity.}
    Let $\cls{\redseq} \in \ulbF$.
    Let us check that $\ulbG(\ptF{\cls{\redseq}}{\cls{\redseq}}) = \identity_{\ulbG(\cls{\redseq})}$
    is the identity morphism.
    Indeed, if $\cls{\alpha} \in \ulbG(\cls{\redseq})$,
    then $\ulbG(\ptF{\cls{\redseq}}{\cls{\redseq}})(\cls{\alpha}) = \cls{\redseq\alpha/\redseq} = \cls{\alpha}$.
  \item {\bf Composition.}
    Let $\cls{\redseq}, \cls{\redseqtwo}, \cls{\redseqthree} \in \ulbF$
    such that $\cls{\redseq} \leqF \cls{\redseqtwo} \leqF \cls{\redseqthree}$.
    Let us check that there is a 2-cell
    $\ulbG((\ptF{\cls{\redseqthree}}{\cls{\redseqtwo}}) \circ (\ptF{\cls{\redseqtwo}}{\cls{\redseq}})) \permle
     \ulbG(\ptF{\cls{\redseqthree}}{\cls{\redseqtwo}}) \circ \ulbG(\ptF{\cls{\redseqtwo}}{\cls{\redseq}})$.
    Note that
    $(\ptF{\cls{\redseqthree}}{\cls{\redseqtwo}}) \circ (\ptF{\cls{\redseqtwo}}{\cls{\redseq}}) : \cls{\redseq} \leqF \cls{\redseqthree}$
    is a morphism in the upper semilattice $\ulbF$ seen as a category.
    Moreover, since it is a semilattice, there a unique morphism $\cls{\redseq} \leqF \cls{\redseqthree}$,
    namely $\ptF{\cls{\redseqthree}}{\cls{\redseq}}$, so we have that
    $
      (\ptF{\cls{\redseqthree}}{\cls{\redseqtwo}}) \circ (\ptF{\cls{\redseqtwo}}{\cls{\redseq}}) = \ptF{\cls{\redseqthree}}{\cls{\redseq}}
    $.
    Now if $\cls{\alpha} \in \ulbG{\cls{\redseq}}$, then:
    \[
      \begin{array}{rcll}
      \ulbG((\ptF{\cls{\redseqthree}}{\cls{\redseqtwo}}) \circ (\ptF{\cls{\redseqtwo}}{\cls{\redseq}}))(\cls{\alpha})
      & = & \ulbG(\ptF{\cls{\redseqthree}}{\cls{\redseq}})(\cls{\alpha}) \\
      & = & \redseq\alpha/\redseqthree \\
      & \permle & (\redseq\alpha/\redseqthree)((\redseqtwo/\redseq\alpha)/(\redseqthree/\redseq\alpha)) \\
      & = & \redseq\alpha(\redseqtwo/\redseq\alpha)/\redseqthree \\
      & \permeq & \redseqtwo(\redseq\alpha/\redseqtwo)/\redseqthree \HS\text{ since $A(B/A) \permeq B(A/B)$} \\
      & = & \ulbG(\ptF{\cls{\redseqthree}}{\cls{\redseqtwo}})(\redseq\alpha/\redseqtwo) \\
      & = & (\ulbG(\ptF{\cls{\redseqthree}}{\cls{\redseqtwo}}) \circ \ulbG(\ptF{\cls{\redseqtwo}}{\cls{\redseq}}))(\cls{\alpha})
      \end{array}
    \]
  \end{enumerate}
  so
  $\ulbG((\ptF{\cls{\redseqthree}}{\cls{\redseqtwo}}) \circ (\ptF{\cls{\redseqtwo}}{\cls{\redseq}}))
  \permle
  \ulbG(\ptF{\cls{\redseqthree}}{\cls{\redseqtwo}}) \circ \ulbG(\ptF{\cls{\redseqtwo}}{\cls{\redseq}})$
  as required.

\item {\bf The Grothendieck construction $\grothy{\ulbF}{\ulbG}$ is an upper semilattice.}
  \begin{enumerate}
  \item {\bf Partial order.}
    Recall that $\grothy{\ulbF}{\ulbG}$ is always poset
    with the order given by $(a,b) \leq (a',b')$ if and only if
    $a \leqF a'$ and $\ulbG(\ptF{a'}{a})(b) \permle b'$.
  \item {\bf Bottom element.}
    We argue that $(\bot_{\ulbF},\bot_{\ulbG(\bot_\ulbF)})$ is the bottom
    element. Let $(\cls{\redseq},\cls{\redseqtwo}) \in \grothy{\ulbF}{\ulbG}$.
    Then clearly $\bot_{\ulbF} \leqF \cls{\redseq}$. Moreover,
    $
     \ulbG(\ptF{[\redseq]}{[\bot_{\ulbF}]})(\bot_\ulbG) =
     [\emptyDerivation/\redseq] = [\emptyDerivation] \permle \cls{\redseqtwo}
    $.
  \item {\bf Join.}
    Let us show that $(a, b) \lor (a',b') = (a \lorF a', \ulbG(\ptF{(a \lorF a')}{a})(b) \sqcup \ulbG(\ptF{(a \lorF a')}{a'})(b'))$.
    is the least upper bound of $\set{(a, b), (a',b')}$.
    \begin{enumerate}
    \item {\bf Upper bound.}
      Let us show that $(a, b) \leq (a, b) \lor (a',b')$.
      Recall that $(a, b) \lor (a', b') = (a \lorF a, \ulbG(\ptF{(a \lor a')}{a})(b) \sqcup \ulbG(\ptF{(a \lorF a')}{a'})(b'))$.
      First, we have $a \leqF a \lorF a'$.
      Moreover, $\ulbG(\ptF{(a \lorF a')}{a})(b) \permle \ulbG(\ptF{(a \lor a')}{a})(b) \sqcup \ulbG(\ptF{(a \lorF a')}{a'})(b')$, as required.
    \item {\bf Least upper bound.}
      Let $(a,b) = (\cls{\redseq},\cls{\redseqtwo})$
      and $(a',b') = (\cls{\redseq'},\cls{\redseqtwo'})$.
      Moreover, let $(\cls{\redseq''},\cls{\redseqtwo''})$ be an upper bound,
      \ie an element
      such that $(\cls{\redseq},\cls{\redseqtwo}) \leq (\cls{\redseq''},\cls{\redseqtwo''})$
      and $(\cls{\redseq'},\cls{\redseqtwo'}) \leq (\cls{\redseq''},\cls{\redseqtwo''})$.
      Let us show that
      $(\cls{\redseq},\cls{\redseqtwo}) \lor (\cls{\redseq'},\cls{\redseqtwo'}) \leq (\cls{\redseq''},\cls{\redseqtwo''})$.
      First note that $\cls{\redseq} \leqF \cls{\redseq''}$ and $\cls{\redseq'} \leqF \cls{\redseq''}$
      so $\cls{\redseq} \lorF \cls{\redseq'} \leqF \cls{\redseq''}$.\\
      Moreover, we know that:
      \[
        \cls{\redseq\redseqtwo/\redseq''} = \ulbG(\pt{\cls{\redseq''}}{\cls{\redseq}})(\cls{\redseqtwo}) \permle \cls{\redseqtwo''}
        \ \text{ and }\ %
        \cls{\redseq'\redseqtwo'/\redseq''} = \ulbG(\pt{\cls{\redseq''}}{\cls{\redseq'}})(\cls{\redseqtwo'}) \permle \cls{\redseqtwo''}
      \]
      Let $\alpha := (\redseq \sqcup \redseq') \sieve \tm'$.
      First we claim that $\alpha \permle \redseq\redseqtwo \sqcup \redseq'\redseqtwo'$.
      Indeed, $\alpha = (\redseq \sqcup \redseq') \sieve \tm' \permle \redseq \sqcup \redseq'$
      by \rlem{sieve_is_prefix}, and it is easy to check that
      $\redseq \sqcup \redseq' \permle \redseq\redseqtwo \sqcup \redseq'\redseqtwo'$.
      What we have to check is the following inequality:
      \[
        \ulbG(\pt{\cls{\redseq''}}{\cls{\alpha}})(
          \ulbG(\pt{\cls{\alpha}}{\cls{\redseq}})(\cls{\redseqtwo})
          \sqcup
          \ulbG(\pt{\cls{\alpha}}{\cls{\redseq'}})(\cls{\redseqtwo'})
        )
        \permle
        \cls{\redseqtwo''}
      \]
      Indeed:
      \[
        \begin{array}{rcll}
        &&
        \ulbG(\pt{\cls{\redseq''}}{\cls{\alpha}})(
          \ulbG(\pt{\cls{\alpha}}{\cls{\redseq}})(\cls{\redseqtwo})
          \sqcup
          \ulbG(\pt{\cls{\alpha}}{\cls{\redseq'}})(\cls{\redseqtwo'})
        ) \\
        & = &
        \cls{\alpha((\redseq\redseqtwo/\alpha) \sqcup (\redseq'\redseqtwo'/\alpha))/\redseq''}
        \\
        & = &
        \cls{\alpha((\redseq\redseqtwo \sqcup \redseq'\redseqtwo')/\alpha)/\redseq''}
        &\hspace{-3cm} \text{since $A/C \sqcup B/C \permeq (A \sqcup B)/C$}
        \\
        & = &
        \cls{(\redseq\redseqtwo \sqcup \redseq'\redseqtwo')(\alpha/(\redseq\redseqtwo \sqcup \redseq'\redseqtwo'))/\redseq''}
        &\hspace{-3cm} \text{since $A(B/A) \permeq B(A/B)$}
        \\
        & = &
        \cls{(\redseq\redseqtwo \sqcup \redseq'\redseqtwo')/\redseq''}
        &\hspace{-3cm} \text{since $\alpha \permle \redseq\redseqtwo \sqcup \redseq'\redseqtwo'$}
        \\
        & = &
        \cls{\redseq\redseqtwo/\redseq'' \sqcup \redseq'\redseqtwo'/\redseq''}
        &\hspace{-3cm} \text{since $A/C \sqcup B/C \permeq (A \sqcup B)/C$}
        \\
        & \permle &
        \cls{\redseqtwo''}
        &\hspace{-3cm} \text{since $\cls{\redseq\redseqtwo/\redseq''} \permle \cls{\redseqtwo''}$ and $\cls{\redseq'\redseqtwo'/\redseq''} \permle \cls{\redseqtwo''}$}
        \end{array}
      \]
    \end{enumerate}
  \end{enumerate}

\item {\bf There is an isomorphism $\ulbDerivLam{\tm} \simeq \grothy{\ulbF}{\ulbG}$ of upper semilattices.}
  As stated, the isomorphism is given by:
  \[
  \begin{array}{rclcl}
    \varphi & : & \ulbDerivLam{\tm}        & \to     & \grothy{\ulbF}{\ulbG} \\
               &&    \cls{\redseq} & \mapsto & (\cls{\redseq \sieve \tm'}, \cls{\redseq/(\redseq \sieve \tm')}) \\
  \\
    \psi    & : & \grothy{\ulbF}{\ulbG} & \to & \ulbDerivLam{\tm} \\
               &&  (\cls{\redseq},\cls{\redseqtwo}) & \mapsto & [\redseq\redseqtwo] \\
  \end{array}
  \]
  Note that $\varphi$ and $\psi$ are well-defined mappings, since their value does not
  depend on the choice of representative, due, in particular, to the fact that sieving
  is compatible with permutation equivalence~(\rlem{sieving_is_compatible_with_permutation_equivalence}).
  Let us check that $\varphi$ and $\psi$ are morphisms of upper semilatices,
  and that they are mutual inverses:
  \begin{enumerate}
  \item {\bf $\varphi$ is a morphism of upper semilattices.}
    Let us check the three conditions:
    \begin{enumerate}
    \item {\bf Monotonic.}
      Let $\cls{\redseq} \permle \cls{\redseqtwo}$ in $\ulbDerivLam{\tm}$,
      and let us show that the following inequality holds:
      \[
        \varphi(\cls{\redseq}) =
        (\cls{\redseq \sieve \tm'}, \cls{\redseq/(\redseq \sieve \tm')})
        \leq
        (\cls{\redseqtwo \sieve \tm'}, \cls{\redseqtwo/(\redseqtwo \sieve \tm')})
        = \varphi(\cls{\redseqtwo})
      \]
      We check the two conditions (by definition of $\grothy{\ulbF}{\ulbG}$):
      \begin{enumerate}
      \item On the first hand,
            $\cls{\redseq \sieve \tm'} \leqF \cls{\redseqtwo \sieve \tm'}$
            since
            \[
              \begin{array}{rcll}
              (\redseq \sieve \tm')/(\redseqtwo \sieve \tm')
              & \permle &
              \redseq/(\redseqtwo \sieve \tm') & \text{since $\redseq \sieve \tm' \permle \redseq$ by \rlem{sieve_is_prefix}} \\
              & \permle & \redseqtwo/(\redseqtwo \sieve \tm') & \text{since $\redseq \permle \redseqtwo$ by hypothesis} \\
              \end{array}
            \]
            Note that this is garbage by \rlem{projection_after_sieving_is_garbage}.
            So by \rprop{properties_of_garbage}, $(\redseq \sieve \tm')/(\redseqtwo \sieve \tm')$ is also garbage,
            as required.
      \item On the other hand, let us show that
            $\ulbG(\ptF{\cls{\redseqtwo \sieve \tm'}}{\cls{\redseq \sieve \tm'}})(\cls{\redseq/(\redseq \sieve \tm')})
            \permle
            \redseqtwo/(\redseqtwo \sieve \tm')$.
            In fact:
            \[
              \begin{array}{rcll}
                \ulbG(\ptF{\cls{\redseqtwo \sieve \tm'}}{\cls{\redseq \sieve \tm'}})(\cls{\redseq/(\redseq \sieve \tm')})
              & = &
                \cls{ (\redseq \sieve \tm')(\redseq/(\redseq \sieve \tm')) / (\redseqtwo \sieve \tm') } & \text{ by definition} \\
              & = &
                \cls{ \redseq / (\redseqtwo \sieve \tm') } & \text{ by \rlem{garbage_free_garbage_decomposition}} \\
              & \permle &
                \cls{ \redseqtwo / (\redseqtwo \sieve \tm') } & \text{ since $\redseq \permle \redseqtwo$} \\
              \end{array}
            \]
             
      \end{enumerate}
    \item {\bf Preserves bottom.}
      By definition:
      $\varphi(\bot_{\ulbDerivLam{\tm}})
       = (\cls{\emptyDerivation \sieve \tm'}, \cls{\emptyDerivation/(\emptyDerivation \sieve \tm')})
       = (\cls{\emptyDerivation},\cls{\emptyDerivation})
       = (\bot_{\ulbF},\bot_{\ulbG(\bot_\ulbF)})$.
    \item {\bf Preserves joins.}
      Let $\cls{\redseq},\cls{\redseqtwo} \in \ulbDerivLam{\tm}$, and let us show that
      $\varphi(\cls{\redseq} \sqcup \cls{\redseqtwo}) = \varphi(\cls{\redseqtwo}) \lor \varphi(\cls{\redseqtwo})$.
      Indeed, note that:
      \[
        \varphi(\cls{\redseq} \sqcup \cls{\redseqtwo}) = (\alpha,\beta)
      \]
      where
      \[
        \begin{array}{rcl}
        \alpha & = & \cls{(\redseq \sqcup \redseqtwo) \sieve \tm'} \\
        \beta  & = & \cls{(\redseq \sqcup \redseqtwo) / ((\redseq \sqcup \redseqtwo) \sieve \tm')} \\
        \end{array}
      \]
      and
      \[
        \varphi(\cls{\redseq}) \lor \varphi(\cls{\redseqtwo})
        = (\cls{\redseq \sieve \tm'}, \cls{\redseq / (\redseq \sieve \tm')}) \lor
          (\cls{\redseqtwo \sieve \tm'}, \cls{\redseqtwo / (\redseqtwo \sieve \tm')}) \\
        = (\alpha', \beta') \\
      \]
      where
      \[
        \begin{array}{rcl}
        \alpha' & = & \cls{\redseq \sieve \tm'} \lorF \cls{\redseqtwo \sieve \tm'} \\
        \beta'  & = &
               \ulbG(\ptF{\alpha}{\cls{\redseq \sieve \tm'}})(\cls{\redseq / (\redseq \sieve \tm')}) \sqcup \ulbG(\ptF{\alpha}{\cls{\redseqtwo \sieve \tm'}})(\cls{\redseqtwo / (\redseqtwo \sieve \tm')})
        \end{array}
      \]
      It suffices to show that $\alpha = \alpha'$ and $\beta = \beta'$.
      Let us show each separately:
      \begin{enumerate}
      \item {\bf Proof of $\alpha = \alpha'$.}
        The equality
        $
          \alpha
          = \cls{(\redseq \sqcup \redseqtwo) \sieve \tm'}
          = \cls{\redseq \sieve \tm'} \lorF \cls{\redseqtwo \sieve \tm'}
          = \alpha'
        $
        is an immediate consequence of \rlem{lorf_ignores_garbage}.
      \item {\bf Proof of $\beta = \beta'$.}
        Note that:
        \[
          \begin{array}{rcll}
          \beta'
                & = & \ulbG(\ptF{\alpha'}{\cls{\redseq \sieve \tm'}})(\cls{\redseq / (\redseq \sieve \tm')}) \sqcup
                      \ulbG(\ptF{\alpha'}{\cls{\redseqtwo \sieve \tm'}})(\cls{\redseqtwo / (\redseqtwo \sieve \tm')}) \\
                & = &
                     \cls{
                       (\redseq \sieve \tm')(\redseq / (\redseq \sieve \tm'))/\alpha'
                       \sqcup
                       (\redseqtwo \sieve \tm')(\redseqtwo / (\redseqtwo \sieve \tm'))/\alpha'
                     } \\
                & = &
                     \cls{\redseq/\alpha' \sqcup \redseqtwo/\alpha'}
                     &\hspace{-5cm}\text{by \rlem{garbage_free_garbage_decomposition}} \\
                & = &
                     \cls{(\redseq \sqcup \redseqtwo)/\alpha'}
                     &\hspace{-5cm}\text{since $A/C \sqcup B/C \permeq (A \sqcup B)/C$} \\
                & = &
                     \cls{(\redseq \sqcup \redseqtwo)/((\redseq \sqcup \redseqtwo) \sieve \tm')}
                     &\hspace{-5cm}\text{since $\alpha' = \alpha = (\redseq \sqcup \redseqtwo) \sieve \tm'$} \\
                & = & \beta \\
          \end{array}
        \]
        as required. 
      \end{enumerate}
    \end{enumerate}
  \item {\bf $\psi$ is a morphism of upper semilattices.}
    Let us check the three conditions:
    \begin{enumerate}
    \item {\bf Monotonic.}
      Let $(\cls{\redseq_1},\cls{\redseqtwo_1}) \leq (\cls{\redseq_2},\cls{\redseqtwo_2})$ in $\grothy{\ulbF}{\ulbG}$
      and let us show that
      $\psi(\cls{\redseq_1},\cls{\redseqtwo_1}) \permle \psi(\cls{\redseq_2},\cls{\redseqtwo_2})$
      in $\ulbDerivLam{\tm}$.
      Indeed, we know that $\ulbG(\ptF{\cls{\redseq_2}}{\cls{\redseq_1}})(\cls{\redseqtwo_1}) \permle \cls{\redseqtwo_2}$,
      that is to say $\redseq_1\redseqtwo_1/\redseq_2 \permle \redseqtwo_2$.
      Then:
      \[
        \redseq_1\redseqtwo_1/\redseq_2\redseqtwo_2
        = (\redseq_1\redseqtwo_1/\redseq_2)/\redseqtwo_2
        = \emptyDerivation
      \]
      which means that $\redseq_1\redseqtwo_1 \permle \redseq_2\redseqtwo_2$.
      This immediately implies that
      $\psi(\cls{\redseq_1},\cls{\redseqtwo_1}) \permle \psi(\cls{\redseq_2},\cls{\redseqtwo_2})$.
    \item {\bf Preserves bottom.}
      Recall that the bottom element $\bot_{(\grothy{\ulbF}{\ulbG})}$
      is defined as $(\bot_\ulbF, \bot_{\ulbG(\bot_\ulbF)})$, that is
      $(\cls{\emptyDerivation}, \cls{\emptyDerivation})$.
      Therefore $\psi(\bot_{(\grothy{\ulbF}{\ulbG})}) = \cls{\emptyDerivation} = \bot_{\ulbDerivLam{\tm}}$.
    \item {\bf Preserves joins.}
      Let $(\cls{\redseq_1},\cls{\redseqtwo_1})$ and $(\cls{\redseq_2},\cls{\redseqtwo_2})$
      be elements of $\grothy{\ulbF}{\ulbG}$, and let us show that
      $\psi((\cls{\redseq_1},\cls{\redseqtwo_1}) \lor (\cls{\redseq_2},\cls{\redseqtwo_2})) =
       \psi(\cls{\redseq_1},\cls{\redseqtwo_1}) \sqcup \psi(\cls{\redseq_2},\cls{\redseqtwo_2}))$.

      Let:
      \[
      \begin{array}{rcl}
        \alpha & := & (\redseq_1 \sqcup \redseq_2) \sieve \tm'
      \end{array}
      \]

      First we claim that $\alpha \permle \redseq_1\redseqtwo_1 \sqcup \redseq_2\redseqtwo_2$.
      This is because by \rlem{sieve_is_prefix}
      we know that $\alpha = (\redseq_1 \sqcup \redseq_2) \sieve \tm' \permle \redseq_1 \sqcup \redseq_2$.
      Moreover, it is easy to check that
      $\redseq_1 \sqcup \redseq_2 \permle \redseq_1\redseqtwo_1 \sqcup \redseq_2\redseqtwo_2$.
      Using this fact we have:
      \[
      \begin{array}{rcll}
        \psi((\cls{\redseq_1},\cls{\redseqtwo_1}) \lor (\cls{\redseq_2},\cls{\redseqtwo_2}))
      & = &
        \psi(\cls{\alpha},\ulbG(\ptF{\cls{\alpha}}{\cls{\redseq_1}})(\cls{\redseqtwo_1}) \sqcup \ulbG(\ptF{\cls{\alpha}}{\cls{\redseq_2}})(\cls{\redseqtwo_2}))
      \\
      & = &
        \psi(\cls{\alpha}, \cls{(\redseq_1\redseqtwo_1/\alpha) \sqcup (\redseq_2\redseqtwo_2/\alpha)})
      \\
      & = &
        \psi(\cls{\alpha}, \cls{(\redseq_1\redseqtwo_1 \sqcup \redseq_2\redseqtwo_2)/\alpha})
      \\&&\text{ since $A/C \sqcup B/C \permle (A \sqcup B)/C$} \\
      & = &
        \cls{\alpha((\redseq_1\redseqtwo_1 \sqcup \redseq_2\redseqtwo_2)/\alpha)} \\
      & = &
        \cls{(\redseq_1\redseqtwo_1 \sqcup \redseq_2\redseqtwo_2)(\alpha/(\redseq_1\redseqtwo_1 \sqcup \redseq_2\redseqtwo_2))} \\
      & = &
        \cls{\redseq_1\redseqtwo_1 \sqcup \redseq_2\redseqtwo_2}
        \\&&\text{ since
                     $\alpha \permle \redseq_1\redseqtwo_1 \sqcup \redseq_2\redseqtwo_2$,
                   so $\alpha/(\redseq_1\redseqtwo_1 \sqcup \redseq_2\redseqtwo_2) = \emptyDerivation$} \\
      & = &
        \psi(\cls{\redseq_1},\cls{\redseqtwo_1}) \sqcup \psi(\cls{\redseq_2},\cls{\redseqtwo_2}))
      \end{array}
      \]
      as required.
    \end{enumerate}
  \item {\bf Left inverse: $\psi \circ \varphi = \id$.}
    Let $\cls{\redseq} \in \ulbDerivLam{\tm}$.
    Then by \rlem{garbage_free_garbage_decomposition}:
    \[
      \psi(\varphi(\cls{\redseq}))
      = \psi(\cls{\redseq \sieve \tm'}, \cls{\redseq/(\redseq \sieve \tm')})
      = \cls{(\redseq \sieve \tm')(\redseq/(\redseq \sieve \tm'))}
      = \cls{\redseq}
    \]
  \item {\bf Right inverse: $\varphi \circ \psi = \id$.}
    Let $(\cls{\redseq},\cls{\redseqtwo}) \in \grothy{\ulbF}{\ulbG}$.
    Note that
    $\redseq$ is $\tm'$-garbage-free
    and $\redseqtwo$ is $\tm'$-garbage,
    so by~\rlem{sieving_trailing_garbage}
    and \rprop{characterization_of_garbage_free_derivations}
    we know that $\redseq\redseqtwo \sieve \tm' = \redseq \sieve \tm' \permeq \redseq$.
    Hence:
    \[
      \varphi(\psi(\cls{\redseq},\cls{\redseqtwo}))
      = \varphi(\cls{\redseq\redseqtwo})
      = (\cls{\redseq\redseqtwo \sieve \tm'}, \cls{\redseq\redseqtwo/(\redseq\redseqtwo \sieve \tm')})
      = (\cls{\redseq}, \cls{\redseq\redseqtwo/\redseq})
      = (\cls{\redseq}, \cls{\redseqtwo})
    \]
  \end{enumerate}
\end{enumerate}

\end{document}